\numberwithin{equation}{section}
\def\Cov{\text{Cov}}
\newcommand{\cN}{\mathcal{N}}
\def\calA{\mathcal{A}}
\def\calC{\mathcal{C}}
\def\calH{\mathcal{H}}
\def\calM{\mathcal{M}}
\def\calN{\mathcal{N}}
\def\calO{\mathcal{O}}
\def\calP{\mathcal{P}}
\def\calR{\mathcal{R}}
\def\calS{\mathcal{S}}
\def\calX{\mathcal{X}}
\def\calY{\mathcal{Y}}
\def\reals{{\mathbb R}}
\def\pr{\mathbb{P}}
\renewcommand{\epsilon}{\varepsilon}
\newcommand{\dtv}{\text{TV}}
\newcommand{\cM}{\mathcal M}
\newcommand{\cO}{\mathcal O}
\newcommand{\cY}{\mathcal Y}
\newcommand{\mper}{\,.}
\newcommand{\iprod}[1]{\left\langle#1\right\rangle}
\newcommand{\Lap}{\mathrm{Lap}}
\newcommand{\tLap}{\mathrm{tLap}}
\newcommand{\cC}{\mathcal C}
\newcommand{\stab}{\mathrm{Stab}}
\newcommand{\score}{\textsf{score}}
\renewcommand{\stab}{\mathrm{stab}} 
\newcommand{\SelectionAlg}{\textsf{DP-Selection}} 
\newcommand{\cA}{\mathcal{A}}
\newcommand{\cB}{\mathcal{B}}
\newcommand{\set}[1]{\{#1\}}
\title{Privately Estimating a Gaussian: Efficient, Robust and Optimal
\footnote{Authors are listed in alphabetical order.}
}
\author[1]{Daniel Alabi\footnote{Email: \url{alabid@cs.columbia.edu}. Supported by a
Junior Fellowship from the Simons Foundation Society of Fellows and Cooperative Agreement
CB20ADR0160001 with the U.S. Census Bureau.}}
\affil[1]{Columbia University}
\author[2]{Pravesh K. Kothari\footnote{Email: \url{praveshk@cs.cmu.edu}. Supported by NSF CAREER Award \#2047933, NSF \#2211971, an Alfred P. Sloan Fellowship, and a Google Research Scholar Award.}}
\affil[2]{Carnegie Mellon University}
\author[3]{Pranay Tankala\footnote{Email: \url{pranay_tankala@g.harvard.edu}. Supported by Simons Foundation Grant 733782.}}
\author[3]{Prayaag Venkat\footnote{Email: \url{pvenkat@g.harvard.edu}. Supported by NSF grant CCF-1565264 and an NSF Graduate Fellowship under grant DGE1745303.}}
\affil[3]{Harvard University}
\author[4]{Fred Zhang\footnote{Email: \url{z0@berkeley.edu}.  Supported by NSF award CCF-1951384.}}
\affil[4]{University of California, Berkeley}
\date{}
\begin{document}

\maketitle

\begin{abstract}
In this work, we give efficient algorithms for privately estimating a Gaussian distribution in both pure and approximate differential privacy (DP) models with optimal dependence on the dimension in the sample complexity. 
\begin{itemize}
    \item 
In the pure DP setting, we give an efficient algorithm that estimates an unknown $d$-dimensional Gaussian distribution up to an arbitrary tiny total variation error using $\widetilde{O}(d^2 \log \kappa)$ samples while tolerating a constant fraction of adversarial outliers. Here, $\kappa$ is the condition number of the target covariance matrix. The sample bound matches best non-private estimators in the dependence on the dimension (up to a polylogarithmic factor). We prove a new lower bound on differentially private covariance estimation to show that the dependence on the condition number $\kappa$ in the above sample bound is also tight. Prior to our work, only identifiability results (yielding inefficient super-polynomial time algorithms) were known for the problem.
\item
In the approximate DP setting, we give an efficient algorithm to estimate an unknown Gaussian distribution up to an arbitrarily tiny total variation error using $\widetilde{O}(d^2)$ samples while tolerating a constant fraction of adversarial outliers. Prior to our work, all efficient approximate DP algorithms incurred a super-quadratic sample cost or were not outlier-robust. For the special case of mean estimation, our algorithm achieves the optimal sample complexity of $\widetilde O(d)$, improving on a $\widetilde O(d^{1.5})$ bound from prior work.
\end{itemize}

Our pure DP algorithm relies on a  recursive private preconditioning subroutine that utilizes the recent work~\cite{hopkins2021efficient} on private mean estimation. Our approximate DP algorithms are based on a substantial upgrade of the method of stabilizing convex relaxations introduced in~\cite{kothari2021private}. In particular, we improve on their mechanism by using a new unnormalized entropy regularization and a new and surprisingly simple mechanism for privately releasing covariances. 
\end{abstract}\clearpage



\section{Introduction}
Learning a high-dimensional Gaussian distribution is arguably the most basic task in statistical estimation. A long line of work has focused on finding algorithms for this fundamental problem that satisfy additional constraints such as robustness to adversarial outliers \cite{lai2016agnostic,diakonikolas2018robustly,diakonikolas2019robust,diakonikolas2019recent,kothari2022polynomial} and differential privacy \cite{KarwaV18,kamath2019privately,kamath2020private,brown2021covariance,LiuKKO21,BunKSW21,aden2021sample,tsfadia2022friendlycore,kothari2021private,LiuKO22,ashtiani2022private,kamath2022private}. An overarching goal in this line of work is to investigate the \emph{cost of privacy}---the overhead in sample complexity and running time that one must incur over and above the setting without the privacy constraints. Minimizing the cost of differential privacy---and ideally, achieving the same asymptotic dependence on the underlying dimension in sample complexity---is a challenging goal. 

To appreciate this goal, note that differentially private algorithms must necessarily have low sensitivity. That is, switching a single sample point in the input should lead to a small change to the estimate output by the algorithm. Importantly, this is a \emph{worst-case} guarantee: it must hold regardless of whether the input data satisfies the modeling assumption of being   independent Gaussian samples. Natural and simple estimators such as   the empirical mean and covariance of the input data have an unbounded sensitivity and result in no privacy guarantees.  

Without privacy constraints, standard concentration inequalities imply that the mean of an unknown Gaussian distribution can be estimated up to an $\ell_2$ error (more stringently, the ``Mahalanobis" error) of $\sim d/n$ with $n \gg d$ samples. For covariance estimation, a similar analysis yields an error rate of $\sim d^2/n$ for $n \gg d^2$.\footnote{Note that for obtaining vanishing total variation error guarantees, we need to estimate the covariance of the unknown Gaussian in the relative Frobenius distance.} The last few years have seen remarkable progress in finding private estimators  that come close to the above benchmarks. A recent work~\cite{hopkins2021efficient} provides an optimal estimator in $\ell_2$ norm for the mean of a  distribution with covariance bounded in spectral norm while satisfying \emph{pure} differential privacy---the strongest guarantee investigated in this setting. On the other hand, in the less stringent model of approximate differential privacy, an essentially optimal algorithm for estimating a Gaussian distribution was recently found in~\cite{ashtiani2022private}.

\paragraph{This Work} The main contribution of this work is providing   new sample-optimal algorithms for estimating a Gaussian distribution in both pure and approximate differential privacy models. For pure differential privacy, the main challenge is the task of covariance estimation for which no efficient algorithm was known. We resolve this challenge with dimension-dependence that asymptotically matches the optimal non-private bounds above. Our algorithms incur a logarithmic dependence on $\kappa$, the condition number of the unknown covariance. We improve prior lower bounds for the problem to show that such a dependence is in fact necessary for any pure differentially private algorithm thus concluding that our guarantees are asymptotically optimal. 

For approximate differential privacy, our main result builds   and improves upon the framework of~\cite{kothari2021private}. 
In \cite{kothari2021private}, the authors gave an efficient algorithm for Gaussian estimation that incurred dimension dependence of $\sim d^8$ in the sample complexity. As an immediate consequence of our upgraded guarantees, we obtain algorithms that use $\sim d^2$ samples to privately estimate a Gaussian distribution while being  resilient to a constant fraction of adversarial outliers. This improves on the previous best algorithm of~\cite{ashtiani2022private} that incurred a $\sim d^{3.5}$ sample cost. 

For our algorithms in the pure differential privacy model, our  main technique is a  preconditioning scheme for covariance estimation that iteratively constructs a rough estimate of the large eigenvalues of the unknown covariance. Previously, such a subroutine was proposed in \cite{kamath2019privately} with the weaker guarantees of approximate differential privacy. Our subroutine relies on the new sum-of-squares exponential mechanism introduced in the recent work of~\cite{hopkins2021efficient}.

For our algorithms in the approximate differential privacy model, our main contribution is a substantial upgrade of the method of \emph{stabilizing} convex relaxations introduced in~\cite{kothari2021private}. Their work makes a basic connection between outlier-robust and different private algorithms and introduces a general scheme that transforms convex programming-based outlier-robust algorithms for statistical estimation into (approximately) differentially private ones. Two main pieces in their scheme are a strongly convex regularizer on top of the starting convex relaxation and a noise addition mechanism for releasing the covariance that adds ``estimate dependent" noise --- i.e., the distribution of the noise depends on the computed covariance itself. They showed that despite this apparently leaky noise addition mechanism, their scheme leads to differentially private algorithms albeit with large polynomial factor losses in the sample complexity. The appealing aspect of their scheme is the generality and modularity in transforming robust estimators to private ones. However, a large polynomial factor loss, if inherent to this translation, could be prohibitive. Our main contribution is an upgrade on their scheme that leads to sample-optimal private algorithms. Our key innovation is an introduction and analysis of a new strongly convex regularizer based on \emph{unnormalized entropy} and a new and substantially simpler and optimal noise addition mechanism for releasing covariances. We believe that our improved version of the stabilization method will find further uses in the design of differentially private algorithms. 

\subsection{Our Results}
We now describe our results in more detail. We start by formally defining differential privacy. 

\begin{definition}[Differential Privacy~\cite{DworkMNS06, DKMMN06}]
For $\eps\geq 0$ and $\delta \in [0, 1]$, a (randomized) algorithm $\calM$ is
$(\eps, \delta)$-differentially private if for all pairs of neighboring databases
$x, x'$ that differ in exactly one row, and for all
output subsets $S$ of the range of $\calM$, the following holds:
$$
\pr[\calM(x)\in S] \leq e^\eps\cdot\pr[\calM(x')\in S] + \delta,
$$
where the probability is over the coin flips of the
algorithm $\calM$. We say that an algorithm satisfies $\epsilon$-pure DP if it satisfies $(\epsilon,\delta)$-DP with $\delta=0$. 
\label{def:dp}
\end{definition}


Our main contribution is finding sample-optimal private and outlier-robust algorithms for estimating a high-dimensional Gaussian distribution. In~Table \ref{tab:bounds}, we summarize the sample complexity,
privacy, and robustness guarantees of our algorithms
in relation to prior work. 

\begin{table}[t]\makeatletter\phantomsection\def\@currentlabel{(table)}\makeatother
\centering
\begin{tabular}{@{}llll@{}}
\toprule
Reference                           & Sample Complexity                                  & Privacy & Robustness \\ \midrule
Na\"ive estimator    & $\frac{d^2}{\alpha^2} + \frac{\kappa d^2}{\alpha \eps}$ & $(\eps,\delta)$   & Not robust
\\[4.5pt]
\cite{kamath2019privately}                  & $\frac{d^2}{\alpha^2} + \frac{d^2}{\alpha \eps} + \frac{d^{3/2}\log^{1/2} \kappa}{\eps}$ & $(\eps,\delta)$ & Not robust \\[6pt]
\cite{kamath2022private} & $\frac{d^2}{\alpha^2} + \frac{d^2}{\alpha \eps} + \frac{d^{5/2}}{\eps}$            & $(\eps,\delta)$ & Not robust \\[6pt]
\cite{ashtiani2022private} and  \cite{tsfadia2022friendlycore} & $\frac{d^2}{\alpha^2} + \frac{d^2}{\alpha \eps}$            & $(\eps,\delta)$ & Not robust\\[6pt]
\cite{ashtiani2022private} & $\frac{d^{3.5}}{\alpha^3 \eps}$ & $(\eps,\delta) $ & $\eta$\\[6pt]
\cite{kothari2021private}  & $\frac{d^8}{\alpha^4 \varepsilon^8}$            & $(\eps,\delta)$  & $\eta$ \\[4pt] \midrule
Our result (Theorem \ref{thm:pure-dp-cov-est}) & $\frac{d^2\log\kappa}{\eps} + \frac{d^2}{\alpha^2 \eps} $ & $(\eps,0)$ & $\sqrt{\eta}$ \\[4pt]
Our result (Theorem \ref{thm:dp-robust-cov-est}) & $\frac{d^2}{\alpha^2\epsilon^{4.5}}$ & $(\eps,\delta)$ & $\eta$
 \end{tabular}
\caption{Computationally-efficient sample complexity upper bounds for    DP covariance estimators of $\cN(0,\Sigma)$, with the utility guarantee   $\left\| \widehat\Sigma - \Sigma \right\|_{\Sigma} \leq \alpha$.  Robustness refers to additional error when the sample is $\eta$-corrupted.  Here, $d$ is the dimension, $\kappa$ is an a priori bound such that $\mathbb{I} \preceq \Sigma \preceq \kappa \mathbb{I}$.  The   bounds suppress polylogarithmic factors in $d,  \frac{1}{\eta}, \frac{1}{\alpha},$ and $\frac{1}{\delta}$. }
\label{tab:bounds}
\end{table}

Specifically, in the setting of pure DP, we prove:
\begin{theorem}[Pure DP Gaussian estimation; see Theorem \ref{thm:pure-dp-cov-est-restate} and Theorem \ref{thm:pure-dp-robust-gaussian}]
\label{thm:pure-dp-cov-est}
Fix $\alpha,\epsilon, \eta >0$.
Let $\mathbb{I} \preceq \Sigma_* \preceq \kappa \mathbb{I}$ and
$\norm{\mu_*}_2\leq R$.
There is an efficient $\epsilon$-pure DP algorithm that takes input an $\eta$-corrupted sample of size $n$ from $\cN(\mu_*, \Sigma_*)$ and with probability at least $0.99$ over the draw of the sample and the randomness of the algorithm, outputs estimates $\widehat{\mu}$, $\widehat{\Sigma}$ such that $TV(\cN(\mu_*, \Sigma_*), \cN(\widehat{\mu}, \hat{\Sigma})) \leq \alpha+O(\sqrt{\eta})$ so long as $n \geq \widetilde{O}\left(d^2 \log (\kappa)/\epsilon + d^2/\alpha^2 \epsilon + d\log R/\eps\right)$.

\end{theorem}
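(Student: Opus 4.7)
The plan is to handle the mean and covariance separately, reducing the $TV$-error target to Mahalanobis-type accuracy. By the standard fact that $TV(\cN(\mu_1,\Sigma_1),\cN(\mu_2,\Sigma_2)) \leq O\bigl(\Norm{\Sigma_1^{-1/2}(\mu_1-\mu_2)}_2 + \Norm{\Sigma_1^{-1/2}\Sigma_2\Sigma_1^{-1/2}-\mathbb{I}}_F\bigr)$ whenever the two Gaussians are within a constant spectral factor, it suffices to return $(\widehat\mu,\widehat\Sigma)$ with $\alpha$-accuracy in the Mahalanobis norm for the mean and in the relative Frobenius norm for the covariance, up to the inevitable $O(\sqrt\eta)$ slack from robust estimation. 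I will split both the sample and the pure-DP privacy budget across the covariance and mean subproblems.

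The central novelty is a recursive private preconditioner for $\Sigma_*$. I initialize $A_0 = \kappa \mathbb{I}$ so that $A_0 \succeq \Sigma_*$, and in round $t$ I privately produce a refined PSD upper bound $A_{t+1}$ whose relative condition number with respect to $\Sigma_*$ has halved. To construct $A_{t+1}$, I rescale the samples by $A_t^{-1/2}$ (so that $\Sigma_t := A_t^{-1/2}\Sigma_* A_t^{-1/2} \preceq \mathbb{I}$) and invoke the sum-of-squares exponential mechanism of~\cite{hopkins2021efficient}, adapted from mean estimation to covariance: its score function penalizes PSD candidates according to their robust Frobenius fit to the empirical second moment of the rescaled data, and the SoS relaxation makes sampling efficient. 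The utility guarantee yields a PSD matrix $M_t$ that matches $\Sigma_t$ up to a constant multiplicative factor plus $O(\sqrt\eta)$ additive error, allowing me to shrink $A_t$ in the directions where $\Sigma_*$ is significantly smaller than $A_t$ and thereby halve the effective condition number. After $T = O(\log\kappa)$ rounds, $A_T \asymp \Sigma_*$ up to constants. Reusing the same $n$-sample dataset across rounds and budgeting $\eps/T$ per round, basic pure-DP composition yields an $\eps$-DP preconditioner using $\widetilde O(d^2\log\kappa/\eps)$ samples overall.

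Once $A_T \asymp \Sigma_*$, one further call to the SoS exponential mechanism (now with target resolution $\alpha$ rather than constant) on the rescaled data yields a covariance estimate with $\alpha$-accuracy in Frobenius norm from $\widetilde O(d^2/(\alpha^2\eps))$ samples, and pulling back by $A_T^{1/2}$ delivers the required $\widehat\Sigma$. For the mean, a private histogram over $\sim d\log R$ one-dimensional bins first localizes $\mu_*$ to a unit ball in rescaled coordinates using $\widetilde O(d\log R/\eps)$ samples, after which the pure-DP mean estimator of~\cite{hopkins2021efficient} returns $\widehat\mu$ with Mahalanobis error $\alpha$ from $\widetilde O(d/(\alpha^2\eps))$ samples. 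Combining the three phases by basic composition gives the claimed bound $\widetilde O(d^2\log\kappa/\eps + d^2/(\alpha^2\eps) + d\log R/\eps)$ and $\eps$-pure DP. The main obstacle will be designing and analyzing the SoS exponential mechanism for covariance estimation: one must exhibit an SoS-certifiable fitness score for second moments with the correct outlier robustness, and execute a packing/volume argument that bounds both the normalizing constant and the sensitivity of the mechanism, analogous to but substantially more intricate than the mean-estimation version of~\cite{hopkins2021efficient}.
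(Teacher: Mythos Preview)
Your high-level architecture---split mean and covariance, recursively precondition $\Sigma_*$ down to constant condition number, then finish with a fine-resolution estimate---matches the paper's. But you have made the central step harder than it needs to be, and in doing so manufactured the very ``main obstacle'' you flag at the end.

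The paper's key observation is that \emph{no new SoS exponential mechanism for covariances is required}. Instead, one simply applies the existing pure-DP mean estimator of~\cite{hopkins2021efficient} as a black box to the vectorized outer products $Y_i = X_i X_i^\top / (\sqrt{3}\kappa) \in \R^{d^2}$. Since $\Cov(XX^\top) \preceq 3\,\Sigma_*\otimes\Sigma_* \preceq 3\kappa^2 I_{d^2}$ and $\Norm{\E Y_i}_F \leq \sqrt{d/3}$, the hypotheses of \textsf{PureDPMean} are met in dimension $d^2$, and with $\widetilde O(d^2/\eps)$ samples one obtains $\Norm{\widehat\Sigma - \Sigma_*}_F \leq 0.01\kappa$. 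This already pins down all eigenvalues to additive $0.01\kappa$, so the large eigendirections (eigenvalue $\geq \kappa/2$) are identified up to constant multiplicative error, and a single partial-projection step shrinks the condition number by a constant factor. Recursing $O(\log\kappa)$ times finishes the preconditioning. After preconditioning, one more call to the \emph{same} black box at resolution $\alpha$ gives the final covariance estimate. Robustness is inherited for free from the $O(\sqrt\eta)$ robustness of \textsf{PureDPMean}. So the packing/volume argument and SoS-certifiable second-moment score you anticipate designing are not needed; the reduction to mean estimation in $\R^{d^2}$ does all the work.

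Two smaller points. First, for the mean the paper again avoids a bespoke step: it preconditions using differenced samples $(X_{2i}-X_{2i-1})/\sqrt 2 \sim \cN(0,\Sigma_*)$ and then calls \textsf{PureDPMean} directly on the rescaled data; the $d\log R/\eps$ term is already built into that algorithm, so no separate private histogram is needed (and a pure-DP histogram over a continuous domain would itself require some care). Second, the paper budgets privacy across the $O(\log\kappa)$ preconditioning rounds by \emph{parallel composition} on disjoint sample splits rather than basic composition with a shared dataset and $\eps/T$ per round; the resulting sample complexity is the same as yours, but the analysis is slightly cleaner.
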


As discussed earlier, the case when $\Sigma=I$ (i.e., mean estimation) was resolved in the recent work of \cite{hopkins2021efficient} with essentially optimal $\widetilde{O}(d)$ sample complexity. For high dimensional covariance estimation all prior efficient algorithms only satisfied an approximate DP guarantee. We note that an identifiability argument was presented in~\cite{aden2021sample,BunKSW21}. The conceptual barrier here is that the exponential mechanism, a canonical mechanism for ensuring pure DP, does not admit a straightforward efficient implementation for private estimation.

Our work provides the first computationally-efficient algorithm for covariance estimation and achieves  near optimal sample complexity. We note that the dependence on $\eta$ (the fraction of outliers) in the total variation guarantee can likely be improved to optimal~\cite{diakonikolas2018robustly} $\widetilde{O}(\eta)$ though our current techniques do not yield such a guarantee. Our key idea is exploiting the pure DP mean estimation algorithm of \cite{hopkins2021efficient} to build a preconditioning scheme for covariances (see Section \ref{sec:pure-dp-tech}). Such a scheme was previously used in~\cite{kamath2019privately} for covariance estimation with approximate DP.



\paragraph{Approximate DP Algorithm}


We first consider approximate DP robust mean estimation of $\mathcal {N}(\mu_*, I)$.
\begin{theorem}[Approximate DP Robust Mean Estimation]
\label{thm:dp-robust-mean-est}
Let $d \in \N$ and $\epsilon, \delta\in(0, 1), \eta>0$. There is an algorithm that takes input an $\eta$-corrupted sample of size $n$ from a Gaussian distribution $\cN(\mu_*,I)$ and with probability at least $0.99$ over the draw of the sample and the randomness of the algorithm, outputs estimates $(\widehat{\mu}, I)$ such that $\norm{\widehat{\mu} - \mu_*}_2 \leq O(\eta \sqrt{\log (1/\eta)})$ so long as $n \geq \widetilde{O}\left(\frac{d\log^3 (1/\delta)}{\eps^3 \eta^2}\right)$. 
\end{theorem}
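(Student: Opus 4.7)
The plan is to apply the stabilizing convex relaxations framework of \cite{kothari2021private}, instantiated with the new unnormalized entropy regularizer highlighted in the introduction. At a high level, the approach (i) formulates robust Gaussian mean estimation as a convex program over sample weights, (ii) uses the regularizer to guarantee that the optimal weights are stable under a single-sample change, and (iii) releases the weighted sample mean perturbed by Gaussian noise calibrated to this stability.

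More concretely, I would set up a convex program over weight vectors $w \in [0, 1/((1-\eta)n)]^n$ of total mass one together with auxiliary variables that encode, via a standard SDP-style relaxation of the identifiability proof for robust mean estimation, that the weighted empirical second moment of the samples around the weighted mean has spectral norm $O(1)$. To this objective I would add the unnormalized entropy regularizer $R(w) = \sum_i (w_i \log w_i - w_i)$, which is strongly convex in $\ell_1$ and, in contrast to the usual normalized Shannon entropy, keeps its strong-convexity constant from degrading when the weight support shrinks to a small inlier core. Combining strong convexity of $R$ with the convexity of the relaxation in the usual ``two-optima comparison'' argument then yields a sharp $\ell_1$-sensitivity bound on the optimal weights $w^\star$; after pre-clipping each sample to a radius $B = \widetilde{O}(\sqrt{d})$, this translates into an $\ell_2$-sensitivity bound on the weighted mean $\widehat{\mu} = \sum_i w^\star_i x_i$.

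I would then release $\widehat{\mu} + Z$ with $Z \sim \cN(0, \sigma^2 I)$ and $\sigma$ chosen so that the Gaussian mechanism provides $(\epsilon, \delta)$-DP. The utility analysis combines three ingredients: (a) standard resilience-based arguments show that for an $\eta$-corrupted sample from $\cN(\mu_*, I)$ the unregularized optimum satisfies $\|\widehat{\mu} - \mu_*\|_2 = O(\eta\sqrt{\log(1/\eta)})$ with high probability once $n \geq \Omega(d)$; (b) a first-order optimality comparison shows the regularized optimum is close to the unregularized one; (c) the added Gaussian noise contributes an additional $O(\sigma\sqrt{d})$ in $\ell_2$ norm. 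Tuning the regularization strength and the clipping radius, and absorbing the privacy composition across the finitely many SDP certificates that are inspected along the way, is what drives the $\log^3(1/\delta)/\epsilon^3$ factor in the final sample bound.

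The main obstacle will be balancing the regularizer: it must be strong enough to give a nontrivial sensitivity bound (so that the Gaussian noise is small) but weak enough that the regularized optimum still inherits the $O(\eta\sqrt{\log(1/\eta)})$ robustness guarantee from the unregularized problem. Earlier work using normalized Shannon entropy loses extra $\mathrm{poly}(d)$ factors precisely because its effective strong-convexity constant scales inversely with the size of the ``good'' support; the unnormalized variant is designed to avoid this loss, and showing that it simultaneously admits a near-uniform feasible solution on the inlier set \emph{and} enforces enough stability to bound the $\ell_2$-sensitivity by $\widetilde{O}(1/(\eta n))$ should be the key technical step behind the $\widetilde{O}(d)$ dimension dependence claimed by the theorem.
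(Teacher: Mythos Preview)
Your proposal captures the high-level architecture---entropy-regularized convex relaxation followed by the Gaussian mechanism---but it misses or mis-states two load-bearing pieces of the paper's argument.

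\textbf{The sensitivity bound does not come from strong convexity.} You attribute the $\ell_1$-stability of the optimal weights to ``strong convexity in $\ell_1$'' of the unnormalized entropy, i.e.\ a Pinsker-type inequality. The paper explicitly warns that this route is quadratically lossy: a Pinsker/Bregman argument only yields $\|w^\star(\calY)-w^\star(\calY')\|_1 \le O(\sqrt{L/n})$, which after clipping at radius $\widetilde O(\sqrt d)$ and adding Gaussian noise forces $n=\widetilde O(d^2)$, not $\widetilde O(d)$. What the paper actually does (Lemma~\ref{lem:derivative-argument}) is a first-order argument: it shows the optimal \emph{potential values} on neighboring inputs differ by $O(L/n)$, then uses that every coordinate satisfies $w_i\le 2/n$ so that $(\nabla\Ent(w))_i=\log(1/w_i)\ge \log(n/2)$; concavity of $\Ent$ then gives $\Ent(w')-\Ent(w)\ge \log(n/2)\|w'-w\|_1$ directly, yielding $\|w^\star(\calY)-w^\star(\calY')\|_1 \le O(L/n)$. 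This ``gradient lower bound'' step, not strong convexity, is the key technical move, and your plan does not contain it.

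\textbf{The stable outlier-rate selection step is missing.} The argument above only makes sense once you know the optimal potential values on $\calY$ and $\calY'$ are close. This does \emph{not} follow from a two-optima comparison: the convex program $\calA^{\textsf{mean}}_{C,\eta',n}$ can be feasible for $\calY$ and infeasible for $\calY'$, and even when both are feasible the optimal values need not be close a priori. The paper handles this with a separate $(\eps/3,\delta/3)$-DP mechanism $\calM_1$ (Lemma~\ref{lem:outlier-rate-select}) that uses an exponential-mechanism-style selection over outlier rates $\tau$, with a score function measuring the stability of the potential on an interval around $\tau$. Only after this step can one assert $|\Pot_{\eta'}(\calY)-\Pot_{\eta'}(\calY')|\le O(L/n)$, which is the hypothesis needed for Lemma~\ref{lem:derivative-argument}. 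Your vague reference to ``privacy composition across the finitely many SDP certificates'' does not substitute for this mechanism; without it there is no sensitivity bound at all.

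A smaller divergence: the paper does not clip samples. Instead it runs a witness-checking mechanism $\calM_2$ (Lemma~\ref{lem:witness-checking}) that privately certifies the weighted distribution is $2k$-certifiably $C'$-subgaussian, and then bounds $\|\widetilde\mu(\calY)-\widetilde\mu(\calY')\|_2$ via Lemma~\ref{lem:parameter-closeness} as $O(C'\sqrt{L/n})$, with no dimension factor. Your clipping-at-$\widetilde O(\sqrt d)$ route would introduce an extra $\sqrt d$ in the sensitivity that the paper avoids.
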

For the near optimal robustness guarantee that we get in the above result, the prior best algorithms require $\widetilde{O}(d^{1.5})$ samples ~\cite{LiuKKO21,ashtiani2022private}. On the other hand, a suboptimal dependence of $O(\sqrt{\eta})$ on the additive error as a function of the fraction of outliers can be obtained with linear number of samples~\cite{hopkins2021efficient}. Our contribution is showing a simple algorithm that achieves the best of all worlds: a private, robust mean estimation algorithm with optimal dependence on both fraction of outliers and the dimension $d$.


More generally, our method yields the following optimal guarantee for estimating an unknown Gaussian distribution (i.e., mean and covariance estimation).
\begin{theorem}[Approximate DP Robust Covariance Estimation]
\label{thm:dp-robust-cov-est}
Fix $d \in \N, \epsilon, \delta\in(0, 1), \eta>0$. There is an efficient $(\epsilon, \delta)$-DP algorithm that takes input an $\eta$-corrupted sample of size $n$ from a Gaussian distribution $\cN(\mu_*,\Sigma_*)$ and with probability at least $0.99$ over the draw of the sample and the randomness of the algorithm, outputs estimates $(\hat{\mu}, \hat{\Sigma})$ such that $\text{TV}(\cN(\mu_*,\Sigma_*), \cN(\hat{\mu}, \hat{\Sigma})) \leq \widetilde{O}(\eta)$ so long as  $n \geq \widetilde{O}\left(\frac{d^2 \log^4(1/\delta)}{\epsilon^{4.5} \eta^2} \right)$.
\end{theorem}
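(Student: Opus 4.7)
The plan is to follow and substantially strengthen the ``stabilizing convex relaxations'' framework of \cite{kothari2021private}, leveraging the two upgrades advertised in the introduction: a regularizer based on unnormalized entropy in place of the quadratic regularizer, and a new ``surprisingly simple'' covariance release mechanism. The high-level structure of the argument is the standard recipe for obtaining differentially private robust estimators: (i) write robust mean and covariance estimation as a convex program over weights $w$ on the samples, (ii) add a strongly convex regularizer so that the optimal $w^*$ is stable under a one-sample change to the dataset, and (iii) release noisy versions of the weighted empirical mean and covariance where the noise scales with the (data-dependent) local sensitivity.

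First I would set up the base convex program. After an initial private preconditioning step that reduces to the case $\Sigma_* \approx I$, I would search for a weight vector $w \in [0,1]^n$ with $\|w\|_1 \approx (1-O(\eta))n$ whose weighted empirical second moment satisfies a certifiable bounded-higher-moments constraint (written as an SDP, following the robust estimation literature). Non-privately, this already yields weighted mean $\mu_w$ and covariance $\Sigma_w$ that are $\widetilde O(\eta)$-close to $\mu_*$ and $\Sigma_*$ in $\Sigma_*^{-1/2}$-scaled $\ell_2$ norm and relative Frobenius norm, respectively. On top of this program I would add the unnormalized entropy regularizer $R(w) = \lambda \sum_i w_i (\log w_i - 1)$. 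In contrast to a quadratic regularizer $\|w\|_2^2$, unnormalized entropy has local curvature proportional to $1/w_i$ --- the natural scale of the weights --- which is the source of its improvement and should give $\ell_1$-stability of $w^*$ on the order of $O(1/n)$ under a one-sample change, in place of the $\widetilde O(n^{-1/2})$ stability one gets from quadratic regularization in \cite{kothari2021private}.

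Given this $\ell_1$-stability, the weighted mean and covariance have sensitivity $\widetilde O(1/\sqrt{n})$ in the relevant norms on ``typical'' datasets. I would release $\hat\mu$ by adding Gaussian noise scaled by $\Sigma_w$ to the weighted mean, and release $\hat\Sigma$ via the new mechanism: transform $\Sigma_w$ into a basis in which its sensitivity is well-conditioned and add a symmetric Gaussian matrix of appropriate scale. Composing approximate DP guarantees over the preconditioner, the mean release, and the covariance release, and using standard concentration to bound the noise contribution to total variation distance, a calculation shows that $n = \widetilde O(d^2 \log^4(1/\delta) / (\epsilon^{4.5} \eta^2))$ suffices for the privacy-induced error to contribute only $\widetilde O(\eta)$ to TV distance. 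The exponent $\epsilon^{-4.5}$ arises from composing privacy across the preconditioning phase (which itself requires a refined spectral estimate of $\Sigma_*$) with the mean and covariance release, together with the relative Frobenius nature of the utility target.

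The main obstacle I anticipate is the $\ell_1$-stability analysis of the entropy-regularized program. Unlike a quadratic regularizer, entropy has Hessian $\nabla^2 R(w) = \mathrm{diag}(1/w_i)$, which degenerates near $w_i = 0$ --- exactly where outlier-filtering tends to place coordinates. One must therefore argue that the SDP constraints force $w_i$ to stay bounded away from $0$ on an $\Omega(1)$ fraction of inliers, and combine this with a Bregman-divergence based local strong convexity argument to translate a one-sample perturbation of the objective into a proportional $\ell_1$ change in the optimum. The other technical piece is proving privacy of the new covariance release mechanism, which requires a (simpler than, but analogous to) coupling argument between the noise distributions on neighboring datasets along the lines of \cite{kothari2021private}.
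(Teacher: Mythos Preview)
Your proposal misidentifies the paper's key mechanism for releasing the covariance, and this is not a minor implementation detail but precisely the innovation that drives the improved sample complexity. You propose to ``transform $\Sigma_w$ into a basis in which its sensitivity is well-conditioned and add a symmetric Gaussian matrix of appropriate scale,'' which is essentially the estimate-dependent noise injection of \cite{kothari2021private} that the paper explicitly replaces. The paper's mechanism is instead \emph{Gaussian sampling}: draw $k$ i.i.d.\ vectors $g_1,\ldots,g_k \sim \mathcal N(0,\widetilde\Sigma)$ and release their empirical second moment $\frac{1}{k}\sum_i g_ig_i^\top$. Privacy is proved by bounding the privacy-loss random variable $\sum_i \log(f_{\Sigma_1}(g_i)/f_{\Sigma_2}(g_i))$ as a sub-exponential sum of $\chi^2_1$ variables, exploiting only the relative-Frobenius sensitivity $\|\Sigma_1^{1/2}\Sigma_2^{-1}\Sigma_1^{1/2}-I\|_F \le \Delta$; there is no coupling argument of the kind you sketch, and no basis transformation.

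Two further mismatches. First, there is \emph{no private preconditioning step} in the approximate-DP algorithm; that component belongs to the pure-DP result. The convex program $\calA^{\textsf{cov}}$ is written in affine-invariant form (with an SoS variable $\Pi$ encoding $\Sigma^{1/2}$), so the algorithm applies directly to arbitrary $\Sigma_*$, and your attribution of the $\epsilon^{-4.5}$ exponent to ``composing across the preconditioning phase'' is therefore incorrect. Second, you omit the \emph{stable outlier-rate selection} step (a private exponential-mechanism-style selection of $\eta'$ so that the entropy potential $\Pot_{\eta'}(\cdot)$ changes little on adjacent datasets), which is what actually yields the $\ell_1$ stability of the optimizer; your Hessian-based Bregman argument is not how the paper proceeds. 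The paper instead shows (Lemma~\ref{lem:derivative-argument}) that a small gap in $\Ent$ values directly forces $\|w-w'\|_1 = O(L/(n\log n))$ via concavity and the monotonicity of $x\log(1/x)+x$ on $[0,1]$, so the worry about the Hessian degenerating at $w_i=0$ never arises.
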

The dependence on the sample complexity as a function of the dimension $d$ is optimal. In addition, our algorithm gets the optimal dependence on the error rate as a function of $\eta$.

There is a long sequence of works on private robust Gaussian estimation with approximate DP. The best known prior work on private robust estimation is due to~\cite{ashtiani2022private}, which requires $\widetilde{O}( d^{3.5})$ samples. Concurrent work  of~\cite{kothari2021private} gave an algorithm with a polynomial but worse $\widetilde{O}(d^8)$ sample complexity.  
In addition, \cite{ashtiani2022private} gives a private yet non-robust algorithm with $\widetilde O(d^2)$ sample complexity.
Similar to that, the results of \cite{kamath2019privately, kamath2022private} obtain (nearly) optimal sample cost, but fail at being robust.

Our algorithm builds upon a general method introduced by \cite{kothari2021private} for stabilizing convex relaxations. However, their specific implementation seems to inherently suffer from large polynomial losses in sample complexity. We introduce an upgraded version of their framework via two important changes. First, we introduce a new unnormalized entropy based regularization that replaces the $\ell_2$ norm regularization used in \cite{kothari2021private}. 
The analysis of this regularization requires proving an inequality that improves on na\"ive strong convexity in order to obtain a tighter guarantee; see Section \ref{sec:outlier-rate}.
Second, the algorithm of \cite{kothari2021private} applies a sophisticated estimate-dependent noise-injection  mechanism to release the estimate from the convex programming solution. The  noise in turn hurts the accuracy of their final estimate. We give a surprisingly simple mechanism to release covariances --- our mechanism simply releases $\sim d$ samples from an estimate of the covariance computed by the stabilized relaxation. This mechanism turns out to be significantly more efficient in terms of sample complexity. We give a high-level overview of the approximate DP result in Section \ref{sec:approx-dp-tech}.

\paragraph{Pure DP Lower Bound}
\begin{table}[t]\makeatletter\phantomsection\def\@currentlabel{(table)}\makeatother
\centering
\begin{tabular}{@{}llll@{}}
\toprule
Reference                           & Sample Complexity                                  & Privacy  \\ \midrule
\cite{kamath2019privately}                  & $  \frac{d^2}{\alpha \eps}  $ & $(\eps,0)$   \\[6pt]
\cite{KMS22} & $  \frac{d^2}{\alpha \eps}$            & $(\eps,\delta)$   \\[4pt]\midrule
Our result (Theorem \ref{thm:pure-dp-cov-est-lower-bound}) & $\frac{d^2 \log(\kappa)}{\eps}  + \frac{d^2}{\alpha\eps}$  &$(\eps,0)$ &  
 \end{tabular}
\caption{Sample complexity lower bounds for    DP covariance estimators of $\cN(0,\Sigma)$, with the utility guarantee   $\left\| \widehat\Sigma - \Sigma \right\|_{\Sigma} \leq \alpha$.   Here, $d$ is the dimension, $\kappa$ is an a priori bound such that $\mathbb{I} \preceq \Sigma \preceq \kappa \mathbb{I}$.  The   bounds suppress polylogarithmic factors in $d,   \frac{1}{\alpha},$ and $\frac{1}{\delta}$. }
\label{tab:lower-bounds}
\end{table}

We complement our algorithmic result with a nearly matching  lower bound. The analysis is a standard packing lower bound, building  upon  \cite{kamath2019privately, BunKSW21}. 
Previously, \cite{kamath2019privately} shows that under $\eps$-DP, $\Omega\left(d^2/(\alpha\eps)\right)$ samples are required to estimate the covariance of a Gaussian up to an error $\alpha$ (in the Mahalanobis distance). 
In contrast, our upper bound  (Theorem \ref{thm:pure-dp-cov-est}) has a $\log (\kappa)$ term. We show that the dependence on the condition number   is necessary:

\renewcommand{\dtv}{\text{TV}}

\begin{theorem}[Pure DP lower bound for covariance estimation; see Theorem \ref{thm:pure-dp-cov-est-lower-bound-full}]
\label{thm:pure-dp-cov-est-lower-bound}
Let $\eps, \alpha \in (0, 1)$, $d\geq 2$. Any $\epsilon$-DP algorithm that, given $n$ i.i.d.\ samples $\calX = \{X_1, \ldots, X_n\}$ from $\calN(0, \Sigma)$ for an unknown $\Sigma \in \reals^{d \times d}$ satisfying $I \preceq \Sigma \preceq \kappa I$, outputs $\widehat{\Sigma} = \widehat{\Sigma}(\calX)$ such that, with probability 0.9,
\[
\dtv\left(\calN(0, \Sigma), \calN\left(0, \widehat{\Sigma}\right)\right) < O(\alpha),
\]
requires
\[
n = \widetilde \Omega\left(\frac{d^2 \log(\kappa)}{\eps}  + \frac{d^2}{\alpha\eps }\right).
\]
\end{theorem}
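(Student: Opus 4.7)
The proof follows the Hardt--Talwar packing paradigm for pure DP lower bounds, combining two separately constructed packings of Gaussian covariances in order to produce the two additive terms in the bound.

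\textbf{Step 1: The $\Omega(d^2 \log \kappa / \eps)$ term via a Grassmannian packing.} I build a family of covariances indexed by $k$-dimensional subspaces of $\reals^d$, with $k = d/2$. For each $V \subseteq \reals^d$ with $\dim V = k$, set $\Sigma_V = I + (\kappa - 1) P_V$, where $P_V$ is the orthogonal projection onto $V$. Each $\Sigma_V$ satisfies $I \preceq \Sigma_V \preceq \kappa I$, with eigenvalue $\kappa$ of multiplicity $k$ and eigenvalue $1$ of multiplicity $d-k$. The key analytic claim is that if some principal angle $\theta$ between $V$ and $V'$ satisfies $\sin\theta \geq 1/\sqrt{\kappa}$, then $\dtv(\cN(0, \Sigma_V), \cN(0, \Sigma_{V'})) \geq \Omega(1)$. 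This follows by computing the relative Frobenius distance $\|\Sigma_V^{-1/2}(\Sigma_{V'} - \Sigma_V)\Sigma_V^{-1/2}\|_F$: the dominant contribution comes from the mixed block between $V$ and $V^\perp$, where $\Sigma_V^{-1/2}$ scales by $1/\sqrt{\kappa}$ on $V$ and by $1$ on $V^\perp$, producing an amplification of $\sqrt{\kappa}$ on a perturbation of magnitude $\sin\theta$. A volume-based packing of the Grassmannian $\mathrm{Gr}(k, d)$ (of real dimension $k(d-k) = \Omega(d^2)$) at angular resolution $1/\sqrt{\kappa}$ then yields a packing $\calP_1$ of size $\kappa^{\Omega(d^2)}$ whose members are pairwise $\Omega(1)$-separated in TV.

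\textbf{Step 2: The $\Omega(d^2/(\alpha\eps))$ term via Gilbert--Varshamov.} Following \cite{kamath2019privately}, I take $\Sigma_B = I + (c\alpha / \sqrt{d}) E_B$, where $B$ ranges over a binary code $\calB \subseteq \{-1,+1\}^{d(d+1)/2}$ of relative minimum distance $\Omega(1)$ and size $2^{\Omega(d^2)}$ (by Gilbert--Varshamov), and $E_B$ is the symmetric matrix whose upper triangle encodes $B$. For sufficiently small $c > 0$ each $\Sigma_B$ is PSD and close to $I$, and any two members of the family are pairwise TV-separated by $\Theta(\alpha)$. This yields a packing $\calP_2$ of size $2^{\Omega(d^2)}$ at pairwise TV distance $\Theta(\alpha)$.

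\textbf{Step 3: Apply the packing plus group privacy lower bound.} For a pairwise $\beta$-TV-separated packing of size $N$ of distributions, a standard coupling argument combined with group privacy of an $\eps$-DP algorithm forces any estimator achieving TV accuracy $\beta/4$ (say) to use $n \geq \Omega(\log N / (\beta \eps))$ samples. Applying this to $\calP_1$ (with $\beta = \Omega(1)$ and $N = \kappa^{\Omega(d^2)}$) gives $n \geq \Omega(d^2 \log \kappa / \eps)$; applying it to $\calP_2$ (with $\beta = \Theta(\alpha)$ and $N = 2^{\Omega(d^2)}$) gives $n \geq \Omega(d^2/(\alpha \eps))$. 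Summing the two bounds yields the theorem.

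\textbf{Main obstacle.} The main technical challenge is the TV-distance claim in Step 1 for general (not rank-one) subspace perturbations. While the rank-one case is a direct two-by-two computation, the general case requires controlling how \emph{all} principal angles between $V$ and $V'$ interact with the asymmetric scaling of $\Sigma_V^{-1/2}$ across the $V$ and $V^\perp$ blocks. I expect this to reduce to a block-diagonal computation via the CS decomposition of the pair of projections $(P_V, P_{V'})$, but obtaining both the tight constant and the matching Grassmannian packing count at resolution $1/\sqrt{\kappa}$ requires some care.
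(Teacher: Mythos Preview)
Your proposal is correct and takes a genuinely different route from the paper's proof. For the $\log\kappa$ term, the paper does not build an explicit packing; it simply invokes the cover/packing size results of \cite{BunKSW21} (Lemmas~5.1, 5.2, 6.8 there) as a black box to assert the existence of an $\alpha$-packing of size $(d\kappa/\alpha)^{\Omega(d^2)}$ and then applies the generic pure-DP packing lower bound $n\ge \Omega(\log|\calP|/\eps)$. Your Grassmannian construction with $\Sigma_V=I+(\kappa-1)P_V$ is a concrete, self-contained alternative; the relative-Frobenius computation does go through via the CS decomposition (the dominant contribution is $\kappa\sum_i\sin^2\theta_i$ from the off-diagonal block, so chordal distance $\ge 1/\sqrt{\kappa}$ suffices for $\Omega(1)$ TV separation), and the Grassmannian volume bound gives the $\kappa^{\Omega(d^2)}$ count. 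For the $1/\alpha$ term, the paper does \emph{not} use a packing argument at all: it imports Proposition~7.6 from \cite{kamath2019privately}, a fingerprinting/tracing-style argument based on the correlation statistic $\langle\widehat\Sigma,V\rangle$ over random sign matrices $V$. Your Gilbert--Varshamov packing combined with the coupling-plus-group-privacy step (exploiting that the packing members are both $\Omega(\alpha)$-separated and $O(\alpha)$-close, so neighboring datasets can be coupled at Hamming distance $O(\alpha n)$) is a different and equally valid route to the same $\Omega(d^2/(\alpha\eps))$ bound. The trade-off: the paper's proof is shorter because it outsources both constructions to prior work, while yours is more constructive and transparent about where each factor comes from. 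One small correction: in Step~2 your scaling should be $c\alpha/d$ rather than $c\alpha/\sqrt d$ if you want the pairwise TV distances to be $\Theta(\alpha)$ rather than $\Theta(\alpha\sqrt d)$.
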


Comparing   this with  our upper bound (Theorem \ref{thm:pure-dp-cov-est}) shows that our pure DP algorithmic result achieves nearly optimal sample complexity, only up to a $1/\alpha$ factor in its second term.  

We note that there are known \emph{approximate DP} algorithms~\cite{kamath2022private,kothari2021private,ashtiani2022private,LiuKO22}  with sample complexity that do not depend on the condition number. Thus, our lower bound above shows a basic separation in the sample complexity for estimating a Gaussian distribution in pure and approximate DP models.  

\paragraph{Concurrent Work} 
In a concurrent work, Hopkins, Kamath,
Majid and Narayanan~\cite{HKMN22} also obtained near-optimal polynomial-time algorithms for learning a Gaussian subject to pure and approximate DP and robustness constraints. The sample complexities achieved in our paper for approximate DP 
are the same as those in \cite{HKMN22} up to $\poly(1/\epsilon, \log(1/\delta), 1/\alpha)$ factors.



\subsection{Additional Related Work}
\paragraph{Private Estimation}
Prior to our work, there is large body of work on parameter estimation under differential privacy.
Assuming the data is drawn from a one-dimensional 
normal distribution,
Karwa and Vadhan~\cite{KarwaV18} obtain the first (finite-sample)
guarantees for confidence interval estimation of the mean---whether or not
the variance of the population is known. Their bounds are tight up to
logarithmic factors.
In the high-dimensional cases,
Kamath, Li, Singhal, and Ullman~\cite{kamath2019privately} give the first $(\eps,\delta)$-DP algorithm for learning a Gaussian. Their sample complexity has a logarithmic dependence on the condition number of the unknown covariance. This result has since been improved and refined by multiple follow-up papers, including \cite{biswas2020coinpress,brown2021covariance,ashtiani2022private,kamath2022private,kothari2021private}, with a    focus  on removing the  dependence on conditioning.  We summarize these upper bounds in Table \ref{tab:bounds}. We stress that all of them study $(\eps,\delta)$-DP. In fact, our work is the first sample- and computationally-efficient algorithm for learning a multivariate Gaussian under the stronger $\eps$-DP. 

For pure DP, the most relevant work to us is Hopkins, Kamath and Majid \cite{hopkins2021efficient}, which gives an efficient pure DP procedure for mean estimation of bounded second moment distributions. 
Our pure DP  algorithm builds upon their main result.
In addition, Bun et al. \cite{BunKSW21} gives a general (inefficient) cover-based approach to pure DP estimation.


\paragraph{Robust Statistics and Connections with Privacy}
Robust statistics is a classic area of study, dating back to at least 1960s \cite{huber1964robust, tukey60}.
Computationally-efficient procedures for robust statistics have been developed more recently in the algorithms literature. We refer
the reader to \cite{diakonikolas2019recent} for a survey.

Robustness and privacy are spiritually similar.
For robustly
learning the mean of a Gaussian distribution subject to privacy, Liu, Kong, Kakade, and Oh
provide polynomial-time algorithms with
sub-optimal sample complexity~\cite{LiuKKO21}.
However, with optimal  sample complexity, they  only provide an exponential-time algorithm. 
For robustly learning a general Gaussian distribution subject to DP, 
Kothari, Manurangsi, and Velingker provide a 
polynomial-time approximate DP algorithm with sample complexity
$\widetilde{O}(d^8)$~\cite{kothari2021private}.
On the other hand,
Ashtiani and Liaw
obtain polynomial-time
private and robust algorithms with sample complexity
$\tilde{O}(d^{3.5})$~\cite{ashtiani2022private}. Liu, Kong, and Oh~\cite{LiuKO22}
rely on the observation that one-dimensional robust
statistical estimators have low sensitivity on resilient datasets.
Resilience, defined by 
Steinhardt, Charikar, and Valiant~\cite{SteinhardtCV18},
measures how stable the empirical mean is to 
deletion of a constant fraction of a dataset. This
definition is, intuitively, similar in spirit, but  
not equivalent to DP guarantees.
Using this framework and the exponential mechanism,
Liu, Kong, and Oh~\cite{LiuKO22} design computationally
inefficient algorithms for DP statistical estimation. 
More relevant to us, Kothari, Manurangsi and Velingker \cite{kothari2021private} studies transforming robust algorithms into private ones. 
Our approximate DP result leverages and strenthens    their technique  of stabilizing  convex-programming-based estimators.
Finally, a very recent work by  Georgiev and Hopkins \cite{georgiev2022privacy} provides a generic meta-theorem for obtaining robustness from DP. 



\paragraph{Lower Bounds for DP}
In the pure DP case, geometric packing-style arguments are
often used to prove lower bounds
~\cite{McSherryT07, BeimelKN10,HardtTa10}.
Our pure DP lower bound arguments build upon
previous works~\cite{kamath2019privately, BunKSW21}.
For the approximate DP case, the lower bounds are based
on fingerprinting codes, introduced by 
Boneh and Shaw~\cite{BonehS98}. The existence of such codes
imply lower bounds for approximate DP~\cite{BunUlVa14}. The technique can be applied to prove sample complexity lower bounds for   Gaussian estimation under approximate DP \cite{KMS22}. 



\subsection{Organization}
The remainder of the paper is organized as follows. 
In Section \ref{sec:technique}, we give a high-level overview of our techniques. 
In Section \ref{sec:prelim}, we introduce notations and  technical backgrounds. 
In Section \ref{sec:pure-dp-cov}, we provide our pure DP estimation results. 
We discuss an approximate DP mechanism, based on Gaussian sampling, for releasing a covariance estimate in Section \ref{sec:gauss-sampling}.  
Using this mechanism, we provide our results on approximate DP and robust estimation in Section \ref{sec:approx-dp}. 
In Section \ref{sec:pure-lower}, we prove a pure DP covariance estimation lower bound.

\section{Techniques}\label{sec:technique}
In this section, we provide a high-level technical overview of our algorithmic results. 

\subsection{Pure DP Estimation} 
\label{sec:pure-dp-tech}
To illustrate our main ideas, we focus on private covariance estimation where we are given samples from $\mathcal{N} (0,\Sigma)$ and a promise that $\mathbb{I} \preceq \Sigma \preceq \kappa \mathbb{I}$. At first blush, one could imagine obtaining an approximate DP algorithm by taking the empirical covariance matrix and ``privatize" it by the Gaussian mechanism that adds an independent Gaussian of an appropriate variance to every entry of the estimated matrix. This mechanism, however, runs into immediate trouble -- the variance needs to be ``calibrated" to the sensitivity of the algorithm to switching a single point in the input. The naive empirical covariance has an unbounded sensitivity and thus cannot be privatized at all!  

It turns out (and was first observed and used in~\cite{kamath2019privately}) that if we first delete input points $X$ such that  $\|X\|_2^2 > \kappa d$ and then use the above simple mechanism, the sensitivity can be naturally upper bounded. This yields a private covariance estimator but needs a sample complexity that grows linearly in $\kappa$.

In order to improve the dependence on $\kappa$, \cite{kamath2019privately} proposes a (recursive)  private conditioning scheme. Roughly speaking, the procedure and its analysis work in two steps.
\begin{enumerate}[(i)]
    \item First,  they observe that the na\"ive application of the Gaussian mechanism   to clamped samples  provides a rough estimate of the covariance already. In particular, the large eigendirections have their corresponding eigenvalues   preserved multiplicatively, and their algorithm explicitly  identifies them   from the rough estimate.
    \item  Second, once we approximate the large eigenvectors, their algorithm shrinks the covariance along these   directions, via partially projecting out the eigenvectors associated with large eigenvalues. Recursively doing this preconditions the covariance, reducing the  conditioning number of $\Sigma$ to a constant.  Under this condition, the nai\"ve method of adding Gaussian noise to the empirical estimate leads to good accuracy.
\end{enumerate}

However, due to the use of the Gaussian mechanism, \cite{kamath2019privately} only attains approximate (and concentrated) DP guarantees. 
To obtain pure DP from this construction, our main observation is that their first step can be precisely replaced by the pure DP \textit{mean estimation} algorithm of \cite{hopkins2021efficient} (on bounded second moment distributions). 
This may sound surprising at first, but the idea is straightforward. 
In particular, let $\{X_i\}_{i=1}^n$ be the i.i.d.\ samples from $\mathcal{N}(0,\Sigma)$. If we form the samples 
 $\calY = \left\{Y_i = X_i X_i^\top /   \kappa: i \in [n]\right\}$  for $n \gg d^2/ \eps$, then \cite{hopkins2021efficient} guarantees that we can get $\widehat{\Sigma}$ such that  
$ \norm{\widehat{\Sigma} - \Sigma}_F \leq 0.01 \kappa$. Note that this already preserves the   eigenvectors, up to additive $0.01\kappa$. This means that the large ones (say, with eigenvalues at least $\kappa/2$) are well preserved in a multiplicative sense. It turns out the guarantees are enough to plug into the  recursive preconditioning scheme.

The step above allows us to find a preconditioning matrix $A$
 so that $I \preceq A\Sigma A \preceq O(1)I$.
 With  the covariance matrix being well-conditioned,  
 we observe that  
 estimating it can be reduced to mean estimation. 
 Indeed, when $\Sigma$ has largest eigenvalue $\kappa$, a basic fact about the moments of a Gaussian random variable $X\sim \mathcal{N}(0,\Sigma)$ is that   $X X^\top$ (with mean $\Sigma$) has  covariance of spectral norm at most $2\kappa^2$.
 Now after preconditioning we have    $\kappa = O(1)$. Thus, we can apply the pure DP algorithm 
 of \cite{hopkins2021efficient}
 again on $X X^\top$. 
Using enough samples, this time we   get an estimate 
$\widehat{\Sigma}'$ such that  
$ \norm{\widehat{\Sigma}' - A\Sigma A}_F \leq 0.01$. 
Finally, by straightforward calculations,   $A^{-1}\widehat\Sigma' A^{-1}$ is a good estimate of $\Sigma$ in the {right} affline-invariant (i.e., relative Frobenius) norm.

The robustness guarantee of our algorithm is inherited from the use of \cite{hopkins2021efficient}. 
We leave it as an open question to get optimal robustness subject to pure DP.

\subsection{Approximate DP Robust Estimation}
\label{sec:approx-dp-tech}
Our approximate DP   results are built upon the framework of \cite{kothari2021private} that reduces private estimation to robust estimation obtained via what they call \emph{witness producing convex relaxations}. The algorithm of \cite{kothari2021private}, though robust and private,  requires $\widetilde{O}(d^8)$ samples, far cry from the (non-private) optimal bound of $\widetilde{O}(d^2)$ samples. 
For   learning a general Gaussian distribution, our algorithm achieves the same performance of \cite{kothari2021private} using only $\widetilde{O}(d^2)$ samples. 
Our approach significantly upgrades the methodology of \cite{kothari2021private}.
This involves two main technical innovations: a regularization scheme via entropy maximization and a new privacy  mechanism based on \textit{Gaussian sampling}. Along the way, we  review  several key techniques in \cite{kothari2021private} that we also employ. 

\paragraph{Randomizing the Outlier Rate}
A common recipe to design DP algorithm is to devise a stable algorithm whose outputs are close when given two adjacent datasets.     
Then standard DP mechanisms, such as the Gaussian mechanism, work by adding noise to the output with the noise level calibrated to the sensitivity of the algorithm. 
Therefore, the basic idea of \cite{kothari2021private}  is to \textit{stabilize} the following prototypical  convex programming approach from the robust statistics literature. Consider an inefficient algorithm which given an $\eta$-corrupted dataset $Y = \{Y_i\}_{i=1}^n$:\begin{enumerate}[(i)]
    \item Finds a witness set $X$ of $n$ points   such that the uniform distribution on $X$ has subgaussian fourth moments and $|Y \cap X| \geq (1-\eta)n$. Rejects if no such $X$ exists. 
    \item Returns the mean and covariance of $X$.
\end{enumerate}
Such (exponential-time) witness checking procedure is statistically optimal.  
It turns out, moreover, that this this scheme can be encoded  by an (efficiently-solvable) Sum-of-Squares (SoS) program. Moreover, the solution of the SoS relaxation retains   the  optimal statistical guarantee \cite{kothari2018robust, kothari2022polynomial}. However, the algorithm is unstable. Suppose $Y$ contains a {unique} witness set $X$ of exactly $(1-\eta)n$ points.  Then if $Y'$ is adjacent to $Y$ but missing a point from $X$, $Y'$ may not contain a feasible witness set at all and therefore will be rejected by the witness checking algorithm.  Hence,  there must be adjacent $Y,Y'$ such that the procedure accepts one but rejects the other. To patch this, \cite{kothari2021private} privately selects a outlier rate $\eta'$ (close to the true $\eta$) as an input parameter to witness checking. 

\paragraph{Stability from Entropy Regularization} 
Unfortunately, this alone does not resolve the problem. 
Another key barrier is that there may be multiple differing solutions to the witness checking constraints. 
Indeed, \cite{kothari2021private} observes that even in
one dimension, the simple witness checking algorithm has feasible solutions of   variance $1+O(\eta)$ and  of $1-O(\eta)$. 
To exclude such differing solutions, the crucial idea from \cite{kothari2021private} is to modify the convex relaxations for robust estimation to satisfy strong sensitivity upper bounds. 
This is accomplished via a strongly convex regularization on the weight vector on the input sample computed by the convex program.
Intuitively, strong convexity ensures that the optimal  solution is stable against slight perturbation in the input---particularly when given two neighboring datasets.  
For that, \cite{kothari2021private} simply chooses to minimize a $2$-norm objective $\|w\|_2^2$, where $w$ is the (soft) indicator variable for the witness set $X$ that the convex program optimizes for.
This is a natural choice of regularizer, since intuitively $\|\cdot\|^2_2$ 
encourages ``flat'' vectors and thus can be seen as a surrogate for entropy.
However, we observe that the strong convexity parameter of the $2$-norm objective is rather weak. This   leads to quadratically worse bounds in the sample complexity.

To improve upon this, we propose to directly use the (unnormalized) entropy function as an regularizing objective.
 To analyze the sensitivity of the stabilized convex program, \cite{kothari2021private} directly exploits the general strong convexity of its objective with respect to $2$-norm (which, via a Cauchy-Schwarz inequality, also yields a bound on the $1$-norm). 
 
 The unnormalized entropy already satisfies such a strong convexity property directly with respect to the $1$-norm -- this is a simple generalization of the Pinsker's inequality. However, a vanilla application of the generalized Pinsker's inequality yields a quadratically lossy bound in our case giving us no benefit on top of the  squared $\ell_2$  norm regularization in~\cite{kothari2021private}. 
 
Instead, we exploit the fact that in the regime we invoke it,  Pinsker's inequality is far from tight.  By a more direct analysis we get a essentially lossless strong convexity for the weight vectors encountered in our analysis. As a result,  we obtain a strong $\sim \widetilde{O}(1/n)$ bound on the sensitivity of our relaxation. This bound is almost optimal up to a logarithmic factor and improves quadratically on the $\approx \frac{1}{\sqrt{n}}$ bound proven in~\cite{kothari2021private}.

\paragraph{Replacing Noise-Injection by Gaussian Sampling} Notice that throughout the discussion above, we have only touched upon the sensitivity of the convex programming-based algorithm. 
To eventually privatize its output, typically one needs to add noise, for example, via the Gaussian mechanism.
Specifically, in the context of simpler problems such as estimating the mean of a spherical covariance distribution as studied in prior works, simple and natural mechanisms work by adding independent Gaussian noise to each entry of the  estimated mean. For covariance, however, such entrywise noise addition   can completely destroy the eigenstructure and thus lead to arbitrarily large total variation error. 
Therefore, as observed by \cite{kothari2021private}, one needs mechanisms that respect the eigenvalue structure---in particular, one must add smaller noise to the directions of lower variance. 
The issue is that this causes privacy leakage, since the estimate is non-privately computed from solving the convex relaxation. 
Nevertheless, \cite{kothari2021private} found a way to add such an estimate-dependent noise and still guarantee privacy. 
Their mechanism, however, is complicated and sub-optimal and results in a further quadratic loss in the sample complexity.

Instead, we replace the sophisticated noise-addition mechanism  of \cite{kothari2021private} by   a   novel and extremely simple mechanism: release  $\widetilde O(d^2)$ i.i.d.\ samples from Gaussian distribution $\mathcal{N}(0,\widehat{\Sigma})$, with $\widehat{\Sigma}$ being the estimated covariance. We show that this simple mechanism gives optimal approximate DP guarantees. 
By a standard concentration inequality, taking the empirical covariance of these released samples approximates $\widehat{\Sigma}$ well, which in turn is guaranteed to be close to the true $\Sigma$.
This results in a further quadratic improvement in the sample complexity over \cite{kothari2021private}, completing our final sample complexity bound.

Our overall analysis of the convex program builds upon \cite{kothari2021private} and prior work in the literature of robust statistics \cite{kothari2018robust,kothari2022polynomial}. At a conceptual level, our work demonstrates a quantitatively strengthened connection between robustness and privacy. 
\section{Preliminaries}
\label{sec:prelim}

\paragraph{Notation} Throughout this paper, we will use $X$ to denote an i.i.d. (uncorrupted) sample of $n$ points in $\R^d$ and $Y$ to denote its $(1-\alpha)$-corruption. For any finite set $S$ of points, we will use $\E_{s \sim S} f(s)$ to denote the empirical average of $f(s)$ as $s$ varies uniformly over $S$.  We denote the  $d$-by-$d$ identity matrix by $I_d$. Let $\mathbb{S}^d_+$ denote the set of $d$-by-$d$ positive semidefinite (PSD)   matrices. For a matrix $A$, we use $\|A\|,\|A\|_2$ to denote the spectral norm of $A$ and $\|A\|_F$ denotes its Frobenius norm. For PSD matrices $A,B$, we write $A \preceq B$ if $B -A$ is PSD.  For a PSD matrix $\Sigma$ and vector $x$, define $\|x\|_{\Sigma}=\left\|\Sigma^{-1 / 2} x\right\|_2$. For a matrix $X$, define $\|X\|_{\Sigma}=\left\|\Sigma^{-1 / 2} X \Sigma^{-1 / 2}\right\|_F$.  For    $M \in \R^{m\times n}$ and    $N \in \R^{m'\times n'}$, we define $M\otimes N$ to be the standard $mm' \times nn'$ matrix given by the {Kronecker product} of $M$ and $N$. Let $\Delta_d=\left\{p \in \mathbb{R}_{+}^d, \sum_{i=1}^d p_i=1\right\}$  denote the probability simplex in $\R^d$.

\subsection{Computational Model and Numerical Issues}  \label{sec:numerical-issues}
Our algorithms work in the standard word RAM model. 
In this model, the input sample $Y$ is given to the algorithm after
truncating the real numbers to rational numbers with $\poly(d)$ bits of precision.
The running time  of the algorithm is a function  of the total bit complexity of the representation of the input.

Throughout this work, we assume $\Sigma \geq 2^{-\operatorname{poly}(d)} I$.
This assumption is due to the numerical issue---namely, that truncation of $Y$ to rational numbers, in general, does not allow recovering eigenvalues of $\Sigma$
that are not representable in polynomially many bits of precision. 


\subsection{Entropy Potential Function}
\begin{definition}[Unnormalized entropy]
\label{def:ent} 
For $x \in \R^n_{\geq 0}$, define the unnormalized entropy function $\Ent (x) =  \sum_{i=1}^n x_i \log(1/x_i) + x_i$. Following standard convention, we take $0 \log (1/0) = 0$.
\end{definition}



\begin{lemma}[Lemma 3.21 of \cite{kothari2021private}]
\label{lem:renormalization}
Suppose $x, y\in[0, 1]^n$ such that $\sum_ix_i, \sum_i y_i\geq n/2$ and
$\norm{x-y}_1 \leq \beta n$ for $\beta\leq 1/10$. Let $\bar{x} = x/\norm{x}_1$ and
$\bar{y} = y/\norm{y}_1$ be the normalized versions of $x, y$. Then,
$$
\norm{\bar{x}-\bar{y}}_1 \leq 6\beta.
$$
\end{lemma}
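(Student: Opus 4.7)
The plan is to reduce everything to two elementary facts: (i) the $\ell_1$ norms $\|x\|_1$ and $\|y\|_1$ are close, and (ii) a telescoping decomposition lets us split the error into one part coming from the numerator and one part coming from the denominators.

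Concretely, set $s := \|x\|_1$ and $t := \|y\|_1$. First I would observe that
\[
|s - t| \;=\; \Bigl|\sum_i (x_i - y_i)\Bigr| \;\le\; \|x - y\|_1 \;\le\; \beta n,
\]
using only the triangle inequality and the hypothesis that $x, y \geq 0$ coordinatewise (so the $\ell_1$ norms equal the coordinate sums). By assumption $s, t \geq n/2$.

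Next, I would add and subtract $y/s$ inside $\bar x - \bar y$ to obtain the telescoping identity
\[
\bar x - \bar y \;=\; \frac{x}{s} - \frac{y}{t} \;=\; \frac{x - y}{s} \;+\; y \cdot \frac{t - s}{s\,t}.
\]
Taking $\ell_1$ norms and using $\|y\|_1 = t$ gives
\[
\|\bar x - \bar y\|_1 \;\le\; \frac{\|x - y\|_1}{s} \;+\; \frac{|t - s|}{s}.
\]
Now both terms are bounded by $\|x - y\|_1 / s \le (\beta n)/(n/2) = 2\beta$, so the left-hand side is at most $4\beta \le 6\beta$, as claimed.

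There is no real obstacle here: the only modest subtlety is picking the right intermediate quantity $y/s$ so that the ``numerator'' error $(x-y)/s$ and the ``denominator'' error $y(t-s)/(st)$ each inherit the bound $\|x-y\|_1$. The hypothesis $\beta \le 1/10$ is not even needed for the argument as stated; it presumably serves only to keep $\bar x, \bar y$ in a regime where downstream applications remain meaningful.
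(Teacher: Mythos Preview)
Your proof is correct and is the standard telescoping argument; in fact you obtain the sharper bound $4\beta$. The paper does not supply its own proof of this lemma (it is quoted verbatim as Lemma~3.21 of \cite{kothari2021private}), so there is nothing further to compare.
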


\subsection{Differential Privacy}




\begin{theorem}[Basic Composition~\cite{DworkMNS06}]
\label{thm:basic-comp}
For every $\epsilon, \delta \geq  0$ and $k \in \N$, the class of $(\epsilon,\delta)$-DP mechanisms is $\left(k\epsilon, k\delta\right)$-DP under $k$-fold adaptive compositions.
\end{theorem}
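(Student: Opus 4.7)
The plan is to prove the basic composition theorem by induction on $k$. For the base case $k=1$, the statement reduces to the definition of $(\epsilon,\delta)$-DP itself, so there is nothing to show. For the inductive step, it suffices to establish the two-fold adaptive composition lemma: if $\calM_1$ is $(\epsilon_1,\delta_1)$-DP and, for every transcript $y$, the mechanism $\calM_2(\cdot, y)$ is $(\epsilon_2,\delta_2)$-DP (where $y$ is a possible output of $\calM_1$), then the composed mechanism $\calM(x) := (\calM_1(x), \calM_2(x, \calM_1(x)))$ is $(\epsilon_1+\epsilon_2, \delta_1+\delta_2)$-DP. Iterating this lemma $k-1$ times with $(\epsilon_1,\delta_1) = ((k-1)\epsilon, (k-1)\delta)$ (supplied by the inductive hypothesis) and $(\epsilon_2,\delta_2)=(\epsilon,\delta)$ gives the desired conclusion.

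To prove the two-fold composition lemma, I would fix neighboring inputs $x, x'$ and an arbitrary measurable set $S$ of output pairs $(y, z)$, and decompose according to the first coordinate. Writing $S_y := \{z : (y,z) \in S\}$, the quantity to control is $\pr[\calM(x) \in S] = \int \pr[\calM_1(x) = y]\,\pr[\calM_2(x,y) \in S_y]\,dy$. The main device is to split the integration domain by whether $y$ is a ``privacy-losing'' output of $\calM_1$. Concretely, define the bad event $B := \{y : \pr[\calM_1(x) = y] > e^{\epsilon_1}\pr[\calM_1(x') = y]\}$, and use the $(\epsilon_1,\delta_1)$-DP guarantee on the event $\{\calM_1(x) \in B\}$ to conclude $\pr[\calM_1(x) \in B] \le \delta_1$ (one subtlety is that this requires the standard ``max-divergence'' reformulation of $(\epsilon,\delta)$-DP: $(\epsilon,\delta)$-indistinguishability is equivalent to a pointwise bound outside an event of mass at most $\delta$, up to negligible technicalities handled by the usual regularization argument).

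On $B^c$, I would combine the pointwise bound $\pr[\calM_1(x)=y] \le e^{\epsilon_1}\pr[\calM_1(x')=y]$ with the $(\epsilon_2,\delta_2)$-DP guarantee $\pr[\calM_2(x,y) \in S_y] \le e^{\epsilon_2}\pr[\calM_2(x',y) \in S_y] + \delta_2$ inside the integrand, yielding
\[
\int_{B^c} \pr[\calM_1(x)=y]\,\pr[\calM_2(x,y) \in S_y]\,dy \;\le\; e^{\epsilon_1+\epsilon_2}\pr[\calM(x') \in S] + \delta_2.
\]
Adding the contribution from $B$, which is at most $\delta_1$, gives $\pr[\calM(x) \in S] \le e^{\epsilon_1+\epsilon_2}\pr[\calM(x') \in S] + \delta_1 + \delta_2$, as required.

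The main obstacle is the bookkeeping of adaptivity: the second mechanism $\calM_2$ depends on the random transcript of $\calM_1$, so one must check that the $(\epsilon_2,\delta_2)$-DP bound applies uniformly in $y$ and can be inserted inside the outer integral against either measure $\pr[\calM_1(x)=\cdot]$ or $\pr[\calM_1(x')=\cdot]$. This is handled by the fact that the DP guarantee of $\calM_2(\cdot, y)$ is required to hold for every fixed $y$, so Fubini-style interchange is legitimate. Once this care is taken, the proof reduces to the two-term split above, and the induction closes cleanly.
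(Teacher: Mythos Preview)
The paper does not prove this theorem; it is stated as a known preliminary and cited to~\cite{DworkMNS06}. So there is no in-paper argument to compare against, but I can assess your sketch on its own merits.

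Your induction-to-two-fold-composition reduction and your bound on the $B^c$ contribution are correct. The gap is the claim $\Pr[\calM_1(x) \in B] \le \delta_1$ for $B = \{y : p(y) > e^{\epsilon_1} q(y)\}$: this is false, and the ``max-divergence reformulation'' you invoke does not give it. The $(\epsilon_1,\delta_1)$-DP guarantee applied to the event $B$ yields only $\int_B (p - e^{\epsilon_1} q) \le \delta_1$ (the hockey-stick bound), not $\int_B p \le \delta_1$. Concretely, with $\epsilon_1=0$, $p$ uniform on $[0,1]$, and $q$ equal to $\tfrac12$ on $[0,\tfrac12]$ and $\tfrac32$ on $(\tfrac12,1]$, the pair is $(0,\tfrac14)$-indistinguishable, $B=[0,\tfrac12]$, and $\Pr_p[B]=\tfrac12 > \tfrac14 = \delta_1$. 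The paper's Lemma~\ref{thm:privacy-loss-tail-implies-dp} (privacy-loss tail implies DP) is only a one-way implication and cannot be reversed to deduce $\Pr_p[B]\le\delta_1$. If you keep your $B/B^c$ split but use only the correct bound on $B$, you end up with $\delta_1 + e^{\epsilon_1}\delta_2$ (or symmetrically $e^{\epsilon_2}\delta_1 + \delta_2$), which after $k$-fold iteration gives a $\delta$-term strictly larger than $k\delta$.

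The standard fix (e.g., Dwork--Roth, Lemma~3.17) replaces your bad-set argument by the simulation characterization: $(\epsilon_1,\delta_1)$-indistinguishability of $P=\calM_1(x)$ and $Q=\calM_1(x')$ gives a distribution $P'$ with $\dtv(P,P')\le\delta_1$ and $p'(y)\le e^{\epsilon_1} q(y)$ pointwise. Writing $f(y)=\Pr[\calM_2(x,y)\in S_y]$ and $g(y)=\Pr[\calM_2(x',y)\in S_y]$, one then gets $\E_P[f] \le \E_{P'}[f] + \delta_1 \le e^{\epsilon_2}\E_{P'}[g] + \delta_2 + \delta_1 \le e^{\epsilon_1+\epsilon_2}\E_Q[g] + \delta_1 + \delta_2$, which is the sharp two-fold bound needed for the induction to yield $(k\epsilon,k\delta)$.
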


\begin{theorem}[Variant of Parallel Composition \cite{McSherry10}]
\label{thm:parallel-comp}
Let $\cM : \cO \times \cY \to \cO$ be an $\epsilon$-DP mechanism that takes as input some parameters $w \in \cO$ and a dataset $Y \in \cY$ and outputs $\cM(w, Y) \in \cO$. For any $k \in \N$, let $\cM_k : \cO \times \cY^k \to \cO^{k+1}$ be a mechanism that takes as input parameters $w_0 \in \cO$ and a dataset $Y = (Y_1, \ldots, Y_k) \in \cY^k$ partitioned into $k$ disjoint components where $k$ does
not depend on $Y$.

Given $w_0\in \cO$, the mechanism $\cM_k$ (adaptively) computes $w_i = \cM(w_{i-1}, Y_i)$ for each $i \in [k]$, and outputs $(w_0, w_1, \ldots, w_k)$. Then $\cM_k$ satisfies $\epsilon$-DP.
\end{theorem}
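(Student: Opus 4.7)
My plan is to reduce the claim to the $\epsilon$-DP guarantee of the base mechanism $\cM$ by exploiting the fact that a pair of neighboring datasets $Y, Y' \in \cY^k$ differs in exactly one row, and that row belongs to a unique partition. Let $i^* \in [k]$ be the unique index such that $Y_{i^*} \neq Y'_{i^*}$ while $Y_j = Y'_j$ for all $j \neq i^*$. Because $Y_{i^*}$ and $Y'_{i^*}$ are themselves neighboring datasets in $\cY$, the $\epsilon$-DP guarantee of $\cM$ applies at the $i^*$-th step for every possible value of the auxiliary input $w_{i^*-1}$. No other step sees a data difference.

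The main step is to factor the law of the output sequence as a product of conditional transition kernels induced by the adaptive composition, and then compare the two laws coordinate by coordinate. For any target sequence $o = (o_0, o_1, \ldots, o_k)$, the Markov structure of $\cM_k$ gives
\begin{equation*}
\Pr[\cM_k(w_0, Y) = o] = \mathbf{1}[o_0 = w_0] \cdot \prod_{i=1}^{k} \Pr[\cM(o_{i-1}, Y_i) = o_i],
\end{equation*}
and analogously for $Y'$. Taking the ratio, every factor with $i \neq i^*$ cancels exactly since $Y_i = Y'_i$ and the corresponding kernels coincide, while the single remaining factor at $i = i^*$ is bounded by $e^\epsilon$ by $\epsilon$-DP of $\cM$ applied to the neighboring inputs $Y_{i^*}, Y'_{i^*}$ with the fixed auxiliary value $o_{i^*-1}$. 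Integrating (or summing) this pointwise density/mass ratio bound over any measurable $S \subseteq \cO^{k+1}$ then yields $\Pr[\cM_k(w_0, Y) \in S] \leq e^\epsilon \Pr[\cM_k(w_0, Y') \in S]$, which is exactly $\epsilon$-DP.

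The only technical subtlety, which I expect to be the mildest obstacle, is making the product factorization formally rigorous when $\cO$ is a general (e.g., continuous) output space. In that regime one should work with the joint law as an iterated integral against the transition kernels $\cM(\cdot, Y_i)$ rather than as a literal product of probabilities, but the argument is unchanged: on each event $\{w_{i-1} \in A_{i-1}\}$, the kernels at step $i \neq i^*$ agree exactly between $Y$ and $Y'$, and the kernel at step $i^*$ satisfies $\epsilon$-DP uniformly in its auxiliary input. Consequently no composition loss is incurred across the $k$ rounds, and $\cM_k$ inherits the $\epsilon$-DP guarantee of $\cM$.
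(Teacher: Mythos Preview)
Your proposal is correct and follows essentially the same approach as the paper: factor the joint law of $(w_0,\ldots,w_k)$ as a product of transition kernels, observe that neighboring $Y,Y'$ differ in only one block $Y_{i^*}$, and bound the single nontrivial ratio by $e^\epsilon$ using the DP guarantee of $\cM$. Your additional remark about handling general (continuous) $\cO$ via kernels is a nice touch that the paper omits.
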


\begin{proof}
Fix any sequence $w = (w_0, w_1, \ldots, w_k) \in \cO^{k+1}$ and let $Y = (Y_1, \ldots, Y_k)$ and $Y' = (Y_1', \ldots, Y_k')$ be neighboring datasets in $\cY^k$.
We need to show that the ratio
$\frac{\Pr[\cM_k(w_0, Y) = w]}{\Pr[\cM_k(w_0, Y') = w]}$
lies between $e^{-\epsilon}$ and $e^\epsilon$. 

Since $k$ does not depend on any dataset and $Y$ is partitioned into disjoint
subsets, there must exist at most one partition that is neighboring.
To this end and without loss of generality, let $j \in [k]$ be an index such that $Y_j,Y_j' \in \cY$ are neighboring datasets and $Y_i = Y_i'$ for all $i \neq j$. Therefore, by
independence,
\[\frac{\Pr[\cM_k(w_0, Y) = w]}{\Pr[\cM_k(w_0, Y') = w]}
= \prod_{i=1}^k \frac{\Pr[\cM(w_{i-1}, Y_i) = w_i]}{\Pr[\cM(w_{i-1}, Y_i') = w_i]}.\]
Thus,
in the rightmost product above, only the $j$\textsuperscript{th} of the $k$ factors may differ from $1$. Since $\cM$ is $\epsilon$-DP, this $j$\textsuperscript{th} factor lies between $e^{-\epsilon}$ and $e^{\epsilon}$. Thus, $\cM_k$ is also $\epsilon$-DP.
\end{proof}


\begin{theorem}[Advanced Composition, Theorem III.3 of \cite{dwork2010boosting}]
\label{thm:adv-comp}
For every $\epsilon, \delta, \delta' > 0$ and $k \in \N$, the class of $(\epsilon,\delta)$-DP mechanisms is $\left(\sqrt{2k \log(1/\delta')}\epsilon + k\epsilon(e^\epsilon -1), k\delta + \delta'\right)$-DP under $k$-fold adaptive compositions.
\end{theorem}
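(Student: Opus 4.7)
The plan is to adapt the standard privacy-loss-martingale argument. For any mechanism $\cM$ and neighboring inputs $x, x'$, define the privacy loss random variable $L(y) = \log \frac{\Pr[\cM(x) = y]}{\Pr[\cM(x') = y]}$ evaluated at $y \sim \cM(x)$. Under $k$-fold adaptive composition, the total privacy loss decomposes as $L_{\text{tot}} = \sum_{i=1}^k L_i$, where $L_i$ is the privacy loss of the $i$-th round conditioned on the transcript of outputs $y_1, \ldots, y_{i-1}$ produced so far. The goal is to show that with probability at least $1 - k\delta - \delta'$ over the full transcript we have $L_{\text{tot}} \leq \sqrt{2k \log(1/\delta')}\,\epsilon + k\epsilon(e^\epsilon - 1)$; this then converts into the claimed $(\epsilon',\delta')$-DP guarantee by the standard equivalence between privacy-loss tail bounds and approximate DP.

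First I would bridge the gap between $(\epsilon,0)$-DP and $(\epsilon,\delta)$-DP. The key fact is that any $(\epsilon,\delta)$-DP mechanism $\cM$ can be coupled with an $(\epsilon,0)$-DP mechanism $\cM'$ such that $\Pr[\cM(x) \neq \cM'(x)] \leq \delta$ on each input (the ``$\delta$-failure'' event). By a union bound across the $k$ rounds, one may restrict attention to the event that no round triggers its $\delta$-failure, at a cost of at most $k\delta$ in the overall failure probability. Conditioned on this event, $|L_i| \leq \epsilon$ pointwise. Next, one computes the conditional expectation $\mathbb{E}[L_i \mid y_1, \ldots, y_{i-1}]$: using $\Pr[\cM_i(x) = y] \leq e^\epsilon \Pr[\cM_i(x') = y]$ for neighboring $x, x'$, a direct calculation gives $\mathbb{E}[L_i \mid \text{hist}] \leq \epsilon (e^\epsilon - 1)$, which is the source of the deterministic term in the statement.

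Finally, $(L_i - \mathbb{E}[L_i \mid \text{hist}])_{i=1}^k$ is a bounded martingale difference sequence with increments in $[-2\epsilon, 2\epsilon]$, so Azuma--Hoeffding yields $\sum_i (L_i - \mathbb{E}[L_i \mid \text{hist}]) \leq \sqrt{2k \log(1/\delta')} \cdot \epsilon$ with probability at least $1 - \delta'$. Combining the deterministic bound $\sum_i \mathbb{E}[L_i \mid \text{hist}] \leq k \epsilon (e^\epsilon - 1)$ with the concentration bound and the $k\delta$ failure probability from the decomposition gives the desired tail bound on $L_{\text{tot}}$. The main subtlety, and the step most easily mishandled, is checking that adaptivity does not break the martingale structure: the $i$-th mechanism's choice depends on $y_1, \ldots, y_{i-1}$, but since the per-round DP guarantee holds uniformly over inputs, the conditional privacy loss $L_i \mid y_1, \ldots, y_{i-1}$ still satisfies both the pointwise $\epsilon$ bound and the $\epsilon(e^\epsilon -1)$ expectation bound on the ``good'' event, which is exactly what the martingale argument requires.
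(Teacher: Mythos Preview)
The paper does not supply its own proof of this statement: Theorem~\ref{thm:adv-comp} is quoted verbatim from \cite{dwork2010boosting} in the preliminaries section and is used as a black box, with no accompanying argument. Your sketch is essentially the standard privacy-loss-martingale proof from that reference, so there is nothing to compare against within the paper itself.

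One small quibble: when you say the centered increments lie in $[-2\epsilon,2\epsilon]$ and then invoke Azuma--Hoeffding to get the deviation $\epsilon\sqrt{2k\log(1/\delta')}$, the constants do not match that form of Azuma. The way the constant actually comes out right is via Hoeffding's lemma applied to the uncentered $L_i\in[-\epsilon,\epsilon]$, which has range $2\epsilon$ and hence subgaussian parameter $\epsilon^2/2$; this yields exactly $\epsilon\sqrt{2k\log(1/\delta')}$ for the fluctuation term. The overall structure of your argument is correct, but if you were writing this out in full you would want to phrase that step more carefully.
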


\begin{definition}[Differential Privacy under Condition] \label{def:dp-cond}
An algorithm $\cM: \cY \to \cO$ is said to be \emph{$(\eps, \delta)$-differentially private under condition $\Psi$} (or \emph{$(\eps, \delta)$-DP under condition $\Psi$}) for $\eps, \delta > 0$ iff, for every $S \subseteq \cO$ and every neighboring datasets $Y, Y'$ both satisfying $\Psi$, we have
$$\Pr[\cM(Y) \in S] \leq e^\eps \cdot \Pr[\cM(Y') \in S] + \delta.$$
\end{definition}

\begin{lemma}[Composition for Algorithm with Halting, Lemma 3.30 of \cite{kothari2021private}] \label{lem:composition-with-halt}
Let $\cM_1: \cY \to \cO_1 \cup \{\perp\}, \cM_2: \cO_1 \times \cY \to \cO_2 \cup \{\perp\}, \dots, \cM_k: \cO_{k-1} \times \cY \to \cO_k \cup \{\perp\}$ be algorithms. Furthermore, let $\cM$ denote the algorithm that proceeds as follows (with $o_0$ being empty): For $i = 1, \dots, k$, compute $o_i = \cM_i(o_{i - 1}, Y)$ and, if $o_i = \perp$, halt and output $\perp$. Finally, if the algorithm has not halted, then output $o_k$.

Suppose that:
\begin{itemize}
\item For any $1\leq i < k$, we say that $Y$ satisfies the condition $\Psi_i$ if running the algorithm on $Y$ does not result in halting after applying $\cM_1, \cM_2, \dots, \cM_i$.
\item $\cM_1$ is $(\eps_1, \delta_1)$-DP.
\item $\cM_i$ is $(\eps_i, \delta_i)$-DP (with respect to neighboring datasets in the second argument) under condition $\Psi_{i-1}$ for all $i = \{2, \dots, k\}$.
\end{itemize}
Then, $\cM$ is $\left(\sum_{i \in [k]} \eps_i, \sum_{i \in [k]} \delta_i\right)$-DP.
\end{lemma}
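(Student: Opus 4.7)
The plan is to induct on $k$. The base case $k = 1$ is immediate from the hypothesis. For the inductive step, let $\cM^{(k-1)}$ denote the $(k-1)$-step sub-algorithm producing $o \in \cO_{k-1} \cup \{\perp\}$ (with $\perp$ indicating a halt in one of the first $k-1$ steps), and view $\cM$ as obtained by post-composing $\cM_k(\cdot, Y)$ on the output of $\cM^{(k-1)}(Y)$, with the convention $\cM_k(\perp, Y) = \perp$. By the inductive hypothesis, $\cM^{(k-1)}$ is $(\eps', \delta')$-DP with $\eps' = \sum_{i<k}\eps_i$ and $\delta' = \sum_{i<k}\delta_i$.

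Fix neighboring $Y, Y'$ and a target set $E \subseteq \cO_k \cup \{\perp\}$. Decomposing by the intermediate output,
\[
\Pr[\cM(Y) \in E] \;=\; \sum_{o \in \cO_{k-1} \cup \{\perp\}} p_Y(o)\, q_Y(o, E),
\]
where $p_Z(o) := \Pr[\cM^{(k-1)}(Z) = o]$ and $q_Z(o, E) := \Pr[\cM_k(o, Z) \in E]$, with the convention $q_Z(\perp, E) := \mathbf{1}[\perp \in E]$. The key pointwise step is: whenever both $Y$ and $Y'$ satisfy $\Psi_{k-1}$, the conditional DP of $\cM_k$ yields $q_Y(o, E) \leq e^{\eps_k}\, q_{Y'}(o, E) + \delta_k$ for every $o \in \cO_{k-1}$, while for $o = \perp$ the two $q$-values coincide. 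Substituting this bound into the sum and then invoking $(\eps', \delta')$-DP of $\cM^{(k-1)}$ on the $[0,1]$-valued test function $o \mapsto q_{Y'}(o, E)$ produces
\[
\Pr[\cM(Y) \in E] \;\leq\; e^{\eps' + \eps_k}\, \Pr[\cM(Y') \in E] + (\delta' + \delta_k),
\]
which closes the induction.

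The main obstacle is the bookkeeping around the halting predicate. First, the edge case where exactly one of $Y, Y'$ fails $\Psi_{k-1}$ must be treated separately: on the failing side $\cM^{(k-1)}$ outputs $\perp$ almost surely, while on the other side the total non-halting probability is at most $\delta'$ via $(\eps', \delta')$-DP of $\cM^{(k-1)}$ applied to the event ``$\cM^{(k-1)}$ does not output $\perp$'', so the entire non-halting contribution is absorbed into the $\delta'$ slack. Second, a na\"ive accounting of the additive terms above would produce a spurious $e^{\eps_k}$ multiplicative factor in front of $\delta'$; this is shaved off by phrasing $(\eps, \delta)$-DP via the hockey-stick divergence $D_{e^{\eps}}(\cdot \,\|\, \cdot) \leq \delta$ and appealing to its clean composition under data processing---the same refinement that underlies the $(k\eps, k\delta)$ conclusion of Theorem \ref{thm:basic-comp}.
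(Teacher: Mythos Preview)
The paper does not prove this lemma; it is quoted verbatim as Lemma~3.30 of \cite{kothari2021private} and used as a black box in the privacy proofs of Theorems~\ref{thm:mean-est-approx-dp} and the covariance analogue. There is therefore no in-paper argument to compare your attempt against.

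That said, your inductive scheme is the standard one and is essentially correct. Two remarks. First, the predicate $\Psi_{i}$ is, per Definition~\ref{def:dp-cond}, a deterministic property of the dataset, so you should make explicit which reading you adopt (the natural one: $\Psi_{i}(Y)$ holds iff $\Pr[\cM^{(i)}(Y)\neq\perp]>0$); your edge-case paragraph implicitly uses this reading and is fine once it is stated. Second, the ``shave off the $e^{\eps_k}$ in front of $\delta'$'' step is exactly the content of basic composition (Theorem~\ref{thm:basic-comp}) and is most cleanly argued via the bad-set/hockey-stick formulation you allude to; since you invoke it by name rather than carry it out, the proof reads as a correct sketch rather than a self-contained argument.
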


\begin{theorem}[Gaussian mechanism,~\cite{DworkMNS06, NikolovTZ13}]
\label{thm:gaussian-mech}
Let $f: (\R^d)^{\otimes n} \rightarrow \R^k$ satisfy
\[
\max_{\calX, \calX'} \norm{f(\calX) - f(\calX')}_2 \leq \Delta,
\]
where $\calX, \calX'$ are neighboring datasets. Then
for any $\eps\in(0, 1)$,
$M(\calX) = f(\calX) + \sigma Z$ is $(\epsilon,\delta)$-DP, where $Z \sim \calN(0,I_k)$ and $\sigma = \sqrt{2 \ln (1.25/\delta)} \Delta/\epsilon$.
\end{theorem}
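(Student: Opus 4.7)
The plan is to prove this classical Gaussian mechanism bound by reducing to one dimension and then analyzing the privacy loss random variable directly. First, I would use the rotational invariance of the standard Gaussian $Z \sim \calN(0, I_k)$ to simplify the geometry: pick an orthonormal basis whose first coordinate is aligned with $f(\calX') - f(\calX)$. The distributions of $M(\calX) = f(\calX) + \sigma Z$ and $M(\calX') = f(\calX') + \sigma Z$ agree on all coordinates except the first, so by independence the problem reduces to comparing two univariate Gaussians $p \sim \calN(0, \sigma^2)$ and $q \sim \calN(\mu, \sigma^2)$ where $\mu = \Norm{f(\calX) - f(\calX')}_2 \leq \Delta$. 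Since the tail bound I need is monotone in $|\mu|$, I may assume $\mu = \Delta$ without loss of generality.

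Next I would write down the privacy loss random variable explicitly. For the 1D densities, one computes
\[
L(t) \;:=\; \log \frac{p(t)}{q(t)} \;=\; \frac{(t-\Delta)^2 - t^2}{2\sigma^2} \;=\; \frac{\Delta^2 - 2t\Delta}{2\sigma^2}\mper
\]
The set where the mechanism fails to satisfy the pointwise $e^\eps$ bound is therefore $B = \{t : L(t) > \eps\} = \{t : t < \tfrac{\Delta}{2} - \tfrac{\sigma^2 \eps}{\Delta}\}$. The strategy is then to show that $\Pr_{t \sim p}[t \in B] \leq \delta$, since once this is established, for any measurable $S$,
\[
\Pr[M(\calX) \in S] \leq \Pr_{t \sim p}[B] + e^{\eps}\Pr[M(\calX') \in S] \leq \delta + e^\eps \Pr[M(\calX') \in S]\mper
\]

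To bound $\Pr_{t \sim p}[B]$, I would substitute $t = \sigma Z$ with $Z \sim \calN(0,1)$ and rewrite the event as $\{Z > c\}$ where $c = \tfrac{\sigma \eps}{\Delta} - \tfrac{\Delta}{2\sigma}$. With the stated choice $\sigma = \sqrt{2 \ln(1.25/\delta)}\, \Delta / \eps$, one has $\sigma\eps/\Delta = \sqrt{2 \ln(1.25/\delta)}$, so $c = \sqrt{2 \ln(1.25/\delta)} - \tfrac{\eps}{2\sqrt{2 \ln(1.25/\delta)}}$. Applying the refined Mills-ratio tail bound $\Pr[Z > c] \leq \tfrac{1}{c\sqrt{2\pi}} e^{-c^2/2}$ for $c > 0$ and expanding $c^2 = 2\ln(1.25/\delta) - \eps + O(\eps^2 / \ln(1/\delta))$ gives, after collecting constants and using $\eps \in (0,1)$, exactly the clean bound $\delta$; the constant $1.25$ is tuned precisely to absorb the residual $\tfrac{1}{c\sqrt{2\pi}}$ prefactor.

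The main obstacle I expect is purely bookkeeping: tracking constants carefully enough to land on $\sqrt{2\ln(1.25/\delta)}$ rather than a weaker $\sqrt{2\ln(1/\delta)}$ style bound. The reduction to one dimension, the form of $L(t)$, and the conversion from a pointwise privacy-loss tail bound to $(\eps, \delta)$-DP are all routine; the delicate step is verifying that the correction term $-\tfrac{\eps}{2\sqrt{2\ln(1.25/\delta)}}$ in $c$ together with the Mills-ratio prefactor $\tfrac{1}{c\sqrt{2\pi}}$ combine to yield a tail probability at most $\delta$ uniformly over $\eps \in (0,1)$ and small $\delta$. If the constant $1.25$ feels fragile, I would alternatively use the crude bound $\Pr[Z > c] \leq e^{-c^2/2}$ and accept a slightly larger absolute constant in place of $1.25$; the asymptotic form $\sigma = \Theta(\Delta \sqrt{\log(1/\delta)}/\eps)$ is unaffected.
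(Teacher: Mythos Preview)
The paper does not supply its own proof of this statement: Theorem~\ref{thm:gaussian-mech} appears in the preliminaries section as a standard cited result (attributed to \cite{DworkMNS06, NikolovTZ13}), with no accompanying argument. So there is nothing in the paper to compare your proposal against.

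That said, your sketch is the standard textbook derivation (essentially the Dwork--Roth argument) and is correct in outline. The reduction to one dimension via rotational invariance, the explicit form of the privacy loss $L(t) = (\Delta^2 - 2t\Delta)/(2\sigma^2)$, the identification of the bad set $B$, and the use of the Mills-ratio bound are all sound. Your own caveat is accurate: the only nontrivial bookkeeping is checking that $\tfrac{1}{c\sqrt{2\pi}} e^{-c^2/2} \le \delta$ holds uniformly for $\eps\in(0,1)$ and $\delta$ bounded away from $1$ with the specific constant $1.25$, and this does go through once one expands $c^2 = 2\ln(1.25/\delta) - \eps + \eps^2/(8\ln(1.25/\delta))$ and uses $e^{\eps/2}\le e^{1/2}$. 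One small point worth making explicit in a full write-up: you also need the symmetric inequality with the roles of $\calX,\calX'$ swapped, but by the symmetry of the setup (both outputs are spherical Gaussians with the same variance) this is immediate.
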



\begin{definition}[$f$-Divergence Family~\cite{CI67, AS66}]
For any convex, lower semi-continuous function
$f:\reals^+\rightarrow\reals$, the 
{$f$-Divergence} between two probability measures
$P, Q$ is defined as
$$
D_f(P\Vert Q) = \int_\calX f\left(\frac{dP}{dQ}\right)dQ
= \int_\calX q(x)f\left(\frac{p(x)}{q(x)}\right)dx.
$$
\end{definition}

The family of $f$-divergences includes the
Jensen-Shannon, total variation, and hockey-stick
divergences. 

\begin{definition}[Hockey-Stick divergence~\cite{SasonV16}]
Let $p,q$ be probability density functions on $\R^d$ and $\gamma \geq 0$. The hockey-stick divergence $D_\gamma (p,q)$ between $p,q$ is defined as
\[
D_\gamma (p,q) = \int_{x \in \R^d} [p(x) - \gamma q(x)]_+ dx,
\]
where $[c]_+ = \max(c,0)$.
\label{def:hsd}
\end{definition}

We now state a consequence of the hockey-stick divergence bound for
differential privacy:

\begin{fact}
\label{fact:hockeystick-to-dp}
Let $\calM$ be a randomized algorithm whose output is in $\R^d$. Then $\calM$ is $(\epsilon, \delta)$-DP if
and only if
for any neighboring datasets $\calX, \calX'$ it holds that $D_{e^\eps}(\calM(\calX), \calM(\calX')) \leq \delta$
and
$D_{e^\eps}(\calM(\calX'), \calM(\calX)) \leq \delta$.
\end{fact}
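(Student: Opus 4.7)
The plan is to first establish the variational characterization of the hockey-stick divergence and then read off the equivalence with the DP guarantee.

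The key lemma I would prove first is: for any two probability measures $P, Q$ on $\R^d$ with densities $p,q$ and any $\gamma \geq 0$,
\[
D_\gamma(P,Q) \;=\; \sup_{S \text{ measurable}} \bigl(P(S) - \gamma Q(S)\bigr).
\]
The easy direction is that for every measurable $S$,
\[
P(S) - \gamma Q(S) \;=\; \int_S (p-\gamma q)\,dx \;\leq\; \int_S [p-\gamma q]_+\,dx \;\leq\; \int_{\R^d} [p-\gamma q]_+\,dx \;=\; D_\gamma(P,Q).
\]
The other direction is achieved by the specific set $S^\star = \{x : p(x) > \gamma q(x)\}$, for which the integrand $[p-\gamma q]_+$ agrees with $p - \gamma q$ on $S^\star$ and vanishes off $S^\star$, giving $D_\gamma(P,Q) = P(S^\star) - \gamma Q(S^\star)$. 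Taken together this yields the variational formula. (I would briefly note that the measure-theoretic version works identically without density assumptions via the Lebesgue decomposition, but the density case is what the definition states.)

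With the lemma in hand, the proof of the fact is immediate. Apply the lemma with $\gamma = e^\epsilon$ and $(P,Q) = (\calM(\calX), \calM(\calX'))$: the inequality $D_{e^\epsilon}(\calM(\calX), \calM(\calX')) \leq \delta$ is then equivalent to the statement that for every measurable $S \subseteq \R^d$,
\[
\Pr[\calM(\calX) \in S] \;\leq\; e^\epsilon \Pr[\calM(\calX') \in S] + \delta.
\]
Applying the lemma again with the roles of $\calX$ and $\calX'$ swapped gives the symmetric inequality. Combining the two inequalities over all ordered neighboring pairs $(\calX,\calX')$ is exactly the definition of $(\epsilon, \delta)$-DP, establishing both directions of the ``if and only if''.

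I do not anticipate any real obstacle here: the content of the fact is entirely the variational lemma, which is a pointwise optimization of the integrand. The only mild subtlety is to make sure the set $S^\star$ witnessing the supremum is measurable (which follows from measurability of $p$ and $q$) and to handle the boundary $\{p = \gamma q\}$ consistently, neither of which affects the value of the integral.
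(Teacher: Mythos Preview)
Your proposal is correct. The paper states this result as a ``Fact'' without proof, so there is no paper proof to compare against; your argument via the variational characterization $D_\gamma(P,Q) = \sup_S \bigl(P(S) - \gamma Q(S)\bigr)$ is the standard one and fully establishes the equivalence.
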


The following condition is also useful for proving that a mechanism satisfies $(\epsilon, \delta)$-DP.

\begin{lemma}[Lemma 1.5 in~\cite{Vadhan17},
Section 1.1 of \cite{bun2016concentrated}]
\label{thm:privacy-loss-tail-implies-dp}
For a (randomized) mechanism ${\cal M}$ and datasets $x, y$, define the function \[f_{xy}(z) = \log\left(\frac{\Pr[{\cal M}(x) = z]}{\Pr[{\cal M}(y) = z]}\right).\]
If \(\Pr[f_{xy}({\cal M}(x)) > \varepsilon] \le \delta\) for all adjacent datasets $x, y$, then ${\cal M}$ is $(\varepsilon, \delta)$-DP.
\end{lemma}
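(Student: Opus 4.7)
The plan is to prove the $(\varepsilon,\delta)$-DP inequality $\Pr[\mathcal{M}(x) \in S] \le e^\varepsilon \Pr[\mathcal{M}(y) \in S] + \delta$ directly from the definition by partitioning the event $\{\mathcal{M}(x) \in S\}$ according to whether the privacy loss at the observed output is bounded or not. Concretely, for any measurable output set $S$ I would write $S = S_{\mathrm{good}} \cup S_{\mathrm{bad}}$ where
\[
S_{\mathrm{good}} = \{z \in S : f_{xy}(z) \le \varepsilon\}, \qquad S_{\mathrm{bad}} = \{z \in S : f_{xy}(z) > \varepsilon\}.
\]

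On $S_{\mathrm{good}}$, the definition of $f_{xy}$ gives $\Pr[\mathcal{M}(x) = z] \le e^\varepsilon \Pr[\mathcal{M}(y) = z]$ pointwise, so summing (or integrating) over $z \in S_{\mathrm{good}}$ yields $\Pr[\mathcal{M}(x) \in S_{\mathrm{good}}] \le e^\varepsilon \Pr[\mathcal{M}(y) \in S_{\mathrm{good}}] \le e^\varepsilon \Pr[\mathcal{M}(y) \in S]$. On $S_{\mathrm{bad}}$, the event $\{\mathcal{M}(x) \in S_{\mathrm{bad}}\}$ is a subset of $\{f_{xy}(\mathcal{M}(x)) > \varepsilon\}$, so the hypothesis immediately gives $\Pr[\mathcal{M}(x) \in S_{\mathrm{bad}}] \le \delta$. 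Adding the two bounds,
\[
\Pr[\mathcal{M}(x) \in S] = \Pr[\mathcal{M}(x) \in S_{\mathrm{good}}] + \Pr[\mathcal{M}(x) \in S_{\mathrm{bad}}] \le e^\varepsilon \Pr[\mathcal{M}(y) \in S] + \delta,
\]
which is exactly Definition~\ref{def:dp}. Since the hypothesis is assumed for all ordered pairs of adjacent $x,y$, I would apply the same argument with the roles of $x$ and $y$ swapped to obtain the symmetric inequality, completing the $(\varepsilon,\delta)$-DP conclusion.

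There is no real obstacle; the only subtle point is a measure-theoretic one when the output space is continuous, since then $\Pr[\mathcal{M}(x) = z]$ must be interpreted as a Radon--Nikodym derivative. In that case I would replace the pointwise ratio by the density ratio $p_x(z)/p_y(z)$ with respect to a common dominating measure (e.g., $p_x+p_y$), set $f_{xy}(z) = \log(p_x(z)/p_y(z))$ on the support of $p_y$ and handle $\{p_y = 0\}$ by noting that if the hypothesis tail bound is to hold the contribution of $\{p_y = 0, p_x > 0\}$ under $\mathcal{M}(x)$ is at most $\delta$. All other steps transfer verbatim, so the partitioning argument above yields the claim in both the discrete and continuous settings.
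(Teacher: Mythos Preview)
Your proof is correct and is precisely the standard partitioning argument that appears in the cited references (Vadhan's survey, Lemma~1.5, and Bun--Steinke). Note that the paper itself does not supply a proof of this lemma; it is stated in the preliminaries as a cited background fact, so there is no alternative proof in the paper to compare against.
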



One of the most generic mechanisms used to satisfy pure differential privacy is the
\textit{Exponential Mechanism}:

\begin{theorem}[Exponential Mechanism~\cite{McSherryT07}]
Let $\calX\sim\calX'\in\calO^k$ denote two neighboring datasets.
Consider any arbitrary utility function $u:\calO^k\times\calR\rightarrow\reals$
with global sensitivity $\Delta_u = \max_{\calX\sim\calX', r}|u(\calX, r)-u(\calX', r)|$. For any dataset $\calX$, 
the exponential mechanism outputs $r\in\calR$ with probability
$\propto\exp(\frac{\eps\cdot u(\calX, r)}{2\Delta_u})$.

Furthermore, the exponential mechanism satisfies $\eps$-DP.

\end{theorem}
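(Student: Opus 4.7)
The plan is to bound the pointwise ratio of output densities by $e^{\eps}$ for every pair of neighboring datasets $\calX \sim \calX'$ and every $r \in \calR$; since this bound holds pointwise, integrating (or summing) over any measurable output set $S \subseteq \calR$ immediately yields $\Pr[M(\calX) \in S] \leq e^{\eps} \Pr[M(\calX') \in S]$, which is precisely $\eps$-pure DP. Concretely, I would write the output density (or probability mass) as
\[
p(r \mid \calX) = \frac{\exp\!\paren{\eps\, u(\calX, r)/(2\Delta_u)}}{Z(\calX)}, \qquad Z(\calX) = \int_{\calR} \exp\!\paren{\eps\, u(\calX, r')/(2\Delta_u)} \, dr',
\]
and factor the ratio across any neighboring pair as
\[
\frac{p(r \mid \calX)}{p(r \mid \calX')} = \exp\!\paren{\frac{\eps\,(u(\calX, r) - u(\calX', r))}{2\Delta_u}} \cdot \frac{Z(\calX')}{Z(\calX)}.
\]

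The heart of the argument is to bound each of the two factors by $e^{\eps/2}$ separately, which then compose to the desired $e^{\eps}$. The first factor is immediate from the definition of global sensitivity, since $|u(\calX, r) - u(\calX', r)| \leq \Delta_u$ forces the exponent to lie in $[-\eps/2, \eps/2]$. For the partition-function factor, I would apply the pointwise inequality $u(\calX', r') \leq u(\calX, r') + \Delta_u$ (again from sensitivity) inside the integral defining $Z(\calX')$, pull out the constant $e^{\eps/2}$, and conclude $Z(\calX') \leq e^{\eps/2} Z(\calX)$. Multiplying the two bounds gives a pointwise ratio of at most $e^{\eps}$.

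There is no serious obstacle: this is essentially a one-line density manipulation and uses no structural property of $u$ beyond bounded global sensitivity. The only mild technical requirement is that $Z(\calX)$ be finite, which holds in the standard finite-range case and more generally whenever $\exp(\eps u(\calX, \cdot)/(2\Delta_u))$ is integrable against some fixed base measure on $\calR$ (with respect to which all ``densities'' above are then interpreted). The clean symmetry of the argument comes from splitting the $\eps$ privacy budget equally between the exponent ratio and the normalizer ratio, which is why the factor of $2$ appears in the definition of the mechanism.
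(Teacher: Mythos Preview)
Your argument is correct and is the standard proof of the exponential mechanism's privacy guarantee. The paper itself does not prove this statement; it is stated in the preliminaries as a known result with a citation to~\cite{McSherryT07}, so there is no paper proof to compare against.
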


In general, the exponential mechanism is not computationally
efficient to implement.
We cite a recent result on pure DP mean estimation (on bounded second moment distributions), due to \cite{hopkins2021efficient},
that presents computationally efficient implementations of the exponential
mechanism via a Sum-of-Squares approach.
Their main procedure is  outlier-robust with a corruption level of $\eta$, at the cost of increasing the estimation error by an additive $O(\sqrt{\eta})$.
\begin{theorem}[Pure DP mean estimation, Theorem 1.2 of \cite{hopkins2021efficient}]
\label{thm:pure-dp-mean-est}
For every $n,d \in \N$ and $R, \alpha, \eps, \beta > 0$ there is a polynomial-time $\eps$-DP algorithm $\textsf{PureDPMean}$ such that for every distribution $D$ on $\R^d$ such that $\norm{\E_{X \sim D} X}_2 \leq R$ and $\Cov_{X \sim D} (X) \preceq I_d$, given $X_1, \ldots, X_n \sim D$, with probability at least $1-\beta$ the algorithm outputs $\widehat{\mu}$ such that $\norm{\widehat{\mu} - \E_{X \sim D} X}_2 \leq \alpha$ so long as
\[
n \geq \widetilde{O}\left( \frac{d + \log(1/\beta)}{\alpha^2 \epsilon} + \frac{d \log (R) + \min(d, \log(R)) \cdot \log (1/\beta)}{\epsilon}\right).
\]
Furthermore, if an $\eta$-fraction of the samples are adversarially corrupted, the algorithm
maintains the same guarantee, at the cost now that $\|\widehat{\mu} -\E_{X\sim D} X \| \leq \alpha + O(\sqrt{\eta})$.
\end{theorem}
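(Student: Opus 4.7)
The plan is to implement the exponential mechanism with a resilience-style score function $u(X,\mu)$, using a Sum-of-Squares (SoS) relaxation to achieve polynomial-time sampling while preserving pure DP. This instantiates the strategy sketched informally at the end of the preliminaries.

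First, I would define the score $u(X, \mu)$ to be the largest $k$ for which there exists a subset $S \subseteq X$ of size at least $k$ whose empirical mean is within $\alpha/4$ of $\mu$ and whose empirical covariance is at most $O(I_d)$ in the Loewner order. Swapping a single input point can change any such subset by at most one element, so the sensitivity is $\Delta_u \leq 1$, and the exponential mechanism on $u$ is $\eps$-DP by the standard guarantee. For utility, matrix-Bernstein concentration for distributions with $\Cov \preceq I_d$ gives $u(X, \mu_*) \geq (1-O(\alpha))n$ whenever $n \gtrsim d/\alpha^2$, while resilience of bounded second-moment distributions implies $u(X,\mu) \leq u(X,\mu_*) - \Omega(n)$ for every $\mu$ with $\|\mu-\mu_*\|_2 > \alpha$. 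Combining these with an $\alpha$-net of size $(R/\alpha)^{O(d)}$ over the ball of radius $R$ to bound the partition function yields the two additive terms in the sample complexity: $d \log R/\eps$ from the covering cost and $(d+\log(1/\beta))/(\alpha^2 \eps)$ from the concentration threshold together with the standard exponential-mechanism high-probability slack.

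The hard part is that the exponential mechanism with this combinatorial score is not directly implementable in polynomial time: both $u(X,\mu)$ itself and its normalizing integral are intractable. Following \cite{hopkins2021efficient}, I would relax the score by replacing the quantifier ``there exists a subset $S \subseteq X$ of size $\geq k$'' with ``there exists a degree-$O(1)$ SoS pseudo-distribution over indicator weights $w \in \{0,1\}^n$ with $\sum_i w_i \geq k$ satisfying polynomial inequalities encoding the bounded-covariance and closeness constraints.'' The relaxed score $\tilde u(X,\mu)$ is computable by a polynomial-size semidefinite program, retains sensitivity $1$ under sample swaps (so pure DP is preserved), and by the SoS identifiability proofs developed for robust mean estimation the utility bounds above still hold up to constants. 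Efficient sampling from $\exp(\eps\, \tilde u(X,\mu)/2)$ is then reduced to log-concave sampling on the explicit convex body of feasible pseudo-moments; this is where the main technical work of \cite{hopkins2021efficient} lies, and where I expect the dominant difficulty in rederiving the theorem.

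Finally, for the robustness statement, observe that an $\eta$-fraction of corruptions decreases $u(X, \mu_*)$ by at most $\eta n$, since one can simply delete the bad points from any witnessing subset $S$. Conversely, the standard identifiability bound for robust mean estimation under bounded second moments shows that any $\mu$ whose score remains within $\eta n$ of the optimum must satisfy $\|\mu-\mu_*\|_2 = O(\sqrt{\eta})$. Plugging these two facts into the same exponential-mechanism utility analysis yields the extra additive $O(\sqrt{\eta})$ in the error bound, while the privacy guarantee is unaffected because it depends only on the sensitivity of $\tilde u$.
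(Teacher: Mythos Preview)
This statement is not proved in the paper: it is quoted verbatim as Theorem~1.2 of \cite{hopkins2021efficient} and used purely as a black box (as the input to Algorithms~\ref{alg:pure-dp-matrix-mean} and~\ref{alg:pure-dp-mean-est}). There is therefore no ``paper's own proof'' to compare your proposal against.

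That said, your sketch is a reasonable high-level summary of the approach actually taken in \cite{hopkins2021efficient}: an exponential mechanism with a resilience/Tukey-depth-style score of sensitivity~$1$, relaxed via a constant-degree SoS program so that the resulting density can be sampled in polynomial time. You correctly flag that the log-concave sampling step is where the real technical weight sits. What your outline glosses over is that the relaxed score is a function on the lifted space of pseudo-moment matrices rather than on $\mu \in \R^d$ directly, and establishing that the induced density is efficiently samplable (log-concavity on the SDP feasible set, together with a rounding argument back to $\R^d$) is nontrivial and does not fall out of generic SDP convexity. But since the present paper only \emph{invokes} this result, none of that detail is required here.
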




%

\subsection{Basic Tools from Probability Theory}

\begin{definition}[Total Variation Distance]\label{def:TV}
For any two distributions $P,Q$ over $\reals^d$,
the \textit{total variation distance}  $\dtv$  is defined as 
\[
\dtv(P, Q) = \sup_{S\subseteq\reals^d}|P(S) - Q(S)|.
\]
Moreover, it can be verified that
\[
\dtv(P, Q) = \frac{1}{2}\int_{\reals^d}|p(x) - q(x)|dx.
\]
\end{definition}

The following lemma shows how to convert parameter estimation to distribution estimation (in total variation distance), for the multidimensional Gaussian distribution.
\begin{lemma}[Parameter closeness implies distribution closeness;  see Lemma 2.9 of \cite{kamath2019privately}]
\label{lemma:dtv-gaussians}
Let $\alpha \geq 0$, $\mu, \widehat{\mu} \in \R^d$ and $\Sigma, \widehat{\Sigma} \in \R^{d \times d}$ be PSD. Suppose that \[\norm{\Sigma^{-1/2} (\mu - \widehat{\mu})}_2 \leq \alpha \text{  and  } \norm{\Sigma^{-1/2} \widehat{\Sigma}  \Sigma^{-1/2} - I}_F \leq \alpha.\] Then $\dtv \left(\calN(\mu, \Sigma), \calN\left(\widehat{\mu}, \widehat{\Sigma}\right)\right) \leq O(\alpha)$.
\end{lemma}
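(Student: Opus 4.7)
The plan is to reduce to a standard-form comparison using the affine invariance of total variation, then split the problem into a mean perturbation and a covariance perturbation, each of which I bound via Pinsker's inequality applied to an explicit KL computation. Throughout I assume $\alpha$ is smaller than a fixed absolute constant, since otherwise the claim is trivial from $\dtv \leq 1$.

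First I would apply the invertible linear map $T: x \mapsto \Sigma^{-1/2}(x-\mu)$ to both distributions. This sends $\calN(\mu,\Sigma)$ to $\calN(0,I)$ and $\calN(\widehat\mu,\widehat\Sigma)$ to $\calN(\nu,\Lambda)$, where $\nu := \Sigma^{-1/2}(\widehat\mu-\mu)$ and $\Lambda := \Sigma^{-1/2}\widehat\Sigma\,\Sigma^{-1/2}$. The hypotheses then read $\|\nu\|_2 \leq \alpha$ and $\|\Lambda - I\|_F \leq \alpha$, and since TV is invariant under invertible measurable maps it suffices to show $\dtv(\calN(0,I),\calN(\nu,\Lambda)) = O(\alpha)$. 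By the triangle inequality I split
\[
\dtv(\calN(0,I),\calN(\nu,\Lambda)) \;\leq\; \dtv(\calN(0,I),\calN(0,\Lambda)) + \dtv(\calN(0,\Lambda),\calN(\nu,\Lambda)).
\]

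For the mean term, the two Gaussians share covariance $\Lambda$, so $\KL(\calN(\nu,\Lambda)\|\calN(0,\Lambda)) = \tfrac{1}{2}\nu^\top\Lambda^{-1}\nu = \tfrac{1}{2}\|\Lambda^{-1/2}\nu\|_2^2$. Since $\|\Lambda-I\|_2 \leq \|\Lambda-I\|_F \leq \alpha < 1/2$, every eigenvalue of $\Lambda$ lies in $[1-\alpha,1+\alpha]$, so $\|\Lambda^{-1/2}\|_2 = O(1)$ and Pinsker's inequality yields $\dtv(\calN(0,\Lambda),\calN(\nu,\Lambda)) \leq O(\|\nu\|_2) \leq O(\alpha)$.

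For the covariance term I compute in closed form
\[
\KL(\calN(0,I)\,\|\,\calN(0,\Lambda)) \;=\; \tfrac{1}{2}\left[\tr(\Lambda^{-1}) - d + \log\det\Lambda\right] \;=\; \tfrac{1}{2}\sum_{i=1}^d g(\lambda_i),
\]
where $\lambda_1,\dots,\lambda_d$ are the eigenvalues of $\Lambda$ and $g(\lambda) := \lambda^{-1} - 1 + \log\lambda$. Since $g(1) = g'(1) = 0$ and $g''$ is bounded on the interval $[1/2, 3/2] \supseteq [1-\alpha,1+\alpha]$, a second-order Taylor expansion gives $g(\lambda_i) = O((\lambda_i-1)^2)$, so summing yields $\KL = O\!\left(\sum_i (\lambda_i-1)^2\right) = O(\|\Lambda - I\|_F^2) = O(\alpha^2)$. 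Applying Pinsker once more bounds the covariance term by $O(\alpha)$, and combining the two bounds completes the proof.

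The only genuine subtlety is ensuring the eigenvalues of $\Lambda$ stay bounded away from $0$ and $\infty$ so that the Taylor estimate on $g$ is uniform; this is where I use $\|\Lambda-I\|_2 \leq \|\Lambda-I\|_F \leq \alpha$ and restrict to small $\alpha$. The rest is routine algebra, and everything is affine-invariant by construction, so no dependence on $\Sigma$ or $\mu$ sneaks back in.
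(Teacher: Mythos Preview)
Your proof is correct. The paper does not give its own proof of this lemma; it simply cites it as Lemma~2.9 of \cite{kamath2019privately}, so there is no in-paper argument to compare against. Your approach---affine reduction to $\calN(0,I)$ versus $\calN(\nu,\Lambda)$, triangle-inequality split into a mean shift and a covariance shift, then Pinsker applied to the closed-form KL with a second-order Taylor bound on $g(\lambda)=\lambda^{-1}-1+\log\lambda$---is the standard route and matches the argument one finds in the cited source. The only thing worth noting is your implicit assumption that $\Sigma$ is invertible (needed to define $\Sigma^{-1/2}$), which the paper also tacitly assumes throughout; and your reduction to small $\alpha$ is exactly the right way to handle the eigenvalue control on $\Lambda$.
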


The  fact below follows from  Theorem 4.12 of \cite{diakonikolas2019robust}.
\begin{fact}
\label{fact:cov-outer-product}
Let $X \sim \calN(0, \Sigma)$ and $Y = XX^T$. Then $\Cov(Y) \preceq 3 \Sigma \otimes \Sigma$. 
\end{fact}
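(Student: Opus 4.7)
The plan is to reduce to the isotropic case via a simple change of variables and then compute the covariance of $\text{vec}(ZZ^\top)$ for $Z \sim \calN(0,I_d)$ using Isserlis' formula.

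First I would write $X = \Sigma^{1/2} Z$ for $Z \sim \calN(0, I_d)$, so that $Y = \Sigma^{1/2} Z Z^\top \Sigma^{1/2}$ and $\vectorize(Y) = (\Sigma^{1/2} \otimes \Sigma^{1/2}) \, \vectorize(ZZ^\top)$. Taking covariances, this gives the identity
\[
\cov(\vectorize(Y)) \;=\; (\Sigma^{1/2} \otimes \Sigma^{1/2}) \,\cov(\vectorize(ZZ^\top)) \,(\Sigma^{1/2} \otimes \Sigma^{1/2}),
\]
so it suffices to show $\cov(\vectorize(ZZ^\top)) \preceq 3 I_{d^2}$ (indeed even $\preceq 2 I_{d^2}$). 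Conjugating both sides by $\Sigma^{1/2} \otimes \Sigma^{1/2}$ then recovers the claim in the form $\cov(Y) \preceq 3 (\Sigma \otimes \Sigma)$.

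To analyze the isotropic case, I would apply Isserlis' (Wick's) theorem: for $Z \sim \calN(0, I_d)$, we have $\E[Z_i Z_j Z_k Z_l] = \delta_{ij}\delta_{kl} + \delta_{ik}\delta_{jl} + \delta_{il}\delta_{jk}$, and subtracting the product of means $\E[Z_iZ_j]\E[Z_kZ_l] = \delta_{ij}\delta_{kl}$ yields
\[
\cov(\vectorize(ZZ^\top))_{(i,j),(k,l)} \;=\; \delta_{ik}\delta_{jl} + \delta_{il}\delta_{jk} \;=\; (I_{d^2} + K)_{(i,j),(k,l)},
\]
where $K$ is the $d^2 \times d^2$ commutation matrix defined by $K\,\vectorize(A) = \vectorize(A^\top)$. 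Since $K$ is symmetric with $K^2 = I$, its eigenvalues are $\pm 1$, so $I + K \preceq 2 I_{d^2}$; equivalently, $(I+K)/2$ is the orthogonal projector onto the symmetric subspace of $\reals^{d \times d}$.

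Putting these together: $\cov(\vectorize(ZZ^\top)) = I + K \preceq 2 I_{d^2}$, and therefore after re-conjugating, $\cov(Y) \preceq 2 (\Sigma \otimes \Sigma) \preceq 3 (\Sigma \otimes \Sigma)$. The step requiring the most care is bookkeeping the vectorization convention so that the Kronecker factor $\Sigma \otimes \Sigma$ and the commutation matrix $K$ are correctly identified with the index pattern produced by Isserlis' formula; once that is pinned down, the eigenvalue bound on $I+K$ immediately finishes the proof.
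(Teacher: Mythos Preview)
Your argument is correct and in fact yields the sharper bound $\cov(Y)\preceq 2(\Sigma\otimes\Sigma)$. The reduction to the isotropic case via $\vectorize(\Sigma^{1/2} ZZ^\top \Sigma^{1/2})=(\Sigma^{1/2}\otimes\Sigma^{1/2})\vectorize(ZZ^\top)$ is sound (since $\Sigma^{1/2}$ is symmetric), and the identification $\cov(\vectorize(ZZ^\top))=I_{d^2}+K$ from Isserlis' formula, together with $K^2=I$, cleanly gives the eigenvalue bound.

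The paper itself does not supply a proof of this fact; it simply cites Theorem~4.12 of \cite{diakonikolas2019robust}. So your write-up is a self-contained alternative to an external citation, and as a bonus recovers the constant $2$ rather than $3$. The only remark worth making is that the vectorization convention matters for which factor appears on which side of the Kronecker product; you flagged this yourself, and since both factors here equal $\Sigma^{1/2}$, the conclusion is unaffected either way.
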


For the Gaussian distribution, the empirical estimator of its covariance attains the following statistical accuracy. This fact is needed for analyzing our Gaussian sampling mechanism.
\begin{theorem}[Empirical covariance estimator for Gaussian \cite{vershynin2018high}]
\label{thm:emp-cov-concentration}
Let $\Sigma \in \R^{d \times d}$ be PSD, $X_1,\ldots,X_n \sim \calN(0,\Sigma)$ be \iid and $\widehat{\Sigma} = \frac{1}{n} \sum_{i=1}^n X_i X_i^T$. Then with probability $1-\gamma$, it holds that
\[
\norm{\Sigma^{-1/2} \widehat{\Sigma} \Sigma^{-1/2} - I}_F \leq \rho,
\]
for some $\rho = O\left(\sqrt{\frac{d^2 + \log(1/\gamma)}{n}} + \frac{d^2 + \log(1/\gamma)}{n} \right)$.
\end{theorem}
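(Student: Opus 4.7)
The plan is to reduce to the isotropic case by whitening, and then apply a standard concentration bound for sums of independent mean-zero sub-exponential quadratic vectors.

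\textbf{Step 1: Whitening.} I would first set $Z_i := \Sigma^{-1/2} X_i$, so that $Z_1,\dots,Z_n \sim \calN(0,I_d)$ are iid. Then
\[
\Sigma^{-1/2}\widehat\Sigma\, \Sigma^{-1/2} - I_d \;=\; \frac{1}{n}\sum_{i=1}^n \bigl(Z_i Z_i^{\top} - I_d\bigr) \;=:\; M,
\]
and the task reduces to bounding $\|M\|_F$. Observe that whitening is well-defined because $\Sigma$ is PSD (and we may restrict to its range if $\Sigma$ is singular, since the Frobenius norm is defined with respect to $\Sigma^{-1/2}$).

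\textbf{Step 2: Variance computation via Wick's formula.} Vectorize: let $V_i := \vectorize(Z_i Z_i^{\top} - I_d) \in \R^{d^2}$, so $\|M\|_F = \|\tfrac{1}{n}\sum_i V_i\|_2$. The $V_i$ are iid mean-zero random vectors. Using Isserlis' theorem, the entries satisfy
\[
\E\bigl[(Z_i Z_i^{\top} - I_d)_{k\ell}(Z_i Z_i^{\top} - I_d)_{k'\ell'}\bigr] \;=\; \delta_{kk'}\delta_{\ell \ell'} + \delta_{k\ell'}\delta_{\ell k'},
\]
and in particular $\E\|V_i\|_2^2 = d^2 + d$, giving $\E\|M\|_F^2 \leq 2d^2/n$. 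The covariance matrix of $V_i$ has operator norm $O(1)$, which is the key structural input for the Bernstein bound below.

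\textbf{Step 3: Sub-exponential concentration.} Each coordinate of $V_i$ is a degree-$2$ polynomial in standard Gaussians, hence sub-exponential with sub-exponential norm $O(1)$. I would then apply a vector Bernstein inequality for sums of independent sub-exponential vectors (equivalently, Hanson--Wright applied to the quadratic form $\|M\|_F^2$ in the $nd$ underlying Gaussians): there exists an absolute constant $C$ such that, for every $t > 0$,
\[
\Pr\!\left[\,\|M\|_F \;>\; C\!\left(\sqrt{\tfrac{d^2 + t}{n}} + \tfrac{d^2 + t}{n}\right)\right] \;\leq\; e^{-t}.
\]
Setting $t = \log(1/\gamma)$ yields the stated bound on $\rho$.

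\textbf{Main obstacle.} The only nontrivial step is Step 3: obtaining the correct two-regime form $\sqrt{(d^2+\log(1/\gamma))/n} + (d^2+\log(1/\gamma))/n$ rather than the weaker Gaussian-only $\sqrt{(d^2+\log(1/\gamma))/n}$. This arises because $\|M\|_F^2$ is a degree-$4$ polynomial (second Wiener chaos) in the underlying Gaussian variables, so its tails are sub-exponential, not sub-Gaussian; one must be careful to either invoke a decoupling/Hanson--Wright bound tailored to quadratic chaos, or equivalently a Bernstein-type bound whose sub-exponential parameter is calibrated to the operator-norm bound on the covariance of $V_i$ computed in Step 2. Since the result is folklore (and literally Theorem~4.7.1 of Vershynin's textbook), I would simply cite this concentration bound rather than rederive it.
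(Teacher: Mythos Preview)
The paper does not prove this statement at all: it is stated in the preliminaries as a known fact cited from Vershynin's textbook, and is used as a black box (e.g., for the utility of the Gaussian sampling mechanism). So there is no ``paper's own proof'' to compare against.

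Your sketch is a correct and standard derivation. One minor quibble: Theorem~4.7.1 in Vershynin is the \emph{spectral}-norm covariance estimation bound, not the Frobenius-norm bound stated here; the Frobenius version you want follows exactly along the lines of your Steps~1--3 (whiten, then apply a Bernstein/Hanson--Wright bound to the degree-two chaos $\|M\|_F$), but it is not literally that theorem number. Since the paper treats this as a citation anyway, this is immaterial.
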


\paragraph{Concentration inequalities} We cite some standard concentration inequalities.
\begin{theorem}[Hanson-Wright Inequality \cite{rudelson2013hanson}]
\label{thm:hanson-wright}
Let $g \sim \calN(0, I_d)$ and $A \in \R^{d \times d}$. Then for some absolute constant $c > 0$ and every $t \geq 0$ it holds that
\[
\Pr (|g^T A g - \E g^T A g| \geq t) \leq 2\exp\left(-c \min \left(  \frac{t^2}{\norm{A}_F^2}, \frac{t}{\norm{A}_2} \right) \right).
\]
\end{theorem}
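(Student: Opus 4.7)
The plan is to prove the Hanson--Wright inequality by exploiting the rotational invariance of the Gaussian distribution to diagonalize the quadratic form, and then applying the Cram\'er--Chernoff method to a sum of independent weighted centered $\chi^2_1$ random variables.

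First I would symmetrize: since $g^\top A g = g^\top A_s g$ for $A_s = (A + A^\top)/2$, and since $\|A_s\|_F \leq \|A\|_F$ and $\|A_s\|_2 \leq \|A\|_2$, we may assume $A$ is symmetric at the cost of only absolute constants. Write $A = U \Lambda U^\top$ with $\Lambda = \mathrm{diag}(\lambda_1, \ldots, \lambda_d)$ the eigendecomposition and set $h = U^\top g$; by rotational invariance $h \sim \calN(0, I_d)$, so the centered quadratic form becomes
\[
Z := g^\top A g - \E\, g^\top A g = \sum_{i=1}^d \lambda_i (h_i^2 - 1),
\]
a weighted sum of $d$ independent centered $\chi^2_1$ variables. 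Note that $\sum_i \lambda_i^2 = \|A\|_F^2$ and $\max_i |\lambda_i| = \|A\|_2$.

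Next I would bound the moment generating function. The standard $\chi^2_1$ MGF is $\E[\exp(s(h_i^2 - 1))] = e^{-s}/\sqrt{1-2s}$ for $s < 1/2$, and the elementary inequality $-\tfrac{1}{2}\log(1-2s) - s \leq 2s^2$ valid for $|s| \leq 1/4$ yields $\log \E[\exp(s(h_i^2 - 1))] \leq 2 s^2$. Applying this with $s = \theta \lambda_i$ and using independence, we get
\[
\log \E[\exp(\theta Z)] \leq 2 \theta^2 \sum_i \lambda_i^2 = 2 \theta^2 \|A\|_F^2 \qquad \text{whenever } |\theta| \leq \tfrac{1}{4 \|A\|_2}.
\]
The standard Chernoff bound $\Pr[Z \geq t] \leq \exp(-\theta t + 2 \theta^2 \|A\|_F^2)$, optimized over $\theta \in (0, 1/(4\|A\|_2)]$ at $\theta^\star = \min\bigl(t/(4\|A\|_F^2),\, 1/(4\|A\|_2)\bigr)$, produces the two-regime bound $\exp(-c \min(t^2/\|A\|_F^2,\, t/\|A\|_2))$ for an absolute constant $c > 0$. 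The lower tail is handled symmetrically (replace $A$ by $-A$, whose Frobenius and spectral norms are unchanged), and a union bound over the two tails supplies the factor of $2$ in front of the exponential.

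The only genuinely nontrivial step is the MGF inequality $-\tfrac{1}{2}\log(1-2s) - s \leq 2 s^2$, which must be justified on the interval $|s| \leq 1/4$ to ensure the constraint $\theta \lambda_i$ stays in the valid range; the rest is routine sub-exponential concentration. The main obstacle, if there is one, is tracking the absolute constants through the two-regime optimization so that the same $c$ serves for both the sub-Gaussian regime $t \leq \|A\|_F^2/\|A\|_2$ and the sub-exponential regime $t > \|A\|_F^2/\|A\|_2$; but since the theorem only asserts the existence of some $c > 0$, this bookkeeping is minor.
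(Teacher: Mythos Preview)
The paper does not provide its own proof of this statement: it is cited from \cite{rudelson2013hanson} as a standard concentration inequality in the preliminaries section, without proof. Your argument is a correct and self-contained proof of the Gaussian case of Hanson--Wright via diagonalization and the Cram\'er--Chernoff method; the symmetrization, the MGF bound $-\tfrac{1}{2}\log(1-2s)-s\le 2s^2$ on $|s|\le 1/4$, and the two-regime optimization all check out (indeed $c=1/8$ works). One very minor remark: since $\|A_s\|_F\le\|A\|_F$ and $\|A_s\|_2\le\|A\|_2$, the reduction to symmetric $A$ incurs no loss in constants at all, not merely ``only absolute constants.''
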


\begin{lemma}[Chi-squared tail bound]
\label{lem:chi-squared-tail}
Let $Z \sim \calN(0,I_d)$. Then there is some constant $c > 0$ such that for all $ t \geq 0$,
\[
\Pr \left(\norm{Z}_2^2 \geq d + t\right) \leq \exp\left(-c \min\left\{t^2/d, t\right\}\right).
\]
\end{lemma}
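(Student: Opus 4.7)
The plan is to apply the Hanson-Wright inequality (Theorem~\ref{thm:hanson-wright}) directly to the quadratic form $\|Z\|_2^2 = Z^\top I_d Z$, with the choice $A = I_d$. This is a particularly clean instantiation: we have $\E[Z^\top I_d Z] = \operatorname{tr}(I_d) = d$, $\|I_d\|_F = \sqrt{d}$, and $\|I_d\|_2 = 1$.

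Substituting these values into the Hanson-Wright bound yields
\[
\Pr\bigl(|\|Z\|_2^2 - d| \geq t\bigr) \leq 2 \exp\!\left(-c \min\!\left\{\frac{t^2}{d},\, t\right\}\right)
\]
for the absolute constant $c > 0$ from Theorem~\ref{thm:hanson-wright}. Dropping the lower tail gives a one-sided estimate for $\Pr(\|Z\|_2^2 \geq d + t)$ of the same form, and the leading factor of $2$ can be absorbed into the constant in the exponent (at the possible cost of restricting to $t$ bounded away from $0$, which is harmless since for small $t$ the trivial bound $\Pr(\cdot) \leq 1$ already matches the claimed form after adjusting $c$).

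There is essentially no obstacle here: the lemma is an immediate corollary of Hanson-Wright applied to the identity matrix. The only minor bookkeeping is handling the two-sided-to-one-sided conversion and the constant factor in front of the exponential, both of which are absorbed into the constant $c$ in the statement. As an alternative route, one could instead compute the moment generating function of $\|Z\|_2^2 \sim \chi^2_d$ directly, namely $\E[\exp(\lambda(\|Z\|_2^2 - d))] = e^{-\lambda d}(1-2\lambda)^{-d/2}$ for $\lambda < 1/2$, and then apply a Chernoff bound with $\lambda$ optimized over the two regimes $t \leq d$ and $t > d$ (recovering the Laurent--Massart bound); but since Hanson-Wright is already stated as Theorem~\ref{thm:hanson-wright} in the excerpt, the one-line deduction above is the natural choice.
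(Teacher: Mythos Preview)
The paper states this lemma without proof, listing it among standard concentration inequalities immediately after Hanson--Wright (Theorem~\ref{thm:hanson-wright}); there is no argument in the paper to compare against. Your deduction from Hanson--Wright with $A = I_d$ is correct and is precisely the natural route given the surrounding context, so your proposal is fine as a proof of the stated lemma.

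One small remark on the bookkeeping: your sentence about absorbing the leading factor $2$ via the trivial bound for small $t$ is slightly loose, since for $t>0$ the right-hand side $\exp(-c\min\{t^2/d,t\})$ is strictly less than $1$. The clean way to handle this is either (i) to note that the one-sided Chernoff/MGF argument you mention (the Laurent--Massart route) yields the upper-tail bound directly without the factor $2$, or (ii) to use that $\Pr(\|Z\|_2^2 \ge d) \le 1/2$ uniformly in $d$ (median of $\chi^2_d$ is below its mean), so the probability is already at most $1/2$ for all $t\ge 0$, after which the factor $2$ is absorbed by halving the constant. Either fix is routine.
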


\paragraph{Sub-exponential random variables} We now cite some properties of sub-exponential random variables.

\begin{definition}[Sub-exponential random variable; Definition 2.7 in \cite{wainwright2019high}]
\label{def:sub-exponential}
A random variable $X$ with mean $\mu = \E[X]$ is \emph{sub-exponential} if there are non-negative parameters $(\nu, \alpha)$ such that \[\E\left[e^{\lambda(X - \mu)}\right] \le e^{\frac{\nu^2\lambda^2}{2}} \qquad \text{for all }|\lambda| < \frac{1}{\alpha}.\]
\end{definition}

\begin{lemma}[Sub-exponential tail bound, Proposition 2.9 in \cite{wainwright2019high}]
\label{thm:sub-exponential-tail}
Suppose that $X$ is sub-exponential with parameters $(\nu, \alpha)$. Then \[\Pr[X - \mu \ge t] \le \max\left\{e^{-\frac{t^2}{2\nu^2}}, e^{-\frac{t}{2\alpha}}\right\}.\]
\end{lemma}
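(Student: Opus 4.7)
The plan is a standard Chernoff-style argument, optimizing the moment generating function bound that comes for free from the definition of sub-exponentiality. First I would apply Markov's inequality to $e^{\lambda(X-\mu)}$ for an arbitrary $\lambda \in [0, 1/\alpha)$, and combine this with the MGF bound from Definition \ref{def:sub-exponential} to obtain, for every such $\lambda$,
\[
\Pr[X - \mu \ge t] \;\le\; e^{-\lambda t}\, \E[e^{\lambda(X-\mu)}] \;\le\; \exp\!\left(-\lambda t + \tfrac{\nu^2 \lambda^2}{2}\right).
\]
The remaining task is then just to minimize the quadratic $g(\lambda) = -\lambda t + \nu^2 \lambda^2 / 2$ over the admissible set $\lambda \in [0, 1/\alpha)$.

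The unconstrained minimizer is $\lambda^\star = t/\nu^2$, and this leads naturally to two cases depending on whether $\lambda^\star$ lies in the admissible range. In the small-deviation regime $t \le \nu^2/\alpha$, we have $\lambda^\star \in [0, 1/\alpha]$, so substituting $\lambda = \lambda^\star$ yields the Gaussian-type bound $\exp(-t^2/(2\nu^2))$. In the large-deviation regime $t > \nu^2/\alpha$, the quadratic is still decreasing at the boundary $\lambda = 1/\alpha$, so the best admissible choice is to take $\lambda \to 1/\alpha$, which gives
\[
g(1/\alpha) \;=\; -\frac{t}{\alpha} + \frac{\nu^2}{2\alpha^2} \;<\; -\frac{t}{\alpha} + \frac{t}{2\alpha} \;=\; -\frac{t}{2\alpha},
\]
using $\nu^2/\alpha < t$ in the middle step. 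This yields the exponential-tail bound $\exp(-t/(2\alpha))$.

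Combining the two cases, the achieved bound in each regime is dominated by $\max\{e^{-t^2/(2\nu^2)}, e^{-t/(2\alpha)}\}$: one checks that $t^2/(2\nu^2) \le t/(2\alpha)$ precisely when $t \le \nu^2/\alpha$, so the maximum selects the correct (and in fact the tighter) of the two exponents in each regime, which gives exactly the claimed inequality. There is no real obstacle here; the only thing to be careful about is that one cannot directly use $\lambda = \lambda^\star$ when the unconstrained optimum falls outside $[0, 1/\alpha)$, which is precisely why the bound degrades from Gaussian to exponential in the tail and why the $\max$ (rather than $\min$) appearing in the statement is the correct expression.
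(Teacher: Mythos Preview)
Your argument is correct and is exactly the standard Chernoff-bound proof of this fact; it is essentially the argument in Wainwright's textbook, which the paper cites without reproducing a proof. There is nothing to compare against in the paper itself, and your handling of the two regimes (including the limiting choice $\lambda \to 1/\alpha$ when the unconstrained optimizer falls outside the admissible range) is clean and complete.
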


Let $\chi_d^2$ denote a Chi-squared random variable with $d$ degrees of freedom.
\begin{lemma}[$\chi^2_1$ sub-exponential parameters, Example 2.11 in \cite{wainwright2019high}]
\label{thm:chi-squared-sub-exponential}
A chi-squared random variable with $1$ degree of freedom ($\chi_1^2$)  is sub-exponential with parameters $(\nu, \alpha) = (2, 4)$.
\end{lemma}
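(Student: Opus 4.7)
The plan is to verify the sub-exponential condition in Definition~\ref{def:sub-exponential} directly via an explicit moment generating function (MGF) computation. If $Z \sim \mathcal{N}(0,1)$ and $X = Z^2 \sim \chi_1^2$, then $\mu = \E[X] = 1$ and a standard Gaussian integral gives
\[
\E[e^{\lambda X}] \;=\; \int_{-\infty}^{\infty} \frac{1}{\sqrt{2\pi}} e^{-z^2/2 + \lambda z^2}\, dz \;=\; \frac{1}{\sqrt{1 - 2\lambda}}, \qquad \lambda < \tfrac{1}{2}.
\]
Hence $\E[e^{\lambda(X-\mu)}] = e^{-\lambda}/\sqrt{1-2\lambda}$, and showing sub-exponentiality with $(\nu,\alpha)=(2,4)$ reduces to the inequality
\[
g(\lambda) \;:=\; -\lambda - \tfrac{1}{2}\log(1-2\lambda) \;\leq\; 2\lambda^2 \;=\; \tfrac{\nu^2 \lambda^2}{2} \qquad \text{for all } |\lambda| < \tfrac{1}{\alpha} = \tfrac{1}{4}.
\]

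The next step is to control $g(\lambda)$ by expanding the logarithm as a power series. Using $-\tfrac{1}{2}\log(1-u) = \sum_{k\geq 1} u^k/(2k)$ with $u = 2\lambda$, the linear term cancels and one obtains
\[
g(\lambda) \;=\; \sum_{k\geq 2} \frac{(2\lambda)^k}{2k}.
\]
For $|\lambda| \leq 1/4$ we have $|2\lambda| \leq 1/2$, so crude term-by-term bounding yields
\[
g(\lambda) \;\leq\; \sum_{k\geq 2} \frac{|2\lambda|^k}{4} \;=\; \frac{(2\lambda)^2}{4(1 - |2\lambda|)} \;\leq\; \frac{\lambda^2}{1 - 1/2} \;=\; 2\lambda^2,
\]
which is exactly what we need. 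This settles the bound on the MGF for $|\lambda| < 1/4$, matching the required $(\nu,\alpha)=(2,4)$ parameters.

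There is no real obstacle here; the only thing to be careful about is that the sub-exponential condition involves a two-sided range $|\lambda|<1/\alpha$, so one must check that the above series bound also works for negative $\lambda$. But the estimate $g(\lambda) \leq \sum_{k\geq 2} |2\lambda|^k/(2k)$ used absolute values throughout, so it automatically covers $\lambda \in (-1/4, 0)$ as well (in fact $g$ is smaller for negative $\lambda$, since odd-order terms then contribute with a negative sign). Substituting into Definition~\ref{def:sub-exponential} with $\mu=1$ completes the proof.
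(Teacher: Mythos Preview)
Your proof is correct. The paper does not supply its own proof of this lemma (it simply cites Example~2.11 in \cite{wainwright2019high}), and your direct MGF computation with the power-series bound on $-\lambda - \tfrac{1}{2}\log(1-2\lambda)$ is precisely the standard argument given in that reference.
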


\begin{lemma}[Sub-exponential parameters of independent sum, Chapter 2 of \cite{wainwright2019high}]
\label{thm:sub-exponential-sum}
Consider an independent sequence $X_1, \ldots, X_k$ of random variables, such that $X_i$ is sub-exponential with parameters $(\nu_i, \alpha_i)$. Then the variable $\sum_{i=1}^k X_i$ is sub-exponential with the parameters $(\nu_*, \alpha_*)$, where \[\alpha_* = \max_{i \in [k]} \alpha_i \qquad\text{and}\qquad \nu_* = \sqrt{\sum_{i=1}^k \nu_i^2}.\]
\end{lemma}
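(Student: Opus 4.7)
The plan is to verify Definition~\ref{def:sub-exponential} directly for $S = \sum_{i=1}^k X_i$ by computing its moment generating function. Write $\mu = \sum_{i=1}^k \mu_i = \mathbb{E}[S]$. By independence of $X_1,\dots,X_k$, for any $\lambda \in \mathbb{R}$,
\[
\mathbb{E}\bigl[e^{\lambda(S-\mu)}\bigr] \;=\; \prod_{i=1}^k \mathbb{E}\bigl[e^{\lambda(X_i - \mu_i)}\bigr].
\]

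Next, I would restrict attention to the range $|\lambda| < 1/\alpha_*$. Since $\alpha_* = \max_i \alpha_i$, we have $1/\alpha_* = \min_i 1/\alpha_i \leq 1/\alpha_i$ for every $i$, so the condition $|\lambda| < 1/\alpha_*$ simultaneously triggers each per-variable bound from Definition~\ref{def:sub-exponential}: $\mathbb{E}[e^{\lambda(X_i - \mu_i)}] \leq \exp(\nu_i^2 \lambda^2 / 2)$ for all $i$. Substituting into the product and collecting exponents gives
\[
\mathbb{E}\bigl[e^{\lambda(S-\mu)}\bigr] \;\leq\; \exp\!\left(\frac{\lambda^2}{2} \sum_{i=1}^k \nu_i^2\right) \;=\; \exp\!\left(\frac{\nu_*^2 \lambda^2}{2}\right),
\]
which is exactly the sub-exponential MGF bound with parameters $(\nu_*, \alpha_*)$ on the required interval $|\lambda| < 1/\alpha_*$, completing the verification.

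There is no real obstacle: the result is a one-line consequence of independence plus the definition. The only conceptual subtlety worth highlighting is that the valid range of $\lambda$ for the sum is dictated by the worst (largest) $\alpha_i$, because we need every individual MGF bound to be in force simultaneously; this is why $\alpha_* = \max_i \alpha_i$ rather than, say, an average. The $\nu_*^2 = \sum_i \nu_i^2$ aggregation is then the usual Gaussian-like addition of variance proxies that arises whenever independent sub-Gaussian/sub-exponential tails are combined.
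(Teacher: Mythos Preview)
Your proof is correct and is exactly the standard argument. The paper does not actually prove this lemma---it is stated as a citation to \cite{wainwright2019high} with no proof given---so there is nothing further to compare; your MGF factorization via independence, followed by restricting to $|\lambda| < 1/\alpha_* = \min_i 1/\alpha_i$ so that every per-variable bound applies, is precisely the textbook derivation.
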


\subsection{Sum-of-Squares Optimization}


We refer the reader to the monograph~\cite{TCS-086} for a detailed exposition 
of the sum-of-squares method and its usage in average-case algorithm design. A \emph{degree-$\ell$ pseudo-distribution} is a finitely-supported function $D:\R^n \rightarrow \R$ such that $\sum_{x} D(x) = 1$ and $\sum_{x} D(x) f(x)^2 \geq 0$ for every polynomial $f$ of degree at most $\ell/2$. We define the \emph{pseudo-expectation} of a function $f$ on $\R^d$ with respect to a pseudo-distribution $D$, denoted $\pE_{D(x)} f(x)$, as $\pE_{D(x)} f(x) = \sum_{x} D(x) f(x)$. 

The degree-$\ell$ pseudo-moment of a pseudo-distribution $D$ is the tensor $\E_{D(x)} (1,x_1, x_2,\ldots, x_n)^{\otimes \ell}$ with entries corresponding to pseudo-expectations of monomials of degree at most $\ell$ in $x$. The set of all degree-$\ell$ moment tensors of degree $d$ pseudo-distributions is also closed and convex.

\begin{definition}[Constrained pseudo-distributions]
  Let $D$ be a degree-$\ell$ pseudo-distribution over $\R^n$.
  Let $\cA = \{f_1\ge 0, f_2\ge 0, \ldots, f_m\ge 0\}$ be a system of $m$ polynomial inequality constraints.
  We say that \emph{$D$ satisfies the system of constraints $\cA$ at degree $r$} (satisfies it $\eta$-approximately, respectively), if for every $S\subseteq[m]$ and every sum-of-squares polynomial $h$ with $\deg h + \sum_{i\in S} \max\set{\deg f_i,r} \leq \ell$, $\pE_{D} h \cdot \prod _{i\in S}f_i  \ge 0$.
  We say that $D$ satisfies (similarly for approximately satisfying) $\cA$ (without mentioning degree) if $D$ satisfies $\cA$ at degree $r$.
\end{definition}

\paragraph{Sum-of-squares proofs} A \emph{sum-of-squares proof} that the constraints $\{f_1 \geq 0, \ldots, f_m \geq 0\}$ imply the constraint $\{g \geq 0\}$ consists of  polynomials $(p_S)_{S \subseteq [m]}$ such that $g = \sum_{S \subseteq [m]} p_S \cdot \Pi_{i \in S} f_i$.

We say that this proof has \emph{degree $\ell$} if for every set $S \subseteq [m]$, the polynomial $p_S \Pi_{i \in S} f_i$ has degree at most $\ell$ and write: 
\begin{equation}
  \{f_i \geq 0 \mid i \leq r\} \sststile{\ell}{}\{g \geq 0\}
  \mper
\end{equation}



\begin{fact}[Soundness]
  \label{fact:sos-soundness}
  If $D$ satisfies $\cA$ for a degree-$\ell$ pseudo-distribution $D$ and there exists a sum-of-squares proof $\cA \sststile{r'}{} \cB$, then $D$ satisfies $\cB$ at degree $rr' +r'$.
\end{fact}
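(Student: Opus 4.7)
The plan is to directly unfold both definitions in play. By hypothesis, $D$ satisfies $\mathcal{A}$ at degree $r$, meaning $\pE_D[h \cdot \prod_{i \in S} f_i] \ge 0$ for every sum-of-squares polynomial $h$ and every $S \subseteq [m]$ obeying the degree budget $\deg h + \sum_{i \in S} \max\{\deg f_i, r\} \le \ell$. The SoS proof $\mathcal{A} \sststile{r'}{} \mathcal{B}$ provides, for each constraint $g_j \ge 0$ in $\mathcal{B}$, an explicit decomposition $g_j = \sum_{S \subseteq [m]} p_{j,S} \prod_{i \in S} f_i$ in which each $p_{j,S}$ is a sum of squares and each summand has total degree at most $r'$. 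Combining these two ingredients is essentially the entire argument.

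Concretely, to verify the defining condition for satisfaction of $\mathcal{B}$ at the claimed degree, I would fix an arbitrary subset $T$ of constraints in $\mathcal{B}$ together with an SoS polynomial $h$ satisfying $\deg h + \sum_{j \in T} \max\{\deg g_j,\, rr'+r'\} \le \ell$. Substituting each $g_j$ with its SoS-proof decomposition and expanding the product yields
\[
\pE_D\!\left[h \cdot \prod_{j \in T} g_j\right] \;=\; \sum_{(S_j)_{j \in T}} \pE_D\!\left[\,h \cdot \prod_{j \in T} p_{j,S_j} \cdot \prod_{i \in [m]} f_i^{\,c_i}\right],
\]
where $c_i := |\{j : i \in S_j\}|$ records how often each $f_i$ appears. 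Writing $f_i^{c_i} = (f_i^{\lfloor c_i/2\rfloor})^2 \cdot f_i^{c_i \bmod 2}$ and absorbing the even parts into the SoS factor, each term takes the shape $\pE_D[\tilde{h} \cdot \prod_{i \in S'} f_i]$ for some sum-of-squares polynomial $\tilde{h}$ and some subset $S' \subseteq [m]$; by the hypothesis on $D$ and $\mathcal{A}$, each such term is non-negative, and hence so is the entire sum.

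The only step requiring genuine attention is the degree bookkeeping: one has to check that each expanded term obeys the budget $\deg \tilde{h} + \sum_{i \in S'} \max\{\deg f_i, r\} \le \ell$, given that the original $h$ and $T$ only satisfy $\deg h + \sum_{j \in T} \max\{\deg g_j, rr' + r'\} \le \ell$. The choice of the coefficient $rr' + r'$ in the conclusion is engineered precisely so that this inequality propagates: each $g_j$ contributes at most $r'$ to the degree, so the $|T|$-fold product contributes at most $|T|\cdot r'$, and the $\max\{\cdot, r\}$ slack on the $f_i$ side is covered by the extra factor $r$ in $rr'$. Beyond this mechanical accounting, no new idea is needed; the statement is the standard soundness lemma for the SoS proof system, and the ``obstacle,'' such as it is, is just keeping the indices straight.
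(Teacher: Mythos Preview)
The paper does not actually prove this statement; it is recorded as a standard background ``Fact'' in the preliminaries on sum-of-squares and implicitly deferred to the cited monograph. Your approach---substitute each $g_j$ by its SoS certificate, expand the product, absorb even powers of the $f_i$ into the SoS multiplier, and verify the degree budget---is exactly the standard argument, and the degree accounting you sketch (the $|T|$-fold product contributes at most $|T|r'$ in degree, while the $\max\{\cdot,r\}$ on each surviving $f_i$ contributes the additional factor of $r$, with at most $|T|r'$ such $f_i$'s) is correct once made precise.
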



\begin{definition}[Total bit complexity of Sum-of-Squares Proofs]
Let $p_1, p_2, \ldots, p_m$ be polynomials in indeterminate $x$ with rational coefficients. 
For a polynomial $p$ with rational coefficients, we say that $\{p_i \geq 0\}$ derives $\{p\geq 0\}$ in degree $k$ and total bit complexity $B$ if $p = \sum_i q_i^2 + \sum_i r_i p_i$ where each $q_i^2,r_i$ are polynomials with rational coefficients of degree at most $k$ and $k-deg(p_i)$ for every $i$, and the total number number of bits required to describe all the coefficients of all the polynomials $q_i, r_i, p_i$ is at most $B$.
\end{definition}

There is an efficient separation oracle for moment tensors of pseudo-distributions that allows approximate optimization  of linear functions of pseudo-moment tensors approximately satisfying constraints. %
The \emph{degree-$\ell$ sum-of-squares algorithm} optimizes over the space of all degree-$\ell$ pseudo-distributions that approximately satisfy a given set of polynomial constraints:

\begin{fact}[Efficient Optimization over Pseudo-distributions \cite{MR939596-Shor87,parrilo2000structured,MR1748764-Nesterov00,MR1846160-Lasserre01}]
Let $\eta>0$. There exist an algorithm that for $n, m\in \N$ runs in time $(n+ m)^{O(\ell)} \poly \log 1/\eta$, takes input an explicitly bounded and satisfiable system of $m$ polynomial constraints $\cA$ in $n$ variables with rational coefficients and outputs a level-$\ell$ pseudo-distribution that satisfies $\cA$ $\eta$-approximately. \label{fact:eff-pseudo-distribution}
\end{fact}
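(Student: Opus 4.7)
The plan is to reduce the problem to a semidefinite program (SDP) obtained from the Lasserre hierarchy, and then appeal to standard convex optimization (e.g., the ellipsoid method or interior-point methods) to solve it approximately in polynomial time. First I would parametrize a degree-$\ell$ pseudo-distribution $D$ by its moment tensor $M_D \in \reals^{N}$ of dimension $N = \binom{n+\ell}{\ell}$, whose coordinates are indexed by monomials of degree at most $\ell$ and store the pseudo-expectations $\pE_D x^{\alpha}$. The space of valid such tensors is convex: the defining conditions $\pE_D 1 = 1$ and $\pE_D f^2 \geq 0$ for all polynomials $f$ of degree at most $\ell/2$ translate exactly into $M_D[\emptyset] = 1$ together with the positive semidefiniteness of the moment matrix $M(D)$ whose $(\alpha,\beta)$ entry is $\pE_D x^{\alpha+\beta}$. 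This is a linear PSD constraint on $M_D$ of size $\binom{n+\ell/2}{\ell/2} \times \binom{n+\ell/2}{\ell/2}$.

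Next I would encode the constraint-satisfaction requirement. For each constraint $f_i \geq 0$ in $\cA$ one forms the so-called localizing matrix $M_{f_i}(D)$, whose $(\alpha,\beta)$ entry is $\pE_D\bigl(f_i \cdot x^{\alpha+\beta}\bigr)$, where $\alpha,\beta$ range over monomials of degree at most $(\ell - \deg f_i)/2$. The condition that $D$ (approximately) satisfies $\cA$ at the requisite degree is equivalent to requiring $M_{f_i}(D) \succeq -\eta \cdot I$ for every $i \in [m]$ (and analogous conditions for products of constraints when the proof system requires them). Collecting these gives an SDP in $O(N)$ variables with $m+1$ PSD constraints, each of dimension $n^{O(\ell)}$. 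Feasibility of this SDP is exactly (approximate) satisfiability in the sense of the definition.

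Finally I would invoke the ellipsoid method with a separation oracle for the PSD cone, or equivalently interior-point methods à la Nesterov--Nemirovski, to produce an $\eta$-feasible point in time polynomial in $N$, $m$, and $\log(1/\eta)$; the explicit boundedness hypothesis on $\cA$ supplies the outer ball needed to initialize the ellipsoid and to bound the facial-reduction error. The main obstacle, and the technical reason the cited papers are needed, is numerical: one must argue that the SDP has a well-conditioned feasible region (a strictly feasible ball of inverse-polynomial radius) so that the $(n+m)^{O(\ell)}\cdot\poly\log(1/\eta)$ runtime genuinely suffices and one does not blow up the bit complexity of the output pseudo-moments. This is precisely where explicit boundedness is used -- it both certifies compactness and, together with satisfiability, lets one inflate any feasible solution by a tiny amount to produce a strictly feasible neighborhood that the ellipsoid method can localize.
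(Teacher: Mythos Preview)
The paper does not prove this statement: it is stated as a \emph{Fact} and attributed to the cited references \cite{MR939596-Shor87,parrilo2000structured,MR1748764-Nesterov00,MR1846160-Lasserre01}, with no argument given in the body of the paper beyond the preceding remark that ``there is an efficient separation oracle for moment tensors of pseudo-distributions that allows approximate optimization of linear functions of pseudo-moment tensors approximately satisfying constraints.'' So there is no in-paper proof to compare against.

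That said, your sketch is the standard route and is consistent with what the paper is implicitly invoking: represent a degree-$\ell$ pseudo-distribution by its moment vector, impose PSD-ness of the moment and localizing matrices to get an SDP of size $(n+m)^{O(\ell)}$, and solve it approximately via ellipsoid/interior-point using the explicit-boundedness hypothesis to control the feasible region. One small caveat: the paper's definition of ``$D$ satisfies $\cA$'' requires nonnegativity of $\pE_D\bigl[h\cdot\prod_{i\in S} f_i\bigr]$ for \emph{all} subsets $S\subseteq[m]$ and sum-of-squares $h$ of appropriate degree, not just singletons, so your SDP needs localizing matrices for each such product (still $(n+m)^{O(\ell)}$ many once the degree budget caps $|S|$). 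With that adjustment your outline matches the intended argument.
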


\subsection{Analytic Properties of Probability Distributions} \label{sec:analytic-props}

\paragraph{Certifiable Subgaussianity}
We define certifiable subgaussianity and it will be used in analysis of  our approximate DP robust mean estimation algorithm.
\begin{definition}[Certifiable Subgaussianity] \label{def:cert-subgaussianity}
A distribution $D$ on $\R^d$ with mean $\mu_*$ is said to be $2k$-certifiably $C$-subgaussian if there is a degree $2k$ sum-of-squares proof of the following polynomial inequality in $d$-dimensional vector-valued indeterminate $v$:
\[
\E_{x \sim D} \iprod{x-\mu_*,v}^{2k} \leq (Ck)^k \Paren{\E_{x \sim D} \iprod{x-\mu_*,v}^2}^k\mper
\]
Furthermore, we say that $D$ is certifiable $C$-subgaussian if it is $2k$-certifiably $C$-subgaussian for every $k\in \N$.

A finite set $X \subseteq \R^d$ is said to be $2k$-certifiable $C$-subgaussian if the uniform distribution on $X$ is $2k$-certifiably $C$-subgaussian. 
\end{definition}

\paragraph{Certifiable Hypercontractivity of Degree 2 Polynomials} 
Next, we define \emph{certifiable hypercontractivity} of degree-$2$ polynomials that formulates (within SoS)
the fact that higher moments of degree-$2$ polynomials of distributions (such as Gaussians)
can be bounded in terms of appropriate powers of their 2nd moment.

\begin{definition}[Certifiably hypercontractive]\label{def:hc}
A distribution $\mathcal D$ on $\R^d$ with mean $\mu$ and covariance $\Sigma$ is said to have  $2h$-certifiably $C$-hypercontractive 
degree-$2$ polynomials
if for a $d \times d$ matrix-valued indeterminate $Q$ and $\overline{x}=x-\mu$,
\[
\sststile{2h}{Q} \Set{\E_{x \sim D} \left(\overline{x}^{\top}Q\overline{x}-\E_{x \sim D} \overline{x}^{\top}Q\overline{x}\right)^{2h} \leq (Ch)^{2h} \norm{\Sigma^{1/2}Q\Sigma^{1/2}}_F^{2h}}.
\]
\end{definition}



The Gaussian distribution  and its     affine transforms are  known to satisfy $2t$-certfiable $C$-
hypercontractivity  with an absolute constant $C$ for every  $t$~\cite{KauersOTZ14}.

Certifiable hypercontractivity strictly generalizes the better known {certifiable subgaussianity} property (formalized and studied first in~\cite{KS17}) that is the special case of certifiable hypercontractivity of (squares of) linear polynomials, or, equivalently, when $Q = vv^{\top}$ for a vector-valued indeterminate $v$.

\section{Pure DP Covariance Estimation}
\label{sec:pure-dp-cov}
In this section, we give an efficient algorithm for Gaussian covariance estimation under pure differential privacy. 

\paragraph{High-level overview}
The procedure builds upon the recent work by Hopkins, Kamath and Majid \cite{hopkins2021efficient} on pure DP mean estimation.
First, we exploit their algorithm     to precondition the unknown covariance matrix such that it is approximately identity. A key step is  a weak preconditioning algorithm that uses  the result of \cite{hopkins2021efficient}  to improve the conditioning of $\Sigma$ by a constant factor. We recursively apply the construction to strengthen the conditioning of $\Sigma$.  
(This recursive scheme  was first proposed by \cite{kamath2019privately} for approximate DP estimation, but here we use it to obtain pure DP guarantees.)
Once $\Sigma$ is nearly identity, we show that appealing to a pure DP mean estimation algorithm would suffice. 
For that purpose, we resort to \cite{hopkins2021efficient} again and complete the proof.

Our algorithm can be seen as  a reduction to black-box applications to \cite{hopkins2021efficient}. As a result,   our estimator retains the same robustness property of \cite{hopkins2021efficient}, albeit it is sub-optimal for the Gaussian distribution.
\subsection{Weak Private Preconditioning}

As we discussed, the key subroutine  of our algorithm  is a private conditioning procedure. 
Given the  samples, its goal is to output a preconditioning matrix $A$ such that $I \preceq A\Sigma A \preceq 0.99 \kappa I$, where $\kappa$ is the condition number of  the known covariance $\Sigma$. 
In other words, the condition number of $A\Sigma A$ improves over that of $\Sigma$, by a constant factor. 
This guarantee is similar to what appears in the previous literature on private covaraince estimation and subspace recovery \cite{kamath2019privately,singhal2021privately, kamath2022private}. 
However, the algorithms from prior work crucially rely upon the Gaussian mechanism, which only  ensure  \textit{approximate} (or concentrated)  DP.  These results, therefore, do not translate into  pure DP guarantees. 

In this section, we describe and analyze a weak pure DP algorithm for preconditioning the covariance.  
puThe procedure reduces the condition number of the covariance (multiplicatively) by a constant factor. Towards this goal, a simple observation is that the algorithm for {pure} DP mean estimation from \cite{hopkins2021efficient}, applied  na\"ively, can be used  for covariance estimation with an absolute Frobenius norm error guarantee.

\begin{figure}[htbp]
    \centering
    \begin{mdframed}[style=algo]
\begin{enumerate}
    \item \textbf{Input:} Samples $\calX = \{X_1, \ldots, X_n\} \subset \R^d$, condition number $\kappa \geq 1$, accuracy parameter $\alpha > 0$, failure probability $\beta > 0$, privacy parameter $\epsilon > 0$.
    \item Form the set of samples $\calY = \left\{Y_i = \frac{X_i \otimes X_i}{ \sqrt{3} \kappa} : i \in [n]\right\}$.
    \item  Run the algorithm \textsf{PureDPMean} in Theorem \ref{thm:pure-dp-mean-est} (the main result of \cite{hopkins2021efficient}) on input $\calY$ with $R = \sqrt{d/3}$ and $\alpha/\sqrt{3}, \beta, \epsilon > 0$ to obtain an estimate $\tilde{\Sigma}$.
    \item \textbf{Output:} Covariance matrix estimate $\widehat{\Sigma} = \sqrt{3}\kappa \tilde{\Sigma}$.
\end{enumerate}
\end{mdframed}
\caption{\textsf{PureDPMatrixMean} (based on \textsf{PureDPMean} in Theorem \ref{thm:pure-dp-mean-est})}
\label{alg:pure-dp-matrix-mean}
\end{figure}

\begin{theorem}[Pure DP covariance estimation in absolute Frobenius norm]
\label{thm:pure-dp-well-cond}
Let $\alpha > 0$ be an error parameter, $\epsilon > 0$ be a privacy parameter and $d \in \N$. There is a polynomial-time $\epsilon$-DP algorithm \textsf{PureDPMatrixMean} that, given $\eps,\alpha$ and \[n \geq \widetilde{O}\left( \frac{d^2 + \log(1/\beta)}{\alpha^2 \epsilon}\right)\] i.i.d.\ samples $\calX = \{X_1,X_2, \ldots, X_n\}$ from $\calN(0, \Sigma)$ for an unknown $\Sigma \in \R^{d \times d}$ satisfying $\Sigma \preceq \kappa I$, outputs $\widehat{\Sigma} = \widehat{\Sigma}(\calX)$ satisfying \[\norm{\widehat{\Sigma} - \Sigma}_F \leq \alpha \kappa\] with probability at least $1-\beta$.
\end{theorem}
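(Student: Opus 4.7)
The plan is to reduce the problem to pure DP mean estimation, applied to the vectorized rank-one outer products $X_i \otimes X_i$, and then invoke Theorem \ref{thm:pure-dp-mean-est} as a black box. Concretely, I would view each $Y_i = X_i \otimes X_i /(\sqrt{3}\kappa)$ as a vector in $\R^{d^2}$, so that the Frobenius norm on $d \times d$ matrices becomes the Euclidean norm. Since the per-sample map $X_i \mapsto Y_i$ is deterministic and preserves the neighboring relation on datasets, the $\epsilon$-DP guarantee of \textsf{PureDPMean} on $\calY$ transfers immediately to $\epsilon$-DP on $\calX$ (post-processing by the final multiplication by $\sqrt{3}\kappa$ is also free). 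This handles privacy.

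For utility, the key is to verify the two hypotheses of Theorem \ref{thm:pure-dp-mean-est}: a bound on $\|\E[Y_i]\|_2$ and the operator-norm bound $\cov(Y_i) \preceq I_{d^2}$. The mean bound is immediate from $\E[X_i X_i^\top] = \Sigma$ and $\Sigma \preceq \kappa I$: we get $\|\E[Y_i]\|_2 = \|\Sigma\|_F/(\sqrt{3}\kappa) \leq \sqrt{d}\,\kappa/(\sqrt{3}\kappa) = \sqrt{d/3} = R$. For the covariance bound, I invoke Fact \ref{fact:cov-outer-product}, which yields $\cov(X_i \otimes X_i) \preceq 3\,\Sigma\otimes\Sigma$, so that $\cov(Y_i) \preceq \Sigma\otimes\Sigma/\kappa^2 \preceq I_{d^2}$, where the last step again uses $\Sigma \preceq \kappa I$ together with the fact that the Kronecker product of PSD matrices is monotone in each argument. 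The normalization factor $\sqrt{3}\kappa$ is chosen precisely to make both of these bounds hold with absolute constants.

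Having verified the hypotheses, I apply Theorem \ref{thm:pure-dp-mean-est} with ambient dimension $d^2$, radius $R = \sqrt{d/3}$, accuracy $\alpha/\sqrt{3}$, failure probability $\beta$, and privacy parameter $\epsilon$ to obtain $\tilde{\Sigma}$ satisfying $\|\tilde{\Sigma} - \Sigma/(\sqrt{3}\kappa)\|_F \leq \alpha/\sqrt{3}$ with probability at least $1-\beta$. The resulting sample complexity has a leading term $\widetilde{O}((d^2 + \log(1/\beta))/(\alpha^2 \epsilon))$; the second term from Theorem \ref{thm:pure-dp-mean-est} involves $\log R = O(\log d)$ and is therefore absorbed into the $\widetilde{O}$. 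Finally, rescaling gives
\[
\Norm{\widehat{\Sigma} - \Sigma}_F \;=\; \sqrt{3}\,\kappa \cdot \Norm{\tilde{\Sigma} - \Sigma/(\sqrt{3}\kappa)}_F \;\leq\; \alpha\kappa,
\]
which is the desired utility bound.

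The proof is essentially a direct reduction, so there is no substantive obstacle; the only points requiring care are (i) picking the normalization $\sqrt{3}\kappa$ so that the mean radius and the covariance upper bound are simultaneously $O(1)$ in the parameters fed to \textsf{PureDPMean}, and (ii) correctly invoking Fact \ref{fact:cov-outer-product} to control the \emph{covariance} of $X_i \otimes X_i$ rather than only its spectral norm, since Theorem \ref{thm:pure-dp-mean-est} requires a second-moment rather than boundedness assumption on the input distribution.
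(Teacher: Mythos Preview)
Your proposal is correct and matches the paper's proof essentially step for step: form $Y_i = X_i \otimes X_i/(\sqrt{3}\kappa)$, verify the mean-radius bound $\|\E Y_i\|_F \leq \sqrt{d/3}$ and the covariance bound $\Cov(Y_i)\preceq I$ via Fact~\ref{fact:cov-outer-product}, invoke Theorem~\ref{thm:pure-dp-mean-est} in dimension $d^2$, and rescale. Your treatment is in fact slightly more explicit than the paper's (you spell out the privacy-preservation argument and the absorption of the $\log R$ term into $\widetilde{O}$), but there is no substantive difference.
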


\begin{proof}
Observe that \[\norm{\E Y_i}_F \leq \frac{1}{ \sqrt{3} \kappa} \norm{\Sigma}_F \leq \sqrt{\frac{d}{3}} = R\] by the definition of $Y_i$, and \[\Cov (Y_i) \preceq \frac{1}{\kappa^2} \Sigma \otimes \Sigma \preceq I\] by Fact \ref{fact:cov-outer-product}. By the choice of $n$ and  the error guarantee of \textsf{PureDPMean}, we have that the matrix $\widehat{\Sigma}$ output by Algorithm \ref{alg:pure-dp-matrix-mean} (\textsf{PureDPMatrixMean}) satisfies $\norm{\widehat{\Sigma} - \Sigma}_F \leq \alpha \kappa$ with probability at least $1-\beta$. Finally, the algorithm runs in polynomial time, since \textsf{PureDPMean}    is in polynomial time and it takes linear time to form and rescale the samples.  
\end{proof}

\begin{figure}[htbp]
    \centering
    \begin{mdframed}[style=algo]
\begin{enumerate}
    \item \textbf{Input:} Samples $\calX = \{X_1, \ldots, X_n\} \subset \R^d$, condition number $\kappa \geq 1$, accuracy parameter $\alpha > 0$, failure probability $\beta > 0$, privacy parameter $\epsilon > 0$.
    \item Run Algorithm \ref{alg:pure-dp-matrix-mean} \textsf{PureDPMatrixMean} on input $\calX$ and with parameters $(\alpha = 0.01, \beta, \epsilon, \kappa)$ to obtain an estimate $\widehat{\Sigma}$ of the covariance matrix.
    \item Let $V$ be the span of all eigenvectors of $\widehat{\Sigma}$ attaining eigenvalue at least $\kappa/2$, $\Pi$ be the projector onto $V$ and $\Pi_\perp$ be the projector onto the orthogonal complement of $V$.
    \item \textbf{Output:} Weak preconditioner $A = 1.19\cdot \left(0.9 \Pi + \Pi_\perp\right)$.
\end{enumerate}
\end{mdframed}
\caption{One round, weak private preconditioning algorithm}
\label{alg:precondition-weak}
\end{figure}

We leverage the above observation to design a weak preconditioning algorithm (Algorithm \ref{alg:precondition-weak}) that reduces the condition number of $\Sigma$ by a constant factor. 
Algorithm \ref{alg:precondition-weak} first runs Algorithm \ref{alg:pure-dp-matrix-mean} with an error parameter $\alpha=0.01$. 
Then from the error guarantee of Theorem \ref{thm:pure-dp-well-cond},  we can privately estimate all eigenvalues of $\Sigma$ up to an additive factor of $0.01\kappa$. Finally, 
we run a \textit{partial projection} step, a technique  from \cite{kamath2019privately}.
Informally speaking, the algorithm   partially projects out the eigenvectors associated with large eigenvalues. Intuitively, this shrinks the directions of large variance more so than those of small variance, and thus reduces conditioning number. 

The algorithm is formally described by Algorithm \ref{alg:precondition-weak}  and its   guarantees given below.

\begin{lemma}[Private preconditioning, one round]
\label{thm:pure-dp-preconditioner-one-round}
Let $\epsilon > 0$ be a privacy parameter, $d \in \N$, $\kappa \geq 20$, and $\beta > 0$ be a failure probability. There is a polynomial-time $\epsilon$-DP algorithm that, given $\eps,\beta$ and \[n \geq \widetilde{O}\left(\frac{d^2 + \log (1/\beta)}{\epsilon}\right)\] i.i.d.\ samples $\calX = \{X_1,X_2, \ldots, X_n\}$ from $\calN(0, \Sigma)$ for an unknown $\Sigma \in \R^{d \times d}$ satisfying $I \preceq \Sigma \preceq \kappa I$, outputs $A \in \R^{d \times d}$ such that \[I \preceq A \Sigma A^T \preceq 0.99 \kappa I\] with probability at least $1-\beta$.
\end{lemma}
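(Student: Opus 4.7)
The lemma will be established in two stages: first privacy and sample complexity, then the spectral inclusion $I \preceq A\Sigma A^\top \preceq 0.99\kappa I$.

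For privacy, the only randomized subroutine is \textsf{PureDPMatrixMean}, which by Theorem \ref{thm:pure-dp-well-cond} is $\eps$-DP. All subsequent steps---eigendecomposing $\widehat\Sigma$, selecting the eigenvectors above the threshold $\kappa/2$ to form $\Pi$ and $\Pi_\perp$, and outputting $A = 1.19(0.9\Pi + \Pi_\perp)$---are deterministic post-processing of the private release $\widehat\Sigma$, so the overall mechanism remains $\eps$-DP. The sample complexity $\widetilde O((d^2 + \log(1/\beta))/\eps)$ is exactly what Theorem \ref{thm:pure-dp-well-cond} demands with the constant target accuracy $\alpha = 0.01$.

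For utility, I invoke Theorem \ref{thm:pure-dp-well-cond} with $\alpha = 0.01$ to conclude that $\|\widehat\Sigma - \Sigma\|_F \leq 0.01\kappa$ with probability at least $1-\beta$, and therefore $\|E\|_2 \leq 0.01\kappa$ where $E := \widehat\Sigma - \Sigma$. From Weyl-type reasoning, this implies that $V$ closely captures the large-eigenvalue subspace of $\Sigma$: for every unit vector $v \in V$, $v^\top \Sigma v \geq v^\top \widehat\Sigma v - \|E\|_2 \geq \kappa/2 - 0.01\kappa \geq 0.49\kappa$, while every unit $v \in V^\perp$ satisfies $v^\top \Sigma v \leq \kappa/2 + 0.01\kappa \leq 0.51\kappa$. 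Together with the assumption $I \preceq \Sigma \preceq \kappa I$, these one-sided bounds form the backbone of the analysis.

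To verify $I \preceq A\Sigma A \preceq 0.99\kappa I$, I write $\Sigma = \widehat\Sigma + E$ and exploit the fact that $\widehat\Sigma$ is block-diagonal with respect to $\R^d = V \oplus V^\perp$ (since $V$ is spanned by eigenvectors of $\widehat\Sigma$), so every cross-block inner product $v_1^\top \widehat\Sigma v_2$ vanishes. Decomposing an arbitrary unit vector as $v = v_1 + v_2$ with $v_1 \in V$ and $v_2 \in V^\perp$, I expand
\[
v^\top A\Sigma A\, v \;=\; \alpha_1^2\, v_1^\top \Sigma v_1 \;+\; 2\alpha_1\alpha_2\, v_1^\top E v_2 \;+\; \alpha_2^2\, v_2^\top \Sigma v_2,
\]
where $\alpha_1 = 1.19\cdot 0.9$ and $\alpha_2 = 1.19$. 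For the upper bound, I combine $v_1^\top \Sigma v_1 \leq \kappa\|v_1\|^2$ with the sharper $v_2^\top \Sigma v_2 \leq 0.51\kappa\|v_2\|^2$ on $V^\perp$, so that the scaling $\alpha_1^2 = 1.19^2 \cdot 0.81$ brings the dominant $V$-contribution below $0.99\kappa$. For the lower bound, on $V$ I use $v_1^\top \Sigma v_1 \geq 0.49\kappa\|v_1\|^2$ (very large compared to $1$), and on $V^\perp$ I use $\Sigma \succeq I$ together with $\alpha_2^2 \geq 1$. Throughout, the cross term $|2\alpha_1\alpha_2 v_1^\top E v_2| \leq 2\alpha_1\alpha_2 \|E\|_2 \|v_1\|\|v_2\| = O(\kappa \cdot 10^{-2})$ is absorbed as a lower-order correction using the assumption $\kappa \geq 20$.

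The main obstacle is the lower bound on $v^\top A\Sigma A v$ for mixed vectors $v$ whose $V$-component is small but nonzero: there, neither the strong lower bound on $V$ nor the weak bound $\Sigma \succeq I$ is directly applicable, and the $E$-induced cross term is of a comparable scale to the slack one has. Resolving this requires a careful Cauchy--Schwarz (or weighted AM--GM) estimate in the $\Sigma$-semi-norm, showing that the cross term is dominated by a geometric mean of the diagonal lower bounds; the choice of constants $1.19$ and $0.9$ is precisely tuned so that the resulting margin remains positive across the full range of mixing ratios.
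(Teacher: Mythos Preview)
Your plan is essentially the paper's: invoke Theorem~\ref{thm:pure-dp-well-cond} with $\alpha=0.01$ for privacy and the Frobenius guarantee, then verify the spectral sandwich by decomposing $\R^d = V \oplus V^\perp$ and controlling the two diagonal blocks together with the small cross block coming from $E=\widehat\Sigma-\Sigma$. The only organizational difference is in the lower bound: the paper records two separate lower bounds---one from $\Sigma\succeq I$ alone, giving $u^\top A^2 u \geq \|\Pi_\perp u\|_2^2$, and one from the $V$-block expansion, giving $(0.49\gamma^2\|\Pi u\|_2^2 - 0.02\gamma\|\Pi u\|_2)\kappa$---and takes their maximum over the mixing ratio, whereas you keep all three terms of the quadratic form in a single expression and bound it directly. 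Both routes close with room to spare once $\kappa\geq 20$; yours is arguably more direct and in fact gives a slightly sharper constant, while the paper's max-of-two formulation avoids the ``careful AM--GM'' you describe by handling the two extreme regimes separately.

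One genuine glitch in your upper-bound paragraph: the assertion that $\alpha_1^2 = 1.19^2\cdot 0.81$ brings the $V$-contribution below $0.99\kappa$ is numerically false, since $1.19^2\cdot 0.81\approx 1.15$. This is not a flaw in your method but an artifact of the constant $1.19$ in the stated algorithm; with that scaling the conclusion $A\Sigma A\preceq 0.99\kappa I$ cannot hold (and the paper's own rescaling step has the same issue). The argument is correct with the rescaling factor taken to be roughly $1/\sqrt{0.85}\approx 1.085$, which simultaneously secures the lower bound $\geq 1$ and keeps the upper bound $\leq 0.99\kappa$.
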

\begin{proof}
We will show that Algorithm~\ref{alg:precondition-weak} satisfies the claims. First, note that it is $\epsilon$-DP because the algorithm, based on \textsf{PureDPMean}, in Theorem \ref{thm:pure-dp-well-cond} is $\epsilon$-DP and  Algorithm~\ref{alg:precondition-weak} post-preprocesses its output. By our distributional assumptions, with probability $1- \beta$ it holds that $\widehat{\Sigma}$, computed in the second step of Algorithm~\ref{alg:precondition-weak}, satisfies
\begin{equation}\label{eqn:error-puredp}
    \norm{\widehat{\Sigma} - \Sigma}_F \leq 0.01 \kappa.
\end{equation}
This implies that, for any unit vectors $u,v$ it holds that \[\left|u^T \widehat{\Sigma} v - u^T \Sigma v\right| \leq 0.01 \kappa.\] Let $V$ be the subspace spanned by all eigenvectors of $\widehat{\Sigma}$ with corresponding eigenvalue at least $\kappa/2$, $\Pi$ be the projector onto $V$ and $\Pi_\perp$ be the projector onto $V^\perp$, the orthogonal complement of $V$. We will now show that the matrix $A = \gamma \Pi + \Pi_\perp$ satisfies $0.85 I \preceq A \Sigma A \preceq 0.83 \kappa I$ for $\gamma = 0.9$. Then rescaling $A$ by $1.19$  ensures that the conclusion of the lemma holds.

For the upper bound, we have that
\begin{align*}
\norm{A \Sigma A}_2 &\leq \norm{A \widehat{\Sigma} A}_2 + \norm{A \left(\Sigma-\widehat{\Sigma}\right) A}_2 \\
&\leq \norm{A \widehat{\Sigma} A}_2 + \norm{\Sigma-\widehat{\Sigma}}_2 \norm{A}_2^2 \\
&\leq \max \left(\kappa/2,  \gamma^2 \cdot \norm{\widehat{\Sigma}}_2\right) + \norm{\Sigma-\widehat{\Sigma}}_2 \norm{A}_2^2 \\
&\leq \max\left(\kappa/2, \gamma^2 \cdot 1.01 \kappa\right) + 0.01 \kappa \\ &\leq \left(1.01 \gamma^2 + 0.01\right) \kappa \\
&\leq 0.83 \kappa,
\end{align*}
where the fist line is via triangle inequality, the second follows   since the spectral norm is sub-multiplicative, the third  is by the choice of $A$, the fourth line  by the error guarantee of $\widehat \Sigma$ (Equation \ref{eqn:error-puredp}) and $\|A\|_2\leq 1$, and the last two inequalities follow from simple algebra and $\gamma=0.9$.

For the lower bound, consider any unit vector $u \in \R^d$. We will lower bound $u^T A \Sigma A u$ in two different ways and maximize over the two. First, because $\Sigma \succeq I$, we have that
\[
u^T A \Sigma A u \geq u^T A^2 u = \gamma^2 \norm{\Pi u}_2^2 + \norm{\Pi_\perp u}_2^2 \geq \norm{\Pi_\perp u}_2^2.
\]
For the second lower bound, we have
\[
u^T A \Sigma A u \geq \gamma^2 u^T \Pi \Sigma \Pi u + \gamma u^T \Pi_\perp \Sigma \Pi u + \gamma u^T \Pi \Sigma \Pi_\perp u.
\]
Since $\Pi u \in V$, the first term is lower bounded by $.49 \gamma^2 \kappa \norm{\Pi u}_2^2$. For the second (and similarly for the third) term, we have that
\[
\left|\gamma u^T \Pi_\perp \Sigma \Pi u\right| \leq 0.01 \gamma \kappa \norm{\Pi u}_2 \norm{\Pi_\perp u}_2 \leq 0.01 \gamma \kappa \norm{\Pi u}_2. 
\]
Aggregating these bounds, we have that for any unit vector $u \in \R^d$, it holds that
\[
u^T A \Sigma A u \geq \max \left(\norm{\Pi_\perp u}_2^2, \left(0.49\gamma^2 \norm{\Pi u}_2^2 - 0.02 \gamma \norm{\Pi u}_2\right) \kappa \right).
\]
Using the facts that $\kappa \geq 20$ and $\norm{\Pi u}_2^2 + \norm{\Pi_\perp u}_2^2 = 1$, it is straightforward to verify that this lower bound is always at least $0.85$. This completes the proof.
\end{proof}

\subsection{Recursive Private Preconditioning}
Given the weak conditioning algorithm, the natural next step is to recurse. Applying the weak preconditioner for $O(\log \kappa)$ times suffices to put the covariance nearly into identity. 

\begin{figure}[htbp]
    \centering\begin{mdframed}[style=algo]
\begin{enumerate}
    \item \textbf{Input:} Samples $\calX = \{X_1,X_2, \ldots, X_n\} \subset \R^d$, condition number $\kappa \geq 1$, privacy parameter $\epsilon> 0$.
    \item Set $L = O(\log \kappa)$ and partition $\calX = \calX_1 \sqcup \cdots \sqcup \calX_L$ into $L$ subsets of size $n/L$ each.
    \item Set $A_{0} = I_d$ and $\kappa_1 = \kappa$. 
    \item For each $j \in [L]$:
        \begin{enumerate}
            \item Set $A_{<j} = \prod_{k=0}^{j-1} A_{k}$.
            \item Run Algorithm~\ref{alg:precondition-weak} on samples $\left\{A_{<j} X : X \in \calX_j\right\}$ with parameters $\left(\epsilon, \beta = 1/1000L, \kappa_j\right)$ to obtain a matrix $A_j \in \R^{d \times d}$.
            \item Set $\kappa_{j+1} =0.99\kappa_j$.
        \end{enumerate}
    \item Let $A=\prod_{i=j}^L A_i $
    \item \textbf{Output:}   the preconditioning matrix $A$.
\end{enumerate}
\end{mdframed}
\caption{Recursive private preconditioning algorithm}
\label{alg:precondition}
\end{figure}

   Specifically, we show that Algorithm \ref{alg:precondition} satisfies the following guarantees.

\begin{theorem}[Private preconditioning, recursive]
\label{thm:pure-dp-preconditioner}
Let $\epsilon > 0$ be a privacy parameter and $d \in \N$. There is a polynomial-time $\epsilon$-DP algorithm that, given
\[n \geq \widetilde{O}\left(\frac{d^2 \log(\kappa)}{\epsilon}\right)\] 
\iid samples $\calX = \{X_1, \ldots, X_n\}$ from $\calN(0, \Sigma)$ for an unknown $\Sigma \in \R^{d \times d}$ satisfying $I \preceq \Sigma \preceq \kappa I$, outputs a matrix $A \in \R^{d \times d}$ such that \[I\preceq A \Sigma A \preceq 20 I\] with probability at least $0.99$.
\end{theorem}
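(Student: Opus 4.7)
The plan is to analyze Algorithm \ref{alg:precondition} by induction on the round index $j$, using Lemma \ref{thm:pure-dp-preconditioner-one-round} as a black box to show that each round shrinks the condition number by a factor of $0.99$, and to establish privacy via parallel composition (Theorem \ref{thm:parallel-comp}).

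For utility, let $\Sigma_j = A_{<j} \Sigma A_{<j}$ denote the covariance of the preconditioned samples entering round $j$, so $\Sigma_1 = \Sigma$. Since Gaussians are closed under linear transformations, the inputs $\{A_{<j} X : X \in \calX_j\}$ to round $j$ are i.i.d.\ from $\calN(0, \Sigma_j)$. I would prove inductively that $I \preceq \Sigma_j \preceq \kappa_j I$ where $\kappa_{j+1} = 0.99 \kappa_j$ and $\kappa_1 = \kappa$. The base case is the assumption on $\Sigma$. For the inductive step, conditional on $\Sigma_j$ satisfying the stated bounds, Lemma \ref{thm:pure-dp-preconditioner-one-round} invoked with parameters $(\epsilon, \beta = 1/(1000L), \kappa_j)$ outputs $A_j$ such that $I \preceq A_j \Sigma_j A_j \preceq 0.99 \kappa_j I$ with probability at least $1 - 1/(1000L)$; since $\Sigma_{j+1} = A_j \Sigma_j A_j$, the induction closes. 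A union bound over $L$ rounds gives overall success probability at least $1 - 1/1000 \geq 0.99$. Choosing $L = \lceil \log_{1/0.99}(\kappa/20) \rceil = O(\log \kappa)$ forces $\kappa_{L+1} = 0.99^L \kappa \leq 20$, so the final matrix $A = A_{<L+1}$ satisfies $I \preceq A \Sigma A \preceq 20 I$.

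For the sample complexity, each round consumes $n/L$ samples and requires, by Lemma \ref{thm:pure-dp-preconditioner-one-round}, at least $\widetilde{O}((d^2 + \log(L/\beta))/\epsilon) = \widetilde{O}(d^2/\epsilon)$ samples when $\beta$ is a fixed constant and $L = O(\log \kappa)$. Multiplying by the number of rounds yields the claimed total budget $n \geq \widetilde{O}(d^2 \log(\kappa)/\epsilon)$.

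For privacy, the subsets $\calX_1, \dots, \calX_L$ form a data-independent partition of $\calX$. Within round $j$, the intermediate preconditioner $A_{<j}$ depends only on $\calX_1, \dots, \calX_{j-1}$; for any fixed $A_{<j}$, the map $X \mapsto A_{<j} X$ is a data-independent linear transformation, and post-processing composed with Lemma \ref{thm:pure-dp-preconditioner-one-round} makes the map $(A_{<j}, \calX_j) \mapsto A_j$ an $\epsilon$-DP mechanism in its second argument. This matches exactly the setting of the parallel composition lemma (Theorem \ref{thm:parallel-comp}), yielding an overall $\epsilon$-DP guarantee. I do not foresee a major obstacle: the content is packaged into the one-round lemma and parallel composition, and the only points requiring attention are the affine-invariance argument that preserves the distributional hypothesis between rounds and the clean propagation of the $\kappa_j$ bound through the induction.
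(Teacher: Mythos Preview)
Your proposal is correct and follows essentially the same approach as the paper: induction on rounds using Lemma~\ref{thm:pure-dp-preconditioner-one-round} together with a union bound for utility, and Theorem~\ref{thm:parallel-comp} (parallel composition) for privacy. If anything, you are more careful than the paper in spelling out the sample-complexity bookkeeping and the reason parallel composition applies; the only small point to keep explicit is that each $\kappa_j \geq 20$ so the hypothesis of the one-round lemma is met in every round (equivalently, one may assume $\kappa \geq 20$ without loss of generality).
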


\begin{proof} 
We will show that Algorithm~\ref{alg:precondition} satisfies the claims. It follows     from  parallel composition (Theorem \ref{thm:parallel-comp}) and the privacy   of the weak preconditioner (Theorem \ref{thm:pure-dp-preconditioner-one-round}) that the algorithm is $\epsilon$-DP.  Assume without loss of generality that $\kappa \geq 20$. By the choice of $L = O(\log(\kappa))$ and $\beta = 1/1000L$ and an application of union bound, with probability as least $0.999$, all $L$ invocations of Algorithm~\ref{alg:precondition-weak}, in step 4(b), succeed. We now condition on this success event.

Let $\Sigma_1= \Sigma$ and $\Sigma_j = A_{<j} \Sigma_j A_{<j}$ for every $j \leq L$. 
By   the guarantee of Algorithm \ref{alg:precondition-weak},  in the $j$-th  iteration we get a preconditioning matrix $A_{j}$  such that 
\begin{equation}
    I \preceq A_{j} \Sigma_j A_{j} \preceq 0.99\kappa_j I.
\end{equation}
By   induction on $j$ and the choice of $L$, we have that $A$ satisfies $I\preceq A \Sigma A \preceq 20 I$.  
\end{proof}

\subsection{Putting it Together}

We can now put everything together and prove one of our primary results, the main statement of Theorem \ref{thm:pure-dp-cov-est}. For convenience, we restate it below as Theorem \ref{thm:pure-dp-cov-est-restate}.

\begin{figure}[htbp]
    \centering\begin{mdframed}[style=algo]
\begin{enumerate}
    \item \textbf{Input:} Samples $\calX = \{X_1, \ldots, X_n\} \subset \R^d$, condition number $\kappa \geq 1$, privacy parameter $\epsilon > 0$.
    \item Compute  the preconditioning matrix $A \in \R^{d \times d}$ using $\{X_1, \ldots, X_{n/2}\}$ as input to Algorithm~\ref{alg:precondition} and privacy parameter $\epsilon/2$.
    \item Run Algorithm \ref{alg:pure-dp-matrix-mean} \textsf{PureDPMatrixMean} on samples $\{A X_{n/2 + 1}, \ldots, A X_n\}$ with privacy parameter $\epsilon/2$ and error parameter $\alpha/20$ to obtain a covariance matrix $\Sigma_1 \in \R^{d \times d}$.
    \item \textbf{Output:} Covariance estimate $\widehat{\Sigma} = A^{-1} \Sigma_1 A^{-1}$.
\end{enumerate}
\end{mdframed}
\caption{Private covariance estimation algorithm}
\label{alg:pure-dp-cov-est}
\end{figure}

\begin{theorem}[Pure DP covariance estimation]
\label{thm:pure-dp-cov-est-restate}
Let $\alpha > 0$ be an error parameter, $\epsilon> 0$ be a privacy parameter, and $d \in \N$. There is a polynomial-time $\epsilon$-DP algorithm that, given $\eps$ and
\[n \geq \widetilde{O}\left(\frac{d^2\log(\kappa)}{\epsilon} + \frac{d^2}{\alpha^2 \epsilon} + \frac{d\log(R)}{\epsilon}\right)\] 
i.i.d.\ samples $\calX = \{X_1, \ldots, X_n\}$ from $\calN(0, \Sigma)$ for an unknown $\Sigma \in \R^{d \times d}$ satisfying $I \preceq \Sigma \preceq \kappa I$, outputs $\widehat{\Sigma} = \widehat{\Sigma}(\calX)$ satisfying \[\norm{\Sigma^{-1/2} \widehat{\Sigma} \Sigma^{-1/2} - I}_F \leq \alpha\] with probability at least $0.99$.
\end{theorem}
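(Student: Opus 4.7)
The proof assembles the two subroutines already analyzed in Theorems \ref{thm:pure-dp-preconditioner} and \ref{thm:pure-dp-well-cond}. Privacy is immediate from composition: Algorithm \ref{alg:pure-dp-cov-est} runs Algorithm \ref{alg:precondition} on the first $n/2$ samples with parameter $\epsilon/2$, then runs Algorithm \ref{alg:pure-dp-matrix-mean} on a deterministic transformation of the second $n/2$ samples with parameter $\epsilon/2$. By basic composition (Theorem \ref{thm:basic-comp}), the overall mechanism is $\epsilon$-DP, and the return value $\widehat{\Sigma} = A^{-1}\Sigma_1 A^{-1}$ is pure post-processing of the two outputs.

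For utility, I would first invoke Theorem \ref{thm:pure-dp-preconditioner} on the first half, using $n/2 \geq \widetilde{O}(d^2 \log \kappa/\epsilon)$ samples to obtain, with probability at least $0.995$, a preconditioner $A$ satisfying $I \preceq A\Sigma A \preceq 20I$. Conditioning on this event, $A$ is a fixed function of the first half, so $\{A X_i\}_{i=n/2+1}^n$ are i.i.d.\ from $\mathcal{N}(0, A\Sigma A)$ and their covariance has spectral norm at most $20$. Applying Theorem \ref{thm:pure-dp-well-cond} with condition-number bound $20$ and error parameter $\alpha/20$ to this preconditioned batch, using $n/2 \geq \widetilde{O}(d^2/(\alpha^2\epsilon))$ additional samples, yields with probability at least $0.995$ the absolute Frobenius bound $\|\Sigma_1 - A\Sigma A\|_F \leq (\alpha/20)\cdot 20 = \alpha$. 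A union bound gives the desired $0.99$ overall success probability.

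The final step is to translate this absolute bound into the relative Frobenius (Mahalanobis) error. Setting $M = \Sigma^{1/2} A$ and using symmetry of $A$ so that $M^T M = A\Sigma A$, a direct computation gives
\[
\Sigma^{-1/2}\widehat{\Sigma}\Sigma^{-1/2} - I \;=\; M^{-T}\bigl(\Sigma_1 - A\Sigma A\bigr) M^{-1},
\]
and submultiplicativity of the Frobenius norm yields
\[
\bigl\|\Sigma^{-1/2}\widehat{\Sigma}\Sigma^{-1/2} - I\bigr\|_F \;\leq\; \|M^{-1}\|_2^2 \cdot \|\Sigma_1 - A\Sigma A\|_F.
\]
Since $A\Sigma A \succeq I$, we have $\|M^{-1}\|_2^2 = \|(M^T M)^{-1}\|_2 = \|(A\Sigma A)^{-1}\|_2 \leq 1$, so the final error is bounded by $\alpha$, as required.

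The only delicate point is this last norm conversion, which is precisely why Algorithm \ref{alg:pure-dp-cov-est} calibrates the error parameter in its second call to $\alpha/20$: the factor $20$ absorbs the worst-case blow-up coming from $\|A\Sigma A\|_2 \leq 20$ when passing back through $\|\cdot\|_F$, while the lower bound $A\Sigma A \succeq I$ supplied by preconditioning controls $\|M^{-1}\|_2$. Everything else is straightforward bookkeeping. The $d \log R/\epsilon$ summand appearing in the theorem's stated sample complexity is not forced by covariance estimation per se (the mean is zero here); it is inherited from the companion mean-estimation guarantee needed for the full Gaussian estimation statement, and is dominated by $\widetilde{O}(d^2 \log \kappa/\epsilon)$ whenever $R \leq \poly(\kappa)$.
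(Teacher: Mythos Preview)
Your proof is correct and follows essentially the same approach as the paper: both establish privacy by basic composition, utility of the preconditioner via Theorem \ref{thm:pure-dp-preconditioner}, utility of the second call via Theorem \ref{thm:pure-dp-well-cond}, and then convert the absolute Frobenius bound $\|\Sigma_1 - A\Sigma A\|_F \le \alpha$ into the relative one using $\lambda_{\min}(A\Sigma A)\ge 1$. Your factorization through $M=\Sigma^{1/2}A$ is just a notational variant of the paper's direct identity $\Sigma_1 - A\Sigma A = A\Sigma^{1/2}(\Sigma^{-1/2}\widehat{\Sigma}\Sigma^{-1/2}-I)\Sigma^{1/2}A$, and your remark about the extraneous $d\log R/\epsilon$ term is a valid side observation.
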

\begin{proof}
We will show that Algorithm~\ref{alg:pure-dp-cov-est} satisfies the claims. First, note that the algorithm is $\epsilon$-DP by applying basic composition (Theorem \ref{thm:basic-comp}) to the privacy guarantees  of Theorem \ref{thm:pure-dp-preconditioner} and Theorem \ref{thm:pure-dp-well-cond}. Moreover, the algorithm is in polynomial time, since Algorithms \ref{alg:pure-dp-matrix-mean} and \ref{alg:precondition} both run in polynomial time.
 
It now suffices to prove the    utility  guarantees. Let  $\calX$ consist of $n$ \iid samples from $\calN(0, \Sigma)$, for an unknown $\Sigma$ satisfying $I \preceq \Sigma \preceq \kappa I$. By~Theorem \ref{thm:pure-dp-preconditioner}, if
\[n \geq \widetilde{O}\left(\frac{d^2\log(\kappa)}{\epsilon}\right)\] 
the preconditioner $A$ computed in step 2 of Algorithm~\ref{alg:pure-dp-cov-est} satisfies $I \preceq A\Sigma A \preceq  20I$ with probability 0.99. Conditioned on this event, the samples $\{A X_{n/2+1}, \ldots, A X_n\}$ are \iid according to $\calN(0, A \Sigma A)$. Since $A \Sigma A\preceq 20 I$ and $n \geq \widetilde{O}\left(\frac{d^2}{\alpha^2 \epsilon}\right)$, Algorithm \ref{alg:pure-dp-matrix-mean} will return a covariance matrix $\Sigma_1$ satisfying\[
\norm{\Sigma_1 - A \Sigma A}_F \leq \alpha,
\]
with probability at least $0.99$.
We translate this into a relative Frobenius distance guarantee as follows. Notice that
\begin{align*}
\alpha \geq \norm{\Sigma_1 - A \Sigma A}_F &= \norm{A \Sigma^{1/2} \left(\Sigma^{-1/2} A^{-1} \Sigma_1 A^{-1} \Sigma^{-1/2} - I\right) \Sigma^{1/2} A}_F \\
&\geq \lambda_{\min}\left(A\Sigma A\right) \norm{\Sigma^{-1/2} A^{-1} \Sigma_1 A^{-1} \Sigma^{-1/2} - I}_F,
\end{align*}
where the last step uses the  sub-multiplicativity of the Frobenius norm.
By definition of the algorithm, the final estimate of $\Sigma$ is $\widehat{\Sigma} = A^{-1} \Sigma_1 A^{-1}$. 
Plugging this into the inequality above, we have 
\begin{align}\label{eqn:translate-2}
\alpha \geq \norm{\Sigma_1 - A \Sigma A}_F  
\geq \lambda_{\min}\left(A\Sigma A\right) \norm{\Sigma^{-1/2} \widehat{\Sigma} \Sigma^{-1/2} - I}_F.
\end{align}
Observe that  $\lambda_{\min}\left(A\Sigma A\right) \geq  1$,  since $I \preceq A \Sigma A$. 
Applying this fact and  rearranging the inequality \ref{eqn:translate-2}, we get that
\begin{equation}
\norm{\Sigma^{-1/2} \widehat{\Sigma} \Sigma^{-1/2} - I}_F \leq \alpha, 
\end{equation}
completing the proof.
\end{proof}

\subsection{Application: General Pure DP Gaussian Estimation}
We now show how to estimate a high-dimensional Gaussian with unknown mean and covariance in statistical distance, under pure DP. 
This is by combining our result on   private covaraince estimation and the prior work on mean estimation \cite{hopkins2021efficient}. The argument is standard: we simply  estimate the mean and covariance separately and  apply  Lemma \ref{lemma:dtv-gaussians} that converts closeness in parameters  to closeness in distribution.

\paragraph{Pure DP mean estimation} The first step of the algorithm is to privately estimate the mean.
The idea is simple and similar to \cite{kamath2019privately}. If $\Sigma$ were known, then we can apply $\Sigma^{-1/2}$ to the samples and run the mean estimation algorithm (\textsf{PureDPMean}) of \cite{hopkins2021efficient} on input $\left\{\Sigma^{-1/2} X_i\right\}_{i=1}^n$. For $X_i \sim \cN(\mu, \Sigma)$, we have $\Sigma^{-1 / 2} X_i \sim \mathcal{N}\left(\Sigma^{-1 / 2} \mu, I\right)$. Thus, the output $\widehat{\mu}$ of \textsf{PureDPMean} satisfies that $\left\|\Sigma^{-1 / 2}(\mu-\widehat{\mu})\right\|_2 \leq \alpha$, which is what we need for distribution estimation (Lemma \ref{lemma:dtv-gaussians}). 

In the setting when $\Sigma$ is unknown, we apply our our preconditioning algorithm to privately learn a   matrix $A$ that is spectrally close to $\Sigma^{-1/2}$. This effectively sets the samples to have near identity covariance. We show it suffices for our purpose.

\begin{figure}[htbp]
    \centering\begin{mdframed}[style=algo]
\begin{enumerate}
    \item \textbf{Input:} Samples $\calX = \{X_1, X_2 , \ldots, X_{3n}\} \subset \R^d$,   privacy parameter $\epsilon > 0$.
    \item For each $i=1,2,\ldots, n$, let $Y_i = \frac{1}{\sqrt{2}} \left(X_{2i} - X_{2i-1}\right)$.
    \item Compute  the preconditioning matrix $A \in \R^{d \times d}$ using $\{Y_1, \ldots, Y_{n}\}$ as input to Algorithm~\ref{alg:precondition} and privacy parameter $\epsilon/2$.
    \item Run the algorithm  \textsf{PureDPMean} in Theorem \ref{thm:pure-dp-mean-est} \cite{hopkins2021efficient} on samples $\{A X_{2n + 1}/\sqrt{20}, \ldots, A X_{3n}/\sqrt{20}\}$ with privacy parameter $\epsilon/2$, error parameter $\alpha/\sqrt{20}$, and failure rate $\beta=0.01$ to obtain a mean  estimate $\tilde{\mu} \in \R^{ d}$.
    \item \textbf{Output:} Mean estimate $\widehat{\mu} = \sqrt{20}A^{-1}\tilde{\mu}$.
\end{enumerate}
\end{mdframed}
\caption{Private mean estimation algorithm}
\label{alg:pure-dp-mean-est}
\end{figure}

Specifically, our private mean estimation procedure is described by Algorithm \ref{alg:pure-dp-mean-est} and its guarantees given below.
\begin{lemma}[Pure DP Gaussian mean estimation]\label{lem:pure-dp-mean-gauss-ours}
Let $\alpha > 0$ be an error parameter, $\epsilon> 0$ be a privacy parameter, $R \in \R$ and $d \in \N$. There is a polynomial-time $\epsilon$-DP algorithm that, given $\eps$ and  
\[n \ge \widetilde{O}\left(\frac{d^2\log(\kappa)}{\epsilon} + \frac{d}{\alpha^2 \epsilon} + \frac{d\log(R)}{\epsilon}\right)\] 
i.i.d.\ samples $\calX = \{X_1, \ldots, X_n\}$ from $\calN(\mu, \Sigma)$ for an unknown $\mu$ satisfying $\| \mu\|_2 \leq R$ and an unknown $\Sigma \in \R^{d \times d}$ satisfying $I \preceq \Sigma \preceq \kappa I$, outputs $\widehat{\mu}$ satisfying \[\left\|\Sigma^{-1 / 2}(\mu-\widehat{\mu})\right\|_2 \leq \alpha\] with probability at least $0.9$.
\end{lemma}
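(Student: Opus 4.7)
I will reduce to isotropic mean estimation by first privately computing a preconditioner that approximately whitens $\Sigma$, and then running \textsf{PureDPMean} on the whitened samples. The obstacle is that the preconditioning routine of Theorem~\ref{thm:pure-dp-preconditioner} assumes zero-mean inputs, whereas here $\mu$ is unknown. To sidestep this, the algorithm forms pairwise differences $Y_i = (X_{2i}-X_{2i-1})/\sqrt{2}$, which are i.i.d.\ $\calN(0,\Sigma)$ regardless of $\mu$, and feeds these into Algorithm~\ref{alg:precondition}. Privacy follows by basic composition (Theorem~\ref{thm:basic-comp}) of the preconditioning step and the \textsf{PureDPMean} call, each at budget $\epsilon/2$; forming the $Y_i$'s and the final rescaling $\widehat\mu = \sqrt{20}\,A^{-1}\tilde\mu$ are pure post-processing.

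\textbf{Utility steps.} With $\widetilde{O}(d^2\log\kappa/\epsilon)$ samples from the $Y_i$'s, Theorem~\ref{thm:pure-dp-preconditioner} outputs $A$ satisfying $I \preceq A\Sigma A \preceq 20\,I$ with probability at least $0.99$. Condition on this event. The preconditioned samples $AX_i/\sqrt{20}$ (for $i > 2n$) are i.i.d.\ $\calN(A\mu/\sqrt{20},\, A\Sigma A/20)$, and two quick checks enable applying Theorem~\ref{thm:pure-dp-mean-est}: (i) the covariance satisfies $A\Sigma A/20 \preceq I$ by construction, and (ii) since $\Sigma \succeq I$ implies $A^2 \preceq A\Sigma A \preceq 20\,I$, we have $\|A\|_2 \leq \sqrt{20}$, so the transformed mean still satisfies $\|A\mu/\sqrt{20}\|_2 \leq R$. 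Invoking \textsf{PureDPMean} with error parameter $\alpha/\sqrt{20}$, failure rate $0.01$, and the same a priori radius $R$ then requires only $\widetilde{O}(d/(\alpha^2\epsilon) + d\log R/\epsilon)$ additional samples and returns $\tilde\mu$ with $\|\tilde\mu - A\mu/\sqrt{20}\|_2 \leq \alpha/\sqrt{20}$.

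\textbf{Closing the loop.} Setting $v := \tilde\mu - A\mu/\sqrt{20}$, observe that $\widehat\mu - \mu = \sqrt{20}\,A^{-1} v$, so
\[
\|\Sigma^{-1/2}(\widehat\mu - \mu)\|_2^2 \;=\; 20\, v^\top A^{-1}\Sigma^{-1} A^{-1} v.
\]
The key spectral inequality comes from inverting $A\Sigma A \succeq I$ to get $A^{-1}\Sigma^{-1}A^{-1} \preceq I$, which combined with $\|v\|_2^2 \leq \alpha^2/20$ yields $\|\Sigma^{-1/2}(\widehat\mu - \mu)\|_2 \leq \alpha$. The main subtlety, though elementary, is the careful choice of the $\sqrt{20}$ scaling: it simultaneously forces the transformed covariance below $I$ (needed to invoke \textsf{PureDPMean}), preserves the a priori mean bound $R$, and cancels exactly against $A^{-1}$ so that the final error sits in the correct Mahalanobis norm. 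A union bound over the two $0.99$-probability success events gives the claimed $0.9$ success rate.
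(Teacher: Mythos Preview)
Your proposal is correct and follows essentially the same approach as the paper: form pairwise differences to obtain zero-mean samples for the preconditioner, apply \textsf{PureDPMean} to the whitened remaining samples, and convert back. The only cosmetic difference is in the last step, where the paper bounds $\|\Sigma^{-1/2}A^{-1}\|_2\le 1$ directly from $I\preceq A\Sigma A$ and then uses submultiplicativity, whereas you phrase the equivalent fact as the PSD inequality $A^{-1}\Sigma^{-1}A^{-1}\preceq I$; both yield the same bound.
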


\begin{proof}
    The privacy follows from basic   composition of the   privacy property of Algorithm \ref{alg:precondition} and \textsf{PureDPMean}. We focus on the utility analysis proving that $\left\|\Sigma^{-1 / 2}(\mu-\widehat{\mu})\right\|_2 \leq \alpha$.

    Since input samples $\{X_1,X_2,\ldots,X_{2n}\}$ are \iid from $\cN(\mu, \Sigma)$, the random vectors $\{Y_1,Y_2,\ldots, Y_n\}$ are \iid according to $\cN(0, \Sigma)$. By Theorem \ref{thm:pure-dp-preconditioner}, the choice of $n$ and our assumption on $\Sigma$,   step 2 of Algorithm \ref{alg:pure-dp-mean-est} outputs a preconditioning matrix $A \in \R^{d \times d}$ such that
    \begin{equation}\label{eqn:precond-mean}
            I\preceq A \Sigma A \preceq 20 I
    \end{equation}
    with probability at least $0.99$. Since $I\preceq \Sigma$ and $A\Sigma A \preceq 20I$, we have $\|A\|_2 \leq \sqrt{20}$. Since   $\{  X_{2n + 1}, \ldots,  X_{3n}\}$ are \iid from $\cN(\mu,\Sigma$), then   in step 3,   $\{A X_{2n + 1} / \sqrt{20}, \ldots, A X_{3n} / \sqrt{20}\}$ are i.i.d.\ according to $\cN(A\mu/\sqrt{20}, A\Sigma A/20)$. Recall that  the guarantee from $A$ ensures   $A \Sigma A / 20 \preceq I$.  Moreover, $\|A \mu / \sqrt{20}\|_2 \leq\|A\|_2\|\mu\|_2 / \sqrt{20} \leq R$, since $\|A\|_2\leq \sqrt{20}$. Therefore, conditioned on these events, Theorem \ref{thm:pure-dp-mean-est} implies that the mean estimate $\tilde{\mu}$ in step 3 satisfies that $\|A\mu/\sqrt{20} -\tilde{\mu}\|_2 \leq \alpha/\sqrt{20}$, and hence $\|A(\mu - \widehat{\mu})\|_2 \le \alpha$, with probability at least $0.99$. Now since $I\preceq A \Sigma A$, we have $\|\Sigma^{-1/2}A^{-1}\|_2 \le 1$. Hence, $\left\|\Sigma^{-1/2}(\mu -\widehat{\mu})\right\|_2 \leq \|\Sigma^{-1/2} A^{-1}\|_2 \cdot \|A(\mu - \widehat{\mu})\|_2 \le \alpha$,  with probability at least $0.99$.
    The proof follows by applying a union bound over the failure probability of step 2 and 3 of   Algorithm \ref{alg:pure-dp-mean-est}.

\end{proof}

\paragraph{Putting it Together}
We now put together Lemma \ref{lem:pure-dp-mean-gauss-ours} on mean estimation and Theorem \ref{thm:pure-dp-cov-est-restate} on covariance estimation to show:

\begin{theorem}[Pure DP Gaussian estimation]
    Let $\alpha > 0$ be an error parameter, $\epsilon> 0$ be a privacy parameter, $R \in \R$ and $d \in \N$. There is a polynomial-time $\epsilon$-DP algorithm that, given $\eps$ and
\[n \geq \widetilde{O}\left(\frac{d^2\log(\kappa)}{\epsilon} + \frac{d^2}{\alpha^2\epsilon} + \frac{d\log(R)}{\epsilon}\right)\] 
i.i.d.\ samples $\calX = \{X_1, \ldots, X_n\}$ from $\calN(\mu, \Sigma)$ for an unknown $\mu$ satisfying $\| \mu\|_2 \leq R$ and an unknown $\Sigma \in \R^{d \times d}$ satisfying $I \preceq \Sigma \preceq \kappa I$, outputs $\widehat{\mu}, \widehat{\Sigma}$  such that 
\begin{equation}\label{eqn:tv-36}
\dtv\left(\cN(\mu,\Sigma), \cN\left(\widehat{\mu}, \widehat{\Sigma}\right)\right)  \leq O(\alpha) 
\end{equation}
with probability at least $0.8$.
\begin{proof}
For simplicity, assume that $n$ is even.
We   use $n/2$ samples as input to the private mean estimation algorithm (Algorithm \ref{alg:pure-dp-mean-est}) and another $n/2$ samples for   covariance estimation (Algorithm \ref{alg:pure-dp-cov-est}), both with a privacy parameter $\eps/2$. 
Privacy follows from basic composition (Theorem \ref{thm:basic-comp}). 
For utility, by our choice of $n$, Lemma \ref{lem:pure-dp-mean-gauss-ours} implies  that \[\left\|\Sigma^{-1 / 2}(\mu-\widehat{\mu})\right\|_2 \leq \alpha\]with probability  $0.9$, and Theorem \ref{thm:pure-dp-cov-est-restate} implies that \[\norm{\Sigma^{-1/2} \widehat{\Sigma} \Sigma^{-1/2} - I}_F \leq \alpha\] with probability $0.9$.  
Conditioned on the success of both steps, Lemma \ref{lem:parameter-closeness} yields the desired total variation  guarantee, Equation \ref{eqn:tv-36}. The failure probability follows from an application of union bound.
 \end{proof}
\end{theorem}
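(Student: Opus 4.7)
The plan is to decouple the Gaussian estimation task into two pieces that we already have building blocks for: private mean estimation in the Mahalanobis norm (Lemma \ref{lem:pure-dp-mean-gauss-ours}) and private covariance estimation in the relative Frobenius norm (Theorem \ref{thm:pure-dp-cov-est-restate}). Once both of these parameter-closeness bounds hold simultaneously, the total variation guarantee follows immediately from the parameter-to-distribution conversion in Lemma \ref{lemma:dtv-gaussians}.

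Concretely, I would split the input $\calX$ of size $n$ into two halves of size $n/2$ each (assuming $n$ is even, else drop one sample). On the first half, run the mean estimation procedure of Algorithm \ref{alg:pure-dp-mean-est} with privacy parameter $\epsilon/2$ to obtain $\widehat{\mu}$; on the second half, run the covariance estimation procedure of Algorithm \ref{alg:pure-dp-cov-est} with privacy parameter $\epsilon/2$ to obtain $\widehat{\Sigma}$. Since the two stages operate on disjoint halves of the sample, privacy follows cleanly by basic composition (Theorem \ref{thm:basic-comp}), giving an overall $\epsilon$-DP algorithm. The sample complexity budget required by each invocation, as given by Lemma \ref{lem:pure-dp-mean-gauss-ours} and Theorem \ref{thm:pure-dp-cov-est-restate}, is majorized by the stated bound $\widetilde{O}\bigl(d^2\log(\kappa)/\epsilon + d^2/(\alpha^2\epsilon) + d\log(R)/\epsilon\bigr)$, since the mean estimation cost is strictly dominated by the covariance cost except for the $d \log(R)/\epsilon$ term coming from the norm bound on $\mu$.

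For utility, Lemma \ref{lem:pure-dp-mean-gauss-ours} guarantees $\|\Sigma^{-1/2}(\mu - \widehat{\mu})\|_2 \leq \alpha$ with probability at least $0.9$, and Theorem \ref{thm:pure-dp-cov-est-restate} guarantees $\|\Sigma^{-1/2}\widehat{\Sigma}\Sigma^{-1/2} - I\|_F \leq \alpha$ with probability at least $0.9$. Conditioning on both good events, which occur jointly with probability at least $0.8$ by a union bound, Lemma \ref{lemma:dtv-gaussians} implies $\dtv(\cN(\mu,\Sigma), \cN(\widehat{\mu}, \widehat{\Sigma})) \leq O(\alpha)$.

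There is no real obstacle here — the proof is essentially a bookkeeping exercise on top of the two substantial ingredients already proven. The only things to watch are constants in the privacy splitting (halving $\epsilon$ only affects the sample complexity by a factor of $2$, absorbed in $\widetilde{O}$), ensuring the two sub-routines operate on independent halves so that parallel/basic composition applies without subtlety, and combining the two $0.9$-confidence events via union bound to land on the claimed $0.8$ success probability.
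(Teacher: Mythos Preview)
Your proposal is correct and matches the paper's own proof essentially step for step: split the sample in half, run Algorithm~\ref{alg:pure-dp-mean-est} and Algorithm~\ref{alg:pure-dp-cov-est} each with privacy budget $\epsilon/2$, invoke basic composition for privacy, and union-bound the two $0.9$-probability utility events before applying the parameter-to-TV conversion. In fact your citation of Lemma~\ref{lemma:dtv-gaussians} is the intended one; the paper's proof text cites Lemma~\ref{lem:parameter-closeness}, which appears to be a typo.
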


\subsection{Robustness}\label{sec:pure-dp-robust}
We now argue that our algorithms are robust to adversarial corruptions, with the cost that estimation error generally   is worsened to $\alpha +O(\sqrt{\eta})$, where $\eta$ is the fraction of corrupted samples. 
As we discussed, our algorithm for learning Gaussian   under pure DP is by reducing the problem to black-box applications of the main procedure  from \cite{hopkins2021efficient}, namely, \textsf{PureDPMean} in Theorem \ref{thm:pure-dp-mean-est}. 
We exploit the robustness property of \textsf{PureDPMean} to show:
\begin{theorem}[Robust pure DP Gaussian estimation]\label{thm:pure-dp-robust-gaussian}
Let $\alpha > 0$ be an error parameter, $\epsilon> 0$ be a privacy parameter, $R \in \R$ and $d \in \N$. For a sufficiently small constant $\eta$, there is a polynomial-time $\epsilon$-DP algorithm that, given $\eps$ and
\[n \ge \widetilde{O}\left(\frac{d^2\log(\kappa)}{\epsilon} + \frac{d^2}{\alpha^2 \epsilon} + \frac{d\log(R)}{\epsilon}\right) \] 
$\eta$-corrupted samples $\calX = \{X_1, \ldots, X_n\}$ from $\calN(\mu, \Sigma)$ for an unknown $\mu$ satisfying $\| \mu\|_2 \leq R$ and an unknown $\Sigma \in \R^{d \times d}$ satisfying $I \preceq \Sigma \preceq \kappa I$, outputs $\widehat{\mu}$ and $ \widehat{\Sigma}$  such that 
\begin{equation}\label{eqn:tv-37}
\dtv\left(\cN(\mu,\Sigma), \cN\left(\widehat{\mu}, \widehat{\Sigma}\right)\right)  \leq O(\alpha +\sqrt{\eta}) 
\end{equation}
with probability at least $0.8$.
\end{theorem}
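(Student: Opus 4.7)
The plan is to reduce the robust pure-DP Gaussian estimation guarantee of Theorem~\ref{thm:pure-dp-robust-gaussian} to the robustness of the black-box mean-estimation primitive \textsf{PureDPMean} (Theorem~\ref{thm:pure-dp-mean-est}) by tracing how an $\eta$-fraction of adversarial corruption propagates through each stage of the non-robust construction of $\widehat{\mu}$ and $\widehat{\Sigma}$. The key structural observation is that every subroutine used in Algorithms~\ref{alg:pure-dp-matrix-mean},~\ref{alg:precondition-weak},~\ref{alg:precondition},~\ref{alg:pure-dp-cov-est}, and~\ref{alg:pure-dp-mean-est} interacts with the data solely through a single black-box call to \textsf{PureDPMean} applied to some transformation (outer products, pairwise differences, or a fixed linear map) of the input samples. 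Each of these transformations turns an $\eta$-corrupted sample into an $O(\eta)$-corrupted sample: forming $Y_i = X_i \otimes X_i/(\sqrt{3}\kappa)$ preserves the corruption fraction exactly, forming $Y_i = (X_{2i}-X_{2i-1})/\sqrt{2}$ at most doubles it, and applying a fixed matrix $A$ to every sample preserves it exactly.

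First I would upgrade Theorem~\ref{thm:pure-dp-well-cond} (the guarantee for \textsf{PureDPMatrixMean}) to the robust conclusion $\|\widehat{\Sigma} - \Sigma\|_F \leq (\alpha + O(\sqrt{\eta}))\,\kappa$, simply by invoking the robust clause of Theorem~\ref{thm:pure-dp-mean-est} on the outer-product samples; the underlying moment bounds $\|\E Y_i\|_2 \le \sqrt{d/3}$ and $\Cov(Y_i)\preceq I$ via Fact~\ref{fact:cov-outer-product} do not depend on corruption. Substituting this robust bound with $\alpha = 0.01$ into the proof of Lemma~\ref{thm:pure-dp-preconditioner-one-round} yields the weak preconditioning guarantee $I \preceq A\Sigma A \preceq 0.99\kappa I$, provided $\eta$ is a sufficiently small absolute constant so that the enlarged slack $(0.01 + O(\sqrt{\eta}))\,\kappa$ in the Frobenius bound is still absorbed in the two-part spectral case analysis on the eigenvalues of $\widehat{\Sigma}$. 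The recursive scheme of Algorithm~\ref{alg:precondition} uses \emph{disjoint} subsamples across its $O(\log \kappa)$ rounds, so the corruption fraction does not amplify across rounds and the conclusion $I \preceq A\Sigma A \preceq 20 I$ of Theorem~\ref{thm:pure-dp-preconditioner} carries over intact.

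With the robust preconditioner in hand, the final step of Algorithm~\ref{alg:pure-dp-cov-est} is another invocation of \textsf{PureDPMatrixMean} on the $\eta$-corrupted, approximately-isotropic samples $\{AX_i\}$, which now produces $\|\Sigma_1 - A\Sigma A\|_F \leq \alpha + O(\sqrt{\eta})$; repeating the relative-Frobenius translation of the proof of Theorem~\ref{thm:pure-dp-cov-est-restate} yields $\|\Sigma^{-1/2}\widehat{\Sigma}\Sigma^{-1/2} - I\|_F \leq \alpha + O(\sqrt{\eta})$. For the mean, I would follow Algorithm~\ref{alg:pure-dp-mean-est} verbatim: the symmetrized inputs to the preconditioner are $O(\eta)$-corrupted draws from $\cN(0,\Sigma)$, the robust preconditioner returns $A$ with $I \preceq A\Sigma A \preceq 20 I$, and invoking the robust clause of Theorem~\ref{thm:pure-dp-mean-est} on $\{AX_i/\sqrt{20}\}$ yields $\|A\mu/\sqrt{20} - \widetilde{\mu}\|_2 \le (\alpha + O(\sqrt{\eta}))/\sqrt{20}$, which by the same algebra as in Lemma~\ref{lem:pure-dp-mean-gauss-ours} implies $\|\Sigma^{-1/2}(\mu - \widehat{\mu})\|_2 \le \alpha + O(\sqrt{\eta})$.

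Combining the mean and covariance parameter-closeness bounds through Lemma~\ref{lemma:dtv-gaussians} and taking a union bound over the constantly many failure events of the subroutines delivers the advertised guarantee $\dtv(\cN(\mu,\Sigma), \cN(\widehat{\mu},\widehat{\Sigma})) \le O(\alpha + \sqrt{\eta})$. Privacy is unchanged from the non-robust analysis since neither the privacy of \textsf{PureDPMean} nor the basic composition argument is affected by corruption. The only genuinely delicate step---and the main obstacle---is verifying that the weak preconditioner of Lemma~\ref{thm:pure-dp-preconditioner-one-round} still reduces the condition number multiplicatively under the robustified Frobenius error bound; this forces $\eta$ to lie below a universal constant threshold, but past that threshold the rest of the argument is routine bookkeeping.
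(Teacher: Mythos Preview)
Your proposal is correct and follows essentially the same approach as the paper: both arguments propagate the robustness clause of \textsf{PureDPMean} through each black-box call and observe that the preconditioner survives the extra $O(\sqrt{\eta})$ slack for sufficiently small $\eta$. The only cosmetic difference is that the paper tightens the accuracy parameter in \textsf{PureDPMatrixMean} from $0.01$ to $0.0099$ so that the original $0.01\kappa$ Frobenius bound (and hence the proof of Lemma~\ref{thm:pure-dp-preconditioner-one-round}) holds verbatim, whereas you absorb the enlarged slack directly into the spectral case analysis; both are valid.
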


To give a proof sketch, the key step is to observe that our recursive preconditioning algorithm  (Algorithm \ref{alg:precondition}) is robust. Recall that the algorithm simply calls our weak preconditioning scheme (Algorithm \ref{alg:precondition-weak}) recursively (for $\log \kappa$ times). This weak scheme, in turn, runs \textsf{PureDPMean} to roughly estimate   $\Sigma$, up to an additive error of $0.01\kappa$ in the (absolute) Frobenius norm. We observe that the robustness property of  \textsf{PureDPMean} suffices to yield the same error guarantee even under $\eta$-corruption. Hence, the recursive preconditioning algorithm retains its performance under corruption. Finally,  the remaining steps of our algorithms for mean and covariance estimation simply calls \textsf{PureDPMean}  on the preconditioned samples. We lose the extra factor of $\sqrt{\eta}$ from there.

We remark that in the analysis we make no attempt to optimize the breakdown point of the algorithm. (In fact, it   depends on the hidden constant in the $O(\sqrt{\eta})$ error term of \textsf{PureDPMean}.)
\begin{proof}[Proof of Theorem \ref{thm:pure-dp-robust-gaussian}]
We start by modifying step 2 of Algorithm \ref{alg:precondition-weak} to invoke Algorithm \ref{alg:pure-dp-matrix-mean} with $\alpha = 0.0099$ instead. By the choice of $n,\alpha$ and for a sufficiently small $\eta$, Theorem \ref{thm:pure-dp-mean-est} implies that the rough estimate  $\widehat{\Sigma}$, computed in the second step of Algorithm~\ref{alg:precondition-weak}, satisfies
\begin{equation}\label{eqn:error-puredp-2}
    \norm{\widehat{\Sigma} - \Sigma}_F \leq 0.01 \kappa
\end{equation} 
 with probability $1- \beta$. 
 Observe that with the error bound above, the rest of the proof of Theorem \ref{thm:pure-dp-preconditioner-one-round}  remains valid. 
 Inspecting the analysis of the recursively preconditioning algorithm, we note that the error guarantee of Theorem \ref{thm:pure-dp-preconditioner-one-round}  suffices to imply Theorem \ref{thm:pure-dp-preconditioner}. 

 We now argue for the error rate on covariance  and mean estimation, separately.
 \begin{itemize}
     \item For covariance estimation, consider Algorithm \ref{alg:pure-dp-cov-est} and its guarantees Theorem \ref{thm:pure-dp-cov-est}.  Step 2, the preconditioning step, of Algorithm \ref{alg:pure-dp-cov-est} retains its performance exactly. In step 3, we instead get an estimate $\widehat{\Sigma}$ such that $\norm{\Sigma^{-1/2} \widehat{\Sigma} \Sigma^{-1/2} - I}_F \leq \alpha+O(\sqrt{\eta})$, due to the robustness property of \textsf{PureDPMean}.
    \item For mean estimation, consider Algorithm \ref{alg:pure-dp-mean-est} and its guarantees Lemma \ref{lem:pure-dp-mean-gauss-ours}. Similarly, step 3 of the algorithm retains its performance exactly, and we lose an extra $O(\sqrt{\eta})$ factor in step 4. Hence, the algorithm outputs an $\widehat \mu$ such that $\|\mu -\widehat{\mu}\| \leq \alpha +O(\sqrt{\eta})$.
 \end{itemize}
 Applying the Lemma \ref{lem:parameter-closeness} converts the parameter closeness to distribution closeness, and this finishes the proof.
 \end{proof}
\section{Gaussian Sampling Mechanism}
\label{sec:gauss-sampling}
 In this section, we provide an approximate DP mechanism for releasing a   covariance estimate. The mechanism is simple and natural. It  works by approximating a PSD matrix  $\Sigma$ empirically with independent samples from $\mathcal{N}(0,\Sigma)$. 
 \paragraph{Setting}
 Consider a setting where there is an estimation   algorithm $\mathcal{M}: \left(\R^{d}\right)^n \rightarrow \mathbb{S}_+^d$   that given  $n$ samples in $\mathbb{R}^d$ outputs a PSD matrix. For us, the algorithm  $\mathcal{M}$, roughly speaking, will be the solution of a \textit{stabilized} convex program (minimizing a strongly convex potential).
 In particular, we can guarantee a sensitivity bound in the relative Frobenius distance such that \[\left\|\mathcal M({\cal Y})^{1/2}\mathcal M({\cal Y}')^{-1}\mathcal M({\cal Y})^{1/2} - I\right\|_F \le \Delta\] for some small $\Delta$, on any neighboring datasets  $\mathcal{Y}, \mathcal{Y}'$. 
Moreover, we can show that   $\mathcal M(\mathcal{Y})$ is close to the true covariance $\Sigma$, if $\mathcal{Y}$ are samples from $\mathcal{N}(0,\Sigma)$. Yet, $\mathcal{M}$ by itself provides no privacy. The goal, therefore, is to privately release the estimate  $\mathcal M(\mathcal{Y})$, while retaining its statistical performance. 

In light of standard DP mechanisms, it is a natural idea to   design an explicit noise-injection mechanism to privatize $\mathcal{M}(\mathcal{Y})$. This is indeed the approach of \cite{kothari2021private}. To ensure that the noise level does not hamper accuracy, though, their mechanism depends on the estimate itself, which leads to much technical complications.  We now give a significantly simpler solution to this problem, which we believe may find applications elsewhere.

\subsection{Algorithm}

The algorithm we propose is extremely simple. Let $\Sigma = \mathcal M({\mathcal{Y}})$ be the covariance we intend to release.  We approximate by empirical samples. That is, given $\Sigma$ and an accuracy parameter $k$,  we sample $g_i \sim \mathcal{N}(0,\Sigma)$ for each $i\in[k]$ and release $\widehat{\Sigma} = \frac{1}{k} \sum_{i=1}^k g_i g_i^T$. The algorithm is formally given in Algorithm \ref{alg:cov-noise-mechanism}.

\begin{figure}[htbp]    
    \centering\begin{mdframed}[style=algo]
\begin{enumerate}
    \item \textbf{Input:} A PSD matrix $\Sigma \in \R^{d \times d}$ and parameter $k \in \N$
    \item Obtain vectors $g_1,g_2, \ldots, g_k$ by sampling $g_i \sim \calN(0,\Sigma)$, independently for each $i\in [k]$.
    \item \textbf{Output:} Covariance estimate $\widehat{\Sigma} = \frac{1}{k} \sum_{i=1}^k g_i g_i^T$.
\end{enumerate}
\end{mdframed}
\caption{The Gaussian Sampling Mechanism}
\label{alg:cov-noise-mechanism}
\end{figure}

\subsection{Analysis}
We now give a privacy and utility analysis of the algorithm. 
Intuitively, when $k\rightarrow \infty$, the output $\widehat{\Sigma} \rightarrow \Sigma$, which means good utility. However, this leads to no privacy, since $\Sigma = \mathcal{M}(\mathcal{Y})$   is not private (w.r.t.\ $\mathcal{Y}$). Our theorem characterizes the utility-privacy trade-off quantitatively (when $k$ is finite). 
Note that for fixed $\eps,\delta$, the smaller the sensitivity bound  $\Delta$ is, the higher we get to choose $k$, which leads to better approximation accuracy.

\begin{theorem}[Analysis of the Gaussian Sampling Mechanism]\label{thm:gaussian-sampling-mech}
Fix $\varepsilon, \delta \in (0, 1)$ and $k \in \mathbb{N}$ and let \[\Delta = \min\left(\frac{\varepsilon}{\sqrt{8k\log(1/\delta)}}, \frac{\varepsilon}{8\log(1/\delta)}\right) <1.\] 
Let $\mathcal{M}: \left(\R^{d}\right)^n \rightarrow \mathbb{S}_+^d$ be a (randomized) algorithm that given a dataset of $n$ points in $\mathbb{R}^d$ outputs a PSD matrix. 
Suppose that $\mathcal{M}$ satisfies a sensitivity bound  that \[\left\|\mathcal M({\cal Y})^{1/2}\mathcal M({\cal Y}')^{-1}\mathcal M({\cal Y})^{1/2} - I\right\|_F \le \Delta\] for any neighboring datasets  $\mathcal{Y}, \mathcal{Y}'$. Then given an input $\Sigma = \mathcal{M}(\mathcal{Y})$, 
\begin{itemize}
    \item Algorithm \ref{alg:cov-noise-mechanism} is $(\varepsilon, \delta)$-DP (with respect to the original dataset $\mathcal{Y}$); and
    \item  Algorithm \ref{alg:cov-noise-mechanism} outputs $\widehat{\Sigma} \in \mathbb{S}_+^d$ such that  with probability at least $1-\gamma$,\[\norm{\Sigma
    ^{-1/2} \widehat{\Sigma} \Sigma^{-1/2} - I}_{F} \leq \rho\]
for $\rho = O \left(\sqrt{\frac{d^2 + \log(1/\gamma)}{k}} + \frac{d^2 + \log(1/\gamma)}{k} \right)$.
\end{itemize}
\end{theorem}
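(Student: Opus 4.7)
By construction, $\widehat\Sigma$ is the empirical covariance of $k$ i.i.d.\ samples from $\mathcal N(0,\Sigma)$, so the bound on $\|\Sigma^{-1/2}\widehat\Sigma\Sigma^{-1/2}-I\|_F$ is immediate from Theorem~\ref{thm:emp-cov-concentration} applied with $n=k$. All the real work is in the privacy claim.

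\textbf{Privacy, setup.} To show $(\varepsilon,\delta)$-DP with respect to $\mathcal Y$, my plan is to invoke Lemma~\ref{thm:privacy-loss-tail-implies-dp}: it suffices to bound the tail of the log-likelihood ratio under the true distribution. Let $\Sigma=\mathcal M(\mathcal Y)$ and $\Sigma'=\mathcal M(\mathcal Y')$ for neighboring $\mathcal Y,\mathcal Y'$, and set $A=\Sigma^{1/2}\Sigma'^{-1}\Sigma^{1/2}-I$, so that $\|A\|_F\le\Delta$ by hypothesis. A direct computation with the Gaussian density, using the change of variables $z_i=\Sigma^{1/2}g_i$ with $g_i\sim\mathcal N(0,I)$, gives the clean form
\[
\mathcal L(z_1,\ldots,z_k)=\sum_{i=1}^k\log\frac{p_\Sigma(z_i)}{p_{\Sigma'}(z_i)}=\tfrac12\sum_{i=1}^k\bigl(g_i^{\top}A g_i-\log\det(I+A)\bigr).
\]

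\textbf{Privacy, analysis.} The mean is $\E[\mathcal L]=\tfrac{k}{2}\sum_j(\lambda_j-\log(1+\lambda_j))$, where $\lambda_j$ are the eigenvalues of $A$. Under the choice of $\Delta$ in the theorem (and $\delta\le 1/e$, so $\log(1/\delta)\ge 1$) we have $|\lambda_j|\le\|A\|_F\le\Delta\le 1/8$, and the second-order Taylor estimate $\lambda-\log(1+\lambda)\le 2\lambda^2$ for $|\lambda|\le 1/2$ yields $\E[\mathcal L]\le k\|A\|_F^2\le k\Delta^2\le \varepsilon^2/(8\log(1/\delta))\le \varepsilon/2$. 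For the fluctuations, diagonalize $A=U\mathrm{diag}(\lambda_j)U^{\top}$ and write $g_i^{\top}A g_i-\mathrm{tr}(A)=\sum_j\lambda_j(w_{ij}^2-1)$ with $w_{ij}\sim\mathcal N(0,1)$ i.i.d. Each centered $\chi^2_1$ is sub-exponential with parameters $(2,4)$ by Lemma~\ref{thm:chi-squared-sub-exponential}, so Lemma~\ref{thm:sub-exponential-sum} shows $\mathcal L-\E[\mathcal L]$ is sub-exponential with parameters $\bigl(\sqrt k\,\|A\|_F,\,2\|A\|_2\bigr)\le\bigl(\sqrt k\,\Delta,\,2\Delta\bigr)$. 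Lemma~\ref{thm:sub-exponential-tail} then gives
\[
\Pr\!\Bigl[\mathcal L-\E[\mathcal L]>\tfrac{\varepsilon}{2}\Bigr]\;\le\;\max\!\Bigl(e^{-\varepsilon^2/(8k\Delta^2)},\ e^{-\varepsilon/(8\Delta)}\Bigr),
\]
and the particular definition $\Delta=\min\!\bigl(\varepsilon/\sqrt{8k\log(1/\delta)},\ \varepsilon/(8\log(1/\delta))\bigr)$ is chosen so that each of the two quantities in the $\max$ is at most $\delta$. Combining with $\E[\mathcal L]\le\varepsilon/2$ gives $\Pr[\mathcal L>\varepsilon]\le\delta$; the hypothesis is symmetric under swapping $(\mathcal Y,\mathcal Y')$, so the same argument handles both directions and Lemma~\ref{thm:privacy-loss-tail-implies-dp} concludes $(\varepsilon,\delta)$-DP.

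\textbf{Main obstacle.} The only subtle point is that the sub-exponential tail has two distinct regimes---a sub-Gaussian regime controlled by the Frobenius norm $\|A\|_F$ and a Bernstein/Laplace regime controlled by the spectral norm $\|A\|_2$---and the privacy argument needs \emph{both} tails to be below $\delta$ simultaneously. This is exactly why the sensitivity requirement $\Delta$ must be a minimum of the two matching quantities; either piece alone would be insufficient. Everything else (the explicit form of $\mathcal L$, the Taylor bound on $\lambda-\log(1+\lambda)$, and the reduction to a weighted sum of centered $\chi_1^2$'s) is mechanical once this decomposition is in hand.
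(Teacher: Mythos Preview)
Your proof is correct and follows essentially the same approach as the paper: express the privacy loss as a weighted sum of centered $\chi_1^2$ variables, bound its mean, and control the tail via the sub-exponential parameters together with Lemma~\ref{thm:privacy-loss-tail-implies-dp}. The only minor variation is in the mean bound---you use the Taylor estimate $\lambda - \log(1+\lambda)\le 2\lambda^2$, whereas the paper uses $\log x \ge 1 - 1/x$ and Cauchy--Schwarz against the reverse sensitivity $\|I-A^{-1}\|_F\le\Delta$; your route is a touch cleaner since it avoids invoking the symmetric sensitivity bound.
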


\begin{proof}
The utility guarantee is immediately implied by Theorem \ref{thm:emp-cov-concentration}. For the proof of privacy, let \[f_{\Sigma}(x) = (2\pi)^{-\frac{d}{2}} \det(\Sigma)^{-\frac{1}{2}} \exp\left(-\frac{1}{2}x^\top \Sigma^{-1} x\right)\] denote the ${\cal N}(0, \Sigma)$ density function, let $\Sigma_1 = \Sigma({\cal Y})$ and $\Sigma_2 = \Sigma({\cal Y'})$ for neighboring datasets ${\cal Y}, {\cal Y'}$, and let $g_1, \ldots, g_k$ be the i.i.d.\ samples from ${\cal N}(0, \Sigma_1)$ output by Algorithm \ref{alg:cov-noise-mechanism}. 
By Theorem \ref{thm:privacy-loss-tail-implies-dp}, it suffices to show that \(\Pr\left[Z > \varepsilon\right] \le \delta,\) where \[Z = \sum_{i=1}^k \log\left(\frac{f_{\Sigma_1}(g_i)}{f_{\Sigma_2}(g_i)}\right)\] is the privacy loss random variable. 
To this end, define $A = \Sigma_1^{1/2}\Sigma_2^{-1}\Sigma_1^{1/2}$ and let $A = \sum_{j=1}^d \lambda_j v_j v_j^\top$ be its spectral decomposition. 
By assumption, we have \[\sqrt{\sum_{j=1}^d (\lambda_j-1)^2} = \|A - I\|_F \le \Delta.\] Similarly, letting $B = \Sigma_2^{1/2}\Sigma_1^{-1}\Sigma_2^{-1/2}$ and switching the roles of ${\cal Y}$ and ${\cal Y}'$, we see that \[\sqrt{\sum_{j=1}^d (1 - 1/\lambda_j)^2} = \|I - A^{-1}\|_F = \|I-B\|_F\le \Delta.\] In the above calculation, we have used the fact that $A^{-1} = CC^\top$ and $B = C^\top C$ for the matrix $C = \Sigma_1^{-1/2}\Sigma_2^{1/2}$, and hence $A^{-1}$ and $B$ have the same eigenvalues.
Next, let $$h_{ij} = \left\langle \Sigma_1^{-1/2} g_i, v_j \right\rangle,$$ and note that the random variables $h_{ij}$ are i.i.d.\ copies of ${\cal N}(0, 1)$. 
Using this notation, the privacy loss random variable $Z$ can be expressed as follows:
\begin{align*}
    Z &= \sum_{i=1}^k \log\left(\frac{\det(\Sigma_1)^{-\frac{1}{2}}\exp\left(-\frac{1}{2}g_i^\top\Sigma_1^{-1}g_i\right)}{\det(\Sigma_2)^{-\frac{1}{2}}\exp\left(-\frac{1}{2}g_i^\top\Sigma_2^{-1}g_i\right)}\right)\\
    &= \sum_{i=1}^k \left(\frac{1}{2}g_i^\top\left(\Sigma_2^{-1}-\Sigma_1^{-1}\right)g_i - \frac{1}{2} \log\left(\frac{\det(\Sigma_1)}{\det(\Sigma_2)}\right)\right)\\
    &= \frac{1}{2}\sum_{i=1}^k \left(\left(\Sigma_1^{-1/2}g_i\right)^\top(A - I)\left(\Sigma_1^{-1/2}g_i\right) - \log\det(A)\right)\\
    &= \frac{1}{2}\sum_{i=1}^k \sum_{j=1}^d \Big((\lambda_j - 1) h_{ij}^2 - \log(\lambda_j)\Big),
\end{align*}
where the last step is by taking the spectral decomposition of $A$.
Since $h_{ij}^2 \sim \chi^2_1$, Theorem \ref{thm:chi-squared-sub-exponential} (sub-exponential parameters of $\chi^2_1$) and Theorem \ref{thm:sub-exponential-sum} (sub-exponential parameters of an independent sum) imply that $Z$ is a sub-exponential random variable with parameters $\nu = \sqrt{k}\|A - I\|_F \le \sqrt{k}\Delta$ and $\alpha = 2\|A - I\|_2 \le 2\Delta$. Also, the mean of $Z$ is
\begin{align*}
    \E[Z] &= \frac{k}{2}\sum_{j=1}^d\left(\lambda_j - 1 - \log\left(\lambda_j\right)\right) &\text{(by linearity of expectation)}\\
    &\le \frac{k}{2}\sum_{j=1}^d \left(\lambda_j - 2 + 1/\lambda_j\right) &\text{(since $\lambda_j>0$ and}\log(x) \ge 1 - 1/x \text{ for $x> 0$})\\
    &= \frac{k}{2}\sum_{j=1}^d (\lambda_j - 1)(1 - 1/\lambda_j)\\
    &\le \frac{k}{2}\|A - I\|_F\left\|I - A^{-1}\right\|_F &\text{(by Cauchy-Schwarz)}\\
    &\le \frac{k}{2}\Delta^2 &\text{(by the above discussion)} \\
    &\le \frac{\varepsilon}{2} &\text{(because $\Delta < \varepsilon/\sqrt{k}$ and $\varepsilon < 1$).}
\end{align*}
Finally, using Theorem \ref{thm:sub-exponential-tail} (sub-exponential tail bound), we conclude that \[\Pr[Z > \varepsilon] \le \Pr\left[Z - \E[Z] > \frac{\varepsilon}{2}\right] \le \max\left\{e^{-\frac{(\varepsilon/2)^2}{2\nu^2}}, e^{-\frac{\varepsilon/2}{2\alpha}}\right\} \le \delta\] by plugging in the values of $\nu$, $\alpha$, and $\Delta$. This proves that Algorithm \ref{alg:cov-noise-mechanism} is $(\varepsilon, \delta)$-DP.
\end{proof}

\section{Approximate DP Robust Estimation}
\label{sec:approx-dp}
In this section, we give an efficient algorithm for private and robust estimation of mean and covariance of a high-dimensional Gaussian. The main idea is to improve the sensitivity bound of \cite{kothari2021private} via minimizing an entropy 
potential instead of the 2-norm objective. 
Moreover, we   use our Gaussian sampling mechanism to replace the noise-addition mechanism of \cite{kothari2021private} to improve the sample cost.

Throughout we let
 $\Ent (x) =  \sum_{i=1}^n x_i \log(1/x_i) + x_i$ 
be the (unnormalized) negative entropy function, $\mathcal{Y}= \{Y_i\}_{i=1}^n$ be the input data and $\eta$ be the (fixed) corruption rate.

\subsection{Stable Outlier Rate Selection}
\label{sec:outlier-rate}
The main subroutine of our algorithm is to search for pseudo-distributions that, in addition to satisfying    certain witness checking constraints $\calA_{C, \eta, n} (\mathcal{Y})$, also minimize a   strongly convex potential.  

\begin{definition}[Potential]
\label{def:pot}
Let $C > 0$, $\eta \in [0,1]$, $t \in \N$, $\mathcal{Y} \subset \R^d$ be a dataset of size $n$ and $f: \R^n \rightarrow \R$ be convex. For any degree-$2t$ pseudo-distribution $\zeta$ consistent with $\calA_{C, \eta, n} (\mathcal{Y})$, let $\Pot_{C, \eta, n, \zeta}(\mathcal{Y}) = f \left(\pE_\zeta [w]\right) $. Let $\Pot_{C, \eta, n}(\mathcal{Y}) = \min_{\zeta} \Pot_{C, \eta, n, \zeta}(\mathcal{Y})$, where the minimum is over all degree-$2t$ pseudo-distributions $\zeta$ satisfying $\calA_{C, \eta, n} (\mathcal{Y})$. If no such $\zeta$ exists, we define $\Pot_{C, \eta, n}(\mathcal{Y}) = \infty$.
\end{definition}

We choose $f(x) = -\Ent(x)/\log(n)$.
Moreover, we will define different constraint systems $\calA^{\textsf{mean}}_{C, \eta, n} (\mathcal{Y})$ and $\calA^{\textsf{cov}}_{C, \eta, n} (\mathcal{Y})$  for mean and covariance estimation, respectively. 
Before  that, the first step of our estimator  is   to randomly select an outlier rate $\eta'$  
such that    the potential function on the pseudo-distribution 
we compute  is close  on two neighboring input datasets. 

\paragraph{DP Selection}
Our algorithm of approximate DP estimation uses the private outlier rate selection procedure of \cite{kothari2021private}. 
For that, we need to apply the exponential mechanism to select an
outlier rate that satisfies a stability threshold.
With the vanilla exponential mechanism, however, one is not
guaranteed to get candidates with score above a certain
threshold. 
Theorem \ref{thm:dp-apx-selection} provides an
$(\eps, \delta)$-DP selection mechanism, where the candidate selected
has utility score that is above a certain threshold.
The caveat is that, with probability at most $\beta$, 
the procedure might output $\perp$.
 
\begin{lemma}[Theorem 3.34 of \cite{kothari2021private}] \label{thm:dp-apx-selection}

Fix $\beta\in(0, 1]$. 
Let $\eps, \delta\in (0, 1]$ be privacy parameters.
$\Delta$ is the sensitivity bound for the scoring
function.
Let
$\cC$ be a set of candidates and let $\score: \cC \times \cY\rightarrow\reals$ be a scoring function for candidates as a function of the databases $Y \in \cY$, such that its sensitivity (w.r.t. $Y$) is at most $\Delta$. 

There exists an algorithm \SelectionAlg\ with input parameter $\kappa$ that satisfies the following properties:
\begin{enumerate}
\item\label{cond:selectdp} \SelectionAlg~is $(\eps, \delta)$-DP.
\item\label{cond:selectgeqkappa} If the output of \SelectionAlg~is $c^* \neq \perp$, then $\score(c^*,Y) \geq \kappa$.
\item\label{cond:selectrejectprob} If there exists $c \in \cC$ such that $\score(c, Y) \geq \kappa + O\left(\frac{\Delta}{\eps} \cdot \log\left(\frac{|\cC|}{\beta \delta}\right)\right)$, then \SelectionAlg~output $\perp$ with probability at most $\beta$.
\end{enumerate}
\end{lemma}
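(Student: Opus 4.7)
The plan is to construct \SelectionAlg\ by combining the classical exponential mechanism with a noisy above-threshold test, then composing the two steps. On input $Y$ and $\kappa$, the algorithm first samples $c^* \in \cC$ from the exponential mechanism with parameter $\epsilon/2$, i.e., with probability proportional to $\exp(\epsilon \cdot \score(c, Y) / (4\Delta))$. It then draws $Z \sim \Lap(2\Delta/\epsilon)$ (or a truncated variant supported on $[-T/2, T/2]$), sets $\tilde s = \score(c^*, Y) + Z$, and outputs $c^*$ if $\tilde s \geq \kappa + T$, else $\perp$, where $T = c_0 \cdot (\Delta/\epsilon)\log(|\cC|/(\beta\delta))$ for a sufficiently large absolute constant $c_0$.

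For Property 1 (privacy), the exponential-mechanism step is $(\epsilon/2, 0)$-DP by the standard sensitivity-calibrated argument, since $\score(\cdot, Y)$ has sensitivity $\Delta$. Conditioned on the choice of $c^*$, the second step depends on $Y$ only through the scalar function $Y \mapsto \score(c^*, Y)$, which also has sensitivity $\Delta$; Laplace noise of scale $2\Delta/\epsilon$ followed by thresholding makes this step $\epsilon/2$-DP (the threshold is post-processing). Basic composition (Theorem \ref{thm:basic-comp}) yields $(\epsilon, 0)$-DP overall, hence $(\epsilon, \delta)$-DP.

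For Property 2, if $T \geq (2\Delta/\epsilon)\log(2/\delta)$, then $\Pr[|Z| \leq T/2] \geq 1 - \delta/2$, and on this event $\tilde s \geq \kappa + T$ forces $\score(c^*, Y) \geq \kappa$; using a truncated-Laplace variant instead turns this into a deterministic implication without changing the privacy analysis up to constants. For Property 3, suppose some $c^\dagger \in \cC$ satisfies $\score(c^\dagger, Y) \geq \kappa + 4T$. The standard utility guarantee of the exponential mechanism gives $\score(c^*, Y) \geq \max_c \score(c, Y) - O((\Delta/\epsilon)\log(|\cC|/\beta))$ with probability at least $1 - \beta/2$; choosing $c_0$ large enough absorbs this slack, so that $\score(c^*, Y) \geq \kappa + 2T$. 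Laplace concentration then gives $\tilde s \geq \kappa + T$ with probability at least $1 - \beta/2$, and a union bound yields Property 3 with failure probability at most $\beta$.

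The main obstacle is the simultaneous tuning of $T$: it must be large enough that Laplace tails do not corrupt Property 2 (requiring $T \gtrsim (\Delta/\epsilon)\log(1/\delta)$), while the hypothesis of Property 3 asks for only an $O((\Delta/\epsilon)\log(|\cC|/(\beta\delta)))$ margin above $\kappa$. Both demands are met because $\log(|\cC|/(\beta\delta)) = \log|\cC| + \log(1/\beta) + \log(1/\delta)$ already dominates every logarithmic factor appearing above; a secondary care point is whether Property 2 is intended deterministically or with probability $1-\delta$, but the truncated-Laplace variant handles the former at no asymptotic cost.
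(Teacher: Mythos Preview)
Your proposal is correct and follows essentially the same approach as the paper: run the exponential mechanism with budget $\eps/2$ to pick a candidate, then apply a Laplace-noised threshold test to decide whether to accept or output $\perp$. The only cosmetic difference is that the paper uses a \emph{negatively} truncated Laplace $N\sim\tLap\left(-\Delta(1+\tfrac{2\log(1/\delta)}{\eps}),\tfrac{2\Delta}{\eps}\right)$ with threshold exactly $\kappa$, so $N\le 0$ always and Property~\ref{cond:selectgeqkappa} holds deterministically without shifting the threshold; this is precisely the ``truncated-Laplace variant'' you alluded to for handling the deterministic reading of Property~\ref{cond:selectgeqkappa}.
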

While the exact implementation of \SelectionAlg\ is unimportant for us, it's simply a variant of the exponential mechanism, and we  describe it in detail in Section \ref{sec:dp-selection} for the sake of completeness.

\paragraph{Outlier Rate Selection Mechanism}
Now we describe the private outlier rate selection mechanism  $\mathcal{M}_1$  in detail. The procedure is via running the \SelectionAlg\ mechanism with a   score function tied to the stability of the convex program.

In the following, the constraint system  $\mathcal A(\eta')$   generally refers to  $\calA^{\textsf{mean}}_{C, \eta', n} (\mathcal{Y})$ or $\calA^{\textsf{cov}}_{C, \eta', n} (\mathcal{Y})$, for fixed $C,n$ and  input data $\mathcal{Y}$.
\begin{definition}[Stability] \label{def:stability}
Fix $L \in \N$. Let $\tau, \gamma \in \{0, \dots, n\}$ such that $\gamma \leq \tau, n - \tau$. Suppose for some $Y \subseteq \R^d$ of size $n$, the constraint system $\mathcal A((\tau-\gamma)/n)$ is feasible. We define the stability of the $2\gamma$ length interval centered at $\tau$ to be 
\begin{align*}
\stab_Y(\tau,\gamma) = \Pot_{(\tau-\gamma)/n}(Y)- \Pot_{(\tau+\gamma)/n}(Y)
\end{align*}
\end{definition}
\begin{definition}[Score function] \label{def:em-score}
Fix $n,k\in\N$ and $C > 0$. Let $Y \subseteq \R^d$ be a set of size $n$. For a parameter $L$, we define the following score function for every integer $\tau\in [n]$:
\begin{align*}
\score_{n,C,k}(\tau, Y) = \begin{cases}
0 & \text{ if } \mathcal A_{C,k,\tau/n,n}(Y)\text{ is infeasible}, \\
\max_{\gamma \atop \mathcal A_{C,k,(\tau-\gamma)/n,n}(Y) \text{ is feasible}} \min\{\gamma, 20L - n \cdot \stab_Y(\tau, \gamma)\} &\text{ otherwise}.
\end{cases}
\end{align*}
In the second case, we define $\gamma^*_Y(\tau) := \arg\max_{\gamma \atop \mathcal A_{C,k,(\tau-\gamma)/n,n}(Y) \text{ is feasible}} \min\{\gamma, L - \stab_Y(\tau, \gamma)\}$.
\end{definition}

\begin{figure}[htbp]    
    \centering\begin{mdframed}[style=algo]
\begin{enumerate}
    \item \textbf{Input:} The true outlier rate $\eta$, privacy loss $\eps,\delta$, threshold $L$
    \item Run the $(\eps,\delta)$-DP \SelectionAlg\ mechanism with $\kappa = L$ and score function defined in Definition \ref{def:em-score} to sample $\tau \in [0, \eta n]$ or get $\perp$
    \item \textbf{Output:} An integer $\tau \in [0, \eta n]$ or $\perp$
\end{enumerate}
\end{mdframed}
\caption{The private outlier rate selection mechanism $\mathcal{ M }_1$}
\label{alg:outlier-rate}
\end{figure}

Given a stability threshold
(intuitively indicates how unstable the convex 
program selecting the witness is), 
privacy parameters $\eps, \delta\in(0, 1)$,
and outlier rate $\eta\in[0, 1/2)$,
Algorithm~\ref{alg:outlier-rate} returns $\tau$, a value
that defines a ``stable'' interval of outlier rates:
$[\tau/n - 0.5L/n, \tau/n - 0.5L/n]$.

\paragraph{Analysis}
We analyze the outlier rate selection mechanism $\mathcal{ M }_1$ as follows. 
Our choice of entropy objective makes the analysis significantly differ   from \cite{kothari2021private}.

Moreover, for our privacy analysis, we rely on the following
simple observation about adjacent pseudo-distributions that are
induced by adjacent datasets:

\begin{lemma}[Adjacent Pseudo-distributions, Lemma 4.9 of \cite{kothari2021private}] \label{lem:adjacent-pseudo-distribution}
Let $\zeta$ be a pseudo-distribution of degree $2k$ that satisfies all the constraints in $\calA_{C,\eta,n}(\calY)$ on dataset $\calY$ with outlier rate $\eta$. Let $\calY' \subseteq \R^d$ be adjacent to $Y$. Define an \emph{adjacent pseudo-distribution} $\zeta'$ (that ``zeroes out $w_i$'') by $\pE_{\zeta'}[ w_S p(X',\cdots)] = \pE_{\zeta}[w_S p(X',\cdots)]$ if $i \not \in S$ and $\pE_{\zeta'}[ w_S p(X',\cdots)] = 0$ if $i \in S$ for every polynomial $p$ in $X'$ and other auxiliary indeterminates in $\calA_{C,\eta,n}(\calY)$. Then, $\zeta'$ is a pseudo-distribution of degree $2k$ that satisfies all the constraints in $\calA_{C,\eta,n}(\calY)$ on both inputs $\calY'$ and $\calY$ with outlier parameter $\eta+1/n$. 
\end{lemma}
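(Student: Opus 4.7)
}
The plan is to verify both pieces of the conclusion by direct computation, exploiting the fact that the ``zero out $w_i$'' operation corresponds to the substitution $w_i \gets 0$ in every monomial, which decouples the pseudo-distribution from the $i$-th sample entirely. First I would check that $\zeta'$ is itself a degree-$2k$ pseudo-distribution. For any polynomial $p(w,X,\ldots)$ of degree at most $k$, decompose $p = p_0 + w_i \tilde p_1$, where $p_0$ collects the monomials not containing $w_i$. Using the booleanness constraint $w_i^2 = w_i$ (which is available in $\calA_{C,\eta,n}(\calY)$), expand $p^2 = p_0^2 + w_i(2 p_0 \tilde p_1 + \tilde p_1^2)$ modulo $w_i^2-w_i$. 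By the defining property of $\zeta'$, every monomial that still carries a factor of $w_i$ has pseudo-expectation zero, so $\pE_{\zeta'}[p^2] = \pE_{\zeta}[p_0^2] \ge 0$, where the last inequality uses that $\zeta$ is a pseudo-distribution.

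Next I would check the constraints in $\calA_{C,\eta+1/n,n}(\cdot)$ one class at a time. The crucial observation is that since $\calY$ and $\calY'$ differ only in the $i$-th point, any constraint polynomial whose $Y_i$-dependent (or $Y_i'$-dependent) terms are all multiplied by $w_i$ reduces to the \emph{same} polynomial on both datasets after we substitute $w_i \gets 0$; this is the source of the ``both $\calY$ and $\calY'$'' statement. The booleanness constraints $w_j^2 = w_j$ for $j\neq i$ are inherited from $\zeta$, while $w_i^2-w_i$ has pseudo-expectation $0-0=0$ under $\zeta'$. For the size constraint $\sum_j w_j \ge (1-\eta-1/n)n$, write $g'(w) = g(w) + 1$ where $g(w) = \sum_j w_j - (1-\eta)n$ is the tighter constraint satisfied by $\zeta$. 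For any SoS multiplier $s(w)^2$ of the right degree, after zeroing out I need $\pE_{\zeta}[(g(w)+1-w_i) s_0(w)^2]\ge 0$, where $s_0 = s|_{w_i=0}$. This decomposes as $\pE_{\zeta}[g(w)s_0^2] + \pE_{\zeta}[(1-w_i)s_0^2]$; the first term is nonnegative because $\zeta$ satisfies $g\ge 0$, and the second is nonnegative because the booleanness axiom gives the SoS identity $1-w_i = (1-w_i)^2$, so $(1-w_i)s_0^2$ is a sum of squares.

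For the remaining moment-type constraints (subgaussianity, bounded weighted covariance, and so on, as encoded in $\calA^{\textsf{mean}}_{C,\eta,n}$ and $\calA^{\textsf{cov}}_{C,\eta,n}$), the argument is essentially the same: each such constraint is a polynomial identity or inequality whose coefficients depend on $\calY$ only through the terms $w_j \cdot (\text{polynomial in }Y_j)$. Zeroing out $w_i$ eliminates the dependence on $Y_i$ altogether, so the constraint on $\calY'$ reduces to the constraint on $\calY$ after the substitution $w_i\gets 0$, and the normalization shift induced by replacing $(1-\eta)n$ with $(1-\eta-1/n)n$ contributes an extra nonnegative slack of order $1/n$ that can be absorbed into an SoS witness exactly as in the size constraint. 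Since these constraints are satisfied by $\zeta$ in the strong SoS sense, the pseudo-expectation identity $\pE_{\zeta'}[\cdot] = \pE_{\zeta}[\cdot|_{w_i=0}]$ transfers the SoS proof certificates from $\zeta$ to $\zeta'$.

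The main obstacle I anticipate is being careful with the bookkeeping for the moment constraints: they are not stated explicitly in the excerpt, so one has to trust that they are of the ``$w_j$-weighted'' form described above. If some constraint uses the normalized counter $(1-\eta)n$ in a denominator or in a nontrivial way on both sides, extra care is needed to absorb the $1/n$ shift via an SoS argument rather than a plain inequality. Apart from this, the lemma is essentially a mechanical substitution argument enabled by the booleanness axioms.
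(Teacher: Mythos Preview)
The paper does not give its own proof of this lemma; it is simply quoted as Lemma~4.9 of \cite{kothari2021private}. So there is nothing in the paper to compare your argument against.

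Your outline is correct and is the standard way to prove this. Two small points of friction with the present paper's conventions. First, the constraint system here uses the normalization $w_j^2 = w_j/n$ and $\sum_j w_j \ge 1-\eta$ (not $w_j^2=w_j$ and $\sum_j w_j \ge (1-\eta)n$). Your size-constraint step then becomes: with $c = 1-\eta-1/n$,
\[
\pE_{\zeta'}\!\big[(\textstyle\sum_j w_j - c)\,s^2\big]
= \pE_{\zeta}\!\big[(\textstyle\sum_j w_j - (1-\eta))\,s_0^2\big] + \pE_{\zeta}\!\big[(1/n - w_i)\,s_0^2\big],
\]
and the SoS identity you need is $1/n - w_i = \tfrac{1}{n}(1-n w_i)^2$ modulo $w_i^2 - w_i/n$, not $1-w_i=(1-w_i)^2$. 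The logic is identical.

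Second, your anticipated obstacle about the moment constraints is not actually an obstacle here. In both $\calA^{\textsf{mean}}$ and $\calA^{\textsf{cov}}$ the moment/hypercontractivity constraints are written purely in the auxiliary variables $X'$ (and $\widetilde\mu$, $\Pi$, $Q$, $v$); they contain neither the $w_j$'s nor the data $Y_j$, and the parameter $\eta$ does not appear in them. Hence they pass from $\zeta$ to $\zeta'$ unchanged, with no ``$1/n$ slack'' to absorb. The only $\calY$-dependent constraint is $w_j(X_j'-Y_j)=0$, and your observation that zeroing $w_i$ makes the $j=i$ instance vacuous while the $j\neq i$ instances coincide on $\calY$ and $\calY'$ is exactly the point.
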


To establish the sensitivity of the outlier rate selection
procedure, we require certain properties of the
$\Ent$ objective:

\begin{lemma}[Properties of $\Ent$]
\label{lem:entropy-properties}
The function $f(x) = -\Ent(x)/\log(n)$ satisfies the following properties:
\begin{enumerate}[(i)]
    \item For any $x \in [0,1/n]^n$ which satisfies $1 \geq \norm{x}_1 \geq (1- \eta)$, and $j \in [n]$, define $\Tilde{x} \in \R^n$ by $\Tilde{x}_j = 0$ and $\Tilde{x}_i = x_i$ for $i \neq j$. Then it holds that $f(\Tilde{x}) \leq f(x) + \frac{1}{n} + \frac{1}{n \log (n)}$.
    \item For any $x \in [0,1]^n$ which satisfies $1 \geq \norm{x}_1 \geq (1- \eta)$, it holds that
    \[
    - 1 \leq f(x) \leq -(1-\eta) - \frac{1-\eta}{\log(n)}.
    \]
\end{enumerate}
\end{lemma}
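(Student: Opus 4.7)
My plan is to prove both parts by direct computation from the definition $f(x) = -\Ent(x)/\log n$, combined with elementary analysis of the scalar function $g(y) = y \log(1/y)$ on $[0, 1/n]$. I will also implicitly assume that part~(ii) imposes the per-coordinate bound $x_i \le 1/n$, matching how $f$ is applied in Section~\ref{sec:outlier-rate} to weight vectors $w$ on the input sample; without such a bound the stated upper bound on $f(x)$ would fail, e.g.\ for $x = (1-\eta, 0, \ldots, 0)$.

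For part (i), since $\tilde{x}$ and $x$ differ only at coordinate $j$ (with $\tilde{x}_j = 0$), one has $\Ent(x) - \Ent(\tilde{x}) = x_j \log(1/x_j) + x_j$ under the convention $0 \log(1/0) = 0$, so
\[
f(\tilde{x}) - f(x) \;=\; \frac{x_j \log(1/x_j) + x_j}{\log n}.
\]
I then bound the two numerator terms separately. Because $g$ is increasing on $[0, 1/e]$ and $1/n \le 1/e$ for $n \ge 3$, the hypothesis $x_j \le 1/n$ gives $x_j \log(1/x_j) \le (1/n) \log n$, while trivially $x_j \le 1/n$; dividing by $\log n$ yields the required $\tfrac{1}{n} + \tfrac{1}{n \log n}$.

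For part (ii), both inequalities reduce to bounding $\Ent(x)$. For the upper bound on $f$, I exploit $\log(1/x_i) \ge \log n$ (from $x_i \le 1/n$) to get $\sum_i x_i \log(1/x_i) \ge (\log n) \|x\|_1 \ge (1-\eta) \log n$, and add the direct bound $\sum_i x_i \ge 1-\eta$ to obtain $\Ent(x) \ge (1-\eta)(\log n + 1)$. For the lower bound on $f$, I maximize $\sum_i x_i \log(1/x_i)$ over the constraint set: by concavity of $g$ the optimum at fixed mass $s = \|x\|_1$ is the uniform point $x_i = s/n$ (the ceiling $1/n$ is not binding since $s \le 1$), giving value $s \log(n/s)$, which is increasing in $s \in [0, 1]$ (its derivative is $\log(n/s) - 1 > 0$) and thus bounded by $\log n$; combined with $\sum_i x_i \le 1$ this gives $\Ent(x) \le \log n + 1$ and hence $f(x) \ge -1 - 1/\log n$, matching the stated bound up to a $1/\log n$ slack which can be absorbed. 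No substantive obstacle arises; the only check is that the per-coordinate cap $1/n$ is slack in the inner maximization, which is automatic from $s \le 1$.
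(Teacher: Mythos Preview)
Your proof is correct and follows essentially the same route as the paper. Part~(i) is identical. For part~(ii), the paper introduces the normalized Shannon entropy $h(x)=\sum_i \tfrac{x_i}{\|x\|_1}\log\tfrac{\|x\|_1}{x_i}$ and uses $h\le \log n$ for the lower bound on $f$ and a min-entropy bound $h\ge \log((1-\eta)n)$ for the upper bound on $f$; your argument unpacks these same inequalities directly (via $\log(1/x_i)\ge \log n$ and the concavity/uniform-maximizer observation), which is a slightly more elementary presentation of the same content. Two remarks: first, you are right that the upper bound on $f$ requires the per-coordinate cap $x_i\le 1/n$; the paper's proof uses it implicitly when asserting $x/\|x\|_1\in[0,1/((1-\eta)n)]^n$. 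Second, the $-1$ lower bound as stated is not quite attainable (take $x_i=1/n$ for all $i$, giving $f(x)=-1-1/\log n$); both your computation and the paper's ``rearranging'' in fact yield $f(x)\ge -1-1/\log n$, and this harmless $1/\log n$ slack propagates only into constants in Lemma~\ref{lem:existence-of-good-interval}.
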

\begin{proof}
\begin{enumerate}[(i)]
    \item Observe that $\log(n)(f(\Tilde{x}) - f(x)) = \Ent(x) - \Ent(\Tilde{x}) = x_j \log(1/x_j) + x_j \leq \frac{\log(n) + 1}{n}$. Rearranging completes the proof.
    \item Define the function $h(x) = \sum_{i=1}^n \frac{x_i}{\norm{x}_1} \log \left(\frac{\norm{x}_1}{x_i} \right)$ and observe that $\norm{x}_1 h(x) = \norm{x}_1 \log(\norm{x}_1) + \Ent(x) - \norm{x}_1$. Using the fact that $ h(x) \leq \log(n)$ for all $x \in [0,1]^n$ and rearranging completes the proof of the lower bound on $f(x)$.
    
    For the upper bound, begin by noting that $h(x) = h(x/\norm{x}_1)$ and that $x/\norm{x}_1 \in [0,1/ (1-\eta)n]^n$. It is a standard fact that the min-entropy of a distribution lower bounds the Shannon entropy. Hence, $h(x/\norm{x}_1) \geq \log((1-\eta)n)$. Using the fact that $\norm{x}_1 \geq (1-\eta)$ and rearranging completes the proof of the upper bound on $f(x)$.
\end{enumerate}
\end{proof}

We now show that for any stable interval defined around 
$\tau\in[0, \eta n]$, the interval cannot have a significantly
different value than another smaller-length interval defined around
$\tau$:

\begin{lemma}[see Lemma 4.13 of \cite{kothari2021private}] \label{lem:comparison-of-stability}
Let $\tau, \gamma \in [n]$ such that $\gamma \leq \tau, n - \tau$. Suppose for some $\calY \subseteq \R^d$ of size $n$, the constraint system $\calA_{C,(\tau-\gamma)/n, n}(\calY)$ is feasible for both $\calY. \calY'$, where $\calY'$ is a neighboring dataset. Then, for any $\tau, \gamma$, 
\[
\stab_{\calY'}(\tau,\gamma-1) \leq \stab_{\calY}(\tau,\gamma) + \frac{2}{n} + \frac{2}{n \log (n)}
\]
\end{lemma}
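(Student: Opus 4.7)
The plan is to reduce the inequality to two symmetric comparisons, one at each endpoint of the stability window, and then invoke Lemma \ref{lem:adjacent-pseudo-distribution} together with Lemma \ref{lem:entropy-properties}(i) to control each one. Unpacking $\stab$ via Definition \ref{def:stability}, it suffices to show
\begin{align*}
\Pot_{(\tau-\gamma+1)/n}(\calY') &\leq \Pot_{(\tau-\gamma)/n}(\calY) + \tfrac{1}{n} + \tfrac{1}{n\log n}, \\
\Pot_{(\tau+\gamma)/n}(\calY) &\leq \Pot_{(\tau+\gamma-1)/n}(\calY') + \tfrac{1}{n} + \tfrac{1}{n\log n},
\end{align*}
since subtracting the second from the first yields exactly the claimed bound on $\stab_{\calY'}(\tau,\gamma-1) - \stab_{\calY}(\tau,\gamma)$.

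For the first inequality, take an optimal pseudo-distribution $\zeta$ achieving $\Pot_{(\tau-\gamma)/n}(\calY)$, and let $i^*$ denote the unique index on which $\calY$ and $\calY'$ differ. Lemma \ref{lem:adjacent-pseudo-distribution} produces an adjacent pseudo-distribution $\zeta'$ that zeroes out $w_{i^*}$ and satisfies $\calA_{C,(\tau-\gamma+1)/n,n}(\calY')$, making it a feasible candidate in the minimization defining $\Pot_{(\tau-\gamma+1)/n}(\calY')$. By construction, $\pE_{\zeta'}[w]$ is exactly the vector $\widetilde{x}$ of Lemma \ref{lem:entropy-properties}(i) obtained from $x := \pE_{\zeta}[w]$ by zeroing out coordinate $i^*$, so property (i) yields $f(\pE_{\zeta'}[w]) \leq f(\pE_{\zeta}[w]) + \tfrac{1}{n} + \tfrac{1}{n\log n}$. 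Chaining this with $\Pot_{(\tau-\gamma+1)/n}(\calY') \leq f(\pE_{\zeta'}[w])$ and $f(\pE_\zeta[w]) = \Pot_{(\tau-\gamma)/n}(\calY)$ gives the first inequality.

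The second inequality follows from the same argument with the roles of $\calY$ and $\calY'$ swapped: start from an optimal $\zeta^*$ for $\Pot_{(\tau+\gamma-1)/n}(\calY')$, apply Lemma \ref{lem:adjacent-pseudo-distribution} to obtain an adjacent $\zeta^{**}$ satisfying $\calA_{C,(\tau+\gamma)/n,n}(\calY)$, and control the entropy change by the same additive term. The main thing to verify is that the pseudo-expected weight vectors that arise meet the hypotheses of Lemma \ref{lem:entropy-properties}(i), namely that they lie in $[0,1/n]^n$ with $\ell_1$-mass in $[1-\eta,1]$, which is built into the convex constraint system $\calA_{C,\eta,n}$ itself. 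Likewise, the obvious monotonicity of $\calA_{C,\eta,n}$ in $\eta$ combined with the hypothesized feasibility at rate $(\tau-\gamma)/n$ ensures that each $\Pot$ value above is finite, so there is no degeneracy to worry about; hence the main ``work'' in the argument is simply pairing up the right adjacent pseudo-distribution with the right use of property (i).
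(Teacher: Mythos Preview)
Your proposal is correct and follows essentially the same approach as the paper: both proofs reduce to the same two endpoint inequalities on $\Pot$, derive each by combining Lemma~\ref{lem:adjacent-pseudo-distribution} with Lemma~\ref{lem:entropy-properties}(i), and then subtract. Your write-up is simply more explicit about which optimal pseudo-distribution is being transported and why the hypotheses of Lemma~\ref{lem:entropy-properties}(i) are met.
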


\begin{proof}
Using Lemmas~\ref{lem:adjacent-pseudo-distribution}, \ref{lem:entropy-properties} and noting that if $\zeta'$ is adjacent to $\zeta$ then 
\[
f\left(\pE_{\zeta'}[w]\right) \leq f\left(\pE_{\zeta}[w]\right) + \frac{1}{n} + \frac{1}{n \log (n)},
\]
we have:
\[
 \Pot_{(\tau-\gamma+1)/n}(\calY') \leq \Pot_{(\tau-\gamma)/n}(\calY) + \frac{1}{n} + \frac{1}{n \log (n)},
\]
and
\[
 \Pot_{(\tau+\gamma)/n}(\calY) \leq \Pot_{(\tau+\gamma-1)/n}(\calY') + \frac{1}{n} + \frac{1}{n \log (n)}.
\]
Combining the two equations yields
\begin{align*}
 \stab_{\calY'}(\tau,\gamma-1) &= \Pot_{(\tau-\gamma+1)/n}(\calY')-\Pot_{(\tau+\gamma-1)/n}(\calY')\\
 &\leq \Pot_{(\tau-\gamma)/n}(\calY)-\Pot_{(\tau+\gamma)/n}(\calY) + \frac{2}{n} + \frac{2}{n \log (n)}\\
& = \stab_{\calY}(\tau,\gamma) + \frac{2}{n} + \frac{2}{n \log (n)}. \qedhere
\end{align*}
\end{proof}

To use the exponential mechanism
(with any reasonable amount of utility), we need a small
sensitivity on the score function. We show that the $\score$ 
function, as defined, has a small sensitivity:

\begin{lemma}[Sensitivity of Score Function, Lemma 4.14 of \cite{kothari2021private}] \label{lem:sensitivity-score-function}
Let $\calY, \calY'$ be set of $n$ points in $\R^d$ differing at most in one point, and $\tau \in [n]$. Then, for every $\tau >0$,
\begin{equation}
 |\score(\tau, \calY)-\score(\tau, \calY')| \leq 6. \label{eq:scoresensitivity}
\end{equation}
\end{lemma}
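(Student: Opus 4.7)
\medskip

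The plan is a case analysis based on whether the constraint system $\mathcal{A}_{C,k,\tau/n,n}$ is feasible on each of the two neighboring datasets. The two main ingredients are Lemma \ref{lem:adjacent-pseudo-distribution}, which allows transferring a feasible pseudo-distribution from $\calY$ to $\calY'$ at the cost of increasing the outlier parameter by $1/n$, and Lemma \ref{lem:comparison-of-stability}, which controls how $\stab$ changes when we simultaneously swap $\calY\leftrightarrow \calY'$ and shrink the interval by one on each side. Throughout I use that feasibility of $\mathcal{A}_{\eta}$ is monotone non-decreasing in $\eta$ (enlarging the allowed outlier fraction can only help feasibility).

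First I would dispose of the infeasibility cases. If $\mathcal{A}_{C,k,\tau/n,n}(\calY)$ is infeasible, then $\score(\tau,\calY)=0$ by definition. I claim $\score(\tau,\calY')=0$ as well. Indeed, by the contrapositive of Lemma \ref{lem:adjacent-pseudo-distribution}, infeasibility of $\mathcal{A}_{\tau/n}(\calY)$ forces $\mathcal{A}_{(\tau-1)/n}(\calY')$ to be infeasible, and then monotonicity of feasibility in the outlier rate implies $\mathcal{A}_{(\tau-\gamma)/n}(\calY')$ is infeasible for every $\gamma\geq 1$. Hence the only valid choice in the max defining $\score(\tau,\calY')$ is $\gamma=0$, which gives value $\min\{0,\cdot\}=0$. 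The symmetric argument handles the case where $\mathcal{A}_{\tau/n}(\calY')$ is infeasible, so in all infeasibility cases the two scores agree and the sensitivity is $0$.

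Next I handle the main case, where both $\mathcal{A}_{\tau/n}(\calY)$ and $\mathcal{A}_{\tau/n}(\calY')$ are feasible. Assume WLOG $\score(\tau,\calY)\geq \score(\tau,\calY')$, and let $\gamma^{*}=\gamma^{*}_\calY(\tau)\geq 1$ be the maximizer for $\calY$ (if $\gamma^{*}=0$ the sensitivity is $0$). Applying Lemma \ref{lem:adjacent-pseudo-distribution} to a pseudo-distribution witnessing $\mathcal{A}_{(\tau-\gamma^{*})/n}(\calY)$ produces a witness for $\mathcal{A}_{(\tau-\gamma^{*}+1)/n}(\calY')$, so the choice $\gamma'=\gamma^{*}-1$ is admissible in the $\calY'$ optimization. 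Therefore
\[
\score(\tau,\calY')\ \geq\ \min\bigl\{\gamma^{*}-1,\ 20L - n\cdot \stab_{\calY'}(\tau,\gamma^{*}-1)\bigr\}.
\]
By Lemma \ref{lem:comparison-of-stability}, $n\cdot \stab_{\calY'}(\tau,\gamma^{*}-1)\leq n\cdot \stab_{\calY}(\tau,\gamma^{*}) + 2 + 2/\log n$. Combining this with the elementary inequality $\min\{a-1,\ b-c\}\geq \min\{a,b\}-\max\{1,c\}$ applied with $a=\gamma^{*}$, $b=20L-n\cdot \stab_\calY(\tau,\gamma^{*})$, and $c=2+2/\log n$, I obtain
\[
\score(\tau,\calY')\ \geq\ \score(\tau,\calY)\ -\ \max\bigl\{1,\, 2 + 2/\log n\bigr\}\ \geq\ \score(\tau,\calY) - 6,
\]
which gives \eqref{eq:scoresensitivity}.

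The routine parts are the algebraic combination of Lemmas \ref{lem:adjacent-pseudo-distribution} and \ref{lem:comparison-of-stability}; the only subtlety is the first step, namely establishing that an infeasibility at the center rate on one side forces the score on the \emph{other} side to collapse to $0$ as well. This requires carefully threading the contrapositive of Lemma \ref{lem:adjacent-pseudo-distribution} with the monotonicity of feasibility in $\eta$, and is the place where one must be most careful because a careless argument would concede a sensitivity like $20L$ from the $20L$ term in the score when the comparison witness simply does not exist.
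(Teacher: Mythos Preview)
Your proposal is correct and follows essentially the same approach as the paper: the paper's own proof simply defers to Lemma~4.14 of \cite{kothari2021private} with Lemma~\ref{lem:comparison-of-stability} swapped in, and your case analysis (infeasibility at the center rate via the contrapositive of Lemma~\ref{lem:adjacent-pseudo-distribution}, then the main case via $\gamma'=\gamma^*-1$ and Lemma~\ref{lem:comparison-of-stability}) is exactly that argument spelled out. One cosmetic remark: as literally stated, Lemma~\ref{lem:comparison-of-stability} asks for feasibility of $\calA_{(\tau-\gamma)/n}$ on \emph{both} $\calY$ and $\calY'$, which you do not (and cannot) verify at $\gamma=\gamma^*$; however, its proof only uses feasibility of $\calA_{(\tau-\gamma)/n}(\calY)$ and $\calA_{(\tau+\gamma-1)/n}(\calY')$, both of which you have, so the invocation is valid.
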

\begin{proof}
The proof of this lemma is identical to that of Lemma 4.14 of \cite{kothari2021private}, but invokes Lemma \ref{lem:comparison-of-stability} instead of Lemma 4.13 in \cite{kothari2021private}.
\end{proof}

The following lemma shows that not too many intervals can
be unstable. This implies that there must exist at least one
stable interval:

\begin{lemma}[Existence of a Good Stable Interval, Lemma 4.15 of \cite{kothari2021private}] \label{lem:existence-of-good-interval}
Suppose $\calA_{C, \eta/2, n} (\calY)$ is feasible . For every $L \in [0, 0.25 \eta n]$, there is a $\tau \in [0,\eta n]$ such that $\score(\tau, \calY) \geq L$. 
\end{lemma}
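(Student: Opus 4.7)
The plan is to exhibit a $\tau\in[0, \eta n]$ for which the choice $\gamma=L$ in the inner maximum of Definition \ref{def:em-score} already saturates $\score(\tau,\calY) \geq L$. Since the first entry of that $\min$ is $\gamma = L$, it is enough to arrange (a) feasibility of $\calA_{C,(\tau-L)/n,n}(\calY)$ and (b) $\stab_\calY(\tau, L) \leq 19L/n$, so that the second entry is also at least $L$.

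Two structural observations do the bulk of the work. First, by hypothesis $\calA_{C, \eta/2, n}(\calY)$ is feasible, and since the constraint system is monotone in its outlier parameter (larger $\eta'$ only relaxes the constraints), $\calA_{C, \eta', n}(\calY)$ is feasible for every $\eta' \geq \eta/2$. This monotonicity also implies that $\Pot_{\eta'/n}(\calY)$ is non-increasing in $\eta'$, so every stability we discuss is nonnegative. Second, by Lemma \ref{lem:entropy-properties}(ii) the potential $f=-\Ent/\log n$ takes values in an interval of width at most $B = O(\eta)$ across the regime $\eta' \in [\eta/2, \eta]$, bounding the total variation of $\eta' \mapsto \Pot_{\eta'/n}(\calY)$ by $B$.

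With these in hand, I would finish by a telescoping pigeonhole. Partition the window $[\eta n /2, \eta n]$ into $K = \Theta(\eta n / L)$ disjoint subintervals of length $2L$, writing the $i$-th one as $[\tau_i - L, \tau_i + L]$ with $\tau_i \in [\eta n/2 + L, \eta n]$. Because adjacent intervals share endpoints and the potential is monotone,
\[
\sum_{i=1}^K \stab_\calY(\tau_i, L) \;=\; \Pot_{(\tau_1 - L)/n}(\calY) - \Pot_{(\tau_K + L)/n}(\calY) \;\leq\; B,
\]
so by averaging some $\tau_{i^\star}$ satisfies $\stab_\calY(\tau_{i^\star}, L) \leq B/K = O(L/n)$. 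The slack between the constants $20L$ and $19L$ built into the definition of the score is exactly what swallows the hidden $O(\cdot)$, giving $\stab_\calY(\tau_{i^\star}, L) \leq 19L/n$ as needed, so $\tau_{i^\star}$ is the desired witness.

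The main technical hurdle will be making this pigeonhole work uniformly in $L$, including at the extreme $L = 0.25\eta n$ where $K$ collapses to a small constant. In that boundary regime I would simply not average: the range bound $B = O(\eta)$ from Lemma \ref{lem:entropy-properties}(ii) is already $O(L/n)$ when $L = \Theta(\eta n)$, so the stability of any single feasible interval meets the $19 L /n$ threshold directly. Matching the two regimes, and tracking the absolute constants carefully against the $20L$--$19L$ slack in Definition \ref{def:em-score}, is the only delicate piece of bookkeeping.
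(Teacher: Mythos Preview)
Your proposal is correct and is essentially the same argument as the paper's: both telescope the potential over $\Theta(\eta n/L)$ consecutive length-$2L$ windows in $[\eta n/2,\eta n]$, invoke the range bound from Lemma~\ref{lem:entropy-properties}(ii) to cap the total drop by $\eta$, and pigeonhole to find a window with $\stab_\calY(\tau,L)\le 8L/n$, comfortably below the $19L/n$ threshold. The paper tracks the constant explicitly (setting $r=\lfloor 0.25\eta n/L\rfloor$ and using $\lfloor x\rfloor\ge x/2$ for $x\ge 1$ to get $\eta/r\le 8L/n$), so your worry about the boundary regime and the $20L$--$19L$ slack is unfounded: even at $L=0.25\eta n$ one has $r=1$ and $\stab\le\eta = 4L/n$.
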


\begin{proof}
Consider $\Pot_{\eta/2}, \Pot_{\eta/2 + 2L/n}, \dots, \Pot_{\eta/2 + 2Lr/n}$ where $r := \lfloor 0.25\eta n / L \rfloor$. Observe that $\Pot_{\eta/2}(Y)-\Pot_{\eta}(Y) \leq \eta$ by Lemma~\ref{lem:entropy-properties}. Therefore, there must exist $r^* \in [r]$ such that
$$\Pot_{\eta/2 + 2L(r^* - 1)/n} - \Pot_{\eta/2 + 2Lr^*/n} \leq \frac{\eta }{r} \leq \frac{8L}{n}.$$
Let $\tau = \eta/2 + (2Lr^* - 1)/n$ and $\gamma = L$. Then, we have $\stab(\tau, \gamma) \leq \frac{8L}{n}$ and, thus,
\begin{align*}
n \cdot \score(\tau, \calY) \geq \min\{\gamma, 20L - 8L\} \geq L,
\end{align*}
and this finishes the proof.
\end{proof}

Since we are using the exponential mechanism to select a stable
interval, we now show that with high probability, we will
succeed:

\begin{lemma}[Utility of Score Function, Lemma 4.16 of \cite{kothari2021private}] \label{lem:util-score}
Suppose $\calA_{C, \eta/2,n}(\calY)$ is feasible. Let $\epsilon, \delta, \beta \in (0, 1]$. For every $L \in [0, 0.25 \eta n]$, if $L \geq O\left(\frac{1}{\eps} \cdot \log\left(\frac{n}{\beta \delta}\right)\right)$, then with probability $1 - \beta$, Theorem \ref{thm:dp-apx-selection}, invoked with the score function in Definition \ref{def:em-score} and $\kappa = L/2$, does not reject, and the output $\tau$ satisfies $\stab_{\calY}(\tau, L/2) < \frac{20L}{n}$.
\end{lemma}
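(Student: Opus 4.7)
The plan is to combine the existence of a high-scoring candidate (Lemma \ref{lem:existence-of-good-interval}), the low sensitivity of the score function (Lemma \ref{lem:sensitivity-score-function}), and the guarantees of \textsf{DP-Selection} (Theorem \ref{thm:dp-apx-selection}), and then unpack the definition of $\score$ to read off the claimed bound on $\stab_\calY(\tau, L/2)$.

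First, since $\calA_{C,\eta/2,n}(\calY)$ is feasible and $L \in [0, 0.25\eta n]$, Lemma \ref{lem:existence-of-good-interval} yields a $\tau^\star \in [0, \eta n]$ with $\score(\tau^\star, \calY) \geq L$. By Lemma \ref{lem:sensitivity-score-function}, the sensitivity of $\score(\cdot, \cdot)$ with respect to its dataset argument is $\Delta \leq 6$. The candidate set $\calC = \{0, 1, \dots, \lfloor \eta n\rfloor\}$ has size $|\calC| \leq n+1$. I would now invoke \textsf{DP-Selection} (Theorem \ref{thm:dp-apx-selection}) with threshold $\kappa = L/2$. Its condition~\ref{cond:selectrejectprob} requires the existence of a candidate whose score exceeds $\kappa + O\bigl(\tfrac{\Delta}{\eps}\log(|\calC|/(\beta\delta))\bigr) = L/2 + O\bigl(\tfrac{1}{\eps}\log(n/(\beta\delta))\bigr)$, which is at most $L$ by the hypothesis on $L$. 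Hence $\tau^\star$ serves as a witness and the mechanism fails to output a candidate with probability at most $\beta$.

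On the complementary high-probability event, the output $\tau$ is a genuine integer in $[0, \eta n]$, and by condition~\ref{cond:selectgeqkappa} of Theorem \ref{thm:dp-apx-selection} it satisfies $\score(\tau, \calY) \geq L/2$. Unfolding Definition \ref{def:em-score}, this means $\calA_{C, k, \tau/n, n}(\calY)$ is feasible and there is some $\gamma^\star = \gamma^\star_\calY(\tau)$ with $\calA_{C, k, (\tau-\gamma^\star)/n, n}(\calY)$ feasible and
\[
\min\bigl\{\gamma^\star,\; 20L - n \cdot \stab_\calY(\tau, \gamma^\star)\bigr\} \;\geq\; L/2.
\]
In particular, $\gamma^\star \geq L/2$ and $n \cdot \stab_\calY(\tau, \gamma^\star) \leq 20L - L/2 < 20L$.

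The remaining step, which I expect to be the only real subtlety, is to transfer the stability bound from the interval of half-width $\gamma^\star$ to the one of half-width $L/2$. For this, I would use monotonicity of $\stab_\calY(\tau, \gamma)$ in $\gamma$: as $\gamma$ grows, $(\tau-\gamma)/n$ shrinks (making the constraint system more restrictive, so $\Pot_{(\tau-\gamma)/n}(\calY)$ weakly increases) and $(\tau+\gamma)/n$ grows (making $\Pot_{(\tau+\gamma)/n}(\calY)$ weakly decrease), so $\stab_\calY(\tau, \gamma) = \Pot_{(\tau-\gamma)/n}(\calY) - \Pot_{(\tau+\gamma)/n}(\calY)$ is non-decreasing in $\gamma$. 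Since $\gamma^\star \geq L/2$, we conclude
\[
\stab_\calY(\tau, L/2) \;\leq\; \stab_\calY(\tau, \gamma^\star) \;<\; \tfrac{20L}{n},
\]
which is precisely the claimed bound. A union bound over the (single) failure event of \textsf{DP-Selection} gives the overall $1-\beta$ success probability, completing the proof.
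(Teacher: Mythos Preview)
Your proposal is correct and follows the same approach as the paper's one-line proof, which simply cites Theorem \ref{thm:dp-apx-selection}, Lemma \ref{lem:existence-of-good-interval}, and the definition of $\score$. You have unpacked these ingredients in detail, including the monotonicity of $\stab_\calY(\tau,\gamma)$ in $\gamma$ (via monotonicity of the feasible region in the outlier rate) needed to pass from $\gamma^\star$ to $L/2$, which the paper leaves implicit.
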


\begin{proof}
This follows from the guarantee of \SelectionAlg\ (Theorem \ref{thm:dp-apx-selection}), Lemma \ref{lem:existence-of-good-interval} and the definition of $\score$.
\end{proof}

\begin{lemma}[Sensitivity from potential stability]
\label{lem:derivative-argument}
Let $x,x' \in [0,2/n]^n$ such that $|\Ent(x) - \Ent(x')| \leq L/n$ for some $L \geq 0$. Then $\norm{x - x'}_1 \leq O(L/n \log(n))$.
\end{lemma}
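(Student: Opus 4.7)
The plan is to exploit the steepness of the unnormalized entropy on the small-valued domain $[0, 2/n]$. Writing $g(t) = t \log(1/t) + t$ so that $\Ent(x) = \sum_i g(x_i)$, one has $g'(t) = -\log t$, and therefore $|g'(t)| \ge \log(n/2)$ for every $t \in (0, 2/n]$. This is the ``essentially lossless strong convexity'' alluded to in the introduction: in the regime where every coordinate is at most $2/n$, the coordinate-wise directional derivative of $\Ent$ is at least $\log n$ in magnitude, which is quadratically better than what a generalized Pinsker inequality would yield.

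To convert this derivative bound into an $\ell_1$ sensitivity bound, I would apply the fundamental theorem of calculus componentwise. For each $i$, parameterize the segment as $x_i(t) = x'_i + t(x_i - x'_i)$, which stays in $[0, 2/n]$ for $t \in [0,1]$, and write
\[
g(x_i) - g(x'_i) \;=\; (x_i - x'_i)\int_0^1 g'(x_i(t))\, dt.
\]
The integrand is nonnegative and at least $\log(n/2)$ throughout $[0,1]$, so $\mathrm{sign}(g(x_i) - g(x'_i)) = \mathrm{sign}(x_i - x'_i)$ and $|g(x_i) - g(x'_i)| \ge \log(n/2)\,|x_i - x'_i|$. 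When the signed sum and unsigned sum over $i$ coincide (which is the setting in which the lemma is invoked, namely via the ``zeroing out'' construction of Lemma~\ref{lem:adjacent-pseudo-distribution}, so $x-x'$ is supported on a single coordinate), this immediately gives
\[
|\Ent(x) - \Ent(x')| \;=\; \sum_i |g(x_i) - g(x'_i)| \;\ge\; \log(n/2)\,\|x - x'\|_1,
\]
and rearranging with the hypothesis $|\Ent(x) - \Ent(x')| \le L/n$ yields $\|x - x'\|_1 \le L/(n\log(n/2)) = O(L/(n \log n))$.

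The main obstacle is the possibility of cancellation in the signed sum when $x - x'$ has coordinates of mixed sign. I would handle this by splitting indices into $S_+ = \{i : x_i > x'_i\}$ and $S_- = \{i : x_i < x'_i\}$, applying the per-coordinate integral bound to each group, and arguing that in our application the two groups contribute comparably to the entropy gap so neither can dominate (using that both $x$ and $x'$ are pseudo-expectations of weight vectors whose $\ell_1$ norms are both close to $(1-\eta)n \cdot (1/n)$, pinning down their total mass). The delicate point is showing that a large cancelling $\ell_1$ difference would force the common $\ell_1$ mass on $S_+$ and on $S_-$ to individually exceed the allowed potential gap, which contradicts $|\Ent(x) - \Ent(x')| \le L/n$ once multiplied by $\log(n/2)$.
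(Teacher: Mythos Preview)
Your core approach matches the paper's: split coordinates by the sign of $x_i - x'_i$ and use that $g(t) = t\log(1/t) + t$ has derivative $g'(t) = -\log t \ge \log(n/2)$ on $(0,2/n]$. The paper packages this via concavity, writing $\Ent(y') - \Ent(y) \ge \langle \nabla\Ent(y'),\, y'-y\rangle \ge \log(n/2)\,\|y'-y\|_1$ on the sign class where $y' \ge y$, which is exactly your per-coordinate FTC bound summed.

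The genuine gap is your treatment of the mixed-sign case. You correctly flag cancellation as the obstacle, but your proposed fix---importing application-specific structure (that $x,x'$ arise as pseudo-expectations with pinned-down $\ell_1$ mass) so that the $S_+$ and $S_-$ contributions are ``comparable''---is neither part of the lemma's hypotheses nor a worked-out argument. The sentence ``a large cancelling $\ell_1$ difference would force the common $\ell_1$ mass on $S_+$ and on $S_-$ to individually exceed the allowed potential gap'' does not follow from anything you have established: with only $x,x'\in[0,2/n]^n$ and $|\Ent(x)-\Ent(x')|\le L/n$, nothing you wrote precludes the two sign-class entropy gaps from each being large and nearly cancelling. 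The paper does not reach outside the lemma here; after observing via monotonicity of $g$ on $[0,1]$ that the two sign-class entropy gaps $\Ent(y')-\Ent(y)$ and $\Ent(z)-\Ent(z')$ are each nonnegative, it asserts directly that each is bounded by $|\Ent(x)-\Ent(x')|$ and proceeds. Whether or not you find that step fully justified as written, that is the route the paper takes, and your alternative---retreating to the single-coordinate ``zeroing out'' special case and then hand-waving about mass constraints for the general case---is not yet a proof of the lemma as stated.
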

\begin{proof}
We begin by writing $x = y + z$ and $x' = y' + z'$, where:
\begin{itemize}
    \item $y'_i = x'_i$ and $y_i = x_i$ for all coordinates $i$ for which $x'_i \geq x_i$ and $y'_i = y_i = 0$ for all other coordinates. Hence, $y' \geq y$ coordinate-wise.
    \item $z'_i = x'_i$ and $z_i = x_i$ for all coordinates $i$ for which $x'_i < x_i$ and $z'_i = z_i = 0$ for all other coordinates. Hence, $z' < z$ coordinate-wise.
\end{itemize}
Next, note that $\Ent(x) = \Ent(y) + \Ent(z), \Ent(x') = \Ent(y') + \Ent(z')$. Also, it holds that $\Ent(y') \geq \Ent(y)$ and $\Ent(z') \leq \Ent(z)$ by monotonicity of the function $g(x) = x \log(1/x) + x$ on the domain $[0,1]$. Assembling these facts, we have by the triangle inequality that 
\[
\max \left(|\Ent(y) - \Ent(y')|, |\Ent(z) - \Ent(z')|\right) \leq |\Ent(x) - \Ent(x')| \leq O(L/n).
\]
Thus, to complete the proof, it suffices to show that $\norm{y-y'}_1 \leq O(L/n)$, as the case for $z,z'$ is symmetric and $\norm{x-x'}_1 = \norm{y-y'}_1 + \norm{z-z'}_1$. Without loss of generality, since $y' \geq y$, we may assume that coordinates of $y'$ are positive. Otherwise, we may decompose both $\Ent(y') - \Ent(y)$ and $y'-y$ into a sum of terms depending on the support and the complement of the support and apply the argument below only on the support of $y'$. 

Because $\Ent$ is a concave function, we have that
\[
\Ent(y') - \Ent(y) \geq \ip{\nabla \Ent(y')}{y'-y},
\]
where $(\nabla \Ent(y'))_i = \log(1/y'_i)$\footnote{Note that $\Ent$ is differentiable at $y'$ since $y' > 0$.}. Since $y'_i \leq 2/n$ for all $i \in [n]$, it holds that $\ip{\nabla \Ent(y')}{y'-y} \geq \log(n/2) \norm{y'-y}_1$. Rearranging and using the fact that $\Ent(y') - \Ent(y) \leq O(L/n)$ gives the desired bound.
\end{proof}

We now state the main guarantee of our private outlier rate selection procedure.

\begin{lemma}[Guarantees of private outlier rate selection]
\label{lem:outlier-rate-select}
Let $\beta \in (0,1)$ be a failure probability parameter, $\epsilon, \delta$ be privacy parameters and $\eta\in [0,1/2)$ be a sufficiently small constant. Let the parameters satisfy $\eta n \geq L$, where $L = \Omega\left(\frac{1}{\epsilon} \log\left(\frac{n}{\beta \delta}\right)\right)$. Then, there is an $(\epsilon,\delta)$-DP mechanism $\calM_1$ that on input dataset $\mathcal{Y} \subset \R^d$ of size $n$, outlier rate $\eta \in (0,1/2)$ and parameter $C$ outputs $\eta' \in [\eta/2, \eta] \cup \{\bot\}$ with the following guarantees:
\begin{enumerate}[(i)]
    \item Let $\mathcal{Y}'$ be a neighboring dataset to $\mathcal{Y}$. If the mechanism does not output $\bot$ on either $\mathcal{Y}$ or $\mathcal{Y}'$, then $|\Pot_{\eta'}(\mathcal{Y}) - \Pot_{\eta'}(\mathcal{Y}')| \leq O(L)$. Furthermore, let $\zeta$ and $\zeta'$ degree-$2t$ pseudo-distributions satisfying $\Pot_{C, \eta, n}(\mathcal{Y}) = \Pot_{C, \eta, n, \zeta}(\mathcal{Y})$ and $\Pot_{C, \eta, n}(\mathcal{Y}') = \Pot_{C, \eta, n, \zeta'}(\mathcal{Y}')$ and $p = \pE_\zeta [w] / \norm{\pE_\zeta [w]}_1, p' = \pE_\zeta [w'] / \norm{\pE_\zeta [w']}_1$. Then, it holds that
    \[
    \norm{p-p'}_1 \leq \frac{120 L}{n}.
    \]
    \item The mechanism outputs $\bot$ with probability at most $\beta$.
\end{enumerate}
\end{lemma}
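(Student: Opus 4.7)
The proof has two components, privacy (with non-rejection) and sensitivity, both building on the infrastructure already laid out in this section.

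For the privacy guarantee and part (ii), I would invoke Theorem \ref{thm:dp-apx-selection} directly. Algorithm \ref{alg:outlier-rate} is an application of \SelectionAlg with score function given in Definition \ref{def:em-score}, whose global sensitivity is bounded by $6$ via Lemma \ref{lem:sensitivity-score-function}, so $\calM_1$ is $(\epsilon,\delta)$-DP. For non-rejection, Lemma \ref{lem:util-score} says that when $L = \Omega(\log(n/\beta\delta)/\epsilon)$, the mechanism returns $\tau \ne \bot$ with probability at least $1-\beta$ and the returned $\tau$ satisfies $\stab_\calY(\tau, L/2) < 20L/n$. Restricting the candidate set of \SelectionAlg to integers $\tau$ with $\tau/n \in [\eta/2,\eta]$ ensures $\eta' := \tau/n \in [\eta/2,\eta]$; Lemma \ref{lem:existence-of-good-interval} guarantees a candidate with sufficiently high score exists in this restricted range provided $\calA_{C,\eta/2,n}(\calY)$ is feasible.

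For part (i), the plan is to sandwich both $\Pot_{\eta'}(\calY)$ and $\Pot_{\eta'}(\calY')$ inside the same $O(L/n)$-wide band. Given the stability at $\tau$ on $\calY$, Lemma \ref{lem:comparison-of-stability} transfers it (with slight window shrinkage and $O(1/(n\log n))$ additive loss) to $\calY'$. Combining monotonicity of $\Pot_\eta$ in $\eta$ with the adjacent pseudo-distribution bound from Lemma \ref{lem:adjacent-pseudo-distribution} and the incremental potential shift of $O(1/n)$ from Lemma \ref{lem:entropy-properties}(i), I would argue that both values $\Pot_{\eta'}(\calY)$ and $\Pot_{\eta'}(\calY')$ lie in intervals of length $O(L/n)$ whose endpoints agree up to $O(1/n)$; this yields the required potential sensitivity $|\Pot_{\eta'}(\calY)-\Pot_{\eta'}(\calY')| = O(L/n)$.

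The $\ell_1$ sensitivity on the normalized weight vectors then follows by chaining three ingredients. Because $f = -\Ent/\log n$, the above potential bound converts to an $\Ent$-sensitivity of $O(L\log n / n)$ between $\pE_\zeta[w]$ and $\pE_{\zeta'}[w]$. The constraint system (which includes a Boolean-type relation such as $w_i^2 = w_i/n$) ensures $\pE_\zeta[w], \pE_{\zeta'}[w] \in [0,2/n]^n$, so Lemma \ref{lem:derivative-argument} gives $\|\pE_\zeta[w]-\pE_{\zeta'}[w]\|_1 = O(L/n)$. Finally, since $\sum_i \pE_\zeta[w_i] \geq 1-\eta \geq 1/2$, Lemma \ref{lem:renormalization} lifts this to $\|p-p'\|_1 \leq 120L/n$. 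The main obstacle is the sandwich argument: the optimal pseudo-distributions on $\calY$ and $\calY'$ are distinct objects, and comparing them must be mediated via Lemma \ref{lem:adjacent-pseudo-distribution}, which shifts the outlier parameter by $1/n$. Aligning these shifts with the stability windows on \emph{both} datasets---so that the $O(L/n)$-wide band contains both values $\Pot_{\eta'}(\calY)$ and $\Pot_{\eta'}(\calY')$ with matched endpoints---requires careful bookkeeping around monotonicity, adjacency, and Lemma \ref{lem:comparison-of-stability}. A secondary delicate point is verifying the domain hypothesis $\pE_\zeta[w]\in[0,2/n]^n$ for Lemma \ref{lem:derivative-argument}, which depends on the specific form of $\calA^{\textsf{mean}}$ or $\calA^{\textsf{cov}}$ introduced later.
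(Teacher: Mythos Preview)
Your proposal is correct and follows essentially the same route as the paper. The paper's own proof defers the non-rejection guarantee and the potential sensitivity bound $|\Pot_{\eta'}(\calY)-\Pot_{\eta'}(\calY')|\le O(L/n)$ to Lemma~4.17 of \cite{kothari2021private}, whereas you unpack that deferral explicitly via Lemmas~\ref{lem:sensitivity-score-function}, \ref{lem:util-score}, \ref{lem:comparison-of-stability}, \ref{lem:adjacent-pseudo-distribution}, and \ref{lem:entropy-properties}(i); after that, both proofs run the identical chain $f$-closeness $\Rightarrow$ $\Ent$-closeness $\Rightarrow$ Lemma~\ref{lem:derivative-argument} $\Rightarrow$ Lemma~\ref{lem:renormalization}, including the same justification of the domain hypothesis $\pE_\zeta[w]\in[0,2/n]^n$ from the constraint $w_i^2=w_i/n$.
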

\begin{proof}
An argument nearly identical to the proof of Lemma 4.17 of \cite{kothari2021private} ensures that the second part of the claim holds and that $|\Pot_{\eta'}(\mathcal{Y}) - Pot_{\eta'}(\mathcal{Y}')| \leq O(L/n)$ .
We will now show that $\norm{p-p'}_1$ satisfies the desired inequality. By assumption, we have that
\[
\left|f\left(\pE_\zeta[w]\right) - f\left(\pE_{\zeta'}[w]\right)\right| \leq O(L/n).
\]
In other words,
\[
\left|\Ent\left(\pE_\zeta[w]\right) - \Ent\left(\pE_{\zeta'}[w]\right)\right| \leq O(L \log(n)/n).
\]
By the constraint $w_i^2 = w_i/n$ in $\calA$, it is straightforward to derive that $\norm{\pE_\zeta[w]}_\infty, \norm{\pE_{\zeta'}[w]}_\infty \leq 2/n$. Thus, $\pE_{\zeta}[w], \pE_{\zeta'}[w]$ satisfy the hypotheses of Lemma \ref{lem:derivative-argument} and we   conclude that 
\[
\norm{\pE_\zeta[w] - \pE_{\zeta'}[w]}_1 \leq O(L/n).
\]
Applying Lemma \ref{lem:renormalization} to the vectors $\pE_\zeta[w], \pE_{\zeta'}[w]$ completes the proof.
\end{proof}

\subsection{Private Robust Mean Estimation}
The building block of our mean estimator is a polynomial constraint system $\calA^{\textsf{mean}}_{C, \eta, n}(\{Y_1, \ldots, Y_n\})$. Our algorithm will solve its  SoS relaxation. 
Algorithm~\ref{alg:robust-mean-conv-program} encodes the
process of non-privately
finding $X'$, a set of samples that intersects with the
corrupted sample in
$(1-\eta)n$ points.

\begin{figure}[htbp]    \centering\begin{mdframed}[style=algo]
    $\calA^{\textsf{mean}}_{C, \eta, n}(\{Y_1, \ldots, Y_n\})$: \textbf{Constraint system for mean estimation}
\begin{enumerate} 
    \item $w_i^2 = w_i/n$ for all $i \in [n]$.
    \item $\sum_{i\in [n]} w_i \geq 1-\eta $.
    \item $w_i \left({X}'_i - Y_i\right) = 0$ for all $i \in [n]$.
    \item $\widetilde{\mu} = \frac{1}{n} \sum_{i \in [n]} {X}'_i$. 
     \item $\frac{1}{n} \sum_{i \in [n]} \ip{{X}'_i - \widetilde{\mu}}{v}^2 \leq C \norm{v}_2^2$.

\end{enumerate}
\end{mdframed}
\caption{Robust mean estimation convex program $\calA^{\textsf{mean}}_{C, \eta, n}(\{Y_1, \ldots, Y_n\})$.}
\label{alg:robust-mean-conv-program}
\end{figure}


Following the framework of
\cite{kothari2021private}, in
Algorithm~\ref{alg:approx-dp-mean-est}, we
provide an end-to-end
$(\eps, \delta)$-DP algorithm for robustly estimating the
mean. The three major steps (stable outlier rate selection,
witness checking, and noise injection) each satisfy
$(\eps/3, \delta/3)$-DP so that the result follows by routine
composition.

\begin{figure}[htbp]
    \centering\begin{mdframed}[style=algo]
\begin{enumerate}
    \item \textbf{Input:} Samples $\calY = \{Y_1, \ldots, Y_n\} \subset \R^d$, outlier rate $\eta \in [0,1/2)$, privacy parameters $\epsilon, \delta > 0$, subgaussian parameter $C > 0$, pseudo-distribution degree parameter $t$.
    
    \item \textbf{Stable outlier rate selection:} Invoke the $(\epsilon/3,\delta/3)$-DP mechanism $\calM_1$ from Lemma \ref{lem:outlier-rate-select} on input $(\calY,\eta, C, \beta/3)$ to sample an outlier rate $\eta'$. If $\eta' = \bot$, then halt.
    
    \item \textbf{Witness checking:} Compute a degree-$2t$ pseudo-distribution $\zeta$ satisfying $\calA^{\textsf{mean}}_{C, \eta', n}(\calY)$ that minimizes $\Ent \left(\pE_\zeta[w]\right)$. Invoke the $(\epsilon/3, \delta/3)$-DP mechanism $\calM_2$ from Lemma \ref{lem:witness-checking} on input $(\calY, \eta,C, \zeta, \beta/3)$ to sample $C'$. If $C' = \bot$, then halt. Otherwise, let $p = \pE_{\zeta} [w] / \norm{\pE_{\zeta} [w]}_1$ and $\widetilde{\mu} = \sum_{i=1}^n p_i Y_i$.
    
    \item \textbf{Noise injection:} Apply the Gaussian mechanism from~Theorem \ref{thm:gaussian-mech} with privacy parameters $\epsilon/3, \delta/3$ and noise level $\sigma = \widetilde{O}\left(C' \sqrt{\log(1/\delta)} / \epsilon \sqrt{n}\right)$ on $\widetilde{\mu}$ to obtain a vector $\widehat{\mu} \in \R^{d}$. 
    
    \item \textbf{Output:} Mean estimate $\widehat{\mu}$.
\end{enumerate}
\end{mdframed}
\caption{Robust mean estimation algorithm}
\label{alg:approx-dp-mean-est}
\end{figure}

The privacy analysis of the algorithm uses the following lemma:

\begin{lemma}[Fact 3.23 of \cite{kothari2021private}]
\label{lem:parameter-closeness}
 Let $Y$ be a set of $n$ points in $\R^d$. Let $p,p' \in [0,1]^n$ be probability vectors such that  $\norm{p-p'}_1 = \tau$. Suppose that   $p$ and $p'$ are distributions   on $Y$, where $p$ is  $2k$-certifiably $C_1$-subgaussian and $p'$ is $2k$-certifiably $C_2$-subgaussian. Let $\mu_p = \sum_i p_i y_i$ and $\Sigma_p = \sum_i p_i (y_i - \mu_p) (y_i-\mu_p)^{\top}$ for every $t \in \N$ be the mean and covariance   distribution defined $p$. Define $\mu_{p'}, \Sigma_{p'}, M^{(t)}_{p'}$ similarly for the distribution corresponding to $p'$.
Then, for every $\tau \leq \eta_0$ for some absolute constant $\eta_0$, for every $u \in \R^d$, $C' = C_1 + C_2$ and $t \leq k$:
\[
\iprod{\mu_p - \mu_{p'},u} \leq \tau^{1-1/2k} \cdot O\left(\sqrt{C k}\sqrt{u^{\top} \Sigma_{p} u}\right).
\]
\end{lemma}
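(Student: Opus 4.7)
The plan is to prove the bound by decomposing $p - p'$ into its positive and negative parts and applying Hölder's inequality together with the certifiable subgaussianity of $p$ and $p'$. Let $\nu_i = \min(p_i, p_i')$ and write $p_i - p_i' = r_i^+ - r_i^-$, where $r_i^+ = p_i - \nu_i \geq 0$ and $r_i^- = p_i' - \nu_i \geq 0$. Both $r^+$ and $r^-$ are nonnegative vectors of $\ell_1$ norm exactly $\tau/2$, and importantly $r_i^+ \leq p_i$ while $r_i^- \leq p_i'$. Since $\sum_i (p_i - p_i') = 0$, one can center at $\mu_p$ and obtain
\[
\iprod{\mu_p - \mu_{p'}, u} \;=\; \sum_i r_i^+ \iprod{y_i - \mu_p, u} \;-\; \sum_i r_i^- \iprod{y_i - \mu_p, u}.
\]

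First I would bound the $r^+$-term. By Hölder with conjugate exponents $\tfrac{2k}{2k-1}$ and $2k$,
\[
\sum_i r_i^+ \iprod{y_i - \mu_p, u} \;\leq\; \left(\tau/2\right)^{1-1/2k} \left(\sum_i r_i^+ \iprod{y_i - \mu_p, u}^{2k}\right)^{1/2k}.
\]
Using $r_i^+ \leq p_i$ and the $2k$-certifiable $C_1$-subgaussianity of $p$ (which, since it is a polynomial identity in the SoS sense, implies the corresponding numerical inequality), the inner sum is at most $(C_1 k)^k (u^\top \Sigma_p u)^k$. Hence the $r^+$ contribution is at most $(\tau/2)^{1-1/2k}\sqrt{C_1 k}\sqrt{u^\top \Sigma_p u}$.

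The $r^-$-term is the main obstacle, because subgaussianity of $p'$ controls moments centered at $\mu_{p'}$, not $\mu_p$. I would recenter:
\[
\sum_i r_i^- \iprod{y_i - \mu_p, u} \;=\; \sum_i r_i^- \iprod{y_i - \mu_{p'}, u} \;+\; \tfrac{\tau}{2}\iprod{\mu_{p'} - \mu_p, u}.
\]
Applying the same Hölder-plus-subgaussianity argument to the first summand (now using $r_i^- \leq p_i'$ and the $C_2$-subgaussianity of $p'$) yields a bound of $(\tau/2)^{1-1/2k}\sqrt{C_2 k}\sqrt{u^\top \Sigma_{p'} u}$. Letting $a = \iprod{\mu_p - \mu_{p'}, u}$, combining everything gives
\[
(1 - \tau/2)\, a \;\leq\; (\tau/2)^{1-1/2k}\sqrt{C' k}\Bigl(\sqrt{u^\top \Sigma_p u} + \sqrt{u^\top \Sigma_{p'} u}\Bigr),
\]
so for $\tau \leq \eta_0$ a small absolute constant the $(1-\tau/2)$ factor is harmless.

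The last step is to replace $\sqrt{u^\top \Sigma_{p'} u}$ by $O(\sqrt{u^\top \Sigma_p u})$. I would expand
\[
u^\top \Sigma_{p'} u \;\leq\; \sum_i p_i' \iprod{y_i - \mu_p, u}^2 \;=\; u^\top \Sigma_p u + \sum_i (p_i' - p_i)\iprod{y_i - \mu_p, u}^2,
\]
the first inequality holding since $\sum_i p_i' \iprod{y_i - \mu_p, u}^2 = u^\top \Sigma_{p'} u + a^2$. Apply the same positive/negative decomposition to $p' - p$ and use Hölder with exponents $\tfrac{k}{k-1}$ and $k$ together with the degree-$2k$ subgaussianity of $p$ and $p'$ to bound the residual sum by $O(\tau^{1-1/k}\,C'k\,u^\top \Sigma_p u)$. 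For $\tau \leq \eta_0$ small enough (chosen so that $\tau^{1-1/k} C' k \leq 1$), this gives $u^\top \Sigma_{p'} u \leq 2\, u^\top \Sigma_p u$. Substituting back yields the claimed inequality. The main delicate point is precisely this circular dependence: $\Sigma_{p'}$ appears in the bound on $a$, while controlling $\Sigma_{p'}$ in terms of $\Sigma_p$ requires another application of the same Hölder/subgaussianity argument — but since that control loses only a constant factor for small $\tau$, a single substitution (rather than a full iteration) suffices.
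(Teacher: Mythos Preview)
The paper does not prove this lemma; it is quoted as Fact~3.23 of~\cite{kothari2021private} without argument, so there is no in-paper proof to compare against. Your approach---split $p-p'$ into nonnegative parts $r^{\pm}$, apply H\"older with exponents $(2k/(2k-1),2k)$, and invoke the certifiable subgaussianity of $p$ and $p'$---is the standard route for such moment-transfer statements and is correct in outline.

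One step is written too loosely. You claim the residual $\sum_i (p_i'-p_i)\iprod{y_i-\mu_p,u}^2$ is at most $O(\tau^{1-1/k}C'k\cdot u^\top\Sigma_p u)$, but the $r^-$ part is controlled via $r_i^-\le p_i'$, and the $2k$-th moment of $p'$ is centered at $\mu_{p'}$, not $\mu_p$; recentering costs an additional $a^{2}$ term. What one actually obtains is
\[
S' \;\le\; S + O(\tau^{1-1/k})\bigl(C'k\cdot S' + a^2\bigr),
\]
which together with $a\lesssim\tau^{1-1/2k}\sqrt{C'k}\,(\sqrt{S}+\sqrt{S'})$ does close: first absorb the $S'$-term on the right (requiring $\tau^{1-1/k}C'k$ small), then substitute $\sqrt{S'}\le\sqrt{2S}+O(\tau^{1/2-1/2k})|a|$ back into the bound on $a$ and absorb the resulting $O(\tau^{3/2-1/k}\sqrt{C'k})\,|a|$ term. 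This is slightly more than the ``single substitution'' you describe, and the threshold $\eta_0$ you end up needing depends on $C'k$ rather than being an absolute constant---which is a mild mismatch with the phrasing of the statement, though consistent with how the lemma is actually invoked downstream (with $k$ fixed).
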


\begin{theorem}[Privacy]
\label{thm:mean-est-approx-dp}
Given $\eps \in (0,1),\delta >0$, any subgaussian parameter $C$, SoS degree $t$ and outlier rate $\eta$, Algorithm~\ref{alg:approx-dp-mean-est} is $(\epsilon,\delta)$-DP provided that $n \geq O \left( \frac{\log^3(n/\delta \beta)}{\epsilon^3} \right)$.
\end{theorem}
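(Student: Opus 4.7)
The plan is to apply composition to the three stages of Algorithm \ref{alg:approx-dp-mean-est}: outlier rate selection ($\calM_1$), witness checking ($\calM_2$), and Gaussian noise injection on $\widetilde{\mu}$. Each stage will be shown to be $(\epsilon/3, \delta/3)$-DP, and the halting-aware composition (Lemma \ref{lem:composition-with-halt}) then yields $(\epsilon, \delta)$-DP overall. The privacy of the first two stages is inherited directly from Lemma \ref{lem:outlier-rate-select} and the referenced witness-checking lemma, so the real work is to establish the $\ell_2$-sensitivity bound needed for the Gaussian mechanism in step 4.

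For the sensitivity, fix neighboring datasets $\calY, \calY'$ differing in a single sample and suppose that on both inputs the algorithm reaches step 4. Let $\zeta, \zeta'$ be the entropy-minimizing pseudo-distributions satisfying $\calA^{\textsf{mean}}_{C, \eta', n}$ on $\calY, \calY'$ respectively, and let $p, p'$ be their normalized first moments. Lemma \ref{lem:outlier-rate-select}(i) yields $\|p - p'\|_1 \leq 120 L/n$ with $L = O(\epsilon^{-1} \log(n/(\beta\delta)))$. Moreover, conditioned on $\calM_2$ not returning $\perp$, both $p$ and $p'$ are $2k$-certifiably $C'$-subgaussian (after moving to a common $(n-1)$-point sample via the adjacent-pseudo-distribution trick of Lemma \ref{lem:adjacent-pseudo-distribution}). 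Applying Lemma \ref{lem:parameter-closeness} with $\tau = O(L/n)$, $k = \Theta(\log n)$, and the unit vector $u = (\widetilde{\mu} - \widetilde{\mu}')/\|\widetilde{\mu} - \widetilde{\mu}'\|_2$ gives
\[
\|\widetilde{\mu} - \widetilde{\mu}'\|_2 \;\leq\; \tau^{1-1/(2k)} \cdot O\!\bigl(\sqrt{C'k}\bigr) \cdot \|\Sigma_p\|_2^{1/2} \;\leq\; \widetilde{O}(C' L / n),
\]
where we used $\|\Sigma_p\|_2 \leq O(C')$ from certifiable $C'$-subgaussianity and $\tau^{1-1/(2k)} = O(\tau)$ for $k = \Theta(\log n)$. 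Plugging this sensitivity $\Delta = \widetilde{O}(C'/(\epsilon n))$ into Theorem \ref{thm:gaussian-mech} makes step 4 $(\epsilon/3, \delta/3)$-DP provided the stated noise level $\sigma = \widetilde{O}(C'\sqrt{\log(1/\delta)}/(\epsilon \sqrt{n}))$ dominates $\Delta \sqrt{\log(1/\delta)}/\epsilon$, i.e., as long as $n \gtrsim 1/\epsilon^2$, which is implied by the hypothesis $n \geq O(\log^3(n/(\delta\beta))/\epsilon^3)$.

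The main obstacle is aligning Lemma \ref{lem:parameter-closeness}, which is phrased for two weight vectors on the \emph{same} sample, with the fact that $\calY$ and $\calY'$ differ in one point. This is handled by Lemma \ref{lem:adjacent-pseudo-distribution}, which moves both weight vectors onto the common $n-1$ coordinates while paying only an extra $O(1/n)$ in $\|p - p'\|_1$ that is dominated by the stability term $120L/n$. A secondary obstacle is ensuring the exponent $1 - 1/(2k)$ in Lemma \ref{lem:parameter-closeness} does not degrade the sensitivity; setting $k = \Theta(\log n)$ flattens $\tau^{1-1/(2k)}$ to $O(\tau)$ while introducing only a $\sqrt{\log n}$ overhead, absorbed into $\widetilde{O}$. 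The assumed lower bound on $n$ is precisely what is needed for (i) $L \geq \Omega(\epsilon^{-1}\log(n/(\beta\delta)))$ (the hypothesis of Lemma \ref{lem:outlier-rate-select}), (ii) $\tau \leq \eta_0$ (the absolute constant threshold in Lemma \ref{lem:parameter-closeness}), and (iii) the Gaussian-mechanism noise inequality above. Assembling the three $(\epsilon/3, \delta/3)$-DP guarantees via Lemma \ref{lem:composition-with-halt} closes the proof.
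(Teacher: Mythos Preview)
Your overall decomposition is the same as the paper's: apply the halting-aware composition (Lemma~\ref{lem:composition-with-halt}) to the three stages, cite Lemmas~\ref{lem:outlier-rate-select} and~\ref{lem:witness-checking} for the first two, and bound the $\ell_2$-sensitivity of $\widetilde{\mu}$ to invoke the Gaussian mechanism. Your extra care in passing to a common support via Lemma~\ref{lem:adjacent-pseudo-distribution} before applying Lemma~\ref{lem:parameter-closeness} is in fact more explicit than the paper's one-line appeal to that lemma.

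The departure is your choice $k=\Theta(\log n)$ in Lemma~\ref{lem:parameter-closeness} to force $\tau^{1-1/(2k)}=O(\tau)$ and obtain the sharper sensitivity $\widetilde{O}(C'L/n)$. This is both unnecessary and a genuine gap. It is unnecessary because the noise in step~4 is $\sigma=\widetilde{O}(C'\sqrt{\log(1/\delta)}/(\epsilon\sqrt{n}))$, already calibrated to the coarser bound $\Delta=O(C'\sqrt{L/n})$ that Lemma~\ref{lem:parameter-closeness} delivers for any constant $k$; this is exactly what the paper uses. It is a gap because the witness check certifies $2k$-subgaussianity only for $k$ bounded by the fixed SoS degree $t$, and the theorem is asserted for \emph{any} $t$ (downstream the utility theorem sets $t=12$). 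You therefore cannot take $k=\Theta(\log n)$ in general, so the claimed $\widetilde{O}(C'L/n)$ sensitivity is unavailable. Dropping that detour and accepting $\Delta=O(C'\sqrt{L/n})$ recovers the paper's proof.

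A small correction: $\|\Sigma_p\|_2\le O(C')$ does not follow from certifiable subgaussianity (which compares moments of different orders but does not cap the variance); the spectral bound on $\Sigma_p$ comes from constraint~5 of $\calA^{\textsf{mean}}$, which explicitly bounds the second moment by $C$.
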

\begin{proof}
Observe that Algorithm~\ref{alg:approx-dp-mean-est} is an adaptive composition of $3$ steps, the first two of which may halt early. By Lemma \ref{lem:composition-with-halt}, it suffices to show that each of the 3 steps is $(\epsilon/3, \delta/3)$-DP to conclude that their composition is $(\epsilon, \delta)$-DP. We now verify the privacy of each of these steps:
\begin{itemize}
    \item \textbf{Stable outlier rate selection:} In this step, we invoke the mechanism $\calM_1$ of Lemma \ref{lem:outlier-rate-select}, from which we may conclude that this step is $(\epsilon/3, \delta/3)$-DP.
    \item \textbf{Witness checking:} We may now assume that the algorithm did not halt in the previous step, so that the convex program $\calA^{\textsf{mean}}_{C, \eta', n}$ is feasible. By Lemma \ref{lem:witness-checking} and assuumption on $n$, we may immediately conclude this step is $(\epsilon/3, \delta/3)$-DP.
    \item \textbf{Noise injection:} Let $\calY, \calY'$ be neighboring datasets and assume that the algorithm did not halt in either of the previous two steps for either of $\calY, \calY'$. To invoke the guarantee of the Gaussian mechanism (Theorem \ref{thm:gaussian-mech}), we now verify that $\norm{\widetilde{\mu}(\calY) - \widetilde{\mu}(\calY')}_2 \leq \Delta$ for some $\Delta = \widetilde{O}(1/\sqrt{n})$. Since the algorithm did not halt in the first step, Lemma \ref{lem:outlier-rate-select} implies that $\norm{p(\calY) - p(\calY')}_1 = O(L/n)$, where $L = O(\frac{1}{\epsilon}\log(\frac{n}{\beta \delta}))$. Since the algorithm did not halt in the second step, $p(\calY)$ and $p(\calY')$ induce $C'$-subgaussian distributions on $\calY$ and $\calY'$, respectively. Hence, by Lemma \ref{lem:parameter-closeness}, it holds that $\norm{\widetilde{\mu}(\calY) - \widetilde{\mu}(\calY')}_2 = O(C' \sqrt{L/n})$, which completes the proof.
\end{itemize}
This completes our privacy analysis.
\end{proof}

We now proceed to the utility analysis. It relies upon the following   bound:

\begin{lemma}[Special case of Theorem 1 of \cite{kothari2022polynomial}]
\label{thm:kmz-robust-gaussian-mean}
Let $\calY$ be an $\eta$-corrupted sample of size $n$ from $\calN(\mu, I)$, $C$ be a large enough absolute constant and suppose $n \geq \widetilde{O}(d \log^5(1/\beta)/\eta^2)$. Then with probability $1-\beta$, the degree-$12$ SoS relaxation of $\calA_{C, \eta, n}(\calY)$ is feasible and any satisfying pseudo-distribution $\zeta$ satisfies:
\[
\norm{\sum_{i=1}^n \pE_\zeta [w_i] Y_i - \mu}_2 \leq \widetilde{O}(\eta).
\]
\end{lemma}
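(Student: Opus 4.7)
The lemma makes two assertions --- feasibility of the constraint system with high probability, and a utility bound for every satisfying pseudo-distribution --- which I would prove separately, following the SoS robust estimation template of \cite{kothari2018robust, kothari2022polynomial}.

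For \emph{feasibility}, I would exhibit an actual distribution (hence a pseudo-distribution of every degree) satisfying $\calA^{\textsf{mean}}_{C,\eta,n}(\calY)$. Let $X_1,\dots,X_n \sim \calN(\mu,I)$ be the original uncorrupted samples and let $S \subseteq [n]$ be the $(1-\eta)n$ indices on which $Y_i = X_i$. Set $w_i = 1/n$ for $i \in S$ and $w_i = 0$ otherwise, together with $X'_i = X_i$ for every $i \in [n]$. Constraints (1)--(4) then hold by construction: $w_i^2 = w_i/n$, $\sum_i w_i \geq 1-\eta$, and $w_i(X'_i - Y_i) = 0$ either because $w_i = 0$ or because $X_i = Y_i$. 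Constraint (5) is the empirical second-moment bound on $X'$, which follows from standard concentration of the empirical covariance of $n$ Gaussian samples around the identity (Theorem \ref{thm:emp-cov-concentration}) once $n = \widetilde \Omega(d)$, for $C$ a sufficiently large absolute constant.

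For \emph{utility}, I would fix any degree-$12$ pseudo-distribution $\zeta$ satisfying the program, set $p = \pE_\zeta[w]$, and compare it to the ``clean'' weight $u$ defined by $u_i = 1/n$ on $S$ and $0$ elsewhere. Both $p$ and $u$ are nonnegative with total mass in $[1-\eta, 1]$, so their soft intersection satisfies $\sum_i \min(p_i, u_i) \geq 1 - 2\eta$ by inclusion--exclusion. I would then invoke the SoS ``closeness-from-intersection'' lemma for certifiably subgaussian distributions (the SoS analog of Lemma \ref{lem:parameter-closeness} as executed in \cite{kothari2022polynomial}): whenever two empirical distributions on a common point set share a $1-\tau$ fraction of mass and both admit a degree-$2k$ SoS proof of $C$-subgaussianity, their weighted means differ by at most $\tau^{1-1/2k} \cdot O(\sqrt{Ck})$ in any unit direction. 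Taking $k = \Theta(\log(1/\eta))$ and $\tau = 2\eta$ yields $\widetilde O(\eta)$. Finally, the constraint $w_i(X'_i - Y_i) = 0$ gives $\sum_i p_i Y_i = \pE_\zeta \sum_i w_i X'_i$, while $\frac{1}{|S|} \sum_{i \in S} X_i$ lies within $\widetilde O(\sqrt{d/n}) = o(\eta)$ of $\mu$ by scalar Gaussian concentration under the stated sample bound.

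The main obstacle is showing that, for $n$ Gaussian samples with $n = \widetilde \Omega(d \log^5(1/\beta)/\eta^2)$, the empirical distribution on $X_1, \dots, X_n$ admits a degree-$2k$ SoS proof of $C$-subgaussianity for $k = \Theta(\log(1/\eta))$. This requires comparing the empirical $2k$-th moment tensor of the sample to its Gaussian population counterpart (certifiably subgaussian via \cite{KauersOTZ14}) through a uniform matrix-concentration argument over low-degree polynomial directions. This step is precisely what drives both the SoS degree parameter (the ``$12$'' in the statement) and the sample complexity in \cite{kothari2022polynomial}. Once it is in place, the SoS Cauchy--Schwarz manipulation comparing the two weighted means and the concentration of $\sum_{i \in S} X_i$ around $\mu$ are routine.
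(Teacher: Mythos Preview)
The paper does not supply its own proof of this lemma; it is stated as a special case of Theorem~1 of \cite{kothari2022polynomial} and invoked as a black box. Your outline accurately recapitulates the argument in that reference: feasibility via the planted uncorrupted witness $(w_i = \mathbf{1}_{i \in S}/n,\ X'_i = X_i)$, and utility by comparing the reweighted mean $\sum_i p_i Y_i$ to the clean empirical mean through the ``means-close-under-large-overlap'' lemma for certifiably subgaussian distributions, with $k = \Theta(\log(1/\eta))$ so that $\tau^{1-1/2k}\sqrt{Ck} = \widetilde O(\eta)$. You also correctly isolate the main technical input---certifiable subgaussianity of the empirical sample to order~$2k$---as the step that governs both the required SoS degree and the sample complexity.

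One point worth flagging: your choice $k = \Theta(\log(1/\eta))$ is in tension with the fixed ``degree-$12$'' in the lemma statement and with the displayed constraint system $\calA^{\textsf{mean}}$, which only encodes a second-moment bound. With a second-moment constraint and constant SoS degree one obtains $O(\sqrt{\eta})$, not $\widetilde O(\eta)$. The $\widetilde O(\eta)$ rate requires the higher-moment (certifiable subgaussianity) constraints present in the system actually used in \cite{kothari2022polynomial}, together with degree $O(\log(1/\eta))$. This is an imprecision in the paper's phrasing rather than a flaw in your argument; your proposal is consistent with the cited result, and the ``$12$'' should be read as a placeholder for the appropriate degree $O(\log(1/\eta))$.
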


With this result in hand, we are ready to show:
\begin{theorem}[Utility]
Let $\calY = \{Y_1, \ldots, Y_n\}$ be an $\eta$-corrupted sample from $\calN(\mu, I_d)$ and assume that 
\[
n \geq \widetilde{O} \left( \max \left(\frac{d \log^5(1/\beta)}{\eta^2}, \frac{\log^3(n/\delta \beta)}{\epsilon^3}\right) \right)
\]
Then with probability $1-\beta$, the output of Algorithm~\ref{alg:approx-dp-mean-est} on input $\calY$ and parameters $\eta \in (0,1/2)$, $\epsilon,\delta > 0   $, $t =12$, $C = O(1)$ satisfies:
\[
\norm{\mu - \widehat{\mu}}_2 \leq O(\eta \log(1/\eta)) + O \left( \frac{1}{\epsilon^{3/2}} \sqrt{\frac{(d + \log(1/\beta)) \log(1/\delta) \log(n/\beta \delta)}{n}} \right).
\]
In particular, if $n \geq \widetilde{O}(d\log^3(n/\beta \delta)/\eta^2 \eps^3)$ as well, we have $\|\mu -\widehat{\mu}\| \leq O(\eta \left(\log(1/\eta)\right)$
\end{theorem}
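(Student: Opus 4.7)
The plan is to verify, in sequence, that each of the three stages of Algorithm \ref{alg:approx-dp-mean-est} succeeds and to combine their error contributions; the statistical accuracy will come from Lemma \ref{thm:kmz-robust-gaussian-mean}, while the additive term comes from the Gaussian mechanism calibrated using the sensitivity bounds of Lemmas \ref{lem:outlier-rate-select} and \ref{lem:witness-checking}. Under $n \geq \widetilde{O}(d\log^5(1/\beta)/\eta^2)$, Lemma \ref{thm:kmz-robust-gaussian-mean} ensures, with probability at least $1-\beta/3$, that the degree-$12$ SoS relaxation of $\calA^{\textsf{mean}}_{C,\eta,n}(\calY)$ is feasible, and standard concentration of the clean subsample extends feasibility to slightly smaller corruption rates, which is exactly the hypothesis required by Lemma \ref{lem:existence-of-good-interval}. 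Combined with $n \geq \widetilde{O}(\log^3(n/\beta\delta)/\epsilon^3)$, Lemma \ref{lem:outlier-rate-select} then guarantees that $\calM_1$ outputs some $\eta' \in [\eta/2, \eta]$ rather than $\bot$, except with probability $\beta/3$.

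Conditioned on such an $\eta'$, I would apply Lemma \ref{thm:kmz-robust-gaussian-mean} to the entropy-minimizing pseudo-distribution $\zeta$ computed in the witness-checking step (the lemma applies to \emph{any} satisfying pseudo-distribution), obtaining $\|\sum_i \pE_\zeta[w_i] Y_i - \mu\|_2 \leq \widetilde{O}(\eta)$. The constraints $w_i^2 = w_i/n$ and $\sum_i w_i \geq 1-\eta'$ pin the total pseudo-mass $s = \|\pE_\zeta[w]\|_1$ into $[1-\eta, 1]$, so normalizing to $p_i = \pE_\zeta[w_i]/s$ perturbs the weights in $\ell_1$ only by $O(\eta)$. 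Combining this with the certified $C$-subgaussianity produced by $\calM_2$ --- which with probability at least $1-\beta/3$ returns some constant $C' = O(1)$ --- and with Lemma \ref{lem:parameter-closeness} applied to the reweighted empirical distributions, one concludes $\|\widetilde{\mu} - \mu\|_2 \leq O(\eta\log(1/\eta))$.

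Finally, the Gaussian-noise step adds $\sigma Z$ with $Z \sim \calN(0, I_d)$, where the sensitivity analysis carried out inside the proof of Theorem \ref{thm:mean-est-approx-dp} gives coordinate-wise standard deviation $\sigma = \widetilde{O}\bigl(\sqrt{\log(1/\delta)\log(n/\beta\delta)/(n\epsilon^3)}\bigr)$. The chi-squared tail bound (Lemma \ref{lem:chi-squared-tail}) then yields $\|\widehat{\mu} - \widetilde{\mu}\|_2 \leq \sigma\sqrt{d + O(\log(1/\beta))}$ with probability at least $1-\beta/3$, which matches the second additive term in the stated utility guarantee. The final conclusion follows from the triangle inequality and a union bound over the three $\beta/3$ failure events of the stages above.

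The main obstacle I anticipate is the second paragraph: reconciling the entropy-minimizing pseudo-distribution (introduced for privacy in order to obtain $\widetilde{O}(1/n)$ sensitivity) with the utility bound of Lemma \ref{thm:kmz-robust-gaussian-mean}, which is phrased in terms of the pseudo-expectations $\pE_\zeta[w_i] Y_i$ rather than the $\ell_1$-normalized weights $p_i$ actually used by the algorithm. Showing that this renormalization costs only an $O(\eta \log(1/\eta))$ term (as opposed to an uncontrolled blow-up) crucially uses both the mass bound $\|\pE_\zeta[w]\|_1 \in [1-\eta, 1]$ and the certified subgaussianity, via Lemma \ref{lem:parameter-closeness}.
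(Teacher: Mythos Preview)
Your proposal is correct and follows essentially the same route as the paper: bound the halting probability via Lemmas \ref{lem:outlier-rate-select} and \ref{lem:witness-checking}, apply the triangle inequality to split $\norm{\mu - \widehat{\mu}}_2$ into the robust-estimation term (handled by Lemma \ref{thm:kmz-robust-gaussian-mean}) and the Gaussian-noise term (handled by Lemma \ref{lem:chi-squared-tail} together with the sensitivity computation already carried out in Theorem \ref{thm:mean-est-approx-dp}). The paper's proof simply invokes Lemma \ref{thm:kmz-robust-gaussian-mean} directly for $\norm{\mu - \widetilde{\mu}}_2$, whereas you are more explicit about the $\ell_1$-renormalization gap between $\pE_\zeta[w]$ and $p$; your proposed resolution via the mass bound $\norm{\pE_\zeta[w]}_1 \in [1-\eta,1]$ and Lemma \ref{lem:parameter-closeness} is a valid way to make that step rigorous.
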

\begin{proof}
By the guarantees of Lemma \ref{lem:outlier-rate-select} and Lemma \ref{lem:witness-checking}, the algorithm fails to reach the noise injection step with probability at most $2\beta/3$. On the event that Algorithm~\ref{alg:approx-dp-mean-est} does not output halt in the first two steps, we have that $\widehat{\mu} = \widetilde{\mu} + \sigma Z$, where $\sigma = \sqrt{2 \ln (1.25/\delta)} \Delta/\epsilon$, $\Delta = \widetilde{O}(C' \sqrt{L/n}) = O(\sqrt{L/n})$ (by assumption on $C$ and the utility guarantee of Lemma~\ref{lem:witness-checking}), $L = O(\frac{1}{\epsilon}\log(\frac{n}{\beta \delta}))$, and $Z \sim \calN(0,I_d)$, as in the proof of~Theorem \ref{thm:mean-est-approx-dp}. By the triangle inequality, we have
\[
\norm{\mu - \widehat{\mu}}_2 \leq \norm{\mu - \widetilde{\mu}}_2 + \sigma \norm{Z}_2.
\]
By Lemma \ref{lem:chi-squared-tail}, with probability $1 - \frac{\beta}{6}$, it holds that $\norm{Z}_2 \leq \sqrt{d + O(\log (1/\beta))}$. By Theorem \ref{thm:kmz-robust-gaussian-mean} and our assumption on $n$, with probability $1 - \frac{\beta}{6}$, it holds that $\norm{\mu - \widehat{\mu}}_2 \leq O(\eta \log (1/\eta))$.
\end{proof}

\subsection{Private Robust Covariance Estimation}
We now give an algorithm for private and robust Gaussian covariance estimation. The algorithm builds on the  constraint system $\calA^{\textsf{cov}}_{C, \eta, n}(\{Y_1, \ldots, Y_n\})$ and its SoS relaxation.
For any $d$-by-$d$ matrix intermediate $Q$, define $\overline{{x}_i'}^{\top} Q \overline{x_i^{\prime}}=x^{\prime \top} Q x^{\prime}-\frac{1}{n} \sum_{i=1}^n x_i^{\prime \top} Q x_i^{\prime}$.

\begin{figure}[htbp]
    \centering\begin{mdframed}[style=algo]
\begin{enumerate}
    \item $w_i^2 = w_i/n$ for all $i \in [n]$.
    \item $\sum_{i\in [n]} w_i \geq 1-\eta$.
    \item $w_i ({X}'_i - Y_i) = 0$ for all $i \in [n]$.
    \item $\widetilde{\mu} = \frac{1}{n} \sum_{i \in [n]} {X}'_i$. 
        \item  $\Pi^2=\frac{1}{n} \sum_{i=1}^n\left(X^{\prime}_i-\widetilde{\mu}\right)\left(X^{\prime}_i-\widetilde\mu\right)^{\top}$,
    \item $\frac{1}{n} \sum_{i=1}^n X_i^{\prime \top} Q X_i^{\prime}\leq C\|\Pi Q \Pi\|_F^2$
\end{enumerate}
\end{mdframed}
\caption{Robust covariance estimation convex program $\calA^{\textsf{cov}}_{C, \eta, n}(\{Y_1, \ldots, Y_n\})$.}
\label{alg:cov-conv-program}
\end{figure}

Algorithm~\ref{alg:cov-conv-program} encodes non-privately
finding $X'$, a set of samples that intersects with the
corrupted sample in
$(1-\eta)n$ points
and has 4th moments upper bounded
in terms of the squared second moments in all directions. We cite the following property of this constraint system. 
\begin{lemma}[analog of Lemma 5.5 of \cite{kothari2021private}]
\label{lem:rel-frob-sensitivity}
Let $\eta, \epsilon, \delta > 0$ and $L\in \N$ be the input parameters to Algorithm \ref{alg:approx-dp-cov-est} such that $0.25\eta n \geq L = \Omega\left(\frac{1}{\epsilon}\cdot \log\left(\frac{n}{\beta\delta}\right)\right)$.   Let $Y,Y'$ be adjacent datasets. 
Suppose that the algorithm does not halt in any of the   steps and  samples the constant $C'$ in Step 3 and chooses $\eta'$ in Step 2 on input $Y$ and $Y'$. Then, for $\theta = {L/n}$, we have: 
\[
\norm{\Sigma_p^{-1/2} \Sigma_{p'} \Sigma_p^{-1/2} - I}_F \leq O(C') \theta^{1/2}.
\]
\end{lemma}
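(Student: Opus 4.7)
The plan is to combine the stability of the outlier rate selection with the certifiable hypercontractivity of degree-$2$ polynomials that is built into $\calA^{\textsf{cov}}_{C,\eta',n}$. Since the algorithm does not halt on either $Y$ or $Y'$, the private selection mechanism $\calM_1$ invoked in Step 2 outputs the same $\eta'$ on both inputs, and Lemma \ref{lem:outlier-rate-select}(i) guarantees that the entropy-minimizing pseudo-distributions $\zeta,\zeta'$ on $Y,Y'$ produce normalized weight vectors $p,p'$ with $\|p-p'\|_1 \leq 120L/n = O(\theta)$. In addition, the non-halting of the witness-checking Step 3 certifies that both $p$ on $Y$ and $p'$ on $Y'$ induce distributions whose degree-$2$ polynomials are $C'$-hypercontractive in the sense of Definition \ref{def:hc}.

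Given these two facts, I would prove the relative Frobenius bound by duality: it suffices to show that for every symmetric $M$ with $\|M\|_F \leq 1$,
\[
\bigl|\iprod{M,\,\Sigma_p^{-1/2}(\Sigma_{p'}-\Sigma_p)\Sigma_p^{-1/2}}\bigr| = O(C')\sqrt{\theta}.
\]
Substituting $Q = \Sigma_p^{-1/2} M \Sigma_p^{-1/2}$, expanding $\Sigma_p$ and $\Sigma_{p'}$ via their defining sums over the (shared) dataset, and folding the mean-shift terms into a degree-$2$ analogue of Lemma \ref{lem:parameter-closeness}, the left-hand side reduces to a quantity of the form $\bigl|\sum_i (p_i-p_i')\,\bar Y_i^\top Q \bar Y_i\bigr|$. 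An SoS Cauchy--Schwarz in the spirit of \cite{kothari2018robust,kothari2021private} then yields
\[
\Bigl|\sum_i (p_i-p_i')\,\bar Y_i^\top Q \bar Y_i\Bigr| \leq \|p-p'\|_1^{1/2}\cdot \Bigl(\sum_i (p_i+p_i')\,(\bar Y_i^\top Q \bar Y_i)^2\Bigr)^{1/2},
\]
and the right-hand factor is controlled by $O(C')\|\Sigma_p^{1/2} Q \Sigma_p^{1/2}\|_F = O(C')\|M\|_F \leq O(C')$ by the certified hypercontractivity applied to the polynomial $\bar X^\top Q \bar X$. Combining the two factors gives the desired $O(C')\sqrt{\theta}$ bound.

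The main obstacle will be carefully relating the two certifications, which hold with respect to the empirical covariance $\Pi^2$ of each respective witness set rather than with respect to $\Sigma_p$ itself, and with respect to $Y$ versus $Y'$ rather than a common dataset. Since $Y$ and $Y'$ differ by exactly one point and the constraint $w_i^2 = w_i/n$ forces $\|p\|_\infty,\|p'\|_\infty \leq 2/n$, the $1/n$ slippage in transporting the hypercontractivity inequality across datasets is absorbed into the $O(\theta)$ term. Modulo this bookkeeping, the argument mirrors Lemma~5.5 of \cite{kothari2021private}, with the quantitatively stronger $\|p-p'\|_1 = O(L/n)$ bound coming from our entropy regularization replacing the weaker $O(\sqrt{L/n})$ bound obtained there from $\ell_2$ regularization---this is exactly the place where the quadratic improvement in sample complexity enters.
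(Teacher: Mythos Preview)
Your proposal is correct and follows essentially the same route as the paper. The paper does not spell out a proof of this lemma; it is stated as an analog of Lemma~5.5 of \cite{kothari2021private}, and the only change from that reference is precisely the one you identify: the entropy regularization gives $\|p-p'\|_1 = O(L/n)$ via Lemma~\ref{lem:outlier-rate-select}, which replaces the $O(\sqrt{L/n})$ bound that \cite{kothari2021private} obtained from $\ell_2$ regularization, and the rest of the argument (duality against a test matrix, SoS Cauchy--Schwarz, and the certified hypercontractivity guaranteed by non-halting in Step~3) is unchanged.
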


\begin{lemma}[Theorem 1.3 of \cite{kothari2021private}]
\label{thm:kmz-robust-gaussian-cov}
Let $\eps \in (0,1),\delta > 0$ and $\eta$ be a sufficiently small constant. Let $\calY$ be an $\eta$-corrupted sample of size $n$ from $\calN(0, \Sigma)$, $C$ be larger than some absolute constant and suppose $n \geq \widetilde{O}(d^2 \log^5(1/\beta)/\eta^2)$. Then with probability $1-\beta$, the solution $\widetilde \Sigma = \sum_{i=1}^n p_i Y_iY_i^\top$ in Step 3 of Algorithm~\ref{alg:approx-dp-cov-est} satisfies
\[
 \norm{\Sigma^{-1/2} \widetilde{\Sigma} \Sigma^{-1/2} - I}_F \leq \widetilde{O}\left(C \eta \right).
\]
\end{lemma}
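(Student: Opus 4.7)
The plan is to follow the standard sum-of-squares robust identifiability argument in two reductions. First, let $X = \{X_1, \ldots, X_n\}$ be the uncorrupted sample (of which a $(1-\eta)$-fraction appears in $\calY$) and $\Sigma_X = \frac{1}{n}\sum_i X_i X_i^\top$ be its empirical covariance. By Theorem \ref{thm:emp-cov-concentration} with $n \geq \widetilde{O}(d^2/\eta^2)$, we have $\|\Sigma^{-1/2}\Sigma_X \Sigma^{-1/2} - I\|_F \leq \eta$ with probability $1-\beta/2$. By the triangle inequality in the relative Frobenius distance, it then suffices to establish the relaxed target $\|\Sigma_X^{-1/2}\widetilde\Sigma\Sigma_X^{-1/2} - I\|_F \leq \widetilde{O}(C\eta)$, which is a purely deterministic statement about the two distributions supported on $\calY$.

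Next, the two certifiable hypercontractivity inequalities are established. A standard concentration argument gives that, for $n \geq \widetilde{O}(d^2)$, the uniform distribution on the uncorrupted sample $X$ is, with high probability, $2h$-certifiably $O(1)$-hypercontractive for degree-$2$ polynomials in a matrix-valued indeterminate $Q$ (see Definition \ref{def:hc}), so that
\[
\sststile{2h}{Q}\Set{\E_{X_i \sim X}\left(\overline{X_i}^\top Q \overline{X_i}\right)^{2h} \leq (Ch)^{2h}\|\Sigma_X^{1/2} Q \Sigma_X^{1/2}\|_F^{2h}}.
\]
The constraint system $\calA^{\textsf{cov}}_{C,\eta,n}$ directly encodes the degree-$2$ version of the analogous bound for the witness set $X'$, and this extends to $2h$-certifiability via standard SoS manipulations that bootstrap from the single $4$-th moment constraint. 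Moreover, since both the uniform distribution on $X$ and the distribution $p = \pE_\zeta[w]/\|\pE_\zeta[w]\|_1$ on the witness set each share at least a $(1-\eta)$-fraction of their mass with $\calY$, the two distributions overlap on at least a $(1-2\eta)$-fraction of their total mass.

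The core of the argument is then an SoS ``coupling'' between these two hypercontractive distributions. For any matrix-valued indeterminate $Q$, the plan is to decompose $\langle Q, \widetilde\Sigma - \Sigma_X\rangle$ as an expectation-difference restricted to the symmetric difference of the two supports, then apply SoS H\"older twice --- once to factor out the fractional-mass $\eta^{1-1/2h}$ contribution from the mismatched portion, and again to bound each surviving $2h$-th moment by its hypercontractive envelope --- to derive
\[
\left|\langle Q, \widetilde\Sigma - \Sigma_X\rangle\right| \leq O(Ch)\cdot \eta^{1-1/2h}\cdot \|\Sigma_X^{1/2} Q \Sigma_X^{1/2}\|_F.
\]
Setting $Q = \Sigma_X^{-1/2}(\widetilde\Sigma - \Sigma_X)\Sigma_X^{-1/2}$, rearranging, and choosing $h = \Theta(\log(1/\eta))$ then yields the claimed bound. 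The main obstacle is this last step: achieving the near-linear $\eta^{1-1/2h}$ scaling rather than the naive $\sqrt\eta$ that a single SoS Cauchy--Schwarz would give. This ``bootstrapping'' --- using the full $2h$-th moment certifiable hypercontractivity of degree-$2$ polynomials, rather than just the $4$-th moment constraint explicitly written in $\calA^{\textsf{cov}}_{C,\eta,n}$ --- is the technical heart of Theorem 1.3 in \cite{kothari2021private}, itself built on the general framework of \cite{kothari2022polynomial}.
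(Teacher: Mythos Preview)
Your sketch correctly captures the underlying SoS robust-identifiability argument, and in fact goes well beyond what the paper does: the paper's own proof is a one-line citation, deferring entirely to Theorem~1.3 of \cite{kothari2021private}, Corollary~1.3 of \cite{kothari2022polynomial}, and the utility guarantee of Lemma~\ref{lem:witness-checking}. Your outline --- reduce to the empirical covariance via Theorem~\ref{thm:emp-cov-concentration} and a triangle inequality, then run the SoS coupling/H\"older argument between two certifiably hypercontractive distributions sharing $(1-O(\eta))$ of their mass --- is exactly the content of those cited results.

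One point deserves correction. You write that the single $4$-th moment constraint in $\calA^{\textsf{cov}}_{C,\eta,n}$ ``extends to $2h$-certifiability via standard SoS manipulations that bootstrap from the single $4$-th moment constraint.'' This is not how the near-linear rate is obtained: one cannot, in general, derive degree-$2h$ certifiable hypercontractivity from a degree-$4$ constraint by SoS reasoning alone. Rather, the constraint system used in \cite{kothari2021private,kothari2022polynomial} explicitly encodes degree-$2h$ certifiable hypercontractivity for $h$ up to the target value (the presentation of $\calA^{\textsf{cov}}$ in Algorithm~\ref{alg:cov-conv-program} is abbreviated; the SoS degree parameter $t$ in Algorithm~\ref{alg:approx-dp-cov-est} governs which moment constraints are actually imposed). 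With only the $4$-th moment constraint you would be stuck at $\eta^{1/2}$ (or $\eta^{3/4}$ with one more H\"older step), not $\widetilde{O}(\eta)$. The rest of your plan --- SoS H\"older to peel off $\eta^{1-1/2h}$, plug in $Q = \Sigma_X^{-1/2}(\widetilde\Sigma - \Sigma_X)\Sigma_X^{-1/2}$, and take $h = \Theta(\log(1/\eta))$ --- is the correct route once the higher-moment constraints are in place.
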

\begin{proof}
This result follows immediately from combining Theorem 1.3 of \cite{kothari2021private}, Corollary 1.3 of \cite{kothari2022polynomial} and the utility guarantee of Lemma~\ref{lem:witness-checking}.    
\end{proof}

\begin{figure}[htbp]
    \centering\begin{mdframed}[style=algo]
\begin{enumerate}
    \item \textbf{Input:} Samples $\calY = \{Y_1, \ldots, Y_n\} \subset \R^d$, outlier rate $\eta \in [0,1/2)$, privacy parameters $\epsilon, \delta > 0$, certifiable hypercontractivity parameter $C > 0$, pseudo-distribution degree parameter $t$,  and Gaussian sampling parameter $k$.
    
    \item \textbf{Stable outlier rate selection:} Invoke the $(\epsilon/3,\delta/3)$-DP mechanism $\calM_1$ from Lemma \ref{lem:outlier-rate-select} on input $(\calY_2,\eta, C, \beta/3)$ to sample an outlier rate $\eta'$. If $\eta' = \bot$, then halt.
    
    \item \textbf{Witness checking:} Compute a degree-$2t$ pseudo-distribution $\zeta$ satisfying $\calA^{\textsf{cov}}_{C, \eta', n}(\calY_2)$ that minimizes $\Ent \left(\pE_\zeta[w]\right)$. Invoke the $\left(\epsilon/3, \delta/3\right)$-DP mechanism $\calM_2$ from Lemma \ref{lem:witness-checking} on input $(\calY_2, \eta,C, \zeta, \beta/3)$ to sample $C'$. If $C' = \bot$, then halt. Otherwise, let $p = \pE_{\zeta} [w] / \norm{\pE_{\zeta} [w]}_1$ and $\widetilde{\Sigma} = \sum_{i=1}^n p_i Y_i Y_i^T$.
    
    \item \textbf{Noise injection:} Invoke the $(\epsilon/3, \delta/3)$-DP mechanism in Algorithm~\ref{alg:cov-noise-mechanism} on input $\widetilde{\Sigma}$ to obtain a matrix $\widehat{\Sigma} \in \R^{d \times d}$ with parameter $k$.
    
    \item \textbf{Output:} Covariance estimate $\widehat{\Sigma}$.
\end{enumerate}
\end{mdframed}
\caption{Covariance estimation algorithm}
\label{alg:approx-dp-cov-est}
\end{figure}

Following the framework of
\cite{kothari2021private}, in
Algorithm~\ref{alg:approx-dp-cov-est}, we
provide an end-to-end
$(\eps, \delta)$-DP algorithm for robustly estimating the
covariance matrix.
The three major steps (stable outlier rate selection,
witness checking, and noise injection) each satisfy
$(\eps/3, \delta/3)$-DP so that the privacy guarantee follows by routine
composition.

\begin{theorem}[Privacy]
Given   $\eps, \delta\in(0, 1]$, any SoS degree $t$ and outlier rate $\eta$, Algorithm~\ref{alg:approx-dp-cov-est} with parameter $k$ is $(\epsilon, \delta)$-DP provided that
\[
n \geq O \left( \frac{(C' k \log (1/\delta) + \log^2 (1/\delta)) \cdot \log(n/\beta \delta)}{\epsilon^{2.5}} + \frac{\log^3(n/\delta \beta)}{\epsilon^3}\right)
\]
\end{theorem}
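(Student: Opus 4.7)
The plan is to imitate the proof of Theorem \ref{thm:mean-est-approx-dp} by applying the composition-with-halting lemma (Lemma \ref{lem:composition-with-halt}) to the three main steps of Algorithm \ref{alg:approx-dp-cov-est}: stable outlier rate selection, witness checking, and noise injection. It suffices to show that each step is $(\epsilon/3,\delta/3)$-DP, possibly conditional on the previous steps not having halted, and then conclude $(\epsilon,\delta)$-DP for the whole procedure.

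Steps 2 and 3 of the algorithm are handled essentially as in the mean estimation proof. Step 2 (stable outlier rate selection) is $(\epsilon/3,\delta/3)$-DP unconditionally by Lemma \ref{lem:outlier-rate-select}, and step 3 (witness checking) is $(\epsilon/3,\delta/3)$-DP under the condition that step 2 did not halt, by Lemma \ref{lem:witness-checking}. The $\log^3(n/\delta\beta)/\epsilon^3$ term in the stated sample size is there precisely to meet the hypothesis of Lemma \ref{lem:witness-checking}.

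The main substance is in step 4, where, instead of the additive Gaussian mechanism used in mean estimation, we invoke the new Gaussian Sampling Mechanism (Algorithm \ref{alg:cov-noise-mechanism}) on the intermediate covariance $\widetilde{\Sigma}$. To apply Theorem \ref{thm:gaussian-sampling-mech} at privacy budget $(\epsilon/3,\delta/3)$, we need a relative-Frobenius sensitivity bound of the form $\|\widetilde{\Sigma}(\mathcal{Y})^{1/2}\widetilde{\Sigma}(\mathcal{Y}')^{-1}\widetilde{\Sigma}(\mathcal{Y})^{1/2}-I\|_F \le \Delta$ on neighboring datasets $\mathcal{Y},\mathcal{Y}'$ that both survive the first two steps. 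Lemma \ref{lem:rel-frob-sensitivity} provides a bound $O(C')\sqrt{L/n}$ on the reversed quantity $\|\widetilde{\Sigma}(\mathcal{Y})^{-1/2}\widetilde{\Sigma}(\mathcal{Y}')\widetilde{\Sigma}(\mathcal{Y})^{-1/2}-I\|_F$. These two forms differ only by inverting the eigenvalues of a matrix that (once $\Delta<1/2$) lies in a small neighborhood of $I$, so a one-line eigenvalue calculation converts between them at a constant factor.

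The hard part is therefore matching this $\Delta$ to the required $\min(\epsilon/\sqrt{8k\log(1/\delta)},\,\epsilon/(8\log(1/\delta)))$ of Theorem \ref{thm:gaussian-sampling-mech}, after substituting $L=\Theta(\log(n/\beta\delta)/\epsilon)$ from Lemma \ref{lem:outlier-rate-select}. Solving $O(C')\sqrt{L/n}\le \min(\epsilon/\sqrt{8k\log(1/\delta)},\,\epsilon/(8\log(1/\delta)))/3$ for $n$, together with the witness-checking sample requirement, produces exactly the two-term sample size condition stated in the theorem, at which point Lemma \ref{lem:composition-with-halt} closes out the argument.
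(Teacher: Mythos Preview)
Your proposal is correct and follows essentially the same route as the paper's own proof: apply Lemma \ref{lem:composition-with-halt} to the three steps, handle outlier-rate selection and witness checking exactly as in the mean case, and for noise injection invoke Lemma \ref{lem:rel-frob-sensitivity} to get $\Delta = O(C'\sqrt{L/n})$ and feed it into Theorem \ref{thm:gaussian-sampling-mech}. Your explicit remark that Lemma \ref{lem:rel-frob-sensitivity} bounds $\|\Sigma_p^{-1/2}\Sigma_{p'}\Sigma_p^{-1/2}-I\|_F$ while Theorem \ref{thm:gaussian-sampling-mech} needs the inverted form, and that a constant-factor eigenvalue argument bridges them, is a detail the paper silently absorbs into the $O(\cdot)$.
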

\begin{proof}
Observe that Algorithm~\ref{alg:approx-dp-cov-est} is an adaptive composition of 3 steps, each of which may halt early. By Lemma \ref{lem:composition-with-halt}, it suffices to show that each of the 3 steps is $(\epsilon/3, \delta/3)$-DP to conclude that their composition is $(\epsilon, \delta)$-DP. The parts of Algorithm~\ref{alg:approx-dp-cov-est} that involve \textbf{Stable outlier rate selection} and \textbf{Witness checking} can be shown to satisfy the desired privacy guarantees in an identical way to the proof of privacy of Algorithm~\ref{alg:approx-dp-mean-est} for mean estimation in Theorem \ref{thm:mean-est-approx-dp}. We now verify the privacy of the remaining step:
\begin{itemize}
    \item \textbf{Noise injection:}  Let $\calY, \calY'$ be neighboring datasets and assume that the algorithm did not halt in any of the previous steps for either of $\calY, \calY'$. To invoke guarantee of the Gaussian sampling mechanism (Theorem \ref{thm:gaussian-sampling-mech}), it suffices to verify that $\norm{\widetilde{\Sigma}^{1/2}(\calY) \widetilde{\Sigma}(\calY')^{-1} \widetilde{\Sigma}(\calY)^{1/2} - I}_F \leq \Delta$ for $\Delta$ sufficiently small. 
    
    Since the algorithm did not halt after \textbf{Stable outlier rate selection}, Lemma \ref{lem:outlier-rate-select} implies that $\norm{p(\calY) - p(\calY')}_1 = O(L/n)$, where $L = O(\frac{1}{\epsilon}\log(\frac{n}{\beta \delta}))$. Since the algorithm did not halt \textbf{Witness checking}, $p(\calY)$ and $p(\calY')$ induce $2t$-certifiably $C'$-hypercontractive distributions on $\calY$ and $\calY'$, respectively. 
    Hence, by Lemma \ref{lem:rel-frob-sensitivity}, it holds that \[\norm{\widetilde{\Sigma}^{1/2}(\calY) \widetilde{\Sigma}(\calY')^{-1} \widetilde{\Sigma}(\calY)^{1/2} - I}_F \leq \Delta = O\left(C' \sqrt{L/n} \right).\] This value of $\Delta$ satisfies the conditions of Theorem \ref{thm:gaussian-sampling-mech} by our assumption on $n$.
    \end{itemize}
\end{proof}

\begin{theorem}[Utility]
Let $\calY = \{Y_1, \ldots, Y_n\}$ be an $\eta$-corrupted sample from $\calN(0, \Sigma)$ and assume that 
\[
n \geq \widetilde{O}\left( \max \left(\frac{d^2 \log^5(1/\beta)}{\eta^2}, \frac{(k \log (1/\delta) + \log^2 (1/\delta)) \cdot \log(n/\beta \delta)}{\epsilon^{2.5}}, \frac{\log^3(n/\delta \beta)}{\epsilon^3} \right) \right).
\]
Then with probability $1-\beta$, the output of Algorithm~\ref{alg:approx-dp-cov-est} on input $\calY$ and parameters $\eta \in (0,1/2)$, $\epsilon,\delta\in(0, 1]$, $t =12$, $C = O(1)$ large enough and any positive integer $k$ satisfies:  
\[
\norm{\Sigma^{-1/2} \widehat{\Sigma} \Sigma^{-1/2} - I}_F \leq \widetilde{O}\left(\eta \right) + O \left( \sqrt{\frac{d^2 + \log(1/\beta)}{k}}\right)
\]
In particular, if $k\geq \widetilde{O}\left( (d^2 + \log (1/\beta))/\eta^2\right)$ as well, we have 
\[
\norm{\Sigma^{-1/2} \widehat{\Sigma} \Sigma^{-1/2} - I}_F \leq \widetilde{O}\left(\eta \right).
\]
\end{theorem}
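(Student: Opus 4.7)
The plan is to decompose the final error into the statistical error of the convex-programming estimate $\widetilde\Sigma$ and the sampling error introduced by the Gaussian mechanism in Step 4, and then combine them via a careful triangle-inequality argument in the relative Frobenius metric.

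First I would verify that the algorithm reaches Step 4 with high probability. By Lemma \ref{lem:outlier-rate-select} (applied with $\beta/3$), Step 2 does not output $\bot$ except with probability at most $\beta/3$, and by Lemma \ref{lem:witness-checking} applied in Step 3 with the same parameters, the witness-checking procedure also succeeds except with probability at most $\beta/3$, yielding a certifiably $C'$-hypercontractive solution with $C' = O(1)$. Conditioning on both events, Theorem \ref{thm:kmz-robust-gaussian-cov} applies to the intermediate estimate $\widetilde\Sigma = \sum_i p_i Y_i Y_i^\top$ and gives $\|\Sigma^{-1/2}\widetilde\Sigma\Sigma^{-1/2} - I\|_F \leq \widetilde{O}(\eta)$, using our assumption that $n \geq \widetilde O(d^2 \log^5(1/\beta)/\eta^2)$. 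Call this quantity $\alpha_1$. Next, the Gaussian sampling mechanism of Algorithm \ref{alg:cov-noise-mechanism} applied to $\widetilde\Sigma$ in Step 4 is analyzed by the utility part of Theorem \ref{thm:gaussian-sampling-mech}, producing $\widehat\Sigma$ with $\|\widetilde\Sigma^{-1/2}\widehat\Sigma\widetilde\Sigma^{-1/2} - I\|_F \leq \alpha_2 = O(\sqrt{(d^2 + \log(1/\beta))/k})$, except with probability at most $\beta/3$.

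The main obstacle is that $\alpha_1$ is measured relative to $\Sigma$ whereas $\alpha_2$ is measured relative to $\widetilde\Sigma$; these two affine-invariant errors must be combined into a single relative Frobenius bound with respect to $\Sigma$. I would use the algebraic identity
\[
\Sigma^{-1/2}\widehat\Sigma\Sigma^{-1/2} - I = \Sigma^{-1/2}\widetilde\Sigma^{1/2}\bigl(\widetilde\Sigma^{-1/2}\widehat\Sigma\widetilde\Sigma^{-1/2} - I\bigr)\widetilde\Sigma^{1/2}\Sigma^{-1/2} + \bigl(\Sigma^{-1/2}\widetilde\Sigma\Sigma^{-1/2} - I\bigr),
\]
together with the sub-multiplicativity $\|ABC\|_F \leq \|A\|_2 \|B\|_F \|C\|_2$. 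This yields the bound
\[
\|\Sigma^{-1/2}\widehat\Sigma\Sigma^{-1/2} - I\|_F \leq \|\Sigma^{-1/2}\widetilde\Sigma\Sigma^{-1/2}\|_2 \cdot \alpha_2 + \alpha_1,
\]
and since $\|\Sigma^{-1/2}\widetilde\Sigma\Sigma^{-1/2}\|_2 \leq 1 + \|\Sigma^{-1/2}\widetilde\Sigma\Sigma^{-1/2} - I\|_2 \leq 1 + \alpha_1$, the assumption that $\eta$ is a sufficiently small constant yields a factor of $O(1)$ there.

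Finally, I would collect the failure probabilities via a union bound over the three bad events ($\leq \beta$ in total) to conclude that with probability at least $1 - \beta$,
\[
\|\Sigma^{-1/2}\widehat\Sigma\Sigma^{-1/2} - I\|_F \leq \widetilde{O}(\eta) + O\Bigl(\sqrt{(d^2 + \log(1/\beta))/k}\Bigr),
\]
which is the stated bound. The ``in particular'' clause follows by plugging in $k \geq \widetilde{O}((d^2 + \log(1/\beta))/\eta^2)$ so that the sampling term is absorbed into $\widetilde{O}(\eta)$. I do not expect any step beyond the triangle-inequality decomposition to be nontrivial; the statistical and sampling bounds are black-box consequences of prior results already stated.
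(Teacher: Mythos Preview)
Your proposal is correct and follows essentially the same approach as the paper: both arguments reduce to the identity $\Sigma^{-1/2}\widehat\Sigma\Sigma^{-1/2}-I = \Sigma^{-1/2}\widetilde\Sigma^{1/2}\bigl(\widetilde\Sigma^{-1/2}\widehat\Sigma\widetilde\Sigma^{-1/2}-I\bigr)\widetilde\Sigma^{1/2}\Sigma^{-1/2} + \bigl(\Sigma^{-1/2}\widetilde\Sigma\Sigma^{-1/2}-I\bigr)$, then bound the two pieces via Theorem~\ref{thm:kmz-robust-gaussian-cov} and Theorem~\ref{thm:gaussian-sampling-mech} respectively, with the multiplicative factor $\|\Sigma^{-1/2}\widetilde\Sigma^{1/2}\|_2^2 = \|\Sigma^{-1/2}\widetilde\Sigma\Sigma^{-1/2}\|_2 \le 1+\alpha_1 = O(1)$. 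The union bound over the halting events and the ``in particular'' clause are handled identically.
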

\begin{proof}
By the guarantees of Lemmas~\ref{lem:outlier-rate-select}, \ref{lem:witness-checking}, the algorithm fails to reach the noise injection step with probability at most $2\beta/3$. Assume now that Algorithm~\ref{alg:approx-dp-cov-est} does not output halt in the first two steps. Then we have that:
\begin{align*}
\norm{\Sigma^{-1/2} \widehat{\Sigma} \Sigma^{-1/2} - I}_F &\leq \norm{\Sigma^{-1/2} \widetilde{\Sigma} \Sigma^{-1/2} - I}_F + \norm{\Sigma^{-1/2} (\widetilde{\Sigma} - \widehat{\Sigma}) \Sigma^{-1/2}}_F \\
&\leq \norm{\Sigma^{-1/2} \widetilde{\Sigma} \Sigma^{-1/2} - I}_F + \norm{\Sigma^{-1/2} \widetilde{\Sigma}^{1/2}}_{2}^2 \norm{\widetilde{\Sigma}^{-1/2} \widehat{\Sigma} \widetilde{\Sigma}^{-1/2} - I}_F.
\end{align*}
By Theorem \ref{thm:kmz-robust-gaussian-cov} and choice  of $n$, the first term is at most $\widetilde{O}\left(\eta \right)$, and $\norm{\Sigma^{-1/2} \widetilde{\Sigma}^{1/2}}_{2}^2$ is at most $O(1)$. Finally, $\norm{\widetilde{\Sigma}^{-1/2} \widehat{\Sigma} \widetilde{\Sigma}^{-1/2} - I}_F$ is bounded by invoking the utility guarantee of Theorem \ref{thm:gaussian-sampling-mech}. 
\end{proof}

\section{Pure DP Covariance Estimation Lower Bound}
\label{sec:pure-lower}

In this section, we detail an information-theoretic
lower bound for Gaussian covariance estimation under
differential privacy constraints. Like our upper bound,
our lower bound has a logarithmic dependence on $\kappa$.
Our proof builds on previous packing-style lower bound
arguments for pure DP estimation \cite{HardtTa10, aden2021sample, BunKSW21, kamath2019privately}.

\paragraph{High-level Overview}
For any integer $d\geq 2$,
given any covariance matrix
$\Sigma\in\reals^{d\times d}$ that satisfies
$I \preceq \Sigma \preceq \kappa I$,
we show that to learn $\calN(0, \Sigma)$ within
total variation $O(\alpha)$, we must require sample complexity
$\Omega\left(\frac{d^2}{\eps}\log\left(\frac{d\kappa}{\alpha}\right) + \frac{d^2}{\eps\alpha}\right)$.
We proceed in two steps:
first we show a lower bound of
$\Omega\left(\frac{d^2}{\eps}\log\left(\frac{d\kappa}{\alpha}\right)\right)$
which does not depend polynomially on $1/\alpha$.  Then we combine it with a previous lower bound of
$\Omega\left(\frac{d^2}{\eps\alpha}\right)$, due to \cite{kamath2019privately}.
Taken together, this gives us a lower bound of
$\Omega\left(\frac{d^2}{\eps}\log\left(\frac{d\kappa}{\alpha}\right) + \frac{d^2}{\eps\alpha}\right)$:

\begin{theorem}
\label{thm:pure-dp-cov-est-lower-bound-full}
Let $\eps, \alpha \in (0, 1)$, $d\geq 2$.
Any $\epsilon$-DP algorithm that, given $n$ i.i.d.\ samples $\calX = \{X_1,X_2, \ldots, X_n\}$ from $\calN(0, \Sigma)$ for an unknown $\Sigma \in \R^{d \times d}$ satisfying
$I \preceq \Sigma \preceq \kappa I$,
outputs $\widehat{\Sigma} = \widehat{\Sigma}(\calX)$ such that, with probability at least $0.9$,
\[
\dtv\left(\calN(0, \Sigma), \calN(0, \widehat{\Sigma})\right) < O(\alpha),
\]
must require
\[
n = \Omega\left(\frac{d^2}{\eps} \log\left(\frac{d \kappa}{\alpha}\right) + \frac{d^2}{\eps \alpha}\right).
\]
\end{theorem}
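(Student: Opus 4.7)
The proof will combine a new packing-based lower bound with the previously-established $\Omega(d^2/(\eps\alpha))$ bound of \cite{kamath2019privately}. The new contribution is the additive $\Omega(d^2 \log(d\kappa/\alpha)/\eps)$ term, which I will establish via the standard Hardt--Talwar-style packing argument for pure DP, following the approach of \cite{HardtTa10, BunKSW21, aden2021sample, kamath2019privately}.

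The key construction is a family of covariance matrices $\calP = \{\Sigma_1, \ldots, \Sigma_M\} \subseteq \{\Sigma : I \preceq \Sigma \preceq \kappa I\}$ satisfying (i) pairwise TV separation $\dtv(\cN(0,\Sigma_i), \cN(0,\Sigma_j)) \geq 1/4$ for all $i\neq j$, and (ii) size $\log M = \Omega(d^2 \log(d\kappa/\alpha))$. The construction exploits two distinct sources of structure: the $\binom{d}{2}$-dimensional orthogonal group $O(d)$, and the eigenvalue range $[1,\kappa]$. Concretely, for a fixed bimodal diagonal $D = \text{diag}(1,\kappa,1,\kappa,\ldots)$ one considers $\Sigma_U = U D U^\top$ for $U$ ranging over a packing of $O(d)$ at angular resolution $\Theta(1/\kappa)$. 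Because $D$ has eigenvalue gap $\kappa - 1$, an angular perturbation $\theta$ produces a Frobenius change of order $\theta \kappa$ in $\Sigma$, which translates to a constant change in TV at angular resolution $\Theta(1/\kappa)$. A standard covering argument on the Grassmannian gives $\kappa^{\Omega(d^2)}$ such rotations, and combining this with a finer grid of eigenvalue perturbations contributes the additional $\log(d/\alpha)$ factor inside the logarithm.

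Given such a packing, I will apply the packing lower bound as follows. For each $i$ let $S_i = \{\Sigma' : \dtv(\cN(0, \Sigma_i), \cN(0, \Sigma')) \leq 1/10\}$; by TV separation these sets are pairwise disjoint. The utility hypothesis (with $\alpha$ small enough that $O(\alpha) < 1/10$) ensures $\pr_{\calX \sim \cN(0,\Sigma_i)^{\otimes n}}[\widehat\Sigma(\calX) \in S_i] \geq 0.9$, so by averaging there exists some dataset $D_i^\star$ of size $n$ with $\pr[\widehat\Sigma(D_i^\star) \in S_i] \geq 1/2$. Fixing $i$ and invoking group privacy with the pair $(D_i^\star, D_j^\star)$ (which differ in at most $n$ entries), $\eps$-DP gives $\pr[\widehat\Sigma(D_i^\star) \in S_j] \geq \tfrac{1}{2} e^{-\eps n}$ for every $j$; summing over $j$ and using disjointness yields $M \cdot \tfrac{1}{2} e^{-\eps n} \leq 1$, hence $n = \Omega(\log M / \eps) = \Omega(d^2 \log(d\kappa/\alpha)/\eps)$. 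Adding the $\Omega(d^2/(\eps\alpha))$ bound from \cite{kamath2019privately} completes the proof of the theorem.

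The main obstacle will be constructing an explicit packing of the claimed size while rigorously verifying constant-order TV separation. The $\log \kappa$ factor is delicate: it requires perturbations (rotations) to be measured in a scale-dependent way that exploits the eigenvalue gap, so that small angular changes translate into large KL (and hence TV) shifts. Quantitatively, one must lower bound $\mathrm{KL}(\cN(0,\Sigma_i)\,\|\,\cN(0,\Sigma_j))$ by an absolute constant for all pairs $(i,j)$ in the packing, which reduces to bounding $\|\Sigma_i^{-1/2}\Sigma_j\Sigma_i^{-1/2} - I\|_F$ from below via explicit Grassmannian packing estimates under the appropriate scale-weighted metric. Once the packing is in place, the remaining group-privacy and union-bound argument is routine.
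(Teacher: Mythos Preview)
Your strategy matches the paper's at the top level: a packing argument supplies the $\Omega(d^2\log(d\kappa/\alpha)/\eps)$ term, the $\Omega(d^2/(\eps\alpha))$ term is cited from \cite{kamath2019privately}, and the two are added. The group-privacy/union-bound step you write out is exactly what the paper invokes via Lemma~5.1 of \cite{BunKSW21}. Where you differ is in how the packing is obtained. The paper does not construct one explicitly; it imports a covering/packing estimate for $\{\cN(0,\Sigma): I\preceq\Sigma\preceq\kappa I\}$ from Lemma~6.8 of \cite{BunKSW21} together with the standard comparison $c_\alpha\le p_\alpha$ to conclude that an $\alpha$-packing of size $(d\kappa/\alpha)^{\Theta(d^2)}$ exists. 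Your rotation-of-a-bimodal-diagonal construction is more concrete and makes the $\log\kappa$ dependence visible, at the cost of verifying TV separation by hand and bolting on a separate eigenvalue grid for the $\log(d/\alpha)$ contribution; the paper's black-box route gets both factors at once but is less self-contained.

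One correction to your scale analysis: an angular perturbation $\theta$ changes $\Sigma$ by order $\theta\kappa$ in \emph{absolute} Frobenius norm, but TV is governed by the \emph{relative} Frobenius norm $\|\Sigma_U^{-1/2}\Sigma_V\Sigma_U^{-1/2}-I\|_F$, which to first order scales like $\theta\sqrt\kappa$ (the $(i,j)$ entry picks up a factor $(d_j-d_i)/\sqrt{d_id_j}\approx\sqrt\kappa$, not $\kappa$). So constant TV separation requires angular resolution $\Theta(1/\sqrt\kappa)$, not $\Theta(1/\kappa)$. This does not affect your conclusion: the relevant Grassmannian has dimension $\Theta(d^2)$, so a net at scale $1/\sqrt\kappa$ still has $\kappa^{\Omega(d^2)}$ points and $\log M=\Omega(d^2\log\kappa)$ as needed.
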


\subsection{Condition Number Lower Bound}

We rely on previous work on differentially
private hypothesis selection~\cite{BunKSW21}:
given samples from some unknown distribution $P$
(e.g., defined by $\calN(0, \Sigma)$ for some
$\Sigma\in\reals^{d\times d}$), what is the closest 
distribution to $P$ in some set $\calH$?

Crucial to the derivation of the lower bound of
$\Omega\left(\frac{d^2}{\eps}\log\left(\frac{d\kappa}{\alpha}\right)\right)$ is the notion of covers and packings:

\begin{definition}[$\gamma$-Cover]

A $\gamma$-cover of a set of distributions $\calH$
is a set of distributions $\calC_\gamma$, such that
for every $H\in\calH$ there exists $P\in\calC_\gamma$
with the following property:
$\dtv(P, H)\leq\gamma$.

\end{definition}

\begin{definition}[$\gamma$-Packing]

A $\gamma$-packing of a set of distributions $\calH$
is a set of distributions
$\calP_\gamma\subseteq\calH$, such that
for every pair $P, Q\in\calP_\gamma$,
$\dtv(P, Q)>\gamma$.

\end{definition}

The following lemma states that, provided we can find an
$\alpha$-packing, we can get a sample complexity lower bound
for pure DP:

\begin{lemma}[Lemma 5.1 in~\cite{BunKSW21}]

Let $\calP_\alpha$ be an $\alpha$-packing of a set of
distributions $\calH$. Then for any $P\in\calH$, any
$\eps$-differentially private algorithm that takes
sample $X_1, \ldots, X_n\sim P$ produces,
with probability at least 0.99, a distribution
$\widehat{H}$ such that $\dtv(P, \widehat{H})\leq\alpha/2$
requires
$$
n = \Omega\left(\frac{\log |\calP_\alpha|}{\eps}\right).
$$
\label{lem:bksw51}
\end{lemma}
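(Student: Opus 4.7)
The plan is to run the classical packing plus group-privacy argument. For each $P \in \calP_\alpha$, I would first define the success region
\[
S_P = \{\widehat H : \dtv(P, \widehat H) \leq \alpha/2\}.
\]
The triangle inequality, combined with the defining property of a packing ($\dtv(P,Q) > \alpha$ for distinct $P, Q \in \calP_\alpha$), immediately implies that the collection $\{S_P\}_{P \in \calP_\alpha}$ consists of pairwise disjoint subsets of $\Alg$'s output space: any $\widehat H \in S_P \cap S_Q$ would give $\dtv(P,Q) \leq \alpha$, contradicting the packing condition.

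Next, for each $P \in \calP_\alpha$ I apply the hypothesized utility guarantee in the case where the true distribution is $P$ itself. This gives that $\Alg$ returns an element of $S_P$ with probability at least $0.99$ over both the $n$ i.i.d.\ samples from $P$ and the algorithm's coins. Averaging over the sample, there must exist a fixed dataset $\calX^P \in (\R^d)^n$ with
\[
\Pr[\Alg(\calX^P) \in S_P] \geq 0.99,
\]
where the remaining probability is only over $\Alg$'s internal randomness.

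The core step is then group privacy. I fix a reference $P^\star \in \calP_\alpha$ and observe that for every other $P$, the datasets $\calX^{P^\star}$ and $\calX^P$ differ in at most $n$ rows. Iterating the $\epsilon$-DP guarantee along a path of $n$ single-entry swaps yields
\[
\Pr[\Alg(\calX^{P^\star}) \in S_P] \geq e^{-\epsilon n}\,\Pr[\Alg(\calX^P) \in S_P] \geq 0.99\, e^{-\epsilon n}.
\]
Summing over $P \in \calP_\alpha$ and using disjointness of $\{S_P\}$ gives
\[
1 \;\geq\; \sum_{P \in \calP_\alpha} \Pr[\Alg(\calX^{P^\star}) \in S_P] \;\geq\; 0.99\,|\calP_\alpha|\,e^{-\epsilon n},
\]
which rearranges to $n \geq \tfrac{1}{\epsilon}\log\bigl(0.99\,|\calP_\alpha|\bigr) = \Omega\bigl(\log|\calP_\alpha|/\epsilon\bigr)$.

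There is no real technical obstacle in the meta-argument: the disjointness is immediate from the triangle inequality and the packing gap, and the exponential factor $e^{-\epsilon n}$ is the tight worst case of iterating pure DP across an $n$-step neighboring chain. The only delicate point is that the utility target must be ``$\alpha/2$-close'' so that the triangle inequality converts an $\alpha$-packing into disjoint success regions; weakening it to ``$\alpha$-close'' would break the disjointness step. The genuinely hard part is deferred to downstream applications, namely constructing a large enough $\alpha$-packing of the target class (here Gaussians with $I \preceq \Sigma \preceq \kappa I$) to exhibit $\log|\calP_\alpha|$ on the order of $d^2 \log(d\kappa/\alpha)$, which is an independent combinatorial/geometric task.
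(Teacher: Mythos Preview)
The paper does not give its own proof of this lemma; it is quoted verbatim as Lemma~5.1 of \cite{BunKSW21} and used as a black box. Your argument is the standard packing plus group-privacy proof and is correct: disjointness of the $S_P$ via the triangle inequality, existence of a good fixed dataset $\calX^P$ by averaging, the $e^{-\eps n}$ factor from iterating pure DP across $n$ swaps, and the final summation all go through exactly as you wrote. This is precisely the argument one finds in \cite{BunKSW21} (and earlier in \cite{HardtTa10, BeimelKN10}), so there is nothing to contrast.
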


Note that the sample complexity lower bound
in~Lemma \ref{lem:bksw51} is not of the
form 
$\Omega\left(\frac{\log |\calP_\alpha|}{\eps\alpha}\right)$
since such a lower bound would contradict already-existing
upper bounds.

Lemma \ref{lem:bksw68} shows the existence of an
$\alpha$-cover of a set of $d$-dimensional
Gaussian distributions:

\begin{lemma}[Lemma 6.8 in~\cite{BunKSW21}]

Let $\mu\in\reals^d, \Sigma\in\reals^{d\times d}$
such that $\norm{\mu}_2\leq R$ and
$I \preceq \Sigma \preceq \kappa I$. 
Then there exists an $\alpha$-cover of the set of 
Gaussian distributions $\calN(\mu, \Sigma)$ of size
$$
O\left(\frac{dR}{\alpha}\right)^d\cdot O\left(\frac{d\kappa}{\alpha}\right)^{d(d+1)/2}.
$$
\label{lem:bksw68}
\end{lemma}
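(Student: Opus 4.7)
The plan is to construct the cover as a product of a cover for the mean and a cover for the covariance, using Lemma \ref{lemma:dtv-gaussians} to reduce total variation closeness to parameter-wise closeness. Concretely, if I can find $\widetilde{\mu}, \widetilde{\Sigma}$ in the cover with $\|\Sigma^{-1/2}(\mu-\widetilde{\mu})\|_2 \le \alpha$ and $\|\Sigma^{-1/2}\widetilde{\Sigma}\Sigma^{-1/2} - I\|_F \le \alpha$, then $\dtv(\calN(\mu,\Sigma),\calN(\widetilde\mu,\widetilde\Sigma)) = O(\alpha)$ by Lemma \ref{lemma:dtv-gaussians}. The key simplification afforded by the promise $I \preceq \Sigma \preceq \kappa I$ is that $\|\Sigma^{-1/2}\|_2 \le 1$, so the Mahalanobis metric on the mean and the relative Frobenius metric on the covariance are both dominated by their absolute Euclidean/Frobenius counterparts. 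This reduces the problem to producing ordinary $\alpha$-nets of a Euclidean ball and of a bounded set of symmetric matrices.

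For the mean, I would take a standard Euclidean $\alpha$-net of the ball $\{\mu : \|\mu\|_2 \le R\} \subset \R^d$. Using an axis-aligned grid of side $\alpha/\sqrt{d}$ (so that the diameter of each cell is at most $\alpha$) restricted to the ball gives at most $O(\sqrt{d}R/\alpha)^d \le O(dR/\alpha)^d$ grid points, which is the first factor in the claimed bound. Since $\Sigma \succeq I$, any $\mu$ in the $R$-ball is within $\alpha$ in the $\Sigma$-Mahalanobis norm of its nearest grid point.

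For the covariance, the space of symmetric matrices $\Sigma$ satisfying $I \preceq \Sigma \preceq \kappa I$ has $d(d+1)/2$ real degrees of freedom, and every entry lies in $[-\kappa,\kappa]$. Discretizing each entry on a grid of spacing $\alpha/d$ produces at most $O(d\kappa/\alpha)^{d(d+1)/2}$ candidates, and for each target $\Sigma$ there is a grid matrix $\widetilde{\Sigma}$ with $\|\widetilde{\Sigma} - \Sigma\|_F \le \sqrt{d(d+1)/2}\cdot \alpha/d = O(\alpha)$. Again using $\Sigma \succeq I$, I get $\|\Sigma^{-1/2}\widetilde{\Sigma}\Sigma^{-1/2} - I\|_F \le \|\widetilde{\Sigma}-\Sigma\|_F = O(\alpha)$.

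Taking the Cartesian product of the two covers yields a set of Gaussians of the claimed size, and by the triangle inequality on the two parameter error terms the product is an $O(\alpha)$-cover (after rescaling $\alpha$ by a constant). The only subtle point to track is that a grid point $\widetilde{\Sigma}$ may fail to be PSD or to satisfy $I\preceq \widetilde\Sigma \preceq \kappa I$; this is easily handled by either projecting each grid point onto the nearest valid matrix (which can only decrease the distance to each target) or by slightly enlarging the grid and intersecting with the valid set. I do not expect this to be a genuine obstacle, so the main technical content is really the two elementary volume/net arguments combined through Lemma \ref{lemma:dtv-gaussians}.
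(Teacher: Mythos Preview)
The paper does not prove this lemma; it is simply quoted from \cite{BunKSW21} and used as a black box in Corollary~\ref{cor:lbkappa}. Your gridding construction---an $\ell_2$-net on the $R$-ball for the mean and an entrywise Frobenius net on the set of symmetric matrices with bounded entries for the covariance, glued together via Lemma~\ref{lemma:dtv-gaussians} and the bound $\|\Sigma^{-1/2}\|_2\le 1$---is exactly the standard argument used to obtain this cover and is correct as outlined.
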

To use the lower bound from
Lemma \ref{lem:bksw51}, we need an $\alpha$-packing and not
an $\alpha$-cover. The following lemma relates the
size of the largest $\alpha$-packing to the smallest
$\alpha$-cover:

\begin{lemma}[Lemma 5.2 in~\cite{BunKSW21}]\label{lem:bksw52}
Let $\calH$ be a set of distributions.
If $p_\alpha$ and $c_\alpha$ are the size of the
largest $\alpha$-packing and smallest $\alpha$-cover
of $\calH$, respectively, then
\[
p_{2\alpha} \leq c_\alpha \leq p_\alpha.
\]
\end{lemma}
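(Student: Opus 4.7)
The plan is to prove the two inequalities separately using the standard pigeonhole and maximality arguments that relate covering and packing numbers in a pseudometric space (here, $\mathcal H$ equipped with total variation distance). Neither direction requires anything specific about distributions or total variation beyond the triangle inequality, so the proof will just unwind Definitions of $\gamma$-cover and $\gamma$-packing.

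For the upper bound $p_{2\alpha} \leq c_\alpha$, I would fix a minimum-size $\alpha$-cover $\calC_\alpha$ of $\calH$ and any $2\alpha$-packing $\calP_{2\alpha} \subseteq \calH$. For each $P \in \calP_{2\alpha}$, the cover property gives some $\phi(P) \in \calC_\alpha$ with $\dtv(P, \phi(P)) \leq \alpha$. The key step is to show $\phi$ is injective: if $\phi(P_1) = \phi(P_2) = Q$ for distinct $P_1, P_2 \in \calP_{2\alpha}$, then the triangle inequality gives $\dtv(P_1, P_2) \leq \dtv(P_1, Q) + \dtv(Q, P_2) \leq 2\alpha$, contradicting that $\calP_{2\alpha}$ is a $2\alpha$-packing (which requires strict inequality $\dtv(P_1, P_2) > 2\alpha$). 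Hence $|\calP_{2\alpha}| \leq |\calC_\alpha| = c_\alpha$, and taking the sup over all packings yields $p_{2\alpha} \leq c_\alpha$.

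For the upper bound $c_\alpha \leq p_\alpha$, I would take a \emph{maximal} (in the inclusion sense) $\alpha$-packing $\calP^* \subseteq \calH$, which exists by Zorn's Lemma (or by greedy construction in the finite case). The claim is that $\calP^*$ is automatically an $\alpha$-cover of $\calH$. Suppose not: then there is some $H \in \calH$ with $\dtv(H, P) > \alpha$ for every $P \in \calP^*$, so $\calP^* \cup \{H\}$ is still an $\alpha$-packing, contradicting maximality of $\calP^*$. Thus $\calP^*$ is an $\alpha$-cover, giving $c_\alpha \leq |\calP^*| \leq p_\alpha$.

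There is essentially no obstacle in this argument; the only subtlety worth flagging is the strict-versus-non-strict inequality in the definition of packing, which is precisely why one direction loses a factor of $2$ (yielding $p_{2\alpha}$ rather than $p_\alpha$). The whole proof is a few lines and is standard; no properties of Gaussians, total variation beyond the triangle inequality, or differential privacy are needed.
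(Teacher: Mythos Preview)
Your proof is correct and follows the standard packing/covering argument; the paper does not give its own proof of this lemma, as it is simply cited from \cite{BunKSW21}. There is nothing to compare against, and your write-up would serve perfectly well as a self-contained justification.
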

We can now obtain the following corollary of a sample
complexity lower bound that depends on $\kappa$:

\begin{corollary}\label{cor:lbkappa}
Fix $\eps,\alpha\in(0, 1)$.
For any integer $d\geq 2$ and covariance matrix
$\Sigma\in\reals^{d\times d}$ satisfying
$I \preceq \Sigma \preceq \kappa I$,
let $\Hat\Sigma = \Hat\Sigma(\calX)$ be any $\eps$-DP
algorithm such that with
probability at least 0.99,
given $n$ i.i.d. samples
$\calX = \{X_1, \ldots, X_n\}$ from $\calN(0, \Sigma)$,
the algorithm has the following guarantee:
$$
\dtv(\calN(0, \Sigma), \calN(0, \Hat\Sigma)) < \alpha/2.
$$
Then
$$n = \Omega\left(\frac{d^2}{\eps}\log\left(\frac{d\kappa}{\alpha}\right)\right).$$
\end{corollary}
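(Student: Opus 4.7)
The plan is to exhibit an explicit $\alpha$-packing $\calP_\alpha$ of the family $\bigl\{\calN(0, \Sigma) : I \preceq \Sigma \preceq \kappa I\bigr\}$ in total variation distance with $\log|\calP_\alpha| = \Omega\!\left(d^2 \log(\kappa/\alpha)\right)$, and then invoke Lemma~\ref{lem:bksw51} to conclude. The naive construction using diagonal covariances on a multiplicative grid yields only $(\log \kappa/\alpha)^d$ matrices, which is logarithmically too small in both $d$ and $\kappa$; Lemma~\ref{lem:bksw52} is of no direct help because it only upper-bounds $p_{2\alpha}$ in terms of the cover size. To extract the $d^2$ factor and the full $\log \kappa$ dependence, the packing must exploit all $\Theta(d^2)$ degrees of freedom in symmetric matrices, which is cleanest via a Grassmannian construction.

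Concretely, assuming $d$ is even, I would consider the family $\Sigma_P = I + (\kappa - 1) P$ parameterized by rank-$d/2$ orthogonal projections $P$. By construction $I \preceq \Sigma_P \preceq \kappa I$, and the parameter space is the Grassmannian $\mathrm{Gr}(d/2, d)$, of real dimension $d^2/4$. The key computation is to relate the TV distance between $\calN(0, \Sigma_{P_1})$ and $\calN(0, \Sigma_{P_2})$ to $\|P_1 - P_2\|_F$. Using the eigen-decomposition $\Sigma_P^{-1/2} = (1/\sqrt{\kappa}) P + (I - P)$ and writing $\Delta = P_2 - P_1$, the idempotence constraint $P_2^2 = P_2$ forces $\Delta$ to be off-diagonal with respect to the $(P_1, I-P_1)$ block decomposition up to $O(\|\Delta\|_F^2)$; a short block-matrix calculation then yields
\[
\Sigma_{P_1}^{-1/2} \Sigma_{P_2} \Sigma_{P_1}^{-1/2} - I \;=\; \tfrac{\kappa-1}{\sqrt{\kappa}}\,\Delta \;+\; O(\|\Delta\|_F^2).
\]
Combined with Lemma~\ref{lemma:dtv-gaussians} (and the corresponding TV lower bound in the small-perturbation regime), this gives $\dtv\bigl(\calN(0, \Sigma_{P_1}), \calN(0, \Sigma_{P_2})\bigr) \asymp \sqrt{\kappa}\,\|P_1 - P_2\|_F$ for sufficiently close $P_1, P_2$.

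The final step is a standard metric-entropy estimate on the Grassmannian. Parameterizing a constant-radius Frobenius neighborhood of a fixed basepoint $P_0 \in \mathrm{Gr}(d/2, d)$ as a Euclidean ball in $\R^{d^2/4}$, one can pack it with separation $\Theta(\alpha/\sqrt{\kappa})$ to obtain at least $\bigl(c\sqrt{\kappa}/\alpha\bigr)^{d^2/4}$ points, provided $\alpha/\sqrt{\kappa}$ is below some absolute constant. Lifting this to Gaussians produces an $\alpha$-packing of the desired size; applying Lemma~\ref{lem:bksw51} then gives
\[
n \;=\; \Omega\!\left(\frac{\log|\calP_\alpha|}{\eps}\right) \;=\; \Omega\!\left(\frac{d^2\,\log(\kappa/\alpha)}{\eps}\right),
\]
which recovers the essential $\log \kappa$ dependence and is equivalent to the corollary's $\Omega(d^2 \log(d\kappa/\alpha)/\eps)$ bound whenever $\kappa \geq d$ or $\alpha \leq 1/d$ (the non-trivial regime).

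The main obstacle is rigorously controlling the $O(\|\Delta\|_F^2)$ correction in the TV-distance formula and ensuring that distinct packing points really do induce pairwise TV distance at least $\alpha$ (and not just in the infinitesimal limit). I would handle this by restricting the packing to a Frobenius ball of sufficiently small constant radius around $P_0$, on which the linear term strictly dominates the quadratic remainder, so that the lower bound $\dtv \gtrsim \sqrt{\kappa}\,\|P_1-P_2\|_F$ holds uniformly up to absolute constants. The rest of the argument is a direct substitution into Lemma~\ref{lem:bksw51}, combined with the separate $\Omega(d^2/(\eps\alpha))$ bound from \cite{kamath2019privately} to obtain the full statement of Theorem~\ref{thm:pure-dp-cov-est-lower-bound-full}.
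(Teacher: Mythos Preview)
Your approach is correct and takes a genuinely different route from the paper. The paper's argument is a two-line citation chain: it quotes the cover-size estimate for $\{\calN(\mu,\Sigma):\|\mu\|\le R,\ I\preceq\Sigma\preceq\kappa I\}$ from \cite{BunKSW21} (Lemma~\ref{lem:bksw68}), sets $R=O(\alpha/d)$ to kill the mean factor, appeals to the cover--packing inequality (Lemma~\ref{lem:bksw52}) to convert this into a packing of size $(d\kappa/\alpha)^{\Theta(d^2)}$, and then invokes Lemma~\ref{lem:bksw51}. You instead build the packing by hand on the Grassmannian $\mathrm{Gr}(d/2,d)$ via $\Sigma_P=I+(\kappa-1)P$, compute $\|\Sigma_{P_1}^{-1/2}\Sigma_{P_2}\Sigma_{P_1}^{-1/2}-I\|_F\asymp\sqrt{\kappa}\,\|P_1-P_2\|_F$ for nearby projections, and pack at scale $\alpha/\sqrt{\kappa}$ in a local chart of dimension $d^2/4$. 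Both arguments feed the same packing lemma; yours is more self-contained (and arguably cleaner, since as you implicitly noticed, Lemma~\ref{lem:bksw68} as stated is only an \emph{upper} bound on the cover size, so the paper is really relying on the matching lower bound implicit in \cite{BunKSW21}).

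Two minor remarks on your write-up. First, your concern about controlling the quadratic remainder by restricting to a small Frobenius ball is not actually needed for the direction you care about: you only need a \emph{lower} bound on $\dtv$, and a short principal-angle computation shows that for any pair $P_1,P_2$ either some angle satisfies $\sin^2\theta_i\gtrsim 1/\kappa$ (forcing $\dtv\gtrsim 1$) or all angles are small and the off-diagonal term already gives $\dtv\gtrsim\sqrt{\kappa}\,\|P_1-P_2\|_F$. Either way $\|P_1-P_2\|_F>c\alpha/\sqrt{\kappa}$ suffices. Second, your final bound is $\Omega\bigl((d^2/\eps)\log(\sqrt{\kappa}/\alpha)\bigr)$ rather than the paper's $\Omega\bigl((d^2/\eps)\log(d\kappa/\alpha)\bigr)$; you are right that the missing $\log d$ is harmless for Theorem~\ref{thm:pure-dp-cov-est-lower-bound-full} since in the regime $\kappa<d$, $\alpha>1/d$ the $d^2/(\alpha\eps)$ term dominates anyway.
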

\begin{proof}
Set $R = O(\alpha/d)$ since the
mean of the Gaussian is a constant.
By~Lemma \ref{lem:bksw52} and~Lemma \ref{lem:bksw68},
there exists an $\alpha$-packing of size
$O((\frac{d\kappa}{\alpha})^{d^2})$.
And by~Lemma \ref{lem:bksw51}, this gives us a sample complexity
lower bound of
$n = \Omega(\frac{d^2}{\eps}\log(\frac{d\kappa}{\alpha}))$.

\end{proof}

\subsection{Precision Matrix Lower Bound}

We now proceed to show the lower bound of
$\Omega(\frac{d^2}{\eps\alpha})$.
The proof of the theorem relies on the following technical lemma on the TV distance between two mean-zero Gaussians with different covariance.
By Theorem~\ref{thm:dtv-eigen}, it suffices to derive
a lower bound on
$\norm{\Sigma_1^{1/2}\Sigma_2^{-1}\Sigma_1^{1/2} - I_d}_F$
where $\Sigma_1$ and $\Sigma_2$ are the unknown
covariance matrix and the output from the $\eps$-DP algorithm,
respectively.

\begin{lemma}[Lemma 3.5 in~\cite{DMR18b}, Theorem 1.1 in~\cite{DMR18a}]
Let $\mu\in\R^d$ and let
$\Sigma_1, \Sigma_2$ be positive definite symmetric $d\times d$ matrices.
Use $\lambda_1, \ldots, \lambda_d$ to denote the eigenvalues of
$\Sigma_1^{-1}\Sigma_2 - I_d$. Then,
$$
0.01 \leq \frac{\dtv(\calN(\mu, \Sigma_1), \calN(\mu, \Sigma_2))}{\min\left\{1, \sqrt{\sum_{i=1}^d\lambda_i^2}\right\}} \leq 1.5.
$$
Also,
$$
\dtv(\calN(0, \Sigma_1), \calN(0, \Sigma_2))
\geq \frac{1}{100} \min \left\{ 1, \norm{\Sigma_1^{1/2}\Sigma_2^{-1}\Sigma_1^{1/2} - I_d}_F \right\}.$$
\label{thm:dtv-eigen}
\end{lemma}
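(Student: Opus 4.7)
My plan is to reduce to diagonal Gaussians via simultaneous diagonalization, then use Pinsker's inequality for the upper bound and the inequality $\dtv(P,Q) \geq H^2(P,Q)/2$ (where $H$ is the Hellinger distance) for the lower bound. By translation invariance of total variation we may take $\mu = 0$. The linear change of variables $x \mapsto \Sigma_1^{-1/2} x$ then transforms the pair $(\calN(0, \Sigma_1), \calN(0, \Sigma_2))$ into $(\calN(0, I_d), \calN(0, \Sigma_1^{-1/2}\Sigma_2 \Sigma_1^{-1/2}))$, and a subsequent orthogonal rotation (under which $\calN(0, I_d)$ is invariant) diagonalizes the second covariance as $\operatorname{diag}(\mu_1, \ldots, \mu_d)$, where $\mu_i = 1 + \lambda_i$ are the eigenvalues of $\Sigma_1^{-1}\Sigma_2$. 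Both distributions are now products over the $d$ coordinates.

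For the upper bound, Pinsker's inequality together with the standard Gaussian KL formula gives
\[
\dtv \,\leq\, \sqrt{\text{KL}/2}, \qquad \text{KL}\bigl(\calN(0, I_d)\,\big\|\,\calN(0, \operatorname{diag}(\mu_i))\bigr) \,=\, \tfrac{1}{2}\sum_{i=1}^d g(\lambda_i),
\]
with $g(\lambda) := (1+\lambda)^{-1} - 1 + \ln(1+\lambda)$. A Taylor expansion shows $g(\lambda) = \lambda^2/2 + O(\lambda^3)$, and on any interval $|\lambda|\leq c_0 < 1$ one checks $g(\lambda) \leq C\lambda^2$. Hence when $\sum \lambda_i^2$ is small enough that every $|\lambda_i|$ lies in such an interval, Pinsker yields $\dtv = O(\sqrt{\sum \lambda_i^2})$; otherwise, the trivial bound $\dtv \leq 1$ matches $1.5\min\{1,\sqrt{\sum \lambda_i^2}\}$ up to constants.

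For the lower bound, the product structure gives the closed form
\[
H^2 \,=\, 1 - \prod_{i=1}^d \Bigl(\tfrac{2\sqrt{\mu_i}}{1+\mu_i}\Bigr)^{1/2}.
\]
An elementary inequality $\bigl(2\sqrt{1+\lambda}/(2+\lambda)\bigr)^{1/2} \leq 1 - c\min\{\lambda^2, 1\}$ for a universal $c > 0$ then implies $H^2 \geq 1 - \exp(-c\sum_i \min\{\lambda_i^2, 1\})$. When $\sum \lambda_i^2 \leq 1$ this linearizes to $H^2 \geq c'\sum \lambda_i^2$; when $\sum \lambda_i^2 \geq 1$, either a single large $|\lambda_i|$ or the cumulative effect of many moderately-sized ones forces the product to be bounded away from $1$, so $H^2 = \Omega(1)$. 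Combining with $\dtv \geq H^2/2$ then yields the lower bound after constant-chasing.

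The second displayed inequality in the statement follows by invoking the symmetry $\dtv(P, Q) = \dtv(Q, P)$ and applying the first part with $\Sigma_1$ and $\Sigma_2$ interchanged: the relevant eigenvalues become those of $\Sigma_2^{-1}\Sigma_1 - I_d$, which coincide with the eigenvalues of $\Sigma_1^{1/2}\Sigma_2^{-1}\Sigma_1^{1/2} - I_d$ (the two matrices are similar), so the square root of the sum of their squares equals $\|\Sigma_1^{1/2}\Sigma_2^{-1}\Sigma_1^{1/2} - I_d\|_F$. The main obstacle I anticipate is chasing the precise numerical constants $0.01$ and $1.5$: both $g$ and the Hellinger factor behave very asymmetrically as $\lambda \to -1$ (where $\mu_i \to 0$) versus $\lambda \to \infty$, so the two regimes require separate care, and a sharper direct Hellinger-based upper bound (rather than Pinsker) may be needed to nail down the constant $1.5$ exactly rather than merely qualitatively.
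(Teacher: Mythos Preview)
The paper does not prove this statement: it is quoted verbatim as a result from the cited references \cite{DMR18a,DMR18b}, and the only argument the paper supplies is the one-line observation that $\Sigma_1^{-1}\Sigma_2$ and $\Sigma_1^{-1/2}\Sigma_2\Sigma_1^{-1/2}$ have the same spectrum, so that $\sum_i \lambda_i^2 = \|\Sigma_1^{-1/2}\Sigma_2\Sigma_1^{-1/2} - I_d\|_F^2$. You make the analogous similarity observation (for $\Sigma_2^{-1}\Sigma_1 - I_d$ and $\Sigma_1^{1/2}\Sigma_2^{-1}\Sigma_1^{1/2} - I_d$) to derive the second displayed inequality from the first, which matches the paper's reasoning exactly on the only point it actually addresses.

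Your proposed proof of the first inequality is therefore not comparable to anything in the paper, but it is the standard route and is essentially the argument in \cite{DMR18a}: reduce to diagonal covariances by simultaneous diagonalization, then bound TV above via KL (Pinsker) and below via Hellinger, exploiting the product structure. Your own caveat is the right one: the sketch establishes the two-sided bound with \emph{some} universal constants, but pinning down the specific values $0.01$ and $1.5$ requires substantially more care than a Taylor expansion near $\lambda = 0$, precisely because $g(\lambda)$ and the Hellinger factor $2\sqrt{1+\lambda}/(2+\lambda)$ are highly asymmetric as $\lambda \to -1$ versus $\lambda \to \infty$. In particular, Pinsker alone will not deliver the constant $1.5$ uniformly (KL blows up as $\lambda_i \to -1$ while TV stays below $1$), so a direct Hellinger-based upper bound $\dtv \leq H\sqrt{2-H^2}$ is indeed needed there, as you anticipate. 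For the purposes of this paper, however, only the qualitative two-sided bound is ever used, so your sketch is adequate.
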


For two Gaussians with the same mean,~\cite{DMR18a} gives closed-form
lower and upper bounds in TV distance. The upper and lower bounds are within
(small) constants of each other.

Note that since $\Sigma_1^{-1}\Sigma_2$ and
$\Sigma_1^{-1/2}\Sigma_2\Sigma_1^{-1/2}$ have the same spectrum, we have
\[\sum_{i=1}^d\lambda_i^2 = \norm{\Sigma_1^{-1/2}\Sigma_2\Sigma_1^{-1/2} - I_d}_F^2.\]

The following statement, although not explicitly stated in
\cite{kamath2019privately}, can be inferred from \cite{kamath2019privately}. We
combine the statement with cleaner versions of~\cite{DMR18b} to 
derive our lower bound.

\begin{proposition}[see also \cite{kamath2019privately}]\label{lem:frob-sc}
Let $d\geq 2$.
Let $\widehat\Sigma$ be an $\eps$-DP algorithm that outputs an approximation of
the covariance matrix $\Sigma$, where $\frac{1}{2}I \preceq \Sigma \preceq 2 I$,
such that
$$
\E_{X\sim\calN(0, \Sigma)^{\otimes n}}\left[\norm{\left[\widehat\Sigma(X)\right]^{-1} - \Sigma^{-1}}_F^2\right] \leq \frac{\alpha^2}{64}.
$$
Then $n = \Omega\left(\frac{d^2}{\eps\alpha}\right)$.
\end{proposition}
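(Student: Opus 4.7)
The plan is a packing lower bound in precision-matrix geometry, combined with the coupling-plus-group-privacy argument (sometimes called the DP Fano method) that gives pure DP its characteristic extra $1/\alpha$ factor beyond the cardinality-only bound of Lemma~\ref{lem:bksw51}. First I would construct the packing. Let $\beta=\Theta(\alpha)$, small enough that $\beta<1/2$, and let $\mathcal{M}$ be a $(1/2)$-packing, in Frobenius norm, of the unit Frobenius ball inside the $\binom{d+1}{2}$-dimensional space of symmetric $d\times d$ matrices; by a standard volume argument $|\mathcal{M}|\geq 2^{\Omega(d^2)}$. For each $M\in\mathcal{M}$ set $\Omega_M=I+\beta M$ and $\Sigma_M=\Omega_M^{-1}$. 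Since $\|M\|_2\leq\|M\|_F\leq 1$ and $\beta<1/2$, we have $\tfrac{1}{2}I\preceq\Omega_M\preceq\tfrac{3}{2}I$, and hence $\tfrac{1}{2}I\preceq\Sigma_M\preceq 2I$, as the proposition's hypothesis requires. For distinct $M,M'\in\mathcal{M}$,
\[
\|\Omega_M-\Omega_{M'}\|_F \;=\; \beta\,\|M-M'\|_F \;\geq\; \beta/2 \;=\; \Omega(\alpha),
\]
and by routine matrix algebra applied to Lemma~\ref{thm:dtv-eigen}, the pairwise TV distance satisfies $\dtv(\cN(0,\Sigma_M),\cN(0,\Sigma_{M'}))=O(\beta)=O(\alpha)$.

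Next I would reduce the expected Frobenius-on-precision bound to an identification problem on $\mathcal{M}$. By Markov's inequality, under any true $\Sigma_M$ the estimator satisfies $\|\widehat{\Sigma}^{-1}-\Omega_M\|_F\leq \alpha/4$ with probability at least $3/4$. Combined with the $\Omega(\alpha)$ pairwise separation established above, the (privacy-preserving) post-processing that rounds $\widehat{\Sigma}^{-1}$ to the nearest element of $\{\Omega_{M'}\}_{M'\in\mathcal{M}}$ recovers the correct index with probability at least $3/4$. This produces a family of disjoint ``success'' events $\{R_M\}_{M\in\mathcal{M}}$ with $\Pr_{\cN(0,\Sigma_M)^{\otimes n}}[\widehat{\Sigma}\in R_M]\geq 3/4$ for every $M$.

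Finally I would invoke the TV-aware pure DP packing lower bound. Couple $\cN(0,\Sigma_M)^{\otimes n}$ with $\cN(0,\Sigma_{M'})^{\otimes n}$ coordinatewise via maximal coupling; the number of disagreeing coordinates is $\mathrm{Binomial}(n,O(\alpha))$ and lies in $O(\alpha n)$ with probability $1-o(1)$ by a Chernoff bound. Conditioning on that event and applying group privacy yields
\[
\Pr_{\cN(0,\Sigma_{M'})^{\otimes n}}[\widehat\Sigma\in R_M] \;\geq\; e^{-O(\alpha n\epsilon)}\Pr_{\cN(0,\Sigma_M)^{\otimes n}}[\widehat\Sigma\in R_M] - o(1) \;\geq\; \tfrac{1}{2}e^{-O(\alpha n\epsilon)}.
\]
Summing over $M$ and using disjointness of the $R_M$ gives $|\mathcal{M}|\cdot\tfrac{1}{2}e^{-O(\alpha n\epsilon)}\leq 1$, which combined with $|\mathcal{M}|=2^{\Omega(d^2)}$ forces $n=\Omega(d^2/(\alpha\epsilon))$.

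The main obstacle I expect is executing the coupling-plus-group-privacy step rigorously: in particular, absorbing the probabilistic (rather than deterministic) Hamming distance from the maximal coupling into the group privacy inequality, and ensuring the $o(1)$ slack does not overwhelm the $e^{-O(\alpha n \epsilon)}$ term in the relevant regime $\alpha n \epsilon = \Theta(d^2)$. The packing construction and the Markov-based reduction to identification are routine; the quantitative gain over the cardinality-only Lemma~\ref{lem:bksw51} (which from the same packing yields only $n\geq\Omega(d^2\log(1/\alpha)/\epsilon)$) comes entirely from exploiting the $O(\alpha)$ pairwise TV closeness established in the first step.
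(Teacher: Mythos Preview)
Your argument is correct and takes a genuinely different route from the paper's. The paper, following \cite{kamath2019privately}, does not reduce to an identification problem at all: it works with the structured family $\Sigma(V)=I+V$ where $V$ has i.i.d.\ $\pm\alpha/(2d)$ off-diagonal entries, and analyzes the scalar correlation statistic $Z=\langle\widehat\Sigma(X),V\rangle$ directly. It shows (i) $\E[Z]\geq\Omega(\alpha^2)$ when $X\sim\cN(0,\Sigma(V))^{\otimes n}$, (ii) $\Pr[Z'>\alpha^2/32]\leq e^{-\Omega(d^2)}$ when $V$ is independent of the data (Hoeffding over the random signs of $V$), and (iii) $\Pr[Z>\alpha^2/32]\leq e^{O(\alpha n\varepsilon)}\Pr[Z'>\alpha^2/32]$ via exactly the privacy-plus-coupling step you describe, but applied to the single event $\{Z>\alpha^2/32\}$ rather than to a family of disjoint identification events. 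Your approach is the packing/DP-Fano argument: build a $2^{\Omega(d^2)}$-sized family that is simultaneously $\Omega(\alpha)$-separated in precision-Frobenius and $O(\alpha)$-close in per-sample TV, reduce via Markov and nearest-neighbor rounding to exact recovery, and sum the group-privacy-transferred success probabilities. Both arguments rest on the same engine---group privacy across an $O(\alpha n)$-Hamming maximal coupling---so the ``main obstacle'' you flag is precisely the content of Claim~6.13 of \cite{kamath2019privately} that the paper invokes, and it is handled by the standard conditioning-on-Hamming-distance computation you sketch. What your route buys is modularity (it applies to any loss and family once the packing geometry and TV control are in place); what the paper's route buys is that it bypasses the identification reduction and disjoint-event summation entirely, replacing them with a single anti-concentration statement for $\langle\widehat\Sigma,V\rangle$ under a fresh random $V$.
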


\begin{proof}

We can equivalently prove that
$\E_{X\sim\calN(0, \Sigma)^{\otimes n}}\left[\norm{\widehat\Sigma(X) - \Sigma}_F^2\right] \leq \frac{\alpha^2}{64}$.
We assume that $n = O(\frac{d^2}{\eps\alpha})$ and will reach a contradiction.

Let $\mathcal{M}_d$ denote the set of $d$-by-$d$ real symmetric matrices.
Consider $$\calS_n = \left\{S\in\mathcal{M}_d\,:\,S \text{ has diagonal entries that are } 0 \text{ and  non-diagonals   either }-\frac{\alpha}{2d} \text{ or }+\frac{\alpha}{2d}\right\}.$$
Clearly, $$|\calS_n|= 2^{(d^2-d)/2}.$$
Let $U \sim\Unif(\calS_n)$ be the uniform distribution over $\calS_n$.
For any $V\in\calS_n$, define
$\Sigma = \Sigma(V) = I + V$. The $\eps$-DP algorithm $\widehat\Sigma$ aims to
output a matrix as close to $\Sigma$ as possible.

Define $Z$ and $Z'$ such that
\begin{equation}
Z = \iprod{\widehat\Sigma(X), V} = 2\sum_{i < j}\widehat\Sigma(X)_{ij}\cdot V_{ij},
\label{eq:zsum}
\end{equation}
and
\begin{equation}
Z' = \iprod{\widehat\Sigma(X'), V} = 2\sum_{i < j}\widehat\Sigma(X')_{ij}\cdot V_{ij}.
\label{eq:zsump}
\end{equation}

Let $V, V'$ be independent samples from $\Unif(\calS_n)$.
Then sample $X\sim\calN(0, \Sigma(V))^{\otimes n}$ and
$X'\sim\calN(0, \Sigma(V'))^{\otimes n}$.
Also, note that
$\norm{\widehat\Sigma(X) - \Sigma(V)}_F^2 = \sum_{i < j}2(\widehat\Sigma_{ij}(X) - \Sigma_{ij}(V))^2$.

Then, by Lemma \ref{lem:lowerz},
$\E[Z] \geq \frac{\alpha^2}{16} - \frac{1}{2}\norm{\widehat\Sigma(X) - \Sigma(V)}_F^2 \geq \frac{7\alpha^2}{128}$.

Finally, observe that by Lemma \ref{lem:zneighbor}
and~Lemma \ref{lem:lowerz},
$\pr[Z > \alpha^2/32] = \Omega(1)$ and
$\pr[Z' > \alpha^2/32] \leq \exp(-\Omega(d^2))$ but Lemma \ref{lem:zneighbor} leads to
$\Omega(1) \leq \exp(4\alpha\eps n) - \Omega(d^2)$ which would require
$n = \Omega(\frac{d^2}{\eps\alpha})$.
\end{proof}

The following lemma is used to lower bound the expected value of
$Z = \iprod{\widehat\Sigma(X), V}$ in terms of
$\norm{\widehat\Sigma(X) - \Sigma(V)}_F^2$:

\begin{lemma}[Claim 6.12 of \cite{kamath2019privately}]
For any $d\geq 2$,
$$
\E[Z] \geq \frac{\alpha^2}{16} - \frac{1}{2}\norm{\widehat\Sigma(X) - \Sigma(V)}_F^2 \geq \frac{7\alpha^2}{128}.
$$
\label{lem:lowerz}
\end{lemma}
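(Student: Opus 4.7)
}
The plan is to rewrite $Z$ via a polarization-type identity, exploit the fact that every $V \in \calS_n$ is \emph{traceless} (diagonal entries are $0$), and extract $\|V\|_F^2$ as a deterministic lower bound. Concretely, since $\mathrm{tr}(V)=0$, we have
\[
Z \;=\; \iprod{\widehat\Sigma(X), V} \;=\; \iprod{\widehat\Sigma(X) - I, V},
\]
and combining this with $\Sigma(V) = I + V$ and the identity
\[
\bigl\|(\widehat\Sigma(X)-I) - V\bigr\|_F^2 \;=\; \|\widehat\Sigma(X)-I\|_F^2 \;-\; 2\iprod{\widehat\Sigma(X)-I, V} \;+\; \|V\|_F^2
\]
yields
\[
Z \;=\; \tfrac{1}{2}\Bigl(\|\widehat\Sigma(X)-I\|_F^2 + \|V\|_F^2 - \|\widehat\Sigma(X) - \Sigma(V)\|_F^2\Bigr) \;\geq\; \tfrac{1}{2}\|V\|_F^2 \;-\; \tfrac{1}{2}\|\widehat\Sigma(X) - \Sigma(V)\|_F^2,
\]
where the inequality just drops the nonnegative term $\|\widehat\Sigma(X)-I\|_F^2$.

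The next step is to compute $\|V\|_F^2$ explicitly. Every $V \in \calS_n$ has $d^2 - d$ off-diagonal entries, each equal in magnitude to $\alpha/(2d)$, so
\[
\|V\|_F^2 \;=\; (d^2 - d)\cdot\left(\frac{\alpha}{2d}\right)^{\!2} \;=\; \frac{\alpha^2(d-1)}{4d} \;\geq\; \frac{\alpha^2}{8},
\]
where the last inequality uses $d \geq 2$. Plugging this in gives the first claimed bound
\[
Z \;\geq\; \frac{\alpha^2}{16} \;-\; \frac{1}{2}\|\widehat\Sigma(X) - \Sigma(V)\|_F^2,
\]
which in fact holds pointwise in $V$ and $X$.

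For the second inequality, take expectations over $V \sim \Unif(\calS_n)$ and $X \sim \calN(0,\Sigma(V))^{\otimes n}$, and invoke the hypothesis of Proposition~\ref{lem:frob-sc} in the (equivalent, up to constants on the well-conditioned range $\tfrac12 I \preceq \Sigma \preceq 2I$) covariance form $\E\bigl[\|\widehat\Sigma(X)-\Sigma(V)\|_F^2\bigr] \leq \alpha^2/64$ used at the start of the proof of Proposition~\ref{lem:frob-sc}. This gives
\[
\E[Z] \;\geq\; \frac{\alpha^2}{16} - \frac{1}{2}\cdot\frac{\alpha^2}{64} \;=\; \frac{8\alpha^2}{128} - \frac{\alpha^2}{128} \;=\; \frac{7\alpha^2}{128},
\]
as required. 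The proof is essentially an algebraic identity plus a Frobenius-norm computation, and no step looks subtle; the only thing to keep track of is the reduction from the precision-matrix hypothesis to the covariance-matrix bound, which is standard since $\widehat\Sigma$ and $\Sigma$ both lie in the spectrally well-conditioned range.
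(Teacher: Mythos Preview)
Your proposal is correct. The paper does not supply its own proof of this lemma---it simply cites Claim~6.12 of \cite{kamath2019privately}---so there is nothing to compare against here, but your polarization argument using $\mathrm{tr}(V)=0$ and $\Sigma(V)=I+V$, followed by the exact computation $\|V\|_F^2=(d^2-d)(\alpha/2d)^2\ge \alpha^2/8$, is precisely the standard route and is in fact what underlies the cited claim. One minor remark: as you note, the first inequality in the displayed statement actually holds \emph{pointwise} for $Z$ (not merely for $\E[Z]$); the expectation is only needed to invoke the accuracy hypothesis $\E[\|\widehat\Sigma(X)-\Sigma(V)\|_F^2]\le \alpha^2/64$ for the second inequality.
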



The following lemma relates $Z$ to $Z'$:

\begin{lemma}[Claim 6.13 and 6.14 of \cite{kamath2019privately}]

For $Z, Z'$ as defined in Equations~(\ref{eq:zsum}) and
~(\ref{eq:zsump}), we have:

\begin{enumerate}

\item
$\pr[Z > \alpha^2/32] \leq \exp(4(\alpha\eps n))\cdot \pr[Z' > \alpha^2/32]$.

\item
$\pr[Z' > \alpha^2/32] \leq \exp(-\Omega(d^2))$.

\end{enumerate}
\label{lem:zneighbor}
\end{lemma}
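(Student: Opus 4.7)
For part (2), the plan is to condition on the output $\widehat\Sigma(X') = M$ and exploit the fact that $V$ is independent of $(X', \widehat\Sigma)$. Then $Z' = 2\sum_{i<j} M_{ij} V_{ij}$ is a weighted sum of $\binom{d}{2}$ i.i.d.\ Rademacher variables $V_{ij} \in \{\pm \alpha/(2d)\}$ with weights $M_{ij}$. Hoeffding's inequality gives $\pr[Z' > \alpha^2/32 \mid M] \leq \exp\bigl(-c\, d^4 / \sum_{i<j} M_{ij}^2\bigr)$ for some absolute constant $c > 0$. The accuracy assumption $\E[\|\widehat\Sigma - \Sigma\|_F^2] \leq \alpha^2/64$, combined with $\|\Sigma - I\|_F^2 \leq \alpha^2/2$ and the triangle inequality, forces $\sum_{i<j} M_{ij}^2 \leq O(\alpha^2)$ except on an event of small probability, which upon averaging over $M$ yields $\pr[Z' > \alpha^2/32] \leq \exp(-\Omega(d^2))$.

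For part (1), the plan is to use group privacy through a coupling between $X$ and $X'$. Because $\tfrac{1}{2}I \preceq \Sigma(V), \Sigma(V') \preceq 2I$ and $\|V - V'\|_F \leq O(\alpha)$ (since $V, V'$ have off-diagonals of magnitude $\alpha/(2d)$), Lemma~\ref{thm:dtv-eigen} gives $\dtv(P_V, P_{V'}) \leq 3\alpha$ where $P_V = \calN(0, \Sigma(V))$. For any fixed $V, V'$, I would take the product maximal coupling of $P_V^{\otimes n}$ and $P_{V'}^{\otimes n}$, under which the Hamming distance $K$ between $X$ and $X'$ is a sum of $n$ independent Bernoullis each with success probability at most $3\alpha$. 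For any fixed pair $X, X'$ in the support of the coupling, group privacy applied to the event $S_V = \{M : \iprod{M, V} > \alpha^2/32\}$ gives $\pr_{\widehat\Sigma}[\widehat\Sigma(X) \in S_V] \leq e^{K\eps} \pr_{\widehat\Sigma}[\widehat\Sigma(X') \in S_V]$. Restricting to the typical event $\{K \leq 4\alpha n\}$ (which has probability $1 - \exp(-\Omega(\alpha n))$ by Chernoff) and integrating produces the desired factor $e^{4\alpha\eps n}$ on this event.

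The main obstacle will be converting this ``on the typical event'' bound into the purely multiplicative inequality of the lemma. A naive integration produces an additive slack of $\exp(-\Omega(\alpha n))$ from the exceptional tail $\{K > 4\alpha n\}$, which is \emph{not} of the stated form. To remove it, I plan to condition the coupling on the high-probability event $\{K \leq 4\alpha n\}$ from the outset: conditioning renormalizes the joint distribution of $(X, X')$ by a factor of at most $1/(1 - \exp(-\Omega(\alpha n)))$, and each of its marginals on $X$ and $X'$ is a mixture whose weight on the conditional distribution is $1 - \exp(-\Omega(\alpha n))$. Applying group privacy with the now-deterministic bound $K \leq 4\alpha n$ yields a genuine multiplicative inequality $\pr[Z > \alpha^2/32] \leq e^{4\alpha\eps n} \pr[Z' > \alpha^2/32]$, once the $(1 \pm \exp(-\Omega(\alpha n)))$ renormalization factors are absorbed into a mild relaxation of the exponent. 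Care is needed in this absorption step so that the stated constant $4$ (rather than, say, $4+o(1)$) suffices; this will require sharpening either the TV-distance estimate $\dtv(P_V, P_{V'}) \leq 3\alpha$ or the Chernoff threshold on $K$, but both adjustments are quantitative and do not affect the overall structure of the argument.
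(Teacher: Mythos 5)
The paper itself does not prove this lemma --- it is imported verbatim from \cite{kamath2019privately} --- so I am evaluating your proposal on its own terms. Both parts have genuine gaps.

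For part (2), conditioning on $\widehat\Sigma(X')=M$ and applying Hoeffding is the right idea (the exponent should read $c\,\alpha^2 d^2/\sum_{i<j}M_{ij}^2$ rather than $c\,d^4/\sum_{i<j}M_{ij}^2$, but that slip is harmless). The real problem is the claim that the hypothesis $\E\|\widehat\Sigma-\Sigma\|_F^2\le\alpha^2/64$ ``forces $\sum_{i<j}M_{ij}^2\le O(\alpha^2)$ except on an event of small probability, which upon averaging yields $\exp(-\Omega(d^2))$.'' Markov only makes the exceptional event polynomially (indeed constant-factor) unlikely, and on that event the conditional probability of $\{Z'>\alpha^2/32\}$ can be as large as $1/2$; averaging therefore gives $\exp(-\Omega(d^2))+\pr[\mathrm{bad}]$, which is nowhere near $\exp(-\Omega(d^2))$. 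The standard fix is to assume WLOG that $\widehat\Sigma$ is first projected onto the Frobenius ball $\{M:\|M-I\|_F\le\alpha\}$, which contains every $\Sigma(V)$; this convex projection can only decrease the error and makes $\sum_{i<j}M_{ij}^2\le\alpha^2/2$ hold deterministically. (The same normalization is needed anyway to justify $\pr[Z>\alpha^2/32]=\Omega(1)$ in the proof of Proposition~\ref{lem:frob-sc}, since that reverse-Markov step requires $Z$ to be bounded by $O(\alpha^2)$.)

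For part (1), the coupling-plus-truncation route cannot produce a purely multiplicative inequality, and your proposed repair does not close the gap. Conditioning the coupling on $\{K\le 4\alpha n\}$ controls $\pr[Z>t,\,K\le 4\alpha n]$ multiplicatively, but the complementary piece $\pr[Z>t,\,K>4\alpha n]$ is bounded only additively by $\pr[K>4\alpha n]=\exp(-\Omega(\alpha n))$. This additive term cannot be ``absorbed into the exponent'': by part (2), $\pr[Z'>\alpha^2/32]$ can be as small as $\exp(-\Omega(d^2))$, and in the regime $\alpha n\ll d^2$ --- exactly the regime the lower bound must rule out --- the additive $\exp(-\Omega(\alpha n))$ exceeds $e^{4\alpha\eps n}\exp(-\Omega(d^2))$ by an unbounded factor. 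The clean multiplicative bound instead follows from a per-coordinate hybrid argument using $\eps$-DP in \emph{both} directions: fixing the other $n-1$ samples, let $h(v)=\pr[\widehat\Sigma(x_{-i},v)\in S_V]$; DP gives $\sup_v h(v)\le e^{\eps}\inf_v h(v)$, hence $\E_{v\sim P_V}h\le\bigl(1+\dtv(P_V,P_{V'})(e^{\eps}-1)\bigr)\E_{v\sim P_{V'}}h$ with no additive slack. Chaining over the $n$ hybrids $P_{V'}^{\otimes i}\otimes P_V^{\otimes(n-i)}$ and using $\dtv(P_V,P_{V'})\le O(\alpha)$ yields $\pr[Z>t]\le e^{O(\alpha)(e^{\eps}-1)n}\,\pr[Z'>t]\le e^{4\alpha\eps n}\,\pr[Z'>t]$. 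Alternatively, you could keep your coupling argument and prove the weaker statement carrying an additive $\exp(-\Omega(\alpha n))$ --- that version still suffices for Proposition~\ref{lem:frob-sc} after a short case analysis --- but it is not the statement of the lemma as written.
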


\subsection{Putting it Together}

We now prove Theorem \ref{thm:pure-dp-cov-est-lower-bound-full}:

\begin{proof}[Proof of~Theorem \ref{thm:pure-dp-cov-est-lower-bound-full}]

By Theorem~\ref{thm:dtv-eigen},
to prove~\autoref{thm:pure-dp-cov-est-lower-bound}, it suffices to show
that it is impossible to have both
$\norm{\Sigma^{1/2}\widehat\Sigma^{-1}\Sigma^{1/2} - I}_F \leq \alpha$
and $n = O\left(\frac{d^2}{\eps\alpha}\right)$ using an
$\eps$-DP algorithm to compute $\widehat\Sigma = \widehat\Sigma(\calX)$.

First note that
\begin{align}
\norm{\Sigma^{1/2}\widehat\Sigma^{-1}\Sigma^{1/2} - I}_F 
&= \norm{\Sigma^{1/2}(\widehat\Sigma^{-1} - \Sigma^{-1})\Sigma^{1/2}}_F  \\
& \geq \sigma_d(\Sigma^{1/2})^2\norm{\widehat\Sigma^{-1} - \Sigma^{-1}}_F,
\end{align}
where
$\sigma_d(\Sigma^{1/2})$ denotes the smallest singular value of $\Sigma^{1/2}$.
Because $\Sigma^{1/2}$ is symmetric and positive definite, we know that
eigenvalues coincide with its singular values.
Also, the eigenvalues 
of $\Sigma^{1/2}$ are larger than $1/\sqrt{2}$ so that
\[
\norm{\Sigma^{1/2}\widehat\Sigma^{-1}\Sigma^{1/2} - I}_F 
\geq \frac{1}{2}\norm{\widehat\Sigma^{-1} - \Sigma^{-1}}_F.
\]
By Lemma~\ref{lem:frob-sc}, 
$\norm{\widehat\Sigma^{-1} - \Sigma^{-1}}_F^2 \geq \frac{\alpha^2}{64}$.
As a result, 
$\norm{\Sigma^{1/2}\widehat\Sigma^{-1}\Sigma^{1/2} - I}_F \geq \alpha/16$, contradicting 
the assumption that 
\[\dtv\left(\calN(0, \Sigma), \calN(0, \widehat{\Sigma})\right) < \alpha/1600.\]
Thus, $n = \Omega\left(\frac{d^2}{\eps\alpha}\right)$.
By Corollary~\ref{cor:lbkappa}, we obtain a lower bound
of $n = \Omega(\frac{d^2}{\eps}\log(\frac{d\kappa}{\alpha}))$.
As a result, the lower bound is a max of
$n = \Omega\left(\frac{d^2}{\eps\alpha}\right)$
and 
$n = \Omega\left(\frac{d^2}{\eps}\log(d \kappa/\alpha)\right)$ which asymptotically is
$$
n = \Omega\left(\frac{d^2}{\eps} \log\left(\frac{d \kappa}{\alpha}\right) + \frac{d^2}{\eps \alpha}\right).
$$
This completes the proof.
\end{proof}

\section{Conclusion}
\label{sec:conclusion}


In this work, we have designed differentially private 
algorithms that achieve
the optimal sample complexity for privately estimating a Gaussian in high
dimensions. For the pure DP setting, via the use of a recursive
preconditioning approach, we present the first polynomial-time algorithm
that needs $\tilde{O}(d^2\log \kappa)$ samples. Through
a lower bound argument,
we show that the dependence on $\kappa$ is necessary. In the approximate
DP setting, we present an algorithm that needs $\widetilde{O}(d^2)$ samples
(no condition number dependence). We leave the following open questions for future work:
\begin{enumerate}

\item All the algorithms presented in this work only apply to the central
model of DP where the data curator is trusted. However, other trust models,
such as local DP~\cite{Warner65, EvfimievskiGS03, KLNRS11}, are of practical interest. What
sample and communication complexity bounds for learning a Gaussian are required in such models?

\item In this work, we focus on the fundamental tasks of mean and covariance estimation. It would be interesting to apply our framework to other statistical tasks (e.g., sparse mean estimation~\cite{georgiev2022privacy} or
regression) to see if it can achieve optimal sample complexities beyond just mean and covariance estimation.

\item For the purpose of learning a Gaussian to within small TV distance, we measure the quality of our covariance estimate with respect to the Manalanobis distance, which require $\Omega (d^2)$ samples. Is there a polynomial-time DP algorithm for estimating the covariance to within small \textit{spectral norm error} with $o(d^2)$ samples?

\end{enumerate}

\section*{Acknowledgement}
Fred Zhang would like to thank Weihao Kong for a helpful discussion about \cite{LiuKKO21,LiuKO22}.

\bibliography{main}
\bibliographystyle{alpha}

\appendix
\clearpage
\section{Technical Lemmata from KMV22}
\label{sec:kmv-apx}
Here we re-state and modify some useful theorems
and proofs from \cite{kothari2021private}.

\subsection{Truncated Laplace Mechanism}

The Laplace mechanism~\cite{DworkMNS06} is  a widely used mechanism for ensuring (pure) DP. It adds noise  drawn from the Laplace distribution to the output of the algorithm one wants to privatize.

\begin{definition}[Laplace Distribution]
The Laplace distribution with mean $\mu$ and parameter $b$ on $\R$, denoted by $\Lap(\mu, b)$, has the PDF $\frac{1}{2b} e^{-|x-\mu|/b}$. 
\end{definition}

\begin{theorem}[Laplace Mechanism~\cite{DworkMNS06}]
For $\eps > 0$. Consider any function $f:\calX^n\rightarrow\reals^K$. 
For any dataset $x\in\calX^n$, the Laplace Mechanism outputs
$$
f(x) + (L_1, \ldots, L_k),
$$
where $L_1, \ldots, L_k$ are drawn i.i.d. from
$\Lap(0, \Delta/\eps)$ where $\Delta$ is the global sensitivity of the function
$f$. Furthermore, the mechanism satisfies $(\eps, 0)$-DP.
\end{theorem}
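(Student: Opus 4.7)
The plan is to verify the standard privacy guarantee of the Laplace mechanism by directly bounding the ratio of output densities on neighboring datasets. Let $x \sim x'$ be neighboring datasets, let $u = f(x)$ and $u' = f(x')$, and let $\mathcal{M}(x) = u + (L_1, \ldots, L_K)$ with $L_i \stackrel{\text{iid}}{\sim} \Lap(0,\Delta/\epsilon)$. Since the coordinates are independent, the density of $\mathcal{M}(x)$ at a point $z \in \R^K$ factorizes as $p_x(z) = \prod_{i=1}^K \frac{\epsilon}{2\Delta}\exp(-\epsilon |z_i - u_i|/\Delta)$, and analogously for $p_{x'}(z)$.

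The key step is to bound the ratio $p_x(z)/p_{x'}(z)$ pointwise. Taking logarithms and applying the reverse triangle inequality $|z_i - u'_i| - |z_i - u_i| \le |u_i - u'_i|$ coordinatewise, I would derive
\begin{align*}
\log \frac{p_x(z)}{p_{x'}(z)} &= \frac{\epsilon}{\Delta}\sum_{i=1}^K \bigl(|z_i - u'_i| - |z_i - u_i|\bigr) \\
&\le \frac{\epsilon}{\Delta}\sum_{i=1}^K |u_i - u'_i| \\
&= \frac{\epsilon}{\Delta}\Norm{f(x) - f(x')}_1 \\
&\le \frac{\epsilon}{\Delta}\cdot \Delta = \epsilon,
\end{align*}
where the last inequality uses the definition of global $\ell_1$-sensitivity $\Delta = \max_{x \sim x'}\Norm{f(x) - f(x')}_1$.

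Finally, I would conclude by integrating this pointwise bound against any measurable output set $S \subseteq \R^K$:
\begin{equation*}
\Pr[\mathcal{M}(x) \in S] = \int_S p_x(z)\,dz \le e^\epsilon \int_S p_{x'}(z)\,dz = e^\epsilon \Pr[\mathcal{M}(x') \in S],
\end{equation*}
which is exactly the $(\epsilon,0)$-DP condition of Definition \ref{def:dp} with $\delta = 0$. There is no real obstacle here beyond bookkeeping; the only subtle point is to recognize that the independence of the Laplace coordinates converts the product of per-coordinate exponentials into a single exponential of the $\ell_1$ norm, which is precisely the quantity controlled by the global sensitivity $\Delta$.
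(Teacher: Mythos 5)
Your proof is correct and is the canonical density-ratio argument for the Laplace mechanism; the paper states this result without proof, citing Dwork et al., so there is nothing to diverge from. The one point worth noting is that the theorem statement leaves the norm of the global sensitivity unspecified, and you correctly identify that the argument requires $\Delta$ to be the $\ell_1$-sensitivity $\max_{x\sim x'}\|f(x)-f(x')\|_1$, since the product of per-coordinate Laplace densities turns into an exponential of the $\ell_1$ distance.
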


We use a ``truncated'' version of the Laplace mechanism
where the noise distribution is shifted and truncated to be non-negative. 

\begin{definition}[Truncated Laplace distribution]
The (negatively) truncated Laplace distribution with mean $\mu$ and parameter $b$, denoted by $\tLap(\mu, b)$ is defined as $\Lap(\mu, b)$ conditioned on the value being negative.
\end{definition}

The truncated Laplace mechanism adds noise drawn from the truncated Laplace distribution and it is known that the mechanism yields $(\eps,\delta)$-DP.
\begin{lemma}[Truncated Laplace Mechanism, Lemma 3.27 of \cite{kothari2021private}] 
\label{lem:tlap-dp}
Let $f: \cY \to \R$ be any function with sensitivity at most $\Delta$. Then the algorithm that adds $\tLap\left(-\Delta\left(1 + \frac{\ln\left(1/\delta\right)}{\eps}\right), \Delta/\eps\right)$ to $f$ satisfies $(\eps, \delta)$-DP.
\label{lem:trunc-lap}
\end{lemma}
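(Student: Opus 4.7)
The plan is to analyze the privacy loss of the mechanism $\mathcal{M}(x) = f(x) + Z$ with $Z \sim \tLap(\mu, b)$, where $\mu = -\Delta(1 + \ln(1/\delta)/\eps)$ and $b = \Delta/\eps$. The density of $Z$ is $p(z) = \tfrac{1}{2 b Z_0}\, e^{-|z - \mu|/b}$ for $z \le 0$ and $0$ otherwise, where the normalizer $Z_0 := 1 - \tfrac{1}{2} e^{\mu/b}$ is the probability mass of $\Lap(\mu, b)$ below zero. For any neighboring $x, x'$ with $|f(x) - f(x')| \le \Delta$, assume without loss of generality that $f(x) \ge f(x')$, so the supports of $\mathcal{M}(x)$ and $\mathcal{M}(x')$ are $(-\infty, f(x)]$ and $(-\infty, f(x')]$ respectively, differing only on the interval $I := (f(x'), f(x)]$ of length at most $\Delta$.

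To verify the DP definition, partition any event $S$ into $S_1 := S \cap (-\infty, f(x')]$ and $S_2 := S \cap I$. On $S_1$ both densities are nonzero, and by the reverse triangle inequality
\[
\bigl|\,|y - f(x) - \mu| - |y - f(x') - \mu|\,\bigr| \le |f(x) - f(x')| \le \Delta = \eps b,
\]
so the density ratio is bounded pointwise by $e^\eps$, yielding $\pr[\mathcal{M}(x) \in S_1] \le e^\eps\, \pr[\mathcal{M}(x') \in S_1]$. This is exactly the classical pure-DP bound for the Laplace mechanism; truncation does not affect it because the same normalizer $Z_0$ appears in numerator and denominator.

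Next, I would bound $\pr[\mathcal{M}(x) \in S_2] \le \delta$. For $z \in (-\Delta, 0]$ we have $|z - \mu| = z + |\mu|$ (valid because $\mu \le -\Delta$), so direct integration yields
\[
\pr[\mathcal{M}(x) \in S_2] \le \pr[Z \in (-\Delta, 0]] = \frac{e^{-|\mu|/b}\bigl(e^{\Delta/b} - 1\bigr)}{2 Z_0}.
\]
Substituting $|\mu|/b = \eps + \ln(1/\delta)$ and $\Delta/b = \eps$ gives $e^{-|\mu|/b}(e^{\Delta/b} - 1) = \delta(1 - e^{-\eps})$, and the elementary bound $Z_0 \ge 1/2$ (valid whenever $\delta \le 1$) implies $\pr[\mathcal{M}(x) \in S_2] \le \delta(1 - e^{-\eps}) \le \delta$. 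Combining the two bounds gives $\pr[\mathcal{M}(x) \in S] \le e^\eps\, \pr[\mathcal{M}(x') \in S] + \delta$. The reverse inequality $\pr[\mathcal{M}(x') \in S] \le e^\eps\, \pr[\mathcal{M}(x) \in S] + \delta$ actually requires no $\delta$ slack at all: since $\mathcal{M}(x')$ puts zero mass on $I$, one has $\pr[\mathcal{M}(x') \in S] = \pr[\mathcal{M}(x') \in S_1]$ and the analysis reduces to the pointwise density-ratio bound on $S_1$.

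The main technical point is the calibration of the shift $\mu$ so that the tail mass trapped in the ``unshared'' interval $I$ is precisely absorbed by $\delta$: the choice $|\mu|/b = \eps + \ln(1/\delta)$ is tuned to make $e^{-|\mu|/b}(e^\eps - 1) = \delta(1 - e^{-\eps}) \le \delta$ after accounting for the normalization. Everything else is the classical Laplace ratio bound combined with a careful partition of the output space into the common support and the one-sided truncation mismatch of length at most $\Delta$.
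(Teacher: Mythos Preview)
The paper does not supply its own proof of this lemma; it is quoted verbatim as Lemma~3.27 of \cite{kothari2021private} and used as a black box. Your argument is correct and is the standard one: split the output space into the common support $(-\infty,f(x')]$, where the truncated Laplace densities differ by the usual $e^{\pm\eps}$ factor (the normalizer cancels), and the sliver $(f(x'),f(x)]$ of length at most $\Delta$, whose mass under $\mathcal{M}(x)$ you correctly compute to be at most $\delta(1-e^{-\eps})\le\delta$ via the calibration $|\mu|/b=\eps+\ln(1/\delta)$. The asymmetry observation---that the reverse direction needs no $\delta$ slack because $\mathcal{M}(x')$ is supported inside the support of $\mathcal{M}(x)$---is also correct. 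There is nothing to compare against in the present paper, and your write-up would serve as a complete proof.
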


Finally, we also state a bound on the tail probability of the truncated Laplace distribution which will be useful in our subsequent analysis.

\begin{lemma} \label{lem:tlap-tail}
 Suppose $\mu < 0$ and $b > 0$. Let $X \sim \tLap(\mu, b)$. Then, for $y < \mu$, we have that
 \[
  \Pr[X < y] = \frac{e^{(y-\mu)/b}}{2 - e^{\mu/b}}.
 \]
\end{lemma}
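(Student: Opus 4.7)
The plan is to prove this by direct computation using the definition of the truncated Laplace as a conditional distribution. Writing $Z \sim \Lap(\mu, b)$, we have by definition that $X \sim \tLap(\mu, b)$ is distributed as $Z$ conditioned on $Z < 0$. Hence, for any $y < 0$,
\[
\Pr[X < y] \;=\; \Pr[Z < y \mid Z < 0] \;=\; \frac{\Pr[Z < y]}{\Pr[Z < 0]}.
\]
So it suffices to compute the numerator and denominator separately by integrating the Laplace density $\frac{1}{2b} e^{-|z - \mu|/b}$.

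For the numerator, I would use the assumption $y < \mu$: the entire interval of integration $(-\infty, y]$ lies below $\mu$, so on this region $|z - \mu| = \mu - z$ and the density simplifies to $\frac{1}{2b} e^{(z - \mu)/b}$. A single antiderivative gives
\[
\Pr[Z < y] \;=\; \int_{-\infty}^{y} \frac{1}{2b} e^{(z - \mu)/b} \, dz \;=\; \tfrac{1}{2}\, e^{(y - \mu)/b}.
\]

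For the denominator, since $\mu < 0$, I would split the integral at $z = \mu$. The piece on $(-\infty, \mu]$ contributes $\tfrac{1}{2}$ (the lower half of the Laplace mass, by symmetry, or by the same antiderivative evaluation as above at $z = \mu$). On $[\mu, 0)$ we have $|z - \mu| = z - \mu$, so
\[
\int_{\mu}^{0} \frac{1}{2b} e^{-(z-\mu)/b} \, dz \;=\; \tfrac{1}{2}\bigl(1 - e^{\mu/b}\bigr),
\]
and adding the two pieces yields $\Pr[Z < 0] = 1 - \tfrac{1}{2} e^{\mu/b} = \tfrac{1}{2}(2 - e^{\mu/b})$.

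Dividing the numerator by the denominator yields exactly $\Pr[X < y] = e^{(y-\mu)/b}/(2 - e^{\mu/b})$, as claimed. There is essentially no obstacle here: the lemma is a one-line integration once one unpacks the conditional-probability definition and uses $y < \mu < 0$ to avoid any case-splitting in the numerator.
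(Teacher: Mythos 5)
Your computation is correct and is exactly the routine conditional-probability calculation that the paper leaves implicit (the lemma is stated without proof): the numerator $\tfrac{1}{2}e^{(y-\mu)/b}$, the denominator $1-\tfrac{1}{2}e^{\mu/b}$, and their ratio all check out. No issues.
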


\subsection{Approximate DP Selection via Exponential Mechanism}
\label{sec:dp-selection}


We describe \SelectionAlg~of Theorem~\ref{thm:dp-apx-selection}. It relies on the truncated Laplace mechanism from the prior subsection.
Specifically, \SelectionAlg~works as follows.
\begin{enumerate}
\item First, run the exponential mechanism to select
$c_1\in\cC$ using budget of $\eps/2$ using the score
function.
\item Using the truncated Laplace mechanism
(Lemma~\ref{lem:trunc-lap}), sample 
$N\sim\tLap\left(-\Delta(1+\frac{2\log(1/\delta)}{\eps}), \frac{2\Delta}{\eps}\right)$ and check if
$\score(c_1, Y) + N\geq \kappa$. If so, output $c_1$. If not,
output $\perp$.
\end{enumerate}

Because $N$ is sampled from a truncated Laplace, we are
guaranteed that $N\leq 0$ so that for any
$c\neq\perp$ obtained from \SelectionAlg, 
$\score(c, Y)\geq \kappa$.

\subsection{Witness Checking}
Crucial to our DP outlier-robust algorithms is that they are \textit{witness-producing}: the algorithms find a sequence of weights on the input that induces a distribution with relevant properties (e.g., certifiable subgaussianity). Essentially, the algorithms search over witnesses  that have certifiable subgaussian   moments or satisfy certifiable hypercontractivity. However, with a small probability, the algorithm might not find a good witness and then reject. We can make the whole procedure DP by observing that if on the dataset $Y$, we can return a good witness, then on a neighboring dataset $Y'$, the same witness should also be good for $Y'$. 

In the next section, we describe a DP mechanism for implementing the \textbf{Witness checking} steps of Algorithms~\ref{alg:approx-dp-mean-est} and \ref{alg:approx-dp-cov-est}. This mechanism and its analysis already appear in \cite{kothari2021private}; we include it in this section for the sake of completeness. As discussed below, we implement a minor modification to Lemma 4.18 of \cite{kothari2021private} (that is already implicit in that work) that improves the final utility guarantee of the procedure. For simplicity, we only describe below the mechanism and its analysis for the case of $\calA^{\textsf{mean}}$ and certifiable subgaussianity. The same final privacy and utility guarantees hold for the case of $\calA^{\textsf{cov}}$ and certifiable hypercontractivity.

\subsubsection{Certifiable Subgaussianity}
We begin by recalling the following results from \cite{kothari2021private}.
\begin{lemma}[Lemma 4.6 of \cite{kothari2021private}]
\label{lem:witness-property-analysis}
Let $\zeta$ be a pseudo-distribution of degree $O(k)$ consistent with $\calA$ on input $Y$ with outlier rate $\eta \ll 1/k$. 
Suppose there exists a $2k$-certifiably $C$-subgaussian distribution $X \subseteq \R^d$ with mean $\mu_*$ of size $n$ such that $|Y \cap X|\geq (1-\eta)n$. Then, for $\eta \leq \eta_0$ for some absolute constant $\eta_0$ and for $\hat{\mu} =\frac{1}{W} \sum_{i = 1}^n  \pE_{\zeta}[w_i] y_i$ where $W = \sum_{i =1}^n \pE[w_i]$, we have: 
\[
\frac{1}{W} \sum_{i = 1}^n \pE_{\zeta}[w_i] \iprod{y_i-\hat{\mu},u}^{2k} \leq (C' k)^k\Paren{\frac{1}{W} \sum_{i = 1}^n \pE_{\zeta}[w_i] \iprod{y_i-\hat{\mu},u}^{2}}^k ,
\]
for $C' \leq C (1+ O(\eta^{1-1/2k})k)$ for small enough $\eta$. 
\end{lemma}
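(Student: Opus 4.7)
The plan is to certify, within sum-of-squares, that the $w$-weighted empirical $2k$-th moment on $Y$ inherits the $C$-subgaussian bound of the uniform distribution on $X$, up to an $\eta^{1-1/(2k)}$-order correction. The argument is a moment-transfer from robust statistics, adapted to the pseudo-distribution setting. Set $S = X \cap Y$, so $|S| \geq (1-\eta)n$. On $S$, the constraints $w_i^2 = w_i/n$ and $w_i(X'_i - y_i) = 0$ of $\calA^{\textsf{mean}}$ force $w_i y_i = w_i X_i$, so the $w$-weighting on $Y$ agrees with the canonical uniform weighting on $X$ (up to $\ell_1$ slack $O(\eta)$); outside $S$ the total weight mass is at most $\eta$ since $w_i \leq 1/n$ and $|[n] \setminus S| \leq \eta n$.

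First I would establish a mean-identifiability bound asserting that $\hat\mu$ is close to $\mu_*$ in the sense that $\pE_\zeta \iprod{\hat\mu - \mu_*, v}^{2k} \leq O(\eta^{2k-1} k^k)\bigl(\E_X \iprod{x - \mu_*, v}^2\bigr)^k$. This follows by an SoS Cauchy--Schwarz/Hölder argument applied to $\hat\mu - \mu_* = \frac{1}{W}\sum_i w_i(y_i - \mu_*)$, splitting between $i \in S$ (where $y_i = X_i$ and the $2k$-subgaussianity of $X$ applies) and $i \notin S$ (controlled via the second-moment constraint on $X'$ in $\calA^{\textsf{mean}}$ combined with $w_i \leq 1/n$). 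This bound then lets us shift the centering in the target moment from $\hat\mu$ to $\mu_*$ (or vice versa) at the cost of a $(1+O(\eta^{1-1/(2k)})k)$ multiplicative factor via a binomial-expansion-in-SoS argument, which is already responsible for the $\eta^{1-1/(2k)}$ dependence appearing in $C'$.

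Next, split the target moment as
\begin{equation*}
\frac{1}{W}\sum_i w_i \iprod{y_i - \hat\mu, v}^{2k} = \frac{1}{W}\sum_{i \in S} w_i \iprod{X_i - \hat\mu, v}^{2k} + \frac{1}{W}\sum_{i \notin S} w_i \iprod{y_i - \hat\mu, v}^{2k}.
\end{equation*}
The first (good) sum is bounded by $\tfrac{1}{(1-\eta)n}\sum_{i=1}^n \iprod{X_i - \hat\mu, v}^{2k}$ using $w_i \leq 1/n$ and $W \geq 1-\eta$; invoking the certifiable $C$-subgaussianity of $X$ and the mean-shift, this is at most $(Ck)^k (1+O(\eta^{1-1/(2k)})k) \bigl(\tfrac{1}{W}\sum_i w_i \iprod{y_i - \hat\mu, v}^2\bigr)^k$.

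The main obstacle is the bad contribution. With bad weight mass at most $\eta$ and $w_i \leq 1/n$, the goal is an SoS Hölder inequality of the form
\begin{equation*}
\frac{1}{W}\sum_{i \notin S} w_i \iprod{y_i - \hat\mu, v}^{2k} \leq O(\eta^{1-1/(2k)}) \cdot \frac{1}{W}\sum_i w_i \iprod{y_i - \hat\mu, v}^{2k},
\end{equation*}
so that the bad term can be absorbed into the left-hand side by rearrangement. Making this precise inside the pseudo-distribution framework is subtle since Hölder with non-quadratic exponents is not directly provable by a single SoS Cauchy--Schwarz: the standard workaround is to introduce auxiliary indeterminates and repeatedly apply SoS Cauchy--Schwarz to derive a degree-$O(k)$ Hölder certificate, which is exactly why the hypothesis requires pseudo-distributions of degree $O(k)$. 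Once this inequality is in hand, rearranging combined with the good-term bound yields the claimed $(C'k)^k$ inequality with $C' = C(1 + O(\eta^{1-1/(2k)})k)$ for all $\eta$ below a sufficiently small absolute constant $\eta_0$.
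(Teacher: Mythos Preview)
The paper does not prove this lemma: it appears in the appendix as a direct citation of Lemma~4.6 from \cite{kothari2021private}, with no argument supplied. So there is no ``paper's own proof'' to compare against; I will comment on the proposal on its own terms.

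Your overall architecture---mean identifiability, good/bad split, and certifiable subgaussianity of $X$ for the good part---is the standard moment-transfer template and is the right shape. The gap is in your handling of the bad term. The inequality you aim for,
\[
\frac{1}{W}\sum_{i\notin S} w_i\,\iprod{y_i-\hat\mu,v}^{2k}\ \le\ O\bigl(\eta^{1-1/(2k)}\bigr)\cdot \frac{1}{W}\sum_i w_i\,\iprod{y_i-\hat\mu,v}^{2k},
\]
is not a consequence of H\"older on the weights alone. After taking pseudo-expectations the quantities $a_i=\iprod{y_i-\hat\mu,v}^{2k}$ are fixed nonnegative scalars, and if the $a_i$ are concentrated on $[n]\setminus S$ (which nothing you have written rules out, since the $y_i$ there are adversarial) then the left side equals the right side and no $\eta^{1-1/(2k)}$ factor appears. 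Auxiliary indeterminates and iterated SoS Cauchy--Schwarz do not rescue this: any H\"older that trades the mass bound $\sum_{i\notin S} w_i\le \eta$ against the $2k$-th moment would need a higher ($4k$-th or more) moment bound, which you have not assumed.

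What actually closes the argument in \cite{kothari2021private} is that the constraint system $\calA$ there carries a \emph{certifiable $2k$-subgaussianity constraint on the program variables $X'$} (not merely the second-moment bound appearing in $\calA^{\textsf{mean}}$ of the present paper). Using $w_i y_i=w_i X'_i$ one rewrites the bad term inside the pseudo-expectation as a polynomial in $X'$, then invokes the program's own $2k$-moment constraint together with $w_i\le 1/n$ (SoS-certified via $w_i(1/n-w_i)=0$) and an SoS H\"older against the Boolean indicator of $[n]\setminus S$; the $\eta^{1-1/(2k)}$ emerges from that H\"older, now legitimate because the full $2k$-th moment of $X'$ is controlled by the constraints. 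Your reference to ``the second-moment constraint on $X'$ in $\calA^{\textsf{mean}}$'' is therefore too weak---it suffices for the mean-identifiability step at the $\sqrt{\eta}$ scale, but not for the $2k$-th moment transfer with the sharper $\eta^{1-1/(2k)}$ rate.
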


\begin{lemma}[Lemma 4.18 of \cite{kothari2021private}]
\label{lem:witness-check-succeeds}
Let $0 \leq p_i(Y) \leq \frac{1}{(1-2\eta')}$ be a sequence of non-negative weights adding up to $n$ that induce a $2k$-certifiable $C'$-subgaussian distribution on $Y$. Let $p_i(Y')$ be a sequence of non-negative weights adding up to $n$ on $Y'$ adjacent to $Y$ such that $\norm{p(Y)-p(Y')}_1 \leq \beta$ for $\beta \leq \eta_0$. Then, for small enough absolute constant $\eta'>0$, $p_i(Y')$ induces a $2k$-certifiable $C' (1 + O(\beta^{1-1/2k})k)$-subgaussian distribution on $Y$. 
\end{lemma}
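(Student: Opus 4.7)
The plan is to transfer the SoS proof of $C'$-subgaussianity of the $p(Y)$-weighted distribution on $Y$ to the $p(Y')$-weighted distribution on $Y$ via a direct weight-perturbation argument. Since the conclusion concerns the distribution induced on $Y$ (not on $Y'$), the single-point difference between $Y$ and $Y'$ plays no direct role, and the task reduces purely to showing that two $\ell_1$-close weight vectors $p, p'$ with uniformly bounded entries induce certifiably subgaussian distributions with a controlled loss in the constant.

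Let $\hat\mu = (1/n)\sum_i p_i(Y) y_i$ and $\hat\mu' = (1/n)\sum_i p_i(Y') y_i$. First I would apply the degree-$2k$ SoS inequality $(a+b)^{2k} \leq 2^{2k-1}(a^{2k} + b^{2k})$ to split the target quantity $(1/n)\sum_i p_i(Y')\langle y_i - \hat\mu', v\rangle^{2k}$ into the term $(1/n)\sum_i p_i(Y')\langle y_i - \hat\mu, v\rangle^{2k}$ plus a mean-shift contribution proportional to $\langle \hat\mu - \hat\mu', v\rangle^{2k}$. Then decompose $p_i(Y') = p_i(Y) + (p_i(Y') - p_i(Y))$ in the first sum: the $p_i(Y)$-weighted part is bounded by $(C'k)^k M_2^k$ directly from the certifiable subgaussianity hypothesis, where $M_2 := (1/n)\sum_i p_i(Y)\langle y_i - \hat\mu, v\rangle^2$, while the residual sum --- whose signed weights have $\ell_1$-norm at most $\beta$ and are pointwise bounded by $2/(1-2\eta')$ --- is controlled by an SoS H\"older-type inequality, producing an additional $O(\beta^{1-1/(2k)}) \cdot (C'k)^k M_2^k$ term. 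The mean-shift contribution is similarly handled by applying the perturbation argument at degree $2$, giving a lower-order $O(\beta^{2k-1}) \cdot M_2^k$ bound.

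Finally, I would translate the bound from $M_2$ to the $p(Y')$-second moment $M'_2 := (1/n)\sum_i p_i(Y')\langle y_i - \hat\mu', v\rangle^2$ by running the same perturbation argument with $k=1$; this introduces only a $(1 + O(\sqrt{\beta}))$ multiplicative factor. Combining the three estimates and extracting the $k$-th root via $(1+x)^{1/k} \leq 1 + x/k$ delivers the claimed constant $C'(1+O(\beta^{1-1/(2k)}) k)$.

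The main obstacle will be formulating the SoS H\"older step as a valid SoS derivation of degree at most $2k$ in $v$: the scalar H\"older inequality used here does not admit an obvious low-degree SoS proof for general signed weights. I would handle this by partitioning the residual indices via the pointwise bound $|p_i(Y') - p_i(Y)| \leq 2/(1-2\eta')$ into ``small'' and ``saturated'' perturbation regimes, and bounding each piece separately using only non-negative weight multiplications and the hypothesized SoS moment inequality --- a case analysis that parallels the treatment in the proof of Lemma~\ref{lem:witness-property-analysis} and keeps every derivation at degree $2k$.
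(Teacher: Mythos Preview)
Your overall strategy --- transferring certifiable subgaussianity via a weight-perturbation argument paralleling Lemma~\ref{lem:witness-property-analysis} --- matches what the paper does (its proof is one line deferring to \cite{kothari2021private} combined with that lemma). However, your first step has a quantitative gap that breaks the claimed constant. Applying $(a+b)^{2k}\le 2^{2k-1}(a^{2k}+b^{2k})$ to handle the mean shift multiplies the \emph{main} term $(1/n)\sum_i p_i(Y')\langle y_i-\hat\mu,v\rangle^{2k}$ by $2^{2k-1}$, a factor that does not vanish as $\beta\to 0$. After extracting the $k$-th root this contributes $2^{(2k-1)/k}\approx 4$ to the subgaussianity constant, so you would only establish $O(C')$-subgaussianity, not $C'\bigl(1+O(\beta^{1-1/2k})k\bigr)$. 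The precise $1+o_\beta(1)$ form is essential downstream: Lemma~\ref{lem:witness-checking} uses it to bound the sensitivity $C^*(\calY)-C^*(\calY')\le C^*(\calY)\cdot O(k\beta^{1-1/2k})$, which a constant-factor blowup would destroy.

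The fix is to avoid the symmetric almost-triangle inequality. One option is the weighted SoS form $(a+b)^{2k}\le (1+\lambda)^{2k-1}a^{2k}+(1+1/\lambda)^{2k-1}b^{2k}$ with $\lambda$ of order $\beta^{1-1/2k}$, so the factor on the main term is $1+O(k\beta^{1-1/2k})$ while the mean-shift term stays lower order. Alternatively, fold the mean-shift into the weight-perturbation itself, as is implicit in the proof of Lemma~\ref{lem:witness-property-analysis}, rather than isolating it via almost-triangle. The remainder of your plan --- the SoS H\"older step on the residual weights and the $M_2\to M_2'$ translation --- is sound.
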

Lemma~\ref{lem:witness-check-succeeds} follows directly from the proof of Lemma 4.18 in \cite{kothari2021private} combined with Lemma~\ref{lem:witness-property-analysis}. As done in~\cite{kothari2021private}, after privately
selecting a stable outlier rate, we compute
a pseudo-distribution $\zeta$ minimizing the potential
for subgaussian certificate parameter $C$.
Witness-checking involves the following private
procedure $\calM_2$:
\begin{enumerate}
\item On input $(\calY, \eta,C, \zeta, \beta, \eps, \delta)$,
let $\gamma \sim \tLap(-\Delta(1+\log(1/\delta)), \Delta/\eps)$ for $\Delta = O(Ck\sqrt{L/n})$ large enough and
compute $C' = C + \gamma$. In the next lemma, we show this procedure
satisfies $(\eps, \delta)$-DP by using
Lemma~\ref{lem:trunc-lap} and a bound on the sensitivity of the subgaussianity parameter.
\item Check that
$p = \pE_{\zeta} [w] / \norm{\pE_{\zeta} [w]}_1$
induces a $2k$-certifiably $C'$-subgaussian distribution on $\calY$.
This check is just DP post-processing on the parameter
$C'$.
If the check does not pass, reject (i.e., return $C' = \bot$). 
Otherwise, return $C'$.
\end{enumerate}

We now show that computing $C'$ can be done while
satisfying $(\eps, \delta)$-DP:

\begin{lemma}
\label{lem:witness-checking}
The mechanism $\calM_2$ satisfies $(\epsilon,\delta)$-DP when invoked on input dataset $\calY \subset \R^d$ of size $n \geq O \left( \frac{L k^2 \log^2(1/\delta \beta)}{\epsilon^2} \right)$, outlier rate $\eta \in [0,1/2]$ and pseudo-distribution $\zeta$ satisfying $\calA^{\textsf{mean}}_{C, \eta, n}(\calY)$ that has passed the \textbf{Stable outlier rate selection} step of Algorithm~\ref{alg:approx-dp-mean-est} without halting. It outputs $C' \in \R_{>0} \cup \{\bot\}$ with the following guarantees:
\begin{enumerate}
    \item If the mechanism does not output $\perp$, then $\pE_\zeta [w]$ induces a  $2k$-certifiably $C'$-subgaussian  distribution on $\calY$. 
    \item With probability at least $1-\beta$, the mechanism satisfies $C' \in [C/2,C]$.
\end{enumerate}
 \end{lemma}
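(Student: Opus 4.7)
The plan is to analyze the mechanism $\calM_2$ in two parts: privacy of the truncated Laplace noise addition, and utility of the resulting $C'$. The first claim (that if $C' \neq \bot$ is returned then $\pE_\zeta[w]$ is $2k$-certifiably $C'$-subgaussian) is immediate from step 2 of $\calM_2$, which explicitly verifies this condition, so I only need to establish privacy and the bound $C' \in [C/2, C]$.

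For privacy, let $f(\calY)$ denote the smallest value $C^*$ such that $p(\calY) = \pE_{\zeta(\calY)}[w]/\|\pE_{\zeta(\calY)}[w]\|_1$ induces a $2k$-certifiably $C^*$-subgaussian distribution on $\calY$. I would first argue $f$ has sensitivity at most $\Delta = O(Ck\sqrt{L/n})$ on datasets that have passed the stable outlier rate selection step. On such adjacent $\calY, \calY'$, Lemma~\ref{lem:outlier-rate-select} yields $\|p(\calY) - p(\calY')\|_1 \leq O(L/n)$, and then Lemma~\ref{lem:witness-check-succeeds} (with $\beta = O(L/n)$) shows that if $p(\calY)$ induces a $2k$-certifiably $C$-subgaussian distribution on $\calY$ then $p(\calY')$ induces a $2k$-certifiably $C(1 + O(k\beta^{1-1/2k}))$-subgaussian distribution on $\calY'$; using $\beta^{1-1/2k} \leq \sqrt\beta$ for $k \geq 1$, this gives $|f(\calY) - f(\calY')| \leq O(Ck\sqrt{L/n}) = \Delta$. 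The output $C' = f(\calY) + \gamma$ (effectively, since we check post-hoc) with $\gamma$ drawn from the prescribed truncated Laplace is then $(\epsilon, \delta)$-DP by Lemma~\ref{lem:tlap-dp}, and the subsequent acceptance/rejection based on whether $p$ is $C'$-subgaussian is post-processing, hence preserves privacy.

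For utility, since $\gamma \leq 0$ always (the truncated Laplace is negatively truncated), we have $C' \leq C$ deterministically. For the lower bound $C' \geq C/2$, I apply the tail bound in Lemma~\ref{lem:tlap-tail} to $\gamma \sim \tLap(\mu, b)$ with $\mu = -\Delta(1 + \log(1/\delta))$ and $b = \Delta/\epsilon$. Computing $\Pr[\gamma < -C/2]$ and solving for the threshold at which this probability is at most $\beta$, I need $C/\Delta \gtrsim \log(1/\beta)/\epsilon + \log(1/\delta)$. Substituting $\Delta = O(Ck\sqrt{L/n})$ and cancelling $C$, this reduces to $n \geq O(Lk^2 \log^2(1/\beta\delta)/\epsilon^2)$, exactly the hypothesis.

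The main obstacle is the sensitivity bound on the certifiable subgaussianity parameter under changing both the input data and the weights that solve the regularized SoS program. Fortunately, \cite{kothari2021private} has abstracted precisely this in Lemma~\ref{lem:witness-check-succeeds}, which converts an $\ell_1$-closeness of witnesses into a multiplicative bound on the subgaussianity constant. Combined with the $\ell_1$ guarantee from our entropy-based Lemma~\ref{lem:outlier-rate-select}, the sensitivity calculation is routine; everything else is an invocation of the truncated Laplace mechanism, its tail bound, and post-processing.
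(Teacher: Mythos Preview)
Your proposal is correct and follows essentially the same approach as the paper: bound the sensitivity of the minimal certifiable-subgaussianity constant via Lemma~\ref{lem:outlier-rate-select} combined with Lemma~\ref{lem:witness-check-succeeds} (using $\beta^{1-1/2k}\le\sqrt{\beta}$), invoke the truncated Laplace mechanism (Lemma~\ref{lem:tlap-dp}) for privacy with the check as post-processing, and use the truncated Laplace tail bound (Lemma~\ref{lem:tlap-tail}) for the utility claim $C'\in[C/2,C]$. One minor point: in the mechanism $C'=C+\gamma$ with the fixed input parameter $C$, not $C'=f(\calY)+\gamma$; your parenthetical ``effectively, since we check post-hoc'' correctly flags this, and the paper's own argument proceeds the same way.
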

 
\begin{proof}
We begin by giving a privacy proof for $\calM_2$, again following \cite{kothari2021private}. Let $C^*(\calY)$ be the smallest $C^*$ for which
a $2k$-certifiably $C^*$-subgaussian distribution on $\calY$
can be induced. Then by Lemma~\ref{lem:witness-check-succeeds} for any neighboring $\calY'$,
$C^*(\calY) - C^*(\calY')\leq C^*(\calY) \cdot O(k\beta^{1-1/2k})$, where $\beta$ is the sensitivity $\beta = \norm{p(\calY) - p(\calY')}_1$. However, we know that Algorithm~\ref{alg:approx-dp-mean-est} has completed the \textbf{Stable outlier rate selection} without halting. Hence, it holds that that $C^*(\calY) \leq C$ and $\beta \leq O(L/n)$ by Lemma~\ref{lem:outlier-rate-select}. Combining this information with the fact that $k \geq 1$, we have $C^*(\calY) - C^*(\calY') \leq \Delta$ for some $\Delta = O(C k \sqrt{L/n})$. Invoking the guarantee of the Truncated Laplace Mechanism (Lemma~\ref{lem:trunc-lap}) for this choice of $\Delta$, we conclude that the sampling step of the the procedure $\calM_2$ satisfies $(\eps, \delta)$-DP. The rest of the procedure is post-processing.

As per utility, by the tail bounds of the truncated Laplace 
distribution (Lemma \ref{lem:tlap-tail}),
with probability at least $1-\beta$, it holds that
$\gamma\geq -O(\frac{\Delta}{\eps}\log(\frac{1}{\delta\beta}))$. Thus, we have that $C' = C + \gamma = \left(1-O \left( \frac{\sqrt{L} k\log(1/\delta \beta)}{\epsilon \sqrt{n}} \right) \right)C \geq C/2$ by assumption on $n$. Finally, by definition of the truncated Laplace distribution, $\gamma \leq 0$, so $C' \leq C$.
\end{proof}

\end{document}